\newcommand{\ket}[1]{| #1 \rangle}
\newcommand{\bra}[1]{\langle #1 |}
\newcommand{\braket}[2]{\langle #1 | #2 \rangle}
\newcommand{\proj}[2]{| #1 \rangle\!\langle #2 |}
\newcommand{\id}{\ensuremath{\mathds{1}}}
\def\beq{\begin{equation}}
\def\eeq{\end{equation}}
\def\bq{\begin{quote}}
\def\eq{\end{quote}}
\def\ben{\begin{enumerate}}
\def\een{\end{enumerate}}
\def\bit{\begin{itemize}}
\def\eit{\end{itemize}}
\def\ra{\rightarrow}
\def\lb{\left(}
\def\rb{\right)}
\def\lset{\lbrace}
\def\rset{\rbrace}
\def\r|{\right|}
\def\lbr{\left[}
\def\rbr{\right]}
\def\ident{\textnormal{id}}
\def\one{\id}
\newcommand\C{\mathbb{C}}
\newcommand\R{\mathbb{R}}
\newcommand\N{\mathbb{N}}
\newcommand\M{\mathcal{M}}
\newcommand\D{\mathcal{D}}
\newcommand\F{\mathbb{F}}
\DeclareMathOperator{\Dec}{Dec}
\DeclareMathOperator{\Enc}{Enc}
\DeclareMathOperator{\EC}{EC}
\DeclareMathOperator{\Loc}{Loc}
\DeclareMathOperator{\Tr}{Tr}
\DeclareMathOperator{\Ad}{Ad}
\newcommand{\PIID}{\pi}
\theoremstyle{plain}
\newtheorem{thm}{Theorem}[section]
\newtheorem{lem}[thm]{Lemma}
\newtheorem{cor}[thm]{Corollary}
\newtheorem{defn}[thm]{Definition}
\theoremstyle{definition}
\begin{document}
\title{\textbf{Fault-tolerant Coding for Quantum Communication}}
\author{Matthias Christandl}
\email{christandl@math.ku.dk}
\affiliation{Department of Mathematical Sciences, University of Copenhagen, Universitetsparken 5, 2100 Copenhagen, Denmark}
\author{Alexander M\"uller-Hermes}
\email{muellerh@math.uio.no, muellerh@posteo.net}
\affiliation{Institut Camille Jordan, Universit\'{e} Claude Bernard Lyon 1, 43 boulevard du 11 novembre 1918, 69622 Villeurbanne cedex, France}
\affiliation{Department of Mathematics, University of Oslo, P.O. box 1053, Blindern, 0316 Oslo, Norway}

\begin{abstract}
Designing encoding and decoding circuits to reliably send messages over many uses of a noisy channel is a central problem in communication theory. When studying the optimal transmission rates achievable with asymptotically vanishing error it is usually assumed that these circuits can be implemented using noise-free gates. While this assumption is satisfied for classical machines in many scenarios, it is not expected to be satisfied in the near term future for quantum machines where decoherence leads to faults in the quantum gates. As a result, fundamental questions regarding the practical relevance of quantum channel coding remain open. 

By combining techniques from fault-tolerant quantum computation with techniques from quantum communication, we initiate the study of these questions. We introduce fault-tolerant versions of quantum capacities quantifying the optimal communication rates achievable with asymptotically vanishing total error when the encoding and decoding circuits are affected by gate errors with small probability. Our main results are threshold theorems for the classical and quantum capacity: For every quantum channel $T$ and every $\epsilon>0$ there exists a threshold $p(\epsilon,T)$ for the gate error probability below which rates larger than $C-\epsilon$ are fault-tolerantly achievable with vanishing overall communication error, where $C$ denotes the usual capacity. Our results are not only relevant in communication over large distances, but also on-chip, where distant parts of a quantum computer might need to communicate under higher levels of noise than affecting the local gates. 

\end{abstract}

\date{\today}

\maketitle

\tableofcontents
\section{Introduction}

Shannon's theory of communication~\cite{shannon1948mathematical} from the 1940s is the theoretical foundation for the communication infrastructure we use today. While it is extremely successful, it also turned out to be incomplete as a theory of all communication physically possible, since it did not consider the transmission of quantum particles (e.g., photons). This was first noted by Holevo in the 1970s~\cite{holevo1972mathematical,holevo1973bounds} and since led to a theory of quantum communication, called quantum Shannon theory, which includes Shannon's communication theory as a special case~\cite{Nielsen2007,wilde2017quantum}.    

A basic problem in both classical and quantum Shannon theory is the transmission of messages via noisy communication channels. To maximize the transmission rate while maintaining a high probability of correct transmission the sender may encode the message to make it more resilient against the noise introduced by the channel. The receiver then uses a decoder to guess the transmitted message. This is depicted in Figure \ref{fig:capacityIdeal}. 

\begin{figure*}[hbt!]
        \center
        \includegraphics[scale=0.5]{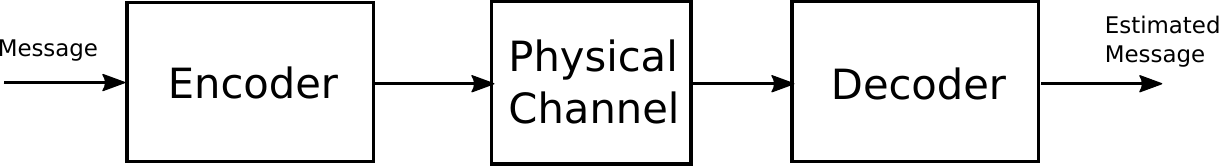}
        \caption{Basic communication setting.}
        \label{fig:capacityIdeal}
\end{figure*}

In Shannon theory it is often assumed that messages can be encoded into multiple uses of the same channel. Surprisingly, for many communication channels it is then possible to transmit messages at a positive rate and with a success probability approaching $1$ in the asymptotic limit of infinitely many channel uses. The communication channels for which this is possible and the optimal communication rates are precisely characterized by Shannon's famous capacity formula~\cite{shannon1948mathematical}. For proving this formula it is assumed that the encoder and the decoder can be executed perfectly, i.e., without introducing additional errors. This assumption is realistic in many applications of classical Shannon theory since the error rates of individual logic gates in modern computers are effectively zero (at least in non-hostile environments and at the time-scales relevant for communication~\cite{nicolaidis2010soft}). 

Different generalizations of Shannon's capacity are studied in quantum Shannon theory, since messages could be classical or quantum, or assisted by entanglement. All those scenarios have in common that encoding and decoding operations are assumed to be implemented without errors as in Shannon's original work~\cite{Nielsen2007,wilde2017quantum}. However, this is not a realistic assumption: At least in the near term future it is not expected that the error rates of individual quantum logic gates will effectively vanish~\cite{preskill2018quantum}. The practical relevance of capacities in quantum Shannon theory is therefore still unclear, and it is an open problem at what rates information can be send over a noisy quantum channel in the presence of gate errors.  

\subsection{Communication setting with gate errors}

Throughout this article, we will consider communication settings where the encoder and decoder (cf.~Figure \ref{fig:capacityIdeal}) are quantum circuits, i.e., compositions of elementary operations such as unitary operations from a universal gate set, measurements, preparations, and partial traces. Each elementary operation in a circuit is a location where a fault could happen with a certain probability. For simplicity we will consider the noise model of i.i.d.~Pauli noise, which we denote by $\mathcal{F}_\PIID(p)$, commonly used in the literature on fault tolerance (see~\cite{aharonov2008fault,aliferis2006quantum}). Here, locations are chosen i.i.d.~at random with some fixed probability $p>0$ and at each chosen location a random Pauli error is applied (see Definition \ref{defn:FaultPattern} and Definition \ref{defn:NoiseModel} below). Now, the basic problem is to ensure high communication rates in the presence of gate errors (happening with some fixed probability $p>0$ at each location in the coding circuits) while the probability of correct transmission of data converges towards $1$ in the limit of asymptotically many channel uses. We define the fault-tolerant classical capacity $C_{\mathcal{F}_\PIID(p)}(T)$ of a classical-quantum or quantum channel $T$, and the fault-tolerant quantum capacity $Q_{\mathcal{F}_\PIID(p)}(T)$ of a quantum channel $T$ as the suprema of achievable communication rates in the presence of gate errors modelled by the noise model $\mathcal{F}_\PIID(p)$.

\subsection{Nature of the problem}

The importance of analyzing noise in encoding and decoding operations has previously been noted in the context of long distance quantum communication~\cite{briegel1998quantum} and it is an important subject in the practical implementation of quantum communication~\cite{dur1999quantum,jiang2009quantum,muralidharan2014ultrafast,epping2016error}. However, in these works the overall success probability of the considered protocols does not approach $1$ at a finite communication rate in the limit of infinitely many channel uses. Therefore, it does not resolve the aforementioned issues.  

To deal with non-vanishing gate errors in the context of quantum computation, the notion of fault-tolerance has been developed using quantum error correcting codes. In particular, if the error rates of individual quantum logic gates are below a certain threshold, then it is possible to achieve a vanishing overall error rate for a quantum computation with only a small overhead~\cite{aharonov2008fault,aliferis2006quantum}. While our approach will use these techniques from fault-tolerance, we should emphasize that they will not directly lead to high communication rates over quantum channels in the presence of gate errors, and in most cases they cannot even be directly applied to the problem. The latter point is exemplified by the qubit depolarizing channel $\rho\mapsto (1-p)\rho + p(\sigma_x\rho \sigma_x + \sigma_y\rho \sigma_y + \sigma_z\rho \sigma_z)/3$ where $\sigma_x,\sigma_y,\sigma_z$ denote the Pauli matrices and $p\in \lbr 0,1\rbr$. It has been shown in~\cite{fern2008lower} that this channel has non-zero quantum capacity for $p\sim 0.19$, but fault-tolerance under this noise has only been shown~\cite{paetznick2012fault,chamberland2016thresholds} with thresholds $p\sim 0.001$. Therefore, it is not possible to treat the communication channel as circuit noise since it might cause faults with probabilities orders of magnitude above the threshold required for fault-tolerance.

Even if we want to send information over a depolarizing channel with parameter $p$ below the threshold it is not clear that the codes used in fault-tolerance would achieve non-vanishing communication rates. The concatenated codes considered in~\cite{aharonov2008fault,aliferis2006quantum} encode a single logical qubit in a growing number of physical qubits to achieve increasingly better protection against depolarizaing noise with parameter $p$ below threshold, but this scheme has a vanishing communication rate. While it might be possible to send quantum information at non-vanishing communication rates in the presence of gate errors by directly using constant-overhead fault-tolerance schemes~\cite{gottesman2014fault,fawzi2018constant}, this would again only work for communication channels close to identity. 

Finally, it might be possible to achieve good communication rates in the presence of gate errors by concatenating specific codes for a communication channel with quantum error correcting codes allowing for fault-tolerance. However, there is a fundamental problem with this approach and it is unclear how to resolve it: When the inner code is the code for the communication channel (which might be a random code of large dimension) it is in general unclear how to implement quantum circuits fault-tolerantly on the concatenated code. On the other hand, if the code for the communication channel is the outer code, then its codewords are encoded in the inner code and they cannot be directly transmitted through the quantum channel. Hence, it is not clear whether such schemes can lead to high communication rates in the presence of gate errors, when the success probability of the overall communication scheme should approach $1$ in the limit of asymptotically many channel uses.

\subsection{Our approach and main results}

To construct encoders and decoders for sending information over quantum channels at high rates in the presence of gate errors, we will use fault-tolerance, but combine it with techniques from quantum Shannon theory. Specifically, we consider a quantum error correcting code enabling fault-tolerant quantum computation and use it to implement certain encoding and decoding circuits fault-tolerantly. Note that this is similar to the idea of concatenating a channel code (as outer code) with a circuit code (as inner code) outlined above. We will then use interface quantum circuits translating between logical qubits and physical qubits (i.e., the qubits that the communication channel acts on). The overall setup is depicted in Figure \ref{fig:capacityReal}. Note that the interfaces are quantum circuits themself and therefore affected by gate errors. 

\begin{figure*}[hbt!]
        \center
        \includegraphics[scale=0.38]{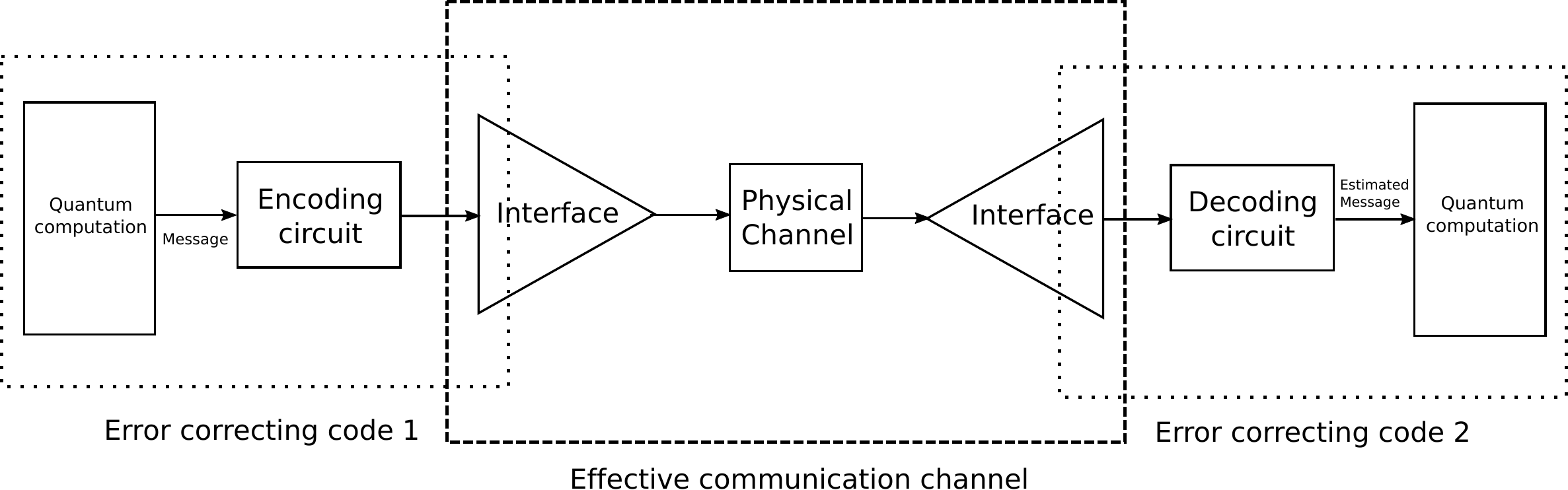}
        \caption{Our approach: Encoder and decoder are implemented in error correcting codes and interfaces are used to convert logical to physical states and vice-versa.}
        \label{fig:capacityReal}
\end{figure*} 

To find good lower bounds on the fault-tolerant capacities $C_{\mathcal{F}_\PIID(p)}(T)$ and $Q_{\mathcal{F}_\PIID(p)}(T)$ of a quantum channel $T$ under the i.i.d. Pauli noise model $\mathcal{F}_\PIID(p)$, we take the following steps: 

\begin{enumerate}
\item We show that there are interface circuits for the concatenated $7$-qubit Steane code used in~\cite{aliferis2006quantum}, that only fail with a probability linear in $p$ (regardless of the concatenation level), when affected by the noise model $\mathcal{F}_\PIID(p)$. 
\item We define an effective communication channel by combining the channel $T$ with the noisy interfaces (see Figure \ref{fig:capacityReal}). This channel will be close to the original channel $T$ with distance linear in $p$. 
\item We consider encoders and decoders achieving rates close to the capacity of the effective communication channel and by implementing them in the concatenated $7$-qubit Steane code we construct the overall coding scheme. 
\item By continuity of the ideal capacities $C(T)$ and $Q(T)$ (i.e., the classical and quantum capacity of $T$ without gate errors) we conclude that the achieved rates are also close to these quantities. 
\end{enumerate}

There are some pitfalls in the steps outlined above. Most importantly, a careful analysis of the interface circuits under noise will show that the effective communication channel might not be a tensor power of a well-defined quantum channel, i.e., it might not be i.i.d.~in general even though both the physical channel and circuit noise model are i.i.d. The reason for this are correlated errors produced within the quantum circuit at the encoder, e.g., through non-local CNOT gates. While such errors are correctable by the circuit code, failures in the interface (which happen with probabilities proportional to the gate error probability $p$) might cause non-i.i.d.~correlations to emerge in the effective channel. Instead, the effective channel takes the form of a fully-quantum arbitrarily varying channel~\cite{boche2018fully} with some additional structure, which we refer to as a quantum channel under arbitrarily varying perturbation (AVP). A quantum channel $T$ under AVP can be seen as a perturbation of the original channel $T$ by a quantum channel interacting with an environment system. When taking tensor powers of such a channel the environment systems corresponding to the tensor factors can be in an entangled quantum state leading to correlations between the different applications of the channel. We will analyze classical and quantum capacities of quantum channels under AVP in two settings: The first setting allows only classical correlations in the environment state and the second setting allows for arbitrary (including entangled) environment states. In both scenarios, we find coding schemes achieving rates close to the standard capacities by applying techniques from the quantum Shannon theory of non-i.i.d.~channel models (e.g., codes for arbitrarily varying channels, and postselection techniques). Finally, we show that the capacities under AVP are lower bounds to the fault-tolerant capacities which implies our main results, which are threshold theorems for fault-tolerant capacities of classical-quantum and quantum channels:
\begin{itemize}
	\item For every $\epsilon>0$ and every dimension $d\geq 2$ there exists a threshold $p(\epsilon,d)>0$ such that  
\[
C_{\mathcal{F}_\PIID(p)}(T) \geq C(T) - \epsilon
\]
for all $0\leq p\leq p(\epsilon,d)$ and for all classical-quantum channels $T:\mathcal{A}\ra \M_d$.
\item For every $\epsilon>0$ and every quantum channel $T:\M_{d_1}\ra \M_{d_2}$ there exists a $p(\epsilon,T)>0$ such that 
\[
C_{\mathcal{F}_\PIID(p)}(T) \geq C(T) - \epsilon  \quad\text{ and }\quad Q_{\mathcal{F}_\PIID(p)}(T)\geq Q(T) - \epsilon
\]
for all $0\leq p\leq p(\epsilon,T)$.
\end{itemize}
Our results show that communication at good rates and with vanishing communication error is possible in non-trivial cases and in realistic scenarios where local gates are affected by noise. This is an important validation of the practical relevance of quantum Shannon theory. Moreover, our results are not only relevant for communication over large distances, but also for applications within a quantum computer. Here, communication between distant parts of a quantum computing chip may be affected by a higher level of noise than the local gates. Our results could then be used to optimize the design of quantum hardware with on-chip communication.

Our article is organized as follows:
\begin{itemize}
\item In Section \ref{sec:Prelim} we will go through preliminaries needed for the rest of the article. 
\begin{itemize}
\item In Section \ref{sec:Notation} we introduce the most basic notation. 
\item In Section \ref{sec:QCircuitsAndNoise} we review the circuit model of quantum computation and we explain how to model the noise affecting quantum circuits. 
\item In Section \ref{sec:AnalyzingNoisyQuantumCircuits} we slightly simplify the formalism from~\cite{aliferis2006quantum}, and we explain how it can be used to analyze quantum circuits affected by noise. 
\item In Section \ref{sec:ConcCodesThresh} we review the threshold theorem from~\cite{aliferis2006quantum} for concatenated quantum codes.
\item In Section \ref{sec:CapQuChannel} we define the most basic capacities considered in quantum Shannon theory: The classical capacity of a classical-quantum or quantum channel, and the quantum capacity of a quantum channel.  
\end{itemize}
\item In Section \ref{sec:FTTechniques} we study quantum communication in the framework of fault-tolerant quantum circuits.
\begin{itemize}
\item In Section \ref{sec:NoisyInterface} we define interfaces between physical qubits and logical qubits encoded in stabilizer codes. For the concatenated $7$-qubit Steane code we analyze how an interface is affected by i.i.d.~Pauli noise. 
\item In Section \ref{sec:EncDecEffComm} we use interfaces for the concatenated $7$-qubit Steane code to study quantum protocols for communication over a quantum channel. In particular, we identify the effective communication channel (cf.~Figure \ref{fig:capacityReal}). 
\end{itemize}
\item In Section \ref{sec:CapAVP} we study transmission of classical or quantum information via quantum channels under arbitrarily varying perturbation (AVP). This includes the effective communication problem from Section \ref{sec:EncDecEffComm} as a special case.
\begin{itemize}
\item In Section \ref{sec:CapAVPFirst} we introduce quantum channels under AVP and we define the corresponding classical and quantum capacity.
\item In Section \ref{sec:CapAVPSep} we study quantum channels under AVP with fully separable environment states. This channel model reduces to the arbitrarily varying quantum channels studied in~\cite{ahlswede2013quantum}.
\item In Section \ref{sec:CapAVPAll} we study quantum channels under AVP with unrestricted environment states (including highly entangled states). We apply a postselection technique to prove a lower bound on these quantities.  
\end{itemize} 
\item In Section \ref{sec:FTCapacities} we introduce and study fault-tolerant versions of the classical capacity and the quantum capacity. 
\begin{itemize}
\item In Section \ref{sec:FTCapCQ} we study the fault-tolerant classical capacity $C_{\mathcal{F}_\PIID(p)}(T)$ under i.i.d.~Pauli noise of a classical-quantum channel. 
\item In Section \ref{sec:FTCapQC} we study the fault-tolerant classical capacity $C_{\mathcal{F}_\PIID(p)}(T)$ and the fault-tolerant quantum capacity $Q_{\mathcal{F}_\PIID(p)}(T)$ under i.i.d.~Pauli noise of a quantum channel. We show that these capacities are lower bounded by capacities under AVP.
\item In Section \ref{sec:GoodCodes} we show how asymptotically good codes can be used to design fault-tolerant coding schemes for quantum communication for quantum channels arising as convex combinations with the identity channel.
\end{itemize}
\item In Section \ref{sec:Conclusion} we conclude our article with some ideas for further research.
\end{itemize}

\section{Preliminaries}
\label{sec:Prelim}

\subsection{Notation}
\label{sec:Notation}

We will denote by $\M_d:=\M\lb\C^d\rb$ the matrix algebra of complex $d\times d$-matrices and by $\mathcal{D}(\C^d)$ the set of $d$-dimensional quantum states, i.e., the set of positive semidefinite matrices in $\M_d$ with trace equal to $1$. As is common in the mathematical literature, we will define quantum channels as completely positive and trace-preserving maps $T:\M_{d_A}\ra\M_{d_B}$. In analogy, we define channels with classical output as positive and entry-sum-preserving maps into $\C^{d}$, where we identify probability distributions on the set $\lset 1,\ldots ,d\rset$ with entrywise-positive vectors in $\C^d$ such that the sum of their entries equals $1$ (which is equivalent to diagonal matrices in $\mathcal{D}(\C^d)$). By the same reasoning, we may define channels with classical input as linear maps on $\C^d$, which are positive with respect to the entrywise-positive order on $\C^d$ and the order on the output, and such that they yield normalized outputs on vectors with entries summing to $1$. Sometimes, we will also use the more common definition as maps from a finite alphabet $\mathcal{A}$, and extended linearly to the space of probability distributions on $\mathcal{A}$ denoted by $\mathcal{P}(\mathcal{A})$.

\subsection{Quantum circuits and noise models}
\label{sec:QCircuitsAndNoise}

A quantum channel $\Gamma:\M^{\otimes n}_2\ra\M^{\otimes m}_2$ is called a \emph{quantum circuit} if it can be written as a composition of elementary operations~\cite{Nielsen2007}. These \emph{elementary operations} are:
\begin{itemize}
\item Elementary qubit gates: Pauli gates $\sigma_x,\sigma_y$ and $\sigma_z$, Hadamard gate $H$, and the $T$-gate.
\item Identity gate corresponding to a resting qubit.
\item Controlled-not (CNOT) gate.
\item Measurements and preparations in the computational basis.
\item Qubit trace, i.e., discarding a qubit. 
\end{itemize}
It is well known that the quantum circuits form a dense subset of the set of quantum channels $T:\M^{\otimes n}_2\ra\M^{\otimes m}_2$ (see for instance~\cite{boykin1999universal,barenco1995elementary,Nielsen2007}). Note that there may be many ways to construct the same quantum circuit (viewed as a quantum channel) from elementary operations. Moreover, after physically implementing the quantum circuit, its performance under noise might depend on the specific construction. To simplify our discussion we will assume every quantum circuit $\Gamma$ to be specified by a particular \emph{circuit diagram} $G_\Gamma$. Formally,  $G_\Gamma$ is an acyclic directed graph with vertices colored by elementary operations, and edges corresponding to qubits interacting at elementary gates. We define the set of \emph{locations} of $\Gamma$ denoted by $\Loc(\Gamma)$ as the set of vertices of $G_\Gamma$, and we will denote by $d_\text{out}(l)\in\lset 0,1,2\rset$ the outdegree of a location $l\in\Loc(\Gamma)$. Note that $d_\text{out}(l)=0$ if the location $l$ is a measurement or a trace, $d_\text{out}(l)=1$ if the location $l$ is an elementary qubit gate, identity gate, or a preparation, and $d_\text{out}(l)=2$ if the location $l$ is a CNOT gate.

Different models of noise affecting quantum circuits have been studied in the literature. Here, we are restricting to the simplest case of these models where Pauli noise affects locations of the circuit locally. To model such noise we will select subsets of locations in the circuit diagram where we will insert Pauli noise channels either before or after the executed elementary operation. Formally, this introduces a second coloring of the vertices of the circuit diagram with colors representing different Pauli channels describing the noise. It should be emphasized that whether noise channels are inserted before or after the elementary operations is a convention (except when elementary operations are preparations or measurements). Here, we choose the same convention as in~\cite{aliferis2006quantum}, but other choices might be reasonable as well.  

\begin{defn}[Pauli fault patterns and faulty circuits]\label{defn:FaultPattern}
Consider the single qubit Pauli channels $\Ad_{\sigma_x},\Ad_{\sigma_y},\Ad_{\sigma_z}:\M_2\ra\M_2$ defined as $\Ad_{\sigma_i}\lb X\rb = \sigma_i X\sigma_i$ for $i\in\lset x,y,z\rset$ and any $X\in\M_2$. A \emph{Pauli fault pattern} affecting a quantum circuit $\Gamma:\M^{\otimes n}_2\ra\M^{\otimes m}_2$ is a function $F:\Loc(\Gamma)\ra \lset \ident,x,y,z\rset\cup \lb\lset\ident,x,y,z\rset\times \lset\ident,x,y,z\rset\rb$ such that $F(l)\in\lset \ident,x,y,z\rset $ if $d_\text{out}(l)\in\lset 0,1\rset$ and $F(l)\in\lset\ident,x,y,z\rset\times \lset\ident,x,y,z\rset$ if $d_\text{out}(l)=2$. We denote by $\lbr\Gamma\rbr_{F}:\M^{\otimes n}_2\ra\M^{\otimes m}_2$ the quantum channel obtained by inserting the particular noise channels in the execution of the circuit diagram of the circuit $\Gamma$. Specifically, we do the following:
\begin{itemize}
\item We apply the Pauli channel $\Ad_{\sigma_{F(l)}}$ directly before the location $l\in\Loc(\Gamma)$ if $d_{\text{out}}(l)=0$.
\item We apply the Pauli channel $\Ad_{\sigma_{F(l)}}$ directly after the location $l\in\Loc(\Gamma)$ if $d_{\text{out}}(l)=1$.
\item We apply the Pauli channel $\Ad_{\sigma_{F(l)_1}}\otimes \Ad_{\sigma_{F(l)_2}}$ directly after the location $l\in\Loc(\Gamma)$ if $d_{\text{out}}(l)=2$.
\end{itemize}
When given a Pauli fault pattern $F$ affecting a composition $\Gamma_1\circ \Gamma_2$ of quantum circuits, we will often abuse notation and write $\lbr \Gamma_1\rbr_F$ and $\lbr \Gamma_2\rbr_F$ to denote the restriction of $F$ (as a function) to the locations of the circuits $\Gamma_1$ and $\Gamma_2$, respectively.
\end{defn}

In our simplified treatment, a \emph{noise model} $\mathcal{F}$ specifies a probability distribution over fault patterns to occur in a given quantum circuit $\Gamma$. We will denote by $\lbr\Gamma\rbr_\mathcal{F}$ the circuit affected by the noise model, i.e., the quantum channel obtained after selecting a fault pattern at random according to the noise model and inserting it into the circuit diagram of the circuit. Again it will be convenient to sometimes abuse notation, and use the notation $\lbr\cdot\rbr_\mathcal{F}$ for combinations of quantum circuits and quantum communication channels. For example, we might write $\lbr\Gamma_D\circ T^{\otimes m}\circ \Gamma_E\rbr_\mathcal{F}$, where $\Gamma_E$ and $\Gamma_D$ are quantum circuits and $T$ denotes a quantum channel that is not viewed as a quantum circuit. By this notation we mean the overall quantum channel obtained from applying the noise model to the quantum circuits $\Gamma_E$ and $\Gamma_D$ that have well-defined locations, but the quantum channel $T$ is not affected by this.  

In the following, we will restrict to a very basic type of noise.

\begin{defn}[I.i.d.~Pauli noise model]\label{defn:NoiseModel}
Consider a quantum circuit $\Gamma:\M^{\otimes n}_2\ra\M^{\otimes m}_2$. The \emph{i.i.d.~Pauli noise model} $\mathcal{F}_{\PIID}(p)$ selects a Pauli fault pattern $F:\Loc(\Gamma)\ra \lset \ident,x,y,z\rset\cup \lb\lset\ident,x,y,z\rset\times \lset\ident,x,y,z\rset\rb$ with the probability
\[
\text{P}\lb F\rb = (1-p)^{l_{\ident}}(p/3)^{l_x}(p/3)^{l_y}(p/3)^{l_z}
\] 
where
\begin{align*}
l_i &:= \Big|\lset l\in \text{Loc}\lb\Gamma\rb~:~d_{\text{out}}(l)\in\lset 0,1\rset \text{ and }F(l)= i\rset\Big|+\Big|\lset l\in \text{Loc}\lb\Gamma\rb~:~d_{\text{out}}(l)=2 \text{ and }F(l)_1=i\rset\Big| \\
&\quad\quad\quad\quad\quad+\Big|\lset l\in \text{Loc}\lb\Gamma\rb~:~d_{\text{out}}(l)=2 \text{ and }F(l)_2=i\rset\Big|,
\end{align*} 
for any $i\in\lset \ident,x,y,z\rset$.
\end{defn}

It is straightforward to define other examples of i.i.d.~noise models or more general local noise models similar to the previous definitions. The i.i.d.~Pauli noise model corresponds to inserting the depolarizing channel $D_p:\M_2\ra \M_2$ given by $D_p(X) = (1-p)X + p \sum_{i\in \lset x,y,z\rset} \sigma_i X \sigma_i$ on the output systems of every location in the circuit. Our main results also hold (with slightly modified proofs) for noise models where arbitrary qubit quantum channels of the form $X\mapsto (1-p)X + pN(X)$ are inserted after each location, and where the qubit quantum channel $N$ may be different for each location. Different techniques might be required for local noise models that are not i.i.d.~, or for noise even more exotic. We will comment on this further at the appropriate places.

\subsection{Analyzing noisy quantum circuits}
\label{sec:AnalyzingNoisyQuantumCircuits}

To protect a quantum circuit against noise it can be implemented in a quantum error correcting code. Here, we will describe how to analyze noisy quantum circuits following the ideas of~\cite{aliferis2006quantum}. We will first introduce idealized encoding and decoding operations that will select a specified basis in which faults are interpreted. These operations are unitary and can be inserted into a quantum circuit affected by a specific fault pattern. When the fault pattern is nice enough it will then be possible to transform the faulty quantum circuit into the ideal quantum circuit by decoupling the data from the noise encoded in a quantum state corresponding to possible syndromes. In our discussion we will restrict to stabilizer codes encoding a single qubit where these constructions can be done in a starightforward way.   

Let $\mathcal{C}\subset (\C^2)^{\otimes K}$ denote the \emph{code space} of a $2$-dimensional stabilizer code, i.e., the common eigenspace for eigenvalue $+1$ of a collection $\lset g_1,\ldots , g_{K-1}\rset$ of commuting product-Pauli matrices generating a matrix group not containing $-\id$. In such a setting we denote by $W_s\subset (\C^{2})^{\otimes K}$ for $s\in\F^{K-1}_2$ the space of common eigenvectors for the eigenvalues $(-1)^{s_i}$ with respect to each $g_i$. By definition it is clear, that $W_s\perp W_{s'}$ for $s\neq s'$ and that $\text{dim}\lb W_s\rb=\text{dim}\lb\mathcal{C}\rb=2$ for any $s\in\F^{K-1}_2$. Therefore, we have 
\[
\lb\C^2\rb^{\otimes K} = \bigoplus_{s\in\F^{K-1}_2} W_s 
\]
by a simple dimension counting argument, and $\mathcal{C} = W_{(1,1,\ldots ,1)}$. 

We will denote by $\lset\ket{\overline{0}},\ket{\overline{1}}\rset\subset\mathcal{C}$ an orthonormal basis of the code space, i.e., the encoded computational basis. For each $s\in \F^{K-1}_2$ we can select a product-Pauli operator\footnote{using that product-Pauli operators either commute or anticommute} $E_s:\lb\C^2\rb^{\otimes K}\ra\lb\C^2\rb^{\otimes K}$ such that 
\[
W_s = E_s\lb\mathcal{C}\rb. 
\]
Note that in this way $W_{00\cdots 0} = \mathcal{C}$. To follow the usual convention we will call the operator $E_s$ the Pauli error associated with the \emph{syndrome} $s\in\F^{K-1}_2$. By the previous discussion the set
\begin{equation}
\bigcup_{s\in\F^{K-1}_2} \lset E_s\ket{\overline{0}}, E_s\ket{\overline{1}}\rset
\label{equ:ErrorBasis}
\end{equation}
forms a basis of $\lb\C^2\rb^{\otimes K}$ and it will be convenient to analyze noisy quantum circuits with respect to this basis. This approach follows closely the analysis of~\cite[Section 5.2.2]{aliferis2006quantum}, but makes it slightly more precise. We start by defining a linear map $D:(\C^2)^{\otimes K}\ra \C^2\otimes (\C^2)^{\otimes K-1}$ acting as
\[
D\lb E_s\ket{\overline{i}}\rb = \ket{i}\otimes \ket{s}, 
\]
on the basis from $\eqref{equ:ErrorBasis}$ and extend it linearly. Here we used the abbreviation $\ket{s}:=\ket{s_1}\otimes\ket{s_2}\otimes\cdots \otimes\ket{s_{K-1}}$ for $s\in\F^{K-1}_2$. Clearly, $D$ is a unitary change of basis when identifying $\C^2\otimes (\C^2)^{\otimes K-1}\simeq (\C^2)^{\otimes K}$. We can therefore define the unitary quantum channel $\Dec^*:\M^{\otimes K}_{2}\ra\M^{\otimes K}_2$ by 
\begin{equation}
\Dec^*(X) = DXD^\dagger,
\label{equ:DecIdeal}
\end{equation} 
and its inverse $\Enc^*:\M^{\otimes K}_{2}\ra\M^{\otimes K}_2$ by 
\begin{equation}
\Enc^*(X) = D^\dagger X D, 
\label{equ:EncIdeal}
\end{equation}
for any $X\in \M^{\otimes K}_2$. The quantum channel $\Dec^*$ is also called the \emph{ideal decoder}, and $\Enc^*$ is called the \emph{ideal encoder}. Note that these quantum channels in the case of concatenated codes appear in~\cite[p.17]{aliferis2006quantum}, where they are called the ``$k-^{*}$decoder'' and the ``$k-^{*}$encoder''. Finally we define a quantum channel $\EC^*:\M^{\otimes K}_{2}\ra \M^{\otimes K}_{2}$ by 
\begin{equation}
\EC^* =  \Enc^*\circ\lbr\ident_2\otimes \lb \proj{0}{0}\Tr\rb\rbr\circ \Dec^*, 
\label{equ:IdealEC}
\end{equation}
where $\ket{0}\in(\C^2)^{\otimes (K-1)}$ corresponds to the zero syndrome, and $\Tr:\M^{\otimes (K-1)}_2\ra \C$. The quantum channel $\EC^*$ corrects errors on the data and is called the \emph{ideal error correction}. In the following, we will sometimes use a superscript $*$ in order to identify ideal operations or gates. Such operations or gates should be thought of as either mathematical objects appearing in the analysis of noisy quantum circuits (in the case of $\Enc^*$, $\Dec^*$ and $\EC^*$), or as logical operations or logical gates acting on logical states. In any case, these ideal objects are never subject to noise.

To implement quantum circuits in a stabilizer code with code space $\mathcal{C}\subset(\C^2)^K$ of dimension $\text{dim}(\mathcal{C})=2$ we will assume that there are quantum circuits, called \emph{gadgets}, implementing the elementary operation on the code space. 

\begin{defn}[Implementation of a quantum circuit]
Let $\mathcal{C}\subset(\C^2)^{\otimes K}$ satisfying $\text{dim}(\mathcal{C})=2$ be the code space of a stabilizer code, and let $\proj{0}{0}\in\M^{\otimes (K-1)}_2$ denote the pure state corresponding to the zero syndrome. We assume that certain elementary quantum circuits called gadgets are given: 
\begin{enumerate}
\item For each elementary single qubit operation $G^*:\M_2\ra\M_2$, we have a gadget $G:\M^{\otimes K}_2\ra \M^{\otimes K}_2$ such that 
\[
\Dec^*\circ G\circ \Enc^*\lb \cdot \otimes \proj{0}{0}\rb = G^{*}\lb\cdot\rb\otimes \proj{0}{0}.
\]
\item For the CNOT gate $G^*_{\text{CNOT}}:\M_4\ra\M_4$, we have a gadget $G_{\text{CNOT}}:\M^{\otimes K}_2\otimes\M^{\otimes K}_2 \ra \M^{\otimes K}_2\otimes \M^{\otimes K}_2$ such that 
\[
(\Dec^*)^{\otimes 2}\circ G_{\text{CNOT}}\circ (\Enc^*)^{\otimes 2}\lb \cdot \otimes (\proj{0}{0})^{\otimes 2}\rb = G_{\text{CNOT}}^{*}\lb\cdot\rb\otimes (\proj{0}{0})^{\otimes 2}.
\]
\item For a measurement $G^{*}:\M_2\ra\text{Diag}_2$ in the computational basis we have a gadget $G:\M^{\otimes K}_2\ra\text{Diag}_2$ such that
\[
G\circ \Enc^*\lb \cdot \otimes \proj{0}{0}\rb = G^*(\cdot).
\]
\item For a preparation in the computational basis $G^{*}:\C\ra \M_2$ we have a gadget $G:\C\ra \M^{\otimes K}_2$ such that 
\[
\Dec^*\circ G  = G^*\otimes \proj{0}{0}.
\]
\item For the trace $G^{*}:\M_2\ra\C$ we have a gadget $G:\M^{\otimes K}_2\ra\C$ such that 
\[
G\circ \Enc^*\lb \cdot \otimes \proj{0}{0}\rb = G^*(\cdot).
\]
\end{enumerate}
Besides the gadgets defined above, we consider a quantum circuit $\EC:\M^{\otimes K}_2\ra\M^{\otimes K}_2$ realizing the quantum channel $\EC^*:\M^{\otimes K}_2\ra\M^{\otimes K}_2$ from \eqref{equ:IdealEC}. Then, we define the \emph{rectangle} of an elementary operation $G^*$ with corresponding gadget $G$ to be the quantum circuit given by
\[
R_G = \begin{cases}
\EC\circ G &\text{ if $G^*$ is a single qubit operation, or a preparation.}\\
\EC^{\otimes 2}\circ G&\text{ if $G^*$ is a CNOT gate.}\\
G  &\text{ if $G^*$ is a measurement or a trace.}
\end{cases}
\]
Given a quantum circuit $\Gamma:\M^{\otimes n}_2\ra\M^{\otimes m}_2$ we define its \emph{implementation} $\Gamma_{\mathcal{C}}:\M^{\otimes nK}_2\ra\M^{\otimes mK}_2$ as the quantum circuit obtained by replacing each qubit in the circuit $\Gamma$ by a block of $K$ qubits, and each elementary operation in the circuit diagram of $\Gamma$ by the corresponding rectangle. 
\label{defn:Impl}
\end{defn}

Implementations of quantum circuits can be analyzed using the ideal encoder $\Enc^*$ and the ideal decoder $\Dec^*$ introduced previously. To illustrate this, consider a quantum error correcting code $\mathcal{C}\subset(\C^2)^K$ satisfying $\text{dim}(\mathcal{C})=2$ and an elementary single qubit operation $G^*:\M_2\ra\M_2$ with corresponding rectangle $R_G$. Using that $\Dec^*$ and $\Enc^*$ are inverse to each other, we can compute 
\begin{align*}
\Dec^*\circ R_G\circ \Enc^*\lb \cdot \otimes \proj{0}{0}\rb &= \Dec^*\circ \EC\circ G\circ \Enc^*\lb \cdot \otimes \proj{0}{0}\rb\\
& = \Dec^*\circ \EC\circ \Enc^*\circ \Dec^*\circ G\circ \Enc^*\lb \cdot \otimes \proj{0}{0}\rb \\
& = G^*(\cdot)\otimes \proj{0}{0},
\end{align*}
where we used the assumptions from Definition \ref{defn:Impl} and the fact that as quantum channels and without noise we have $\EC=\EC^*$ with the ideal error correction from \eqref{equ:IdealEC}. In a similar way, it can be checked that   
\[
(\Dec^*)^{\otimes m}\circ \Gamma_\mathcal{C}\circ (\Enc^*)^{\otimes n}\lb \cdot \otimes \proj{0}{0}^{\otimes n}\rb = \Gamma\lb\cdot\rb\otimes \proj{0}{0}^{\otimes m}
\]  
for any quantum circuit $\Gamma:\M^{\otimes n}_2\ra\M^{\otimes m}_2$, where $\Gamma_\mathcal{C}$ denotes its implementation as in Definition \ref{defn:Impl}. 

So far, we have not considered any noise affecting the elementary gates of a quantum circuit. In Definition \ref{defn:Impl} we have only described how to implement a given quantum circuit within a certain quantum error correcting code. It should be noted that the gadgets required for this construction are always easy to construct. However, it is more challenging to construct these gadgets in a way such that implementations of quantum circuits become fault-tolerant, which is one of the main achievements of \cite{aliferis2006quantum}. We will not go into the details of these constructions, but only review the concepts needed for our main results. 

Intuitively, a quantum circuit affected by noise should be called ``correct'' if its behaviour matches that of the same circuit without noise. Moreover, the probability of a quantum circuit being ``correct''  under the noise model that is considered should be high after implementing the circuit in a quantum error correcting code. It is the central idea in~\cite{aliferis2006quantum} to derive ``correctness'' of noisy quantum circuits from conditions that are satisfied with high probability by the rectangles making up the circuit. However, this approach requires some care. When the input to a rectangle is unconstrained it might already include a high level of noise. Then, a single additional fault occurring with probability $p$ under the noise model $\mathcal{F}_\PIID(p)$ in the rectangle could cause the accumulated faults to be uncorrectable and the overall circuit not to agree with its desired output. As a consequence the quantum circuit containing the rectangle would not be correct. To avoid this problem it makes sense to derive ``correctness'' from properties of \emph{extended rectangle} combining the rectangle and the error correction preceeding it. By designing the error correction in a certain way~\cite{aliferis2006quantum} its output can be controlled even if there are faults, and a meaningful condition can be defined. In the following, we will make these notions precise.

\begin{defn}[Extended rectangles]
Let $\mathcal{C}\subset(\C^2)^{\otimes K}$ satisfying $\text{dim}(\mathcal{C})=2$ be the code space of a stabilizer code where a gadget is defined for every elementary operation and for the error correction operation as in Definition \ref{defn:Impl}. The \emph{extended rectangle} corresponding to an elementary operation $G^*$ with corresponding gadget $G$ is the quantum circuit given by
\[
E_G = \begin{cases}
\EC\circ G\circ \EC &\text{ if $G^*$ is a single qubit operation.}\\
\EC^{\otimes 2}\circ G\circ \EC^{\otimes 2} &\text{ if $G^*$ is a CNOT gate.}\\
G\circ \EC &\text{ if $G^*$ is a measurement or a partial trace.} \\
\EC\circ G &\text{ if $G^*$ is a preparation.} \\
\end{cases}
\]
\label{defn:ExRec}
\end{defn}

In~\cite[p.10]{aliferis2006quantum} a combinatorial condition called ``goodness'' is introduced for extended rectangles affected by fault patterns. This condition behaves as follows: First, using the ideal decoder and encoder it can be shown that an implemented quantum circuit with classical input and output and affected by noise can be transformed into the ideal quantum circuit without noise whenever all its extended rectangles are ``good''. Secondly, the probability of an extended rectangle being ``good'' under the Pauli i.i.d.~noise model $\mathcal{F}_\PIID(p)$ is very high. By using the union bound, it can then be concluded that an implemented quantum circuit affected by the noise model $\mathcal{F}_\PIID(p)$ is ``correct'' with high probability.  Unfortunately, the precise definition of ``goodness'' is quite cumbersome, and we have chosen to avoid it here. Instead, we define when extended rectangles are \emph{well-behaved} under a fault-pattern, which is inspired by the transformation rules stated in~\cite[p.11]{aliferis2006quantum} for ``good'' extended rectangles. In particular, a well-behaved quantum circuit, i.e., a quantum circuit in which all extended rectangles are well-behaved, can be transformed in the same way as in~\cite{aliferis2006quantum} leading to an ideal quantum circuit when input and output are classical. Moreover, the notion of ``goodness'' from~\cite[Section 3.1.]{aliferis2006quantum} implies our notion of ``well-behaved''.    

\begin{defn}[Well-behaved extended rectangles and quantum circuits]
Let $\mathcal{C}\subset(\C^2)^{\otimes K}$ satisfying $\text{dim}(\mathcal{C})=2$ be the code space of a stabilizer code where a gadget is defined for every elementary operation as in Definition \ref{defn:Impl}. Let $G^*$ be an elementary operation with corresponding extended rectangle $E_G$ according to Definition \ref{defn:ExRec}, and let $F$ be a Pauli fault pattern on the quantum circuit $E_G$. We will call the extended rectangle $E_G$ \emph{well-behaved} under the fault pattern $F$ if the corresponding condition holds:
\begin{enumerate}
\item The operation $G^*$ is a single qubit operation and we have 
\[
\Dec^*\circ \lbr E_G\rbr_F = (G^*\otimes S^F_G)\circ \Dec^*\circ \lbr\EC\rbr_F,
\]
for some quantum channel $S^F_G:\M^{\otimes (K-1)}_2\ra \M^{\otimes (K-1)}_2$ on the syndrome space.
\item The operation $G^*$ is a CNOT gate and we have 
\[
(\Dec^*)^{\otimes 2}\circ \lbr E_G\rbr_F = (G^*\otimes S^F_G)\circ (\Dec^*)^{\otimes 2}\circ \lbr(\EC)^{\otimes 2}\rbr_F,
\]
for some quantum channel $S^F_G:\M^{\otimes 2(K-1)}_2\ra \M^{\otimes 2(K-1)}_2$ on the syndrome space.
\item The operation $G^*$ is a measurement or a trace and we have 
\[
\lbr E_G\rbr_F = (G^*\otimes \Tr)\circ \Dec^*\circ \lbr\EC\rbr_F.
\]
\item The operation $G^*$ is a preparation and we have 
\[
\Dec^*\circ \lbr E_G\rbr_F = G^*\otimes \sigma^F_G,
\]
for some quantum state $\sigma^F_G\in\M^{\otimes (K-1)}_2$.
\end{enumerate} 
Similarily, we will call the implementation $\Gamma_{\mathcal{C}}:\M^{\otimes nK}_2\ra\M^{\otimes mK}_2$ of a quantum circuit $\Gamma:\M^{\otimes n}_2\ra\M^{\otimes m}_2$ \emph{well-behaved} under the Pauli fault pattern $F$ affecting the quantum circuit $\Gamma_{\mathcal{C}}$ if all extended rectangles contained in $\lbr\Gamma_{\mathcal{C}}\rbr_F$ are well-behaved.

\label{defn:ExRecCorr}
\end{defn} 

To analyze a faulty but well-behaved implementation of a quantum circuit we can use the transformation rules from Definition \ref{defn:ExRecCorr} repeatedly. First, we either introduce an ideal decoder after the final step of the implemented circuit, or if the quantum circuit has classical output we use the transformation rules for measurements or traces in its final step and thereby obtain an ideal decoder. Second, we move the ideal decoder towards the beginning of the quantum circuit using the transformation rules from Definition \ref{defn:ExRecCorr} repeatedly. In Figure \ref{fig:dancing} we depict a schematic of this process. 

\begin{figure*}[hbt!]
        \center
        \includegraphics[scale=0.5]{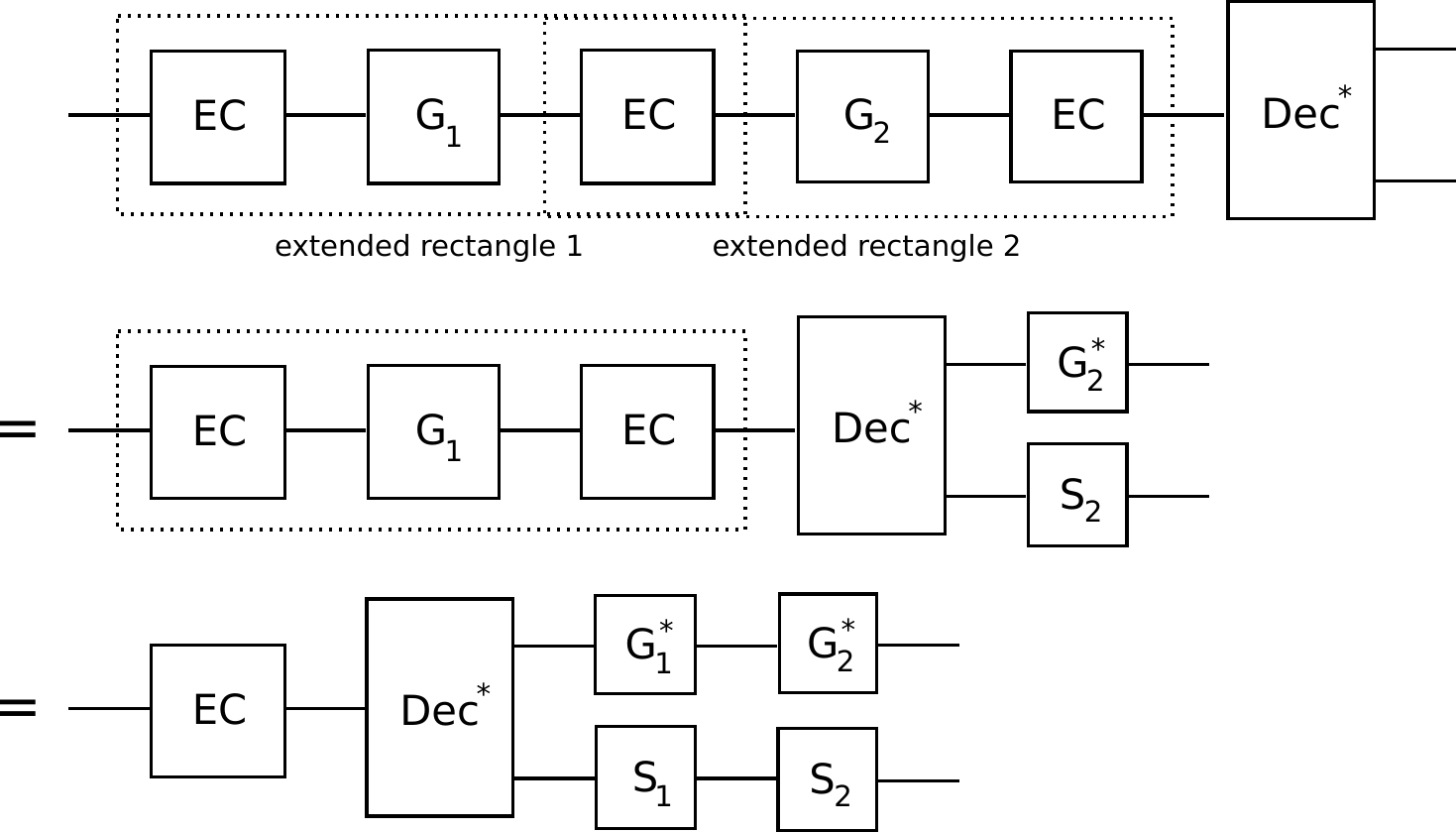}
        \caption{Transforming faulty but well-behaved extended rectangles using Definition \ref{defn:ExRecCorr}.}
        \label{fig:dancing}
\end{figure*} 

Finally, if the quantum circuit has classical input can use the transformation rule for a preparation (depending on the classical data) from Definition \ref{defn:ExRecCorr} to remove the ideal decoder in the initial step of the quantum circuit, or we keep the ideal decoder before the first error correction appearing in the quantum circuit. The argument just given is implicitly contained in~\cite[Lemma 4]{aliferis2006quantum} and the next lemma will make the conclusion more precise:    

\begin{lem}[Transformation of well-behaved implementations]
Let $\mathcal{C}\subset(\C^2)^{\otimes K}$ satisfying $\text{dim}(\mathcal{C})=2$ be the code space of a stabilizer code where a gadget is defined for every elementary operation as in Definition \ref{defn:Impl}, and let $\Gamma:\M^{\otimes n}_2\ra\M^{\otimes m}_2$ be a quantum circuit. If the quantum circuit $\Gamma_{\mathcal{C}}\circ \EC^{\otimes n}$ is well-behaved under a Pauli fault pattern $F$, then we have 
\[
(\Dec^*)^{\otimes m}\circ\lbr\Gamma_{\mathcal{C}}\circ \EC^{\otimes n}\rbr_F = \lb\Gamma\otimes S^F_\Gamma\rb\circ (\Dec^*)^{\otimes n}\circ\lbr\EC^{\otimes n}\rbr_F,
\] 
for some quantum channel $S^F_\Gamma:\M^{\otimes n(K-1)}_2\ra \M^{\otimes m(K-1)}_2$ acting on the syndrome space. Moreover, if $\Gamma:\C^{2^n}\ra\C^{2^m}$ is a quantum circuit with classical input and output, and $F$ is a Pauli fault pattern under which the quantum circuit $\Gamma_{\mathcal{C}}$ is well-behaved, then we have $\lbr\Gamma_{\mathcal{C}}\rbr_F = \Gamma.$
\label{lem:CircuitTransformExRecCorr}
\end{lem}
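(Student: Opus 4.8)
The plan is to prove the two statements of Lemma~\ref{lem:CircuitTransformExRecCorr} by induction on the structure of the quantum circuit $\Gamma$, using the transformation rules for well-behaved extended rectangles from Definition~\ref{defn:ExRecCorr} as the inductive step. The key observation is that $\Gamma_{\mathcal{C}}\circ\EC^{\otimes n}$, being well-behaved, decomposes as a composition of extended rectangles $E_{G_1},\ldots,E_{G_N}$ (one per elementary operation of $\Gamma$, with the leading $\EC^{\otimes n}$ absorbed into the first layer of extended rectangles), each of which is well-behaved under the restriction of $F$; this decomposition is exactly the reason Definition~\ref{defn:ExRec} pads each gadget with an $\EC$ on its input side, so that consecutive rectangles share the intervening error correction. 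I would first make this decomposition precise, being careful about how the $\EC^{\otimes n}$ in front is distributed and about wires that enter a CNOT gadget (where a pair of extended rectangles overlaps on the shared error-correction blocks).

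First I would establish the single-wire, single-gate base case: for $G^*$ a single-qubit operation, a preparation, a measurement, or a trace, the four cases of Definition~\ref{defn:ExRecCorr} directly give the claimed identity with $S^F_{E_G}$ equal to the syndrome-space channel $S^F_G$ (or $\sigma^F_G$ in the preparation case), and the CNOT case is analogous on two blocks. Then I would run the induction: writing $\Gamma_{\mathcal{C}}\circ\EC^{\otimes n} = E_{G_N}\circ\cdots\circ E_{G_1}$ (with the obvious bookkeeping for which blocks each $E_{G_j}$ acts on, and identity on the rest), I push an ideal decoder $(\Dec^*)^{\otimes m}$ in from the output end. Applying the well-behaved rule for $E_{G_N}$ turns $(\Dec^*)^{\otimes m}\circ\lbr E_{G_N}\rbr_F$ into $(G_N^*\otimes S^F_{G_N})\circ(\Dec^*)^{\otimes(\text{input blocks})}\circ\lbr\EC^{\otimes\cdots}\rbr_F$, and the $\EC$ produced on the right is precisely the error correction that sits at the output of $E_{G_{N-1}}$ — so it gets absorbed into $E_{G_{N-1}}$, which is again a well-behaved extended rectangle (this is the point of Definition~\ref{defn:ExRec}'s padding), and we may repeat. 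At each step the logical gate $G_j^*$ is peeled off onto the data register and the syndrome-space action is collected into a growing syndrome channel; after $N$ steps the data register carries $\Gamma = G_N^*\circ\cdots\circ G_1^*$, the accumulated syndrome channels compose to the desired $S^F_\Gamma:\M^{\otimes n(K-1)}_2\ra\M^{\otimes m(K-1)}_2$, and what remains on the input side is $(\Dec^*)^{\otimes n}\circ\lbr\EC^{\otimes n}\rbr_F$, giving the first identity. For the second statement, if $\Gamma$ has classical input and output we additionally use the measurement/trace rule in the last layer to \emph{produce} the ideal decoder (rather than inserting it by hand), and the preparation rule in the first layer to \emph{remove} the residual decoder $(\Dec^*)^{\otimes n}\circ\lbr\EC^{\otimes n}\rbr_F$; since the preparation rule outputs $G^*\otimes\sigma^F_G$ with no remaining $\EC$, and the classical output has already been extracted, the syndrome register is simply traced out, and with all ideal operations being noiseless and mutually inverse one is left with exactly $\Gamma$, i.e.\ $\lbr\Gamma_{\mathcal{C}}\rbr_F=\Gamma$.

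The main obstacle I anticipate is the bookkeeping for CNOT gadgets and, more generally, for the fact that a single extended rectangle can act on two blocks while its neighbours act on one: one has to check that when the decoder is pushed through, the $\EC$-outputs generated line up correctly with the $\EC$-inputs of \emph{all} relevant predecessor rectangles, so that every rectangle encountered during the sweep is genuinely a well-behaved extended rectangle and not a rectangle that has been ``half-consumed'' by a neighbouring two-block gate. This requires fixing a careful convention for how $\lbr\Gamma_{\mathcal{C}}\circ\EC^{\otimes n}\rbr_F$ is parsed into extended rectangles — essentially the content of~\cite[Lemma~4]{aliferis2006quantum} — and then the rest is a routine, if notation-heavy, induction. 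A secondary point to handle cleanly is that the restriction of a Pauli fault pattern $F$ to a sub-circuit (as in the abuse of notation in Definition~\ref{defn:FaultPattern}) interacts correctly with the decomposition, so that ``$E_{G_j}$ is well-behaved under $F$'' in the hypothesis really does feed the transformation rules at each stage.
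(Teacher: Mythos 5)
Your proposal is correct and follows essentially the same route as the paper's proof: decompose $\lbr\Gamma_{\mathcal{C}}\circ\EC^{\otimes n}\rbr_F$ into well-behaved extended rectangles, use the measurement/trace rules at the output (or insert an ideal decoder there), sweep the decoder backwards through the circuit via the rules of Definition~\ref{defn:ExRecCorr} while peeling off ideal gates and collecting the syndrome-space channels and preparation states into $S^F_\Gamma$, and in the classical input/output case remove the residual decoder using the preparation rule. The bookkeeping concerns you flag (overlapping $\EC$'s at CNOT gadgets, restriction of $F$ to sub-circuits) are exactly the points the paper also treats implicitly, so there is no substantive difference in approach.
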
 

\begin{proof}
We will only state the proof for the first part of the lemma, since the second part is an obvious modification. By Definition \ref{defn:Impl} the final part of the circuit $\Gamma_{\mathcal{C}}$ are error corrections, measurements or traces, depending on the final elementary operations in the circuit diagram of $\Gamma$. In the following, we denote by $\tilde{\Gamma}:\M^{n}_{2}\ra\M^{m'}_2$ the quantum circuit obtained by removing the final measurements and partial traces from $\Gamma$ leaving $m'\geq m$ qubits in the output. Without loss of generality we can assume qubits $1,\ldots ,m$ in the output of $\tilde{\Gamma}$ to correspond to the output of $\Gamma$. The extended rectangles corresponding to any measurement or partial trace are well-behaved by assumption, and replacing them as in Definition \ref{defn:ExRecCorr} shows that 
\[
(\Dec^*)^{\otimes m}\circ\lbr\Gamma_{\mathcal{C}}\circ \EC^{\otimes n}\rbr_F = \lb\bigotimes^{m}_{j=1}(\ident_2\otimes \ident_{S_j})\otimes \bigotimes^{m'}_{i=m+1}(G_i^*\otimes \Tr_{S_i} )\rb\circ(\Dec^*)^{\otimes m'}\circ\lbr\tilde{\Gamma}_{\mathcal{C}}\circ \EC^{\otimes n}\rbr_F,
\]  
where $G_i^*$ denote the ideal measurements or partial traces applied in the final step of $\Gamma$ and where the $\Tr_{S_i}$ acts on the syndrome system belonging to this code block. Using that every extended rectangle is well-behaved in the circuit $\lbr\tilde{\Gamma}_{\mathcal{C}}\circ \EC^{\otimes n}\rbr_F$ affected by fault pattern $F$, we can apply Definition \ref{defn:ExRecCorr} successively to transform each extended rectangle into the corresponding ideal operation. By doing so, the ideal decoder $\Dec^*$ moves towards the beginning of the circuit and in the final step we leave it directly after the initial error corrections. Collecting the quantum channels and partial traces acting on the syndrome space, and the quantum states on the syndrome space emerging from well-behaved extended rectangles corresponding to preparations into a single quantum channel $S^F_\Gamma:\M^{n(K-1)}_2\ra\M^{m(K-1)}_2$ finishes the proof. 
 
\end{proof}

Lemma \ref{lem:CircuitTransformExRecCorr} is only valid under the assumption that every extended rectangle in a potentially large circuit is well-behaved. Without any further assumptions this event might be very unlikely. In the next section we will restrict to quantum error correcting codes for which the extended rectangles are well-behaved with very high probability. Note that formally this is a property of both the code and the implementation of elementary gadgets (cf.~Definition \ref{defn:Impl}). In the following we will restrict to the concatenations~\cite{knill1996concatenated} of the $7$-qubit Steane code for which elementary gadgets have been constructed in~\cite{aliferis2006quantum}. Using these gadgets it is possible to prove the threshold theorem of~\cite{aliferis2006quantum} showing that fault-tolerant implementations of quantum circuits are possible.

\subsection{Concatenated quantum error correcting codes and the threshold theorem}
\label{sec:ConcCodesThresh}

A major result on fault-tolerant implementations of quantum circuits using quantum error correcting codes is the threshold theorem by Aharonov and Ben-Or~\cite{aharonov1996fault,aharonov1999fault,aharonov2008fault}. Here, we will focus our discussion on concatenated quantum error correcting codes~\cite{knill1996concatenated} constructed from the $7$-qubit Steane code and on the version of the threshold theorem stated in~\cite[Theorem 1]{aliferis2006quantum}. For convenience we have collected the construction and basic properties of this family of quantum error correcting codes in Appendix \ref{sec:Appendix}, but in the following we will not need to go through these details. We will only state the threshold theorem from~\cite{aliferis2006quantum} using the terminology introduced in the previous section. This will be sufficient to prove the results in the rest of our article. 

For any $l\in\N$ let $\mathcal{C}_l\subset (\C^2)^{\otimes 7^l}$ denote the $l$th level of the concatenated $7$-qubit Steane code. Note that each level defines a quantum error correcting code as introduced in the previous section. In particular, we use the following terminology throughout our article: 
\begin{itemize}
\item We denote by $\Enc^*_l$, $\Dec^*_l$, and $\EC^*_l$ the ideal operations introduced in \eqref{equ:EncIdeal}, \eqref{equ:DecIdeal}, and \eqref{equ:IdealEC} respectively for the $l$th level of the concatenated $7$-qubit Steane code. 
\item We refer to the elementary gadgets, error corrections, rectangles, and extended rectangles (see Definition \ref{defn:Impl} and Definition \ref{defn:ExRec}) at the $l$th level of the concatenated $7$-qubit Steane code as $l$-gadgets, $l$-error corrections, $l$-rectangles, and $l$-extended rectangles respectively. In formulas $l$ indices will indicate the level. 
\item All gadgets are constructed as explained in~\cite[Section 7]{aliferis2006quantum}.    
\end{itemize}  

In~\cite[p.10]{aliferis2006quantum} the notion of ``good'' extended rectangles is introduced, which implies the transformation rules of Definition \ref{defn:ExRecCorr}. As a consequence an extended rectangle is well-behaved whenever it is ``good''. Therefore, the following lemma follows directly from \cite[Lemma 2]{aliferis2006quantum}.

\begin{lem}[Threshold lemma]
For $l\in\N$ let $\mathcal{C}_l\subset (\C^2)^{\otimes 7^l}$ denote the $l$th level of the concatenated $7$-qubit Steane code. For every $l$ there exist gadgets implementing the elementary operations as in Definition \ref{defn:Impl} such that the following holds: There is a threshold $p_0\in\lb 0,1\rbr$ such that for any $0\leq p<p_0$, any $l\in\N$, and any $l$-extended rectangle $E^{(l)}_G$ we have 
\[
\text{P}\lb \lbr E^{(l)}_G\rbr_F\text{ is not well-behaved}\rb \leq p_0\lb \frac{p}{p_0}\rb^{2^l},
\] 
where $\text{P}$ denotes the probability distribution over Pauli fault patterns $F$ induced by the fault model $\mathcal{F}_{\PIID}(p)$.
\label{lem:Strongthreshold}
\end{lem}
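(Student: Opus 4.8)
The plan is to derive this lemma from the threshold theorem of Aliferis-Gottesman-Preskill~\cite[Lemma 2]{aliferis2006quantum}, which is stated in terms of their combinatorial notion of ``goodness'' of extended rectangles, by passing through the implication ``good $\Rightarrow$ well-behaved''. First I would recall the precise statement of~\cite[Lemma 2]{aliferis2006quantum}: for the concatenated $7$-qubit Steane code with gadgets constructed as in~\cite[Section 7]{aliferis2006quantum}, there is a constant $p_0>0$ such that for any $0\leq p<p_0$, any level $l$, and any $l$-extended rectangle, the probability (under $\mathcal{F}_{\PIID}(p)$) that the extended rectangle is \emph{not} good is at most $p_0(p/p_0)^{2^l}$. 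This is exactly the quantitative bound appearing in the statement, so the only thing to check is that the event ``not well-behaved'' is contained in the event ``not good'', equivalently that goodness implies the transformation rules of Definition \ref{defn:ExRecCorr}.

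The key step is therefore to verify that a ``good'' $l$-extended rectangle (in the sense of~\cite[Section 3.1]{aliferis2006quantum}) satisfies the four transformation identities listed in Definition \ref{defn:ExRecCorr}. For this I would appeal to the transformation rules stated in~\cite[p.11]{aliferis2006quantum}: the whole point of the ``goodness'' condition in~\cite{aliferis2006quantum} is that a good extended rectangle, when an ideal decoder is pushed through it, produces the corresponding ideal operation on the data together with \emph{some} residual channel (or residual state, in the case of a preparation; or a trace, in the case of a measurement) acting only on the syndrome subsystem. Matching case-by-case — single-qubit operation, CNOT, measurement or trace, preparation — against items (1)--(4) of Definition \ref{defn:ExRecCorr} shows that these are precisely (a mild repackaging of) the AGP transformation rules. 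Hence every good extended rectangle is well-behaved, so $\{\lbr E^{(l)}_G\rbr_F \text{ not well-behaved}\}\subseteq \{E^{(l)}_G \text{ not good under }F\}$, and monotonicity of probability gives the claimed bound.

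Concretely, the proof would read: fix the gadgets from~\cite[Section 7]{aliferis2006quantum} (these are the ones already fixed throughout the article) and let $p_0$ be the threshold from~\cite[Lemma 2]{aliferis2006quantum}. For $0\leq p<p_0$ and any $l$-extended rectangle $E^{(l)}_G$, the AGP threshold lemma gives $\text{P}(E^{(l)}_G\text{ not good})\leq p_0(p/p_0)^{2^l}$. Since goodness implies well-behavedness (by the case analysis above, referencing the transformation rules of~\cite[p.11]{aliferis2006quantum} and Definition \ref{defn:ExRecCorr}), the event that $\lbr E^{(l)}_G\rbr_F$ is not well-behaved is a subset of the event that $E^{(l)}_G$ is not good under $F$, and the bound follows.

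The main obstacle I anticipate is purely expository rather than mathematical: one must be careful that the notion of ``well-behaved'' introduced in Definition \ref{defn:ExRecCorr} is genuinely \emph{weaker} than AGP's ``goodness'', so that the implication goes the right way and no additional hypotheses are smuggled in. In particular one should check that the residual syndrome channels $S^F_G$ (respectively syndrome states $\sigma^F_G$) promised by Definition \ref{defn:ExRecCorr} are exactly what the AGP transformation rules deliver, and that the convention about inserting noise channels before/after locations (Definition \ref{defn:FaultPattern}) matches the convention used in~\cite{aliferis2006quantum}, which it does by construction. Beyond that bookkeeping, no new estimates are needed: the quantitative heart of the lemma — the doubly-exponential suppression $p_0(p/p_0)^{2^l}$ coming from the recursive structure of concatenated codes — is imported wholesale from~\cite[Lemma 2]{aliferis2006quantum}.
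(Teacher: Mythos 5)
Your proposal matches the paper's own argument: the paper likewise observes that AGP's ``goodness'' (from~\cite[Section 3.1]{aliferis2006quantum}) implies the transformation rules of Definition \ref{defn:ExRecCorr}, hence well-behavedness, and then imports the quantitative bound $p_0(p/p_0)^{2^l}$ directly from~\cite[Lemma 2]{aliferis2006quantum} with the gadgets of~\cite[Section 7]{aliferis2006quantum}. Your extra care about the noise-insertion convention and the case-by-case matching of the transformation rules is exactly the bookkeeping the paper leaves implicit; no gap.
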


Using the threshold theorem, long quantum computations can be protected from noise. The probability of any extended rectangle in a given quantum circuit to be not well-behaved can be upper bounded using the union bound. Combining this with Lemma \ref{lem:CircuitTransformExRecCorr} leads to the following theorem, which can be seen as yet another version of the threshold theorem~\cite[Theorem 1]{aliferis2006quantum}):

\begin{thm}[Threshold theorem II.]
For $l\in\N$ let $\mathcal{C}_l\subset (\C^2)^{\otimes 7^l}$ denote the $l$th level of the concatenated $7$-qubit Steane code with threshold $p_0\in\lb 0,1\rbr$ as in Lemma \ref{lem:Strongthreshold}. For any $0\leq p<p_0$, any $l\in\N$ the following statements hold:
\begin{enumerate}
\item For any quantum circuit $\Gamma:\M^{\otimes n}_2\ra\M^{\otimes m}_2$ we have
\[
\text{P}\lb \text{An extended rectangle in }\lbr \Gamma_{\mathcal{C}_l}\circ\EC_l^{\otimes n}\rbr_F\text{ is not well-behaved}\rb \leq C\lb \frac{p}{p_0}\rb^{2^l}|\Loc\lb\Gamma\rb|,
\]
where $C\in\R^+$ is a constant independent of $l\in\N$ and $\Gamma$, and $\text{P}$ denotes the probability distribution over Pauli fault patterns $F$ induced by the fault model $\mathcal{F}_{\PIID}(p)$.
\item For a quantum circuit $\Gamma:\C^{2^n}\ra\C^{2^m}$ with classical input and classical output we have
\[
\|\Gamma-\lbr \Gamma_{\mathcal{C}_l}\rbr_{\mathcal{F}_{\PIID}(p)}\|_{1\ra 1}\leq 2C\lb \frac{p}{p_0}\rb^{2^l}|\Loc\lb\Gamma\rb|.
\]
\end{enumerate}
\label{thm:Threshold2}
\end{thm}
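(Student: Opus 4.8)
The plan is to derive both statements from a union bound combined with the threshold lemma (Lemma~\ref{lem:Strongthreshold}), using Lemma~\ref{lem:CircuitTransformExRecCorr} for the second part. The starting observation is purely combinatorial: the circuit $\Gamma_{\mathcal{C}_l}\circ\EC_l^{\otimes n}$ decomposes into $l$-extended rectangles indexed by the elementary operations in the circuit diagram of $\Gamma$ — for an operation acting first on a given qubit block the preceding error correction is supplied by one of the leading $\EC_l$ blocks, and for every later operation the preceding error correction is the trailing error correction of the extended rectangle before it, so consecutive extended rectangles overlap only on shared $\EC_l$ blocks (a CNOT contributes two such blocks on each side but is still a single extended rectangle). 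Hence there are at most $|\Loc(\Gamma)|$ many $l$-extended rectangles in $\Gamma_{\mathcal{C}_l}\circ\EC_l^{\otimes n}$. Moreover, the event in statement~(1) is exactly the complement of ``$\Gamma_{\mathcal{C}_l}\circ\EC_l^{\otimes n}$ is well-behaved under $F$'' in the sense of Definition~\ref{defn:ExRecCorr}, which is the hypothesis of Lemma~\ref{lem:CircuitTransformExRecCorr}, so statement~(1) amounts to a bound on the probability that not all of these extended rectangles are well-behaved.

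To obtain that bound I would use that $\mathcal{F}_\PIID(p)$ is i.i.d.\ across locations, so the restriction of a random fault pattern $F$ drawn from $\mathcal{F}_\PIID(p)$ to the locations of any fixed $l$-extended rectangle $E^{(l)}_G$ is itself distributed according to $\mathcal{F}_\PIID(p)$ on the subcircuit $E^{(l)}_G$. Lemma~\ref{lem:Strongthreshold} then applies verbatim and gives $\text{P}(\lbr E^{(l)}_G\rbr_F \text{ not well-behaved})\leq p_0(p/p_0)^{2^l}$. A union bound over the at most $|\Loc(\Gamma)|$ extended rectangles yields statement~(1) with $C=p_0$ (or any convenient constant absorbing the counting), independent of $l$ and of $\Gamma$.

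For statement~(2), I first note that when $\Gamma:\C^{2^n}\ra\C^{2^m}$ has classical input, the operations acting first on each qubit block are preparations, whose extended rectangles $\EC_l\circ G$ carry no preceding error correction; hence $\Gamma_{\mathcal{C}_l}$ already contains all of its $\leq|\Loc(\Gamma)|$ extended rectangles, and the argument of the previous paragraph bounds the probability that $\Gamma_{\mathcal{C}_l}$ is not well-behaved under $F$ by $C(p/p_0)^{2^l}|\Loc(\Gamma)|$. Next I would write $\lbr\Gamma_{\mathcal{C}_l}\rbr_{\mathcal{F}_\PIID(p)}=\sum_F \text{P}(F)\lbr\Gamma_{\mathcal{C}_l}\rbr_F$ and split the sum according to whether $\Gamma_{\mathcal{C}_l}$ is well-behaved under $F$: by the second part of Lemma~\ref{lem:CircuitTransformExRecCorr} every well-behaved term equals $\Gamma$, so $\lbr\Gamma_{\mathcal{C}_l}\rbr_{\mathcal{F}_\PIID(p)}-\Gamma=\sum_{F\text{ not w.b.}}\text{P}(F)\lb\lbr\Gamma_{\mathcal{C}_l}\rbr_F-\Gamma\rb$. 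Since $\Gamma$ and every $\lbr\Gamma_{\mathcal{C}_l}\rbr_F$ are channels with classical input and output — positive, entry-sum-preserving maps, hence of $\|\cdot\|_{1\ra1}$-norm at most $1$ — each summand has $\|\cdot\|_{1\ra1}$-norm at most $2$ by the triangle inequality, and we conclude $\|\Gamma-\lbr\Gamma_{\mathcal{C}_l}\rbr_{\mathcal{F}_\PIID(p)}\|_{1\ra1}\leq 2\,\text{P}(\Gamma_{\mathcal{C}_l}\text{ not w.b.})\leq 2C(p/p_0)^{2^l}|\Loc(\Gamma)|$.

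The genuinely delicate step — and the one I expect to be the main obstacle in a careful write-up — is the combinatorial bookkeeping of the first paragraph: one must stay faithful to the conventions of Definitions~\ref{defn:Impl} and~\ref{defn:ExRec} (and to~\cite{aliferis2006quantum}) in order to verify that $\Gamma_{\mathcal{C}_l}\circ\EC_l^{\otimes n}$ really is tiled by $l$-extended rectangles indexed by $\Loc(\Gamma)$ with overlaps only on $\EC_l$ blocks, that the leading $\EC_l^{\otimes n}$ supplies precisely the missing preceding error corrections, and that measurements and traces — whose extended rectangles $G\circ\EC_l$ have no trailing error correction and sit at the end of the circuit — are handled consistently. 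Everything else (the i.i.d.\ marginalisation, the union bound, and the convexity/triangle-inequality estimate) is routine.
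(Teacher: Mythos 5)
Your proposal is correct and follows exactly the route the paper takes (which it leaves implicit): a union bound over the at most $|\Loc(\Gamma)|$ many $l$-extended rectangles combined with Lemma \ref{lem:Strongthreshold} gives part 1, and splitting $\lbr\Gamma_{\mathcal{C}_l}\rbr_{\mathcal{F}_\PIID(p)}$ into well-behaved fault patterns (which reproduce $\Gamma$ exactly by the second part of Lemma \ref{lem:CircuitTransformExRecCorr}) and the rest, bounded via the triangle inequality by twice the failure probability, gives part 2. Your bookkeeping of how the leading $\EC_l^{\otimes n}$ supplies the missing preceding error corrections, and how preparations need none, is also consistent with Definitions \ref{defn:Impl} and \ref{defn:ExRec}.
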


It should be emphasized that in the previous discussion we did not have to define any notion of ``correct''. In~\cite[p.11]{aliferis2006quantum} correctness of rectangles under fault patterns is defined via the same transformation rules used in Definition \ref{defn:ExRecCorr} restricted to the rectangles contained in the corresponding extended rectangles (i.e., omitting the initial error correction). Since the proof of~\cite[Theorem 1]{aliferis2006quantum} and all the proofs in our article only use this notion together with well-behaved extended rectangles (or ``good'' ones in the case of~\cite{aliferis2006quantum}) it is sufficient to only use the stronger notion. An exception is our discussion of interfaces in Section \ref{sec:NoisyInterface} where we do define a notion of ``correctness'', which is related but different from the notion used in~\cite{aliferis2006quantum}.

\subsection{Capacities of classical-quantum and quantum channels}
\label{sec:CapQuChannel}

Arguably, the simplest quantum communication scenario is the transmission of classical information via a classical-quantum channel (cq-channel) $T:\mathcal{A}\ra \M_d$. Here, $\mathcal{A}=\lset 1,\ldots ,|\mathcal{A}|\rset$ is a classical input alphabet and for each $i\in\mathcal{A}$ the output $T(i)\in \D(\C^d)$ is a quantum state. We define the \emph{classical communication error} of a classical channel $K:\C^d\ra\C^d$ as
\begin{equation}
\epsilon_{cl}(K) := 1-\frac{1}{d}\sum^{d}_{i=1} \lb K\lb \ket{i}\rb\rb_i,
\label{equ:epsilonCL}
\end{equation}
Now, we can state the following definition:
\begin{defn}[Coding schemes for cq-channels]
For $n,m\in\N$ and $\epsilon\in\R^+$ an $(n,m,\epsilon)$-coding scheme for the cq-channel $T:\mathcal{A}\ra \M_d$ consists of a map $E:\lset 1,\ldots ,2^n\rset\ra \mathcal{A}^{m}$ and a channel $D:\M^{\otimes m}_d\ra \C^{2^n}$ with classical output such that the classical communication error
\[
\epsilon_{cl}\lb D\circ T^{\otimes m}\circ E\rb \leq \epsilon .
\]
\label{defn:CodingSchemeCQ}
\end{defn}

With the previous definition we define the capacity of a cq-channel as follows:

\begin{defn}[Capacity of a cq-channel]
The classical capacity of a cq-channel $T:\mathcal{A}\ra \M_d$ is defined as 
\[
C(T)=\sup\lset R\geq 0 \text{ achievable}\rset,
\]
where $R\geq 0$ is called achievable if for every $m\in\N$ there exists $n_m\in\N$ such that there are $(n_m,m,\epsilon_m)$-coding schemes with $\lim_{m\ra\infty} \epsilon_m=0$ and
\[
R\leq \liminf_{m\ra\infty} \frac{n_m}{m}.
\]
\label{defn:ClCapCQ}
\end{defn}

Given an ensemble of quantum states $\lset p_i,\rho_i\rset_{i}$, i.e., such that $(p_i)_i$ is a probability distribution and $\rho_i$ are quantum states, the Holevo quantity~\cite{holevo1973bounds} is given by
\begin{equation}
\chi\lb \lset p_i,\rho_i\rset_{i}\rb = S\lb\sum_{i} p_i \rho_i\rb - \sum_i p_i S(\rho_i).
\label{equ:HolevoChi}
\end{equation}
Here, $S(\rho)=-\Tr\lb \rho \log_2(\rho)\rb$ denotes the von-Neumann entropy. We now recall the following theorem: 

\begin{thm}[Holevo,Schumacher,Westmoreland~\cite{holevo1998capacity,schumacher1997sending}]
For any cq-channel $T:\mathcal{A}\ra \M_d$ we have
\[
C(T) = \sup_{p\in \mathcal{P}(\mathcal{A})}\chi\lb \lset p_i,T(i)\rset_{i}\rb .
\]
\label{thm:HSWTheorem}
\end{thm}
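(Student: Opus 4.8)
The plan is the standard two-part argument: show $C(T)\le\chi^*$ (converse) and $C(T)\ge\chi^*$ (achievability), where $\chi^*:=\sup_{p\in\mathcal{P}(\mathcal{A})}\chi(\{p_i,T(i)\}_i)$. Since $\mathcal{A}$ is finite, $\mathcal{P}(\mathcal{A})$ is a compact simplex and $p\mapsto\chi(\{p_i,T(i)\}_i)$ is continuous, so the supremum is attained and $\chi^*=\max_p\chi(\{p_i,T(i)\}_i)$.

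For the converse, I would take any $(n_m,m,\epsilon_m)$-coding scheme $(E,D)$, let $M$ be uniform on $\{1,\dots,2^{n_m}\}$, and set $\hat M=D(T^{\otimes m}(E(M)))$, so that $\Pr[\hat M\neq M]=\epsilon_{cl}(D\circ T^{\otimes m}\circ E)\le\epsilon_m$. Fano's inequality gives $H(M\mid\hat M)\le 1+\epsilon_m n_m$, hence $n_m=H(M)=I(M;\hat M)+H(M\mid\hat M)\le I(M;\hat M)+1+\epsilon_m n_m$. Since $\hat M$ is obtained by a measurement on the quantum state $T^{\otimes m}(E(M))$, the Holevo bound yields $I(M;\hat M)\le\chi(\{2^{-n_m},T^{\otimes m}(E(i))\}_i)$. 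The codeword output states $T^{\otimes m}(E(i))=T(E(i)_1)\otimes\cdots\otimes T(E(i)_m)$ are products, so subadditivity of the von Neumann entropy applied to the averaged output, together with additivity of entropy on each product codeword state, gives $\chi(\{2^{-n_m},T^{\otimes m}(E(i))\}_i)\le\sum_{k=1}^m\chi(\{q_k(j),T(j)\}_j)\le m\chi^*$, where $q_k$ is the distribution of the $k$-th letter of $E(M)$. Therefore $n_m(1-\epsilon_m)\le 1+m\chi^*$, and letting $m\to\infty$ with $\epsilon_m\to 0$ gives $\liminf_m n_m/m\le\chi^*$; hence every achievable rate, and thus $C(T)$, is at most $\chi^*$.

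For achievability, I would fix $p$ and a rate $R<\chi(\{p_i,T(i)\}_i)$ and use random coding: draw $2^{\lceil Rm\rceil}$ codewords independently, each an i.i.d.\ string in $\mathcal{A}^m$ with letter distribution $p$. The decoder is the square-root (pretty-good) measurement built from the $\delta$-typical projector $\Pi_m$ of $\bar\rho^{\otimes m}$ (with $\bar\rho=\sum_i p_iT(i)$) sandwiching the conditionally $\delta$-typical projector $\Pi_{\vec i}$ of each codeword state $\rho_{\vec i}=T(\vec i_1)\otimes\cdots\otimes T(\vec i_m)$. One bounds the expected error with the Hayashi--Nagaoka operator inequality, splitting it into a term for ``missing the (conditionally) typical subspace'', controlled by the gentle-measurement lemma and the defining properties of typical subspaces, and a term $\le 2^{\lceil Rm\rceil}\,\mathbb{E}_{\vec i,\vec i'}\Tr(\Pi_m\Pi_{\vec i'}\Pi_m\rho_{\vec i})$ for ``confusing two codewords''; since $\Pi_m\bar\rho^{\otimes m}\Pi_m\preceq 2^{-m(S(\bar\rho)-\delta)}\Pi_m$ and the conditionally typical subspaces have dimension at most $2^{m(\sum_i p_iS(T(i))+\delta)}$, the latter term is $\le 2^{m(R-\chi(\{p_i,T(i)\}_i)+2\delta)}\to 0$ for $\delta$ small. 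Hence the expected error vanishes, some fixed code achieves rate $R-o(1)$ with vanishing average error, and expurgating the worst half of its codewords converts this to vanishing maximal error at negligible rate cost. Letting $R\uparrow\chi(\{p_i,T(i)\}_i)$ and optimizing over $p$ gives $C(T)\ge\chi^*$.

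The converse is essentially bookkeeping once the Holevo bound and Fano's inequality are in hand; the only real point is that one must use the product structure of the cq-channel outputs so that subadditivity of the von Neumann entropy stands in for the (generally false) additivity of the Holevo capacity. The substantive work — and the main obstacle — is in achievability: setting up quantum typicality and conditional typicality, checking that the square-root construction yields a genuine POVM, and carrying out the two-term error estimate via Hayashi--Nagaoka. Since the present paper uses only the statement of the theorem, I would cite the original proofs~\cite{holevo1998capacity,schumacher1997sending} (see also the textbook account in~\cite{wilde2017quantum}) rather than reproduce the typicality lemmas here.
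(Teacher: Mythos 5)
Your proposal is the standard and correct proof: the converse via Fano plus the Holevo bound with single-letterization through subadditivity on the product output states, and achievability via random coding with typical and conditionally typical projectors and the square-root measurement controlled by the Hayashi--Nagaoka inequality, which is exactly the route the paper follows for the direct part in its Appendix on the HSW error bound (packing lemma, Theorem \ref{thm:HSWErrorBound}); the paper otherwise simply cites~\cite{holevo1998capacity,schumacher1997sending} for the full statement, as you suggest doing. One minor remark: since $\epsilon_{cl}$ in \eqref{equ:epsilonCL} is an average-error criterion, the final expurgation step is unnecessary (though harmless) for Definition \ref{defn:CodingSchemeCQ}.
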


For quantum channels we will consider the \emph{classical capacity} and the \emph{quantum capacity} quantifying the optimal transmission rates of classical information or quantum information respectively via the quantum channel. Again, we will start by defining the coding schemes considered for these communication problems. In addition to the \emph{classical communication error} from \eqref{equ:epsilonCL} we need to define the \emph{quantum communication error} of a quantum channel $T:\M_d\ra\M_d$ by
\[
\epsilon_{q}(T) := 1-\min_{\ket{\psi}} \bra{\psi}T(\proj{\psi}{\psi})\ket{\psi}, 
\] 
where the minimum is over pure quantum states $\ket{\psi}\in \C^d$ with $\braket{\psi}{\psi}=1$.

\begin{defn}[Coding schemes]
Let $T:\M_{d_A}\ra\M_{d_B}$ denote a quantum channel. For $n,m\in\N$ and $\epsilon\in\R^+$ an $(n,m,\epsilon)$ coding scheme for 
\begin{itemize}
\item \emph{classical communication} consists of a cq-channel $E:\C^{2^n}\ra\M^{\otimes m}_{d_A}$ and a channel $D:\M^{\otimes m}_{d_B}\ra \C^{2^n}$ with classical output such that $\epsilon_{cl}\lb D\circ T^{\otimes m}\circ E\rb \leq \epsilon$.
\item \emph{quantum communication} consists of a quantum channel $E:\M^{\otimes n}_2\ra\M^{\otimes m}_{d_A}$ and a quantum channel $D:\M^{\otimes m}_{d_B}\ra \M^{\otimes n}_2$ such that $\epsilon_{q}\lb D\circ T^{\otimes m}\circ E\rb \leq \epsilon$.
\end{itemize}
\label{defn:codingSchemesQChannel}
\end{defn}

With this we can state the following definition.

\begin{defn}[Classical and quantum capacity]
Let $T:\M_{d_A}\ra\M_{d_B}$ denote a quantum channel. We call $R\geq 0$ an \emph{achievable rate} for classical (quantum) communication if for every $m\in\N$ there exists an $n_m\in\N$ and an $(n_m,m,\epsilon_m)$ coding scheme for classical (quantum) communication with $\epsilon_m\ra 0$ as $m\ra\infty$ and 
\[
\liminf_{m\ra\infty} \frac{n_m}{m}\geq R.
\] 
The \emph{classical capacity} of $T$ is given by
\[
C(T)=\sup\lset R\geq 0 \text{ achievable rate for classical communication}\rset,
\]
and the \emph{quantum capacity} of $T$ is given by 
\[
Q(T)=\sup\lset R\geq 0 \text{ achievable rate for quantum communication}\rset.
\]
\label{defn:CCapAndQCap}
\end{defn}

The classical and quantum capacity can be related to entropic quantities. Given a quantum channel $T:\M_{d_A}\ra\M_{d_B}$ we define
\begin{equation}
\chi(T) := \sup_{\lset p_i,\rho_i\rset_{i}}\chi\lb \lset p_i,T(\rho_i)\rset_{i}\rb ,
\label{equ:HolevoQChannel} 
\end{equation}
with the Holevo quantity from \eqref{equ:HolevoChi} and where the supremum is over all ensembles $\lset p_i,\rho_i\rset_{i}$ with quantum states $\rho_i\in \D\lb\C^{d_A} \rb$. The following theorem is a major achievement of quantum Shannon theory. 

\begin{thm}[Holevo, Schumacher, Westmoreland~\cite{holevo1998capacity,schumacher1997sending}]
For any quantum channel $T:\M_{d_A}\ra\M_{d_B}$ we have
\[
C(T)=\lim_{k\ra\infty} \frac{1}{k}\chi(T^{\otimes k}).
\]
\label{thm:HSWQChannel}
\end{thm}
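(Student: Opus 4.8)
The plan is to establish the two inequalities $C(T)\geq\lim_{k\ra\infty}\frac{1}{k}\chi(T^{\otimes k})$ and $C(T)\leq\lim_{k\ra\infty}\frac{1}{k}\chi(T^{\otimes k})$ separately, after first checking that the limit exists. For existence, I would note that $k\mapsto\chi(T^{\otimes k})$ is superadditive: passing a product ensemble $\lset p_iq_j,\rho_i\otimes\sigma_j\rset$ through $T^{\otimes(k+l)}$ and using additivity of the von-Neumann entropy on tensor products gives $\chi(T^{\otimes(k+l)})\geq\chi(T^{\otimes k})+\chi(T^{\otimes l})$, so Fekete's subadditivity lemma applied to $-\chi(T^{\otimes k})$ yields $\lim_{k\ra\infty}\frac{1}{k}\chi(T^{\otimes k})=\sup_{k\in\N}\frac{1}{k}\chi(T^{\otimes k})$, which is finite because $\chi(T^{\otimes k})\leq k\log_2 d_B$.

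For achievability I would reduce to the classical-quantum case recorded in Theorem \ref{thm:HSWTheorem}. Fixing $k\in\N$ and an input ensemble $\lset p_i,\rho_i\rset_i$ for $T^{\otimes k}$, I define the cq-channel $T':i\mapsto T^{\otimes k}(\rho_i)$ and observe that any $(n,m,\epsilon)$-coding scheme $(E',D')$ for $T'$ induces an $(n,m,\epsilon)$-coding scheme for the quantum channel $T^{\otimes k}$: on message $j$ the encoder prepares the product state $\rho_{E'(j)_1}\otimes\cdots\otimes\rho_{E'(j)_m}$, the decoder is unchanged, and both schemes realise the same overall classical channel and hence the same error. This gives $C(T^{\otimes k})\geq C(T')=\sup_q\chi(\lset q_i,T^{\otimes k}(\rho_i)\rset_i)\geq\chi(\lset p_i,T^{\otimes k}(\rho_i)\rset_i)$, and optimising over the ensemble yields $C(T^{\otimes k})\geq\chi(T^{\otimes k})$. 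Since a length-$m$ block code for $T^{\otimes k}$ is a length-$km$ block code for $T$ whose rate is rescaled by $1/k$, this gives $C(T)\geq\frac{1}{k}\chi(T^{\otimes k})$ for every $k$, and taking the supremum over $k$ completes this direction.

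For the converse I would combine the Holevo bound with Fano's inequality. Given any family of $(n_m,m,\epsilon_m)$-coding schemes with $\epsilon_m\ra 0$ and $\liminf_{m\ra\infty}n_m/m\geq R$, let $J$ be the uniformly distributed message, $\omega_j:=T^{\otimes m}(E(j))$ the resulting channel output state, and $\hat J$ the decoder's outcome. Since the average decoding error is at most $\epsilon_m$, Fano's inequality gives $H(J|\hat J)\leq 1+\epsilon_m n_m$, hence $I(J:\hat J)\geq n_m(1-\epsilon_m)-1$; the data-processing inequality for the measurement $D$ together with the Holevo bound gives $I(J:\hat J)\leq\chi(\lset 2^{-n_m},\omega_j\rset_j)\leq\chi(T^{\otimes m})$. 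Rearranging yields $\frac{n_m}{m}(1-\epsilon_m)\leq\frac{1}{m}+\frac{1}{m}\chi(T^{\otimes m})$, and passing to a subsequence realising the $\liminf$, together with the convergence of $\frac{1}{m}\chi(T^{\otimes m})$ established above, gives $R\leq\lim_{k\ra\infty}\frac{1}{k}\chi(T^{\otimes k})$; taking the supremum over achievable rates $R$ then gives $C(T)\leq\lim_{k\ra\infty}\frac{1}{k}\chi(T^{\otimes k})$.

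I expect the main obstacle to be the achievability coding theorem itself — building decoding measurements that attain the Holevo rate — but this is precisely the content imported from Theorem \ref{thm:HSWTheorem}, so the genuinely new work left is the reduction of quantum-channel coding to cq-coding and the bookkeeping relating codes for $T$ and for $T^{\otimes k}$. The only other non-elementary ingredient is the Holevo bound used in the converse, which ultimately rests on monotonicity of the quantum relative entropy and which I would simply cite.
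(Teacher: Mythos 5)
The paper does not prove this statement at all: Theorem \ref{thm:HSWQChannel} is imported by citation from Holevo and Schumacher--Westmoreland, so there is no internal proof to compare against. Your sketch is the standard argument from that literature and is essentially correct: existence of the limit via superadditivity and Fekete, achievability by turning an ensemble for $T^{\otimes k}$ into an induced cq-channel and invoking Theorem \ref{thm:HSWTheorem}, and the converse via data processing, the Holevo bound and Fano's inequality. The only points left implicit are routine: the blocklength bookkeeping when passing from codes for $T^{\otimes k}$ to codes for $T$ (blocklengths not divisible by $k$ are handled by padding with a negligible rate loss), the restriction to finite ensembles so that Theorem \ref{thm:HSWTheorem} applies (harmless, since one only needs ensembles approaching the supremum defining $\chi(T^{\otimes k})$), and the Holevo bound itself, which, like the cq coding theorem, is correctly cited rather than reproved.
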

For a bipartite quantum state $\rho_{AB}\in\D\lb\C^{d_A}\otimes\C^{d_B}\rb$ we define the \emph{coherent information} as
\begin{align*}
I_{\text{coh}}\lb A; B\rb_\rho := S(\rho_B) - S(\rho_{AB}),
\end{align*}  
where $\rho_B\in \D\lb \C^{d_B}\rb$ denotes the marginal $\rho_B = \Tr_A\lb\rho_{AB}\rb$. For a quantum channel $T:\M_{d_A}\ra \M_{d_B}$ we define 
\begin{equation}
I_{\text{coh}}\lb T\rb :=\max_{\rho_{AA'}} I_{\text{coh}}\lb A; B\rb_{(\ident_A\otimes T)(\rho_{AA'})},
\label{equ:cohInfoQChann}
\end{equation}
where the maximum is over quantum states $\rho_{AA'}\in \D\lb \C^{d_A}\otimes \C^{d_A}\rb$. The following theorem is a major achievement of quantum information theory building on earlier results by~\cite{schumacher1996sending,schumacher1996quantum,barnum1998information}:

\begin{thm}[Lloyd, Shor, Devetak~\cite{shor2002quantum,lloyd1997capacity,devetak2005private}]
For any quantum channel $T:\M_{d_A}\ra\M_{d_B}$ we have 
\[
Q(T) = \lim_{k\ra \infty} \frac{1}{k} I_{\text{coh}}\lb T^{\otimes k}\rb.
\]
\label{thm:LSD}
\end{thm}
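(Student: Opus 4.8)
The plan is to establish the two matching bounds $Q(T)\le\lim_{k\ra\infty}\frac1k I_{\text{coh}}(T^{\otimes k})$ (the converse) and $Q(T)\ge\lim_{k\ra\infty}\frac1k I_{\text{coh}}(T^{\otimes k})$ (achievability), after first noting that the limit exists. Existence follows from superadditivity of the sequence $k\mapsto I_{\text{coh}}(T^{\otimes k})$: feeding a product input $\rho_{A_1A_1'}\otimes\rho_{A_2A_2'}$ and using additivity of coherent information on product states gives $I_{\text{coh}}(T^{\otimes(j+k)})\ge I_{\text{coh}}(T^{\otimes j})+I_{\text{coh}}(T^{\otimes k})$, so by Fekete's lemma $\lim_{k\ra\infty}\frac1k I_{\text{coh}}(T^{\otimes k})=\sup_k\frac1k I_{\text{coh}}(T^{\otimes k})$.

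First I would prove the converse. Suppose $R\ge 0$ is an achievable rate, so that for each $m$ there is an $(n_m,m,\epsilon_m)$ coding scheme $(E,D)$ for quantum communication with $\epsilon_m\ra 0$ and $\liminf_m n_m/m\ge R$. Feed one half of a maximally entangled state $\ket{\Phi}\in\C^{2^{n_m}}\otimes\C^{2^{n_m}}$ through the encoder $E$ and then through $T^{\otimes m}$, obtaining a state $\rho_{RB^m}$ whose marginal $\rho_R$ is maximally mixed; by the definition \eqref{equ:cohInfoQChann} this already gives $I_{\text{coh}}(R;B^m)_\rho\le I_{\text{coh}}(T^{\otimes m})$. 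On the other hand, the bound $\epsilon_q(D\circ T^{\otimes m}\circ E)\le\epsilon_m$ forces the output of $D$ applied to $\rho_{RB^m}$ to have entanglement fidelity at least $1-g(\epsilon_m)$ with $\ket{\Phi}$ for some $g$ with $g(\epsilon)\ra 0$; combining the data-processing inequality for coherent information under $D$ with the Alicki--Fannes continuity estimate then yields $I_{\text{coh}}(R;B^m)_\rho\ge n_m\big(1-h(\epsilon_m)\big)$ for some $h$ with $h(\epsilon)\ra 0$. Hence $n_m/m\le\frac1m I_{\text{coh}}(T^{\otimes m})+o(1)$, and letting $m\ra\infty$ gives $R\le\lim_{m\ra\infty}\frac1m I_{\text{coh}}(T^{\otimes m})$.

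For achievability it suffices to prove the one-shot coding theorem $Q(T)\ge I_{\text{coh}}(T)$: applying it to $T^{\otimes k}$ and using $Q(T)=\frac1k Q(T^{\otimes k})$ (a straightforward blocking argument) yields $Q(T)\ge\frac1k I_{\text{coh}}(T^{\otimes k})$ for every $k$, hence $Q(T)\ge\sup_k\frac1k I_{\text{coh}}(T^{\otimes k})$. To prove $Q(T)\ge I_{\text{coh}}(T)$, fix a (pure) input $\rho_{AA'}$ optimal in \eqref{equ:cohInfoQChann}; by the typical-subspace method one may work with a maximally mixed state on a typical subspace of $\rho_{A'}^{\otimes m}$. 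The core is a \emph{decoupling} step: encode a message space of dimension $\approx 2^{mR}$ by a Haar-random isometry $V$ into this typical subspace and show, via a second-moment computation over the unitary group (bounding a suitable $2$-norm and invoking Cauchy--Schwarz), that whenever $R<I_{\text{coh}}(T)$ the joint state $(\ident_R\otimes T_{\text{c}}^{\otimes m})(V\proj{\Phi}{\Phi}V^\dagger)$ of the reference $R$ and the environment of $T^{\otimes m}$ is on average close to a product state. By Uhlmann's theorem this decoupling is equivalent to the existence of a decoder on $B^m$ recovering the entanglement with $R$; a derandomization step extracts a deterministic code, and the standard equivalences between entanglement transmission and the worst-case criterion $\epsilon_q$ turn this into an $(n_m,m,\epsilon_m)$ coding scheme with $\epsilon_m\ra 0$ and $n_m/m\ra R$.

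The main obstacle is this decoupling step: the second-moment ``twirling'' estimate over the Haar measure, together with the bookkeeping that singles out $I_{\text{coh}}(T)$ as exactly the threshold rate. The converse, by contrast, is a fairly routine combination of the quantum Fano inequality, continuity of the von-Neumann entropy, and the data-processing inequality for coherent information. Since this theorem is a known result~\cite{shor2002quantum,lloyd1997capacity,devetak2005private} (building on~\cite{schumacher1996sending,schumacher1996quantum,barnum1998information}), in the present article we only invoke its statement.
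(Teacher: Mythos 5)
The paper does not prove Theorem \ref{thm:LSD} at all: it invokes it as a known result with citations, exactly as you acknowledge in your final sentence, so there is no in-paper proof to compare against. Your outline (Fekete's lemma for existence of the regularized limit, a Fano/data-processing/continuity converse, and achievability via random-isometry decoupling plus Uhlmann and derandomization) is the standard argument from the cited literature and is consistent with the decoupling machinery the paper does import in Appendix \ref{app:LSDError} (Theorem \ref{thm:DecouplWithError}, taken from the decoupling coding scheme, and Corollary \ref{cor:DecouplWithErrorChanFidelity} for the passage to the worst-case error $\epsilon_q$); at this level of detail I see no gap.
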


\section{Fault-tolerant techniques for quantum communication}
\label{sec:FTTechniques}

\subsection{Interfaces for concatenated codes}
\label{sec:NoisyInterface}

To study capacities of quantum channels with encoding and decoding operations protected by quantum error correcting codes we need interfaces between the code space and the physical systems where the quantum channel acts (see Figure~\ref{fig:capacityReal}). For simplicity we will only consider interfaces between physical qubits and the code spaces composed from many qubits. It is straightforward to adapt our definitions and results to quantum error correcting codes using qudits. 

\begin{defn}[Interfaces]
Let $\mathcal{C}\subset (\C^2)^{\otimes K}$ be a stabilizer code with $\text{dim}\lb \mathcal{C}\rb=2$, such that $\proj{0}{0}\in\M^{\otimes (K-1)}_2$ denotes the pure state corresponding to the zero syndrome, and $\Enc^*:\M_2\otimes \M^{\otimes (K-1)}_2\ra \M^{\otimes K}_2$ denotes the ideal encoding operation and $\Dec^*:\M^{\otimes K}_{2}\ra\M_2\otimes \M^{\otimes (K-1)}_2$ denotes the ideal decoding operation (see Section \ref{sec:AnalyzingNoisyQuantumCircuits}). An \emph{interface} for $\mathcal{C}$ is given by an encoding quantum circuit $\Enc:\M_2\ra\M^{\otimes K}_2$ and a decoding quantum circuit $\Dec:\M^{\otimes K}_2\ra\M_2$ such that the following conditions hold:
\begin{enumerate}
\item The quantum circuit $\EC$ from Definition \ref{defn:Impl} is applied at the end of $\Enc$. 
\item We have 
\[
\Dec^*\circ\Enc = \ident_2\otimes \proj{0}{0}.
\]
\item We have 
\[
\Dec\circ \Enc^*(\cdot \otimes\proj{0}{0}) = \ident_2(\cdot).
\]
\end{enumerate} 
\label{defn:Interface}
\end{defn}
The quantum circuits $\Enc$ and $\Dec$ are related to the ideal quantum channels $\Enc^*$ and $\Dec^*$ as specified in the previous definition. However, $\Enc$ and $\Dec$ should be seen as quantum circuits that are implemented physically (and will be subject to noise), while $\Enc^*$ and $\Dec^*$ are ideal (mathematical) objects that only appear in the formal analysis of quantum circuits. Next, we define correctness for interfaces affected by noise.  

\begin{defn}[Correctness of interfaces]
Let $\mathcal{C}\subset (\C^2)^{\otimes K}$ be a stabilizer code with $\text{dim}\lb \mathcal{C}\rb=2$, and let $\Enc:\M_2\ra\M^{\otimes K}_2$ and $\Dec:\M^{\otimes K}_2\ra\M_2$ denote an interface for $\mathcal{C}$.  
\begin{itemize}
\item We say that $\Enc$ is \emph{correct} under a Pauli fault pattern $F$ if 
\begin{equation}
\Dec^*\circ\lbr\Enc\rbr_F = \ident_2\otimes \sigma_F,
\label{equ:CorrInterf1}
\end{equation}
for a quantum state $\sigma_F\in\M^{\otimes (K-1)}_2$ on the syndrome space depending on $F$.
\item We say that the quantum circuit $\Dec\circ \EC$ is \emph{correct} under a Pauli fault pattern $F$ if 
\begin{equation}
\lbr\Dec\circ \EC\rbr_F = (\ident_2\otimes \Tr_S)\circ \Dec^*\circ \lbr\EC\rbr_F,
\label{equ:CorrInterf2}
\end{equation}
where $\Tr_S:\M^{\otimes (K-1)}_2\ra\C$ traces out the syndrome space.
\end{itemize}

\label{defn:CorrectInterface}
\end{defn}

Note that in the above definition, correctness is defined differently for the encoder and the decoder, since we want to use the transformation rules from Definition \ref{defn:ExRecCorr} and Lemma \ref{lem:CircuitTransformExRecCorr} to analyze these interfaces. For example, we need to consider the whole circuit $\Dec\circ \EC$ including an initial error correction in \eqref{equ:CorrInterf2}, since otherwise we could not decompose the interface into extended rectangles and use the notion of well-behavedness as in Definition \ref{defn:ExRecCorr} and Lemma \ref{lem:CircuitTransformExRecCorr}. We do not have this problem for the encoder, since it ends in a final error correction due to Definition \ref{defn:Interface}.  

Interfaces should be robust against noise when they are used for quantum information processing. Clearly, the probability of a fault occuring in an interface between a quantum error correcting code and a single qubit is at least as large as the probability of a fault in a single qubit operation, which could happen at the end of $\Dec$ or at the beginning of $\Enc$. Fortunately, it is possible for concatenated quantum error correcting codes to construct interfaces that only fail with a probability slightly larger than this lower bound. This result has previously been established in~\cite{mazurek2014long} (see also~\cite{lodyga2015simple} for a similar discussion for various topological codes). Since the proof given in~\cite{mazurek2014long} seems to neglect certain details, we will here present a more rigorous version of the argument using the formalism stated in the previous section. It should be emphasized that the main ideas of the argument are the same as in~\cite{mazurek2014long}.   

\begin{thm}[Nice interface for concatenated $7$-qubit code] 
For $l\in\N$ we denote by $\mathcal{C}_l\subset (\C^2)^{\otimes 7^l}$ the $l$th level of the concatenated $7$-qubit Steane code with threshold $p_0\in\lb 0,1\rbr$ (see Lemma \ref{lem:Strongthreshold}). There exists a constant $c>0$, encoding quantum circuits $\Enc_l:\M_2\ra\M^{\otimes 7^l}_2$ and decoding quantum circuits $\Dec_l:\M^{\otimes 7^l}_2\ra \M_2$ for each $l\in\N$ forming an interface for the stabilizer code $\mathcal{C}_l$, such that for any $0\leq p\leq p_0/2$ we have 
\[
\text{P}\lb \lbr\Enc_l\rbr_F\text{ not correct}\rb \leq 2c p,
\]
and 
\[
\text{P}\lb \lbr\Dec_l\rbr_F\text{ not correct}\rb \leq 2c p,
\]
where $\text{P}$ denotes the probability distribution over Pauli fault patterns induced by the fault model $\mathcal{F}_{\PIID}(p)$.

\label{thm:NiceInterfaceConcatenated}
\end{thm}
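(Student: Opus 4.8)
The plan is to build $\Enc_l$ and $\Dec_l$ by recursion on the level $l$, exploiting that a level-$l$ codeblock of the concatenated Steane code is itself a base Steane block assembled from seven level-$(l-1)$ codeblocks, and then to prove by induction that, conditioned on the level-$(l-1)$ interface being correct, the level-$l$ interface is incorrect only with an additional probability that is doubly exponentially small in $l$. Summing a geometric-type series in $2^l$ turns this into the stated bound, which is uniform in $l$ and linear in $p$.

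For $l=1$ let $\Enc_1:\M_2\ra\M^{\otimes 7}_2$ be a fixed circuit that prepares six ancillas in $\ket 0$, runs the standard Steane encoding network of Clifford gates, and ends with an $\EC_1$ (as required by Definition \ref{defn:Interface}(1)); let $\Dec_1$ be the mirror-image circuit. For $l\ge 2$ set $\Enc_l = I_l\circ\Enc_{l-1}$, where the increment $I_l$ takes the level-$(l-1)$ block produced by $\Enc_{l-1}$ as its first sub-block, prepares the remaining six sub-blocks in the logical $\ket 0$ using level-$(l-1)$ preparation gadgets, applies the base Steane encoding network with every wire replaced by a level-$(l-1)$ block and every gate by its $(l-1)$-rectangle, and concludes with $\EC_l$; dually, $\Dec_l = \Dec_{l-1}\circ D_l$, where $D_l$ applies an $(l-1)$-error correction to each of the seven sub-blocks, runs the reverse Steane network at level $l-1$, traces out six sub-blocks, and hands the surviving level-$(l-1)$ block to $\Dec_{l-1}$. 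The key structural point is that $I_l$ and $D_l$ each consist of only $O(1)$ many $(l-1)$-extended rectangles, with the constant independent of $l$ --- in particular $\EC_l$, being built from $O(1)$ level-$(l-1)$ gadgets in the construction of \cite{aliferis2006quantum}, decomposes this way. That $(\Enc_l,\Dec_l)$ is an interface in the sense of Definition \ref{defn:Interface} (conditions (2) and (3)) follows by induction from the noiseless gadget identities of Definition \ref{defn:Impl}, by the same computation displayed after that definition, using that $\Dec^*_l$ factorizes as the ideal base-Steane decoder composed with $(\Dec^*_{l-1})^{\otimes 7}$.

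For the behaviour under noise I would run the transformation machinery of Definition \ref{defn:ExRecCorr} and Lemma \ref{lem:CircuitTransformExRecCorr} on the increment only. Condition on the event that $\Enc_{l-1}$ is correct, i.e.\ $\Dec^*_{l-1}\circ\lbr\Enc_{l-1}\rbr_F=\ident_2\otimes\sigma_F$, and on the event that every $(l-1)$-extended rectangle inside $I_l$ (including the six preparation gadgets and all the pieces of $\EC_l$) is well-behaved. On this event, transporting an ideal decoder $\Dec^*_l$ from the output of $\Enc_l$ backwards through $I_l$ via the well-behavedness rules converts $I_l$ into the ideal increment acting on the $\Dec^*_{l-1}$-decoded sub-blocks: the six prepared sub-blocks become the logical $\ket 0$ up to some syndrome states, the first becomes $\ident_2\otimes\sigma_F$ by the conditioning, and applying the ideal base-Steane encoder on top yields $\Dec^*_l\circ\lbr\Enc_l\rbr_F=\ident_2\otimes\sigma'_F$, which is exactly correctness of $\Enc_l$. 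By Lemma \ref{lem:Strongthreshold} each of the $O(1)$ relevant $(l-1)$-extended rectangles fails to be well-behaved with probability at most $p_0(p/p_0)^{2^{l-1}}$, so a union bound gives
\[
\text{P}\lb\lbr\Enc_l\rbr_F\text{ not correct}\rb \;\le\; \text{P}\lb\lbr\Enc_{l-1}\rbr_F\text{ not correct}\rb + C_1\,p_0\lb\frac{p}{p_0}\rb^{2^{l-1}}
\]
for a constant $C_1$ independent of $l$. Iterating down to the base case $\text{P}(\lbr\Enc_1\rbr_F\text{ not correct})\le N_1 p$ and using $p\le p_0/2$ to bound $\sum_{j\ge 1}p_0(p/p_0)^{2^j}$ by $2p_0(p/p_0)^2\le p$, we get $\text{P}(\lbr\Enc_l\rbr_F\text{ not correct})\le (N_1+C_1)p$ for all $l$; the decoder is handled identically, noting that $D_l$ feeds an $(l-1)$-error-corrected block into $\Dec_{l-1}$ so the inductive hypothesis for $\Dec_{l-1}\circ\EC_{l-1}$ applies via Definition \ref{defn:CorrectInterface}. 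Taking $c$ to be half the larger of the two resulting constants proves the theorem.

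The main obstacle is the bookkeeping at the seam between $\Enc_{l-1}$ (resp.\ $\Dec_{l-1}$) and the increment: one must verify that $I_l$ and $D_l$ genuinely tile into $O(1)$ well-defined $(l-1)$-extended rectangles --- reusing the trailing $\EC_{l-1}$ of $\Enc_{l-1}$ as the leading error correction of the first such rectangle --- so that Lemma \ref{lem:Strongthreshold} applies at the level-$(l-1)$ double-exponential rate rather than a weaker one, and that after transporting $\Dec^*_l$ through $I_l$ the residual object really is the ideal base-Steane encoder composed with $(\Dec^*_{l-1})^{\otimes 7}$, so that the inductive correctness hypothesis for $\Enc_{l-1}$ can be substituted verbatim. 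This is precisely the kind of level-by-level composition the formalism of Section \ref{sec:AnalyzingNoisyQuantumCircuits} is designed for, but it needs care because the interface circuits, unlike the implementations of Definition \ref{defn:Impl}, join a bare physical qubit to fully encoded blocks.
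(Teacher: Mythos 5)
Your proposal is correct and follows essentially the same route as the paper: a level-by-level construction in which each increment is the level-$(l-1)$ implementation of a fixed base interface ending in the level error correction, the threshold lemma applied to the $O(1)$ many $(l-1)$-extended rectangles of each increment (with the seam handled by the preceding $\EC_{l-1}$, exactly as in the paper's notion of well-behaved interfaces and its Lemma on successive interfaces under noise), an inductive argument pushing the ideal decoder through the increment to convert well-behavedness into correctness, and the bound $cp_0\sum_i (p/p_0)^{2^i}\le 2cp$ for $p\le p_0/2$. The only cosmetic differences are your choice of a unitary encoding network (rather than the paper's teleportation-based base interface, which the paper notes is inessential) and phrasing the estimate as a recursion on failure probabilities rather than a single union bound over all levels.
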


The proof of the previous theorem will construct the desired interfaces successively by defining partial encoding (and decoding) quantum circuits between different levels of the code. Before presenting the proof, we will state some lemmas: 

\begin{lem}[Successive interfaces]
Let $\lb\mathcal{C}_l\rb_{l\in\N}$ be a family of stabilizer codes such that for every $l\in\N$ we have $\mathcal{C}_l\subset (\C^2)^{\otimes K_l}$, $\text{dim}\lb \mathcal{C}_l\rb=2$, and the state $\proj{0}{0}_l\in\M^{\otimes (K_l-1)}_2$ denotes the pure state corresponding to the zero syndrome of $\mathcal{C}_l$. Assume that for every $l\in\N$ there are quantum circuits $\Enc_{(l-1)\ra l}:\M^{\otimes K_{l-1}}_2\ra \M^{\otimes K_{l}}_2$ and $\Dec_{l\ra (l-1)}:\M^{\otimes K_{l}}_2\ra \M^{\otimes K_{l-1}}_2$, where $K_0:=1$, with the following properties: 
\begin{enumerate}
\item The pair $\Enc_{0\ra 1}$ and $\Dec_{1\ra 0}$ is an interface for $\mathcal{C}_1$.
\item For any $l\geq 2$ we have
\[
\Dec^*_l\circ \Enc_{(l-1)\ra l}\circ \Enc^*_{l-1}\lb \cdot\otimes \proj{0}{0}_{l-1}\rb = \ident_2(\cdot)\otimes \proj{0}{0}_{l}.
\]
\item For any $l\geq 2$ we have
\[
\Dec_{l\ra (l-1)} \circ\Enc^*_{l}\lb \cdot\otimes \proj{0}{0}_{l}\rb = \Enc^*_{l-1}(\cdot \otimes \proj{0}{0}_{l-1}).
\] 
\end{enumerate}
Then, for any $l\in\N$ the quantum circuits $\Enc_l:\M_2\ra\M^{\otimes K_l}_2$ and $\Dec_l:\M^{\otimes K_l}_2\ra \M_2$ given by 
\[
\Enc_l := \Enc_{(l-1)\ra l} \circ \cdots \circ \Enc_{1\ra 2} \circ \Enc_{0\ra 1}
\]
and 
\[
\Dec_l := \Dec_{1\ra 0}\circ \cdots \circ \Dec_{(l-1)\ra (l-2)} \circ \Dec_{l\ra (l-1)}
\]
are an interface for $\mathcal{C}_l$ according to Definition \ref{defn:Interface}.
\label{lem:SuccInterface}
\end{lem}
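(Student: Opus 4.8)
The plan is to verify the three conditions of Definition \ref{defn:Interface} for the composed circuits $\Enc_l$ and $\Dec_l$ by induction on $l$, using the fact that the ideal operations $\Enc^*_l$, $\Dec^*_l$ compose nicely across levels. The base case $l=1$ is condition 1 of the hypothesis. For the inductive step, I would assume $\Enc_{l-1}$ and $\Dec_{l-1}$ already form an interface for $\mathcal{C}_{l-1}$, i.e.\ that $\Dec^*_{l-1}\circ \Enc_{l-1} = \ident_2\otimes \proj{0}{0}_{l-1}$, that $\Dec_{l-1}\circ \Enc^*_{l-1}(\cdot\otimes\proj{0}{0}_{l-1}) = \ident_2(\cdot)$, and that $\EC_{l-1}$ is applied at the end of $\Enc_{l-1}$.

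First I would check condition 2. Writing $\Enc_l = \Enc_{(l-1)\ra l}\circ \Enc_{l-1}$, I compute
\[
\Dec^*_l\circ \Enc_l = \Dec^*_l\circ \Enc_{(l-1)\ra l}\circ \Enc_{l-1}.
\]
The idea is to insert $\Enc^*_{l-1}\circ \Dec^*_{l-1} = \ident$ (on the appropriate space, after adjoining the zero syndrome) right after $\Enc_{(l-1)\ra l}$ is not quite what is needed; instead I use the inductive hypothesis $\Dec^*_{l-1}\circ\Enc_{l-1}=\ident_2\otimes\proj{0}{0}_{l-1}$ to rewrite $\Enc_{l-1} = \Enc^*_{l-1}\circ(\Dec^*_{l-1}\circ\Enc_{l-1}) = \Enc^*_{l-1}\circ(\ident_2\otimes\proj{0}{0}_{l-1})$ as maps applied to a single-qubit input (here I use that $\Enc^*_{l-1}$ is the inverse of $\Dec^*_{l-1}$). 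Then
\[
\Dec^*_l\circ\Enc_l(\cdot) = \Dec^*_l\circ\Enc_{(l-1)\ra l}\circ\Enc^*_{l-1}(\cdot\otimes\proj{0}{0}_{l-1}) = \ident_2(\cdot)\otimes\proj{0}{0}_l
\]
by hypothesis 2 of the lemma. For condition 3, I argue symmetrically: $\Dec_l = \Dec_{l-1}\circ \Dec_{l\ra(l-1)}$, and evaluating on $\Enc^*_l(\cdot\otimes\proj{0}{0}_l)$, hypothesis 3 of the lemma gives $\Dec_{l\ra(l-1)}\circ\Enc^*_l(\cdot\otimes\proj{0}{0}_l) = \Enc^*_{l-1}(\cdot\otimes\proj{0}{0}_{l-1})$, and then the inductive hypothesis for $\Dec_{l-1}$ finishes it. Condition 1 is immediate: $\Enc_l$ ends with $\Enc_{(l-1)\ra l}$, which is $\EC_l$-terminated if we arrange the partial encoders so that $\EC_l$ is appended at the end of $\Enc_{(l-1)\ra l}$; this should be folded into the construction in the proof of Theorem \ref{thm:NiceInterfaceConcatenated}, or alternatively stated as a standing assumption on the partial encoders here.

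The main subtlety I expect is bookkeeping of which Hilbert spaces the zero-syndrome ancillas live on and making sure the identities $\Enc^*_{l-1}\circ\Dec^*_{l-1}=\ident$ are applied only where the input is genuinely in the image of $\Enc^*_{l-1}$ (equivalently, carries the correct zero syndrome at level $l-1$) — this is exactly why the inductive hypothesis is phrased with $\proj{0}{0}_{l-1}$ attached. There is no deep obstacle; the content is entirely in chaining the three ideal-operation identities of the hypothesis with the inductive interface identities, and the only real care needed is tracking tensor factors and the nesting of syndrome spaces across concatenation levels. I would present the computation as two short displayed chains (one for condition 2, one for condition 3) with the induction spelled out explicitly.
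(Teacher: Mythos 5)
Your proposal is correct and follows essentially the same route as the paper: induction on $l$, rewriting $\Enc_{l-1}$ via the inductive identity $\Dec^*_{l-1}\circ\Enc_{l-1}=\ident_2\otimes\proj{0}{0}_{l-1}$ (i.e.\ inserting $\Enc^*_{l-1}\circ\Dec^*_{l-1}=\ident$) and then applying hypotheses 2 and 3, exactly as in the paper's two displayed chains. Your side remark about condition 1 of Definition \ref{defn:Interface} (the terminal $\EC$) is apt — the paper likewise treats it as structural, guaranteed by the construction of $\Enc_{(l-1)\ra l}$ in \eqref{equ:EncDecPartial} rather than by the inductive computation.
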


\begin{proof}
The proof proceeds by induction on $l\in\N$. For $l=1$ we have $\Enc_1=\Enc_{0\ra 1}$ and $\Dec_1=\Dec_{1\ra 0}$, which is an interface for $\mathcal{C}_1$ by assumption. By definition we have
\[
\Enc_{l+1} = \Enc_{l\ra (l+1)}\circ\Enc_l,
\] 
and 
\[
\Dec_{l+1} = \Dec_{l}\circ\Dec_{(l+1)\ra l}.
\]
Assuming that for $l\geq 1$ the pair of quantum circuits $\Enc_l$ and $\Dec_l$ is an interface for $\mathcal{C}_l$ we find that 
\begin{align*}
\Dec_{l+1}^*\circ\Enc_{l+1} &= \Dec_{l+1}^*\circ\Enc_{l\ra (l+1)}\circ\Enc_l \\
&= \Dec_{l+1}^*\circ\Enc_{l\ra (l+1)}\circ \Enc_l^*\circ \Dec_l^*\circ\Enc_l \\
&= \Dec_{l+1}^*\circ\Enc_{l\ra (l+1)}\circ \Enc_l^*(\cdot \otimes \proj{0}{0}_{l}) \\
&= \ident_2\otimes \proj{0}{0}_{l+1}
\end{align*}
where we used that $\Enc_l^*$ and $\Dec_l^*$ are inverse to each other in the second equality sign and the induction hypothesis together with the Property $2$ from Definition \ref{defn:Interface} in the third equality sign. In the last step, we used the Assumption $2$ on $\Enc_{l\ra (l+1)}$. Similarly, we find that 
\begin{align*}
\Dec_{l+1}\circ \Enc_{l+1}^*(\cdot \otimes\proj{0}{0}_{l+1}) &= \Dec_{l}\circ\Dec_{(l+1)\ra l}\circ \Enc_{l+1}^*(\cdot \otimes\proj{0}{0}_{l+1}) \\
&= \Dec_{l}\circ\Enc^*_{l}(\cdot \otimes \proj{0}{0}_{l}) \\
&= \ident_2(\cdot),
\end{align*}
where we used Assumption $3$ in the second equality sign and the inductive hypothesis in the last equality sign. This shows that the pair $\Enc_l$ and $\Dec_l$ form an interface for $\mathcal{C}_l$ according to Definition \ref{defn:Interface}.
\end{proof}

To construct an interface for the concatenated $7$-qubit Steane code, we will first construct interface circuits $\Enc_{0\ra 1}$ and $\Dec_{1\ra 0}$ for the first level, i.e., for the $7$-qubit Steane code. By implementing $\Enc_{0\ra 1}$ and $\Dec_{1\ra 0}$ in higher levels of the concatenated code, we will then obtain general interface circuits. The construction outlined below works in general whenever the quantum circuits $\Enc_{0\ra 1}$ and $\Dec_{1\ra 0}$ define an interface for the $7$-qubit Steane code, and only the size of these circuits will appear in the main results. For convenience we have included an explicit construction of interfaces $\Enc_{0\ra 1}$ and $\Dec_{1\ra 0}$ in Appendix \ref{Sec:InterfaceExpl} using a simple teleportation circuit.  

In the following, let $\Enc_{0\ra 1}$ and $\Dec_{1\ra 0}$ denote a fixed interface for the $7$-qubit Steane code. For $l\in\N$ we define the following quantum circuits as implementations as in Definition \ref{defn:Impl}:
\begin{equation}
\Enc_{(l-1)\ra l}:=\lb \Enc_{0\ra 1}\rb_{\text{C}_{l-1}} \hspace{1cm}\text{ and }\hspace{1cm}\Dec_{l\ra (l-1)}:=\lb \Dec_{1\ra 0}\rb_{\text{C}_{l-1}}.
\label{equ:EncDecPartial}
\end{equation}
Using \eqref{equ:EncRelation} from Appendix \ref{sec:CodeCon} it is easy to verify the conditions of Lemma \ref{lem:SuccInterface} for the quantum circuits from \eqref{equ:EncDecPartial}. Therefore, for any $l\in\N$, the quantum circuits
\begin{equation}
\Enc_l := \Enc_{(l-1)\ra l} \circ \cdots \circ\Enc_{1\ra 2} \circ \Enc_{0\ra 1}
\label{equ:EncDec1}
\end{equation}
and 
\begin{equation}
\Dec_l := \Dec_{1\ra 0}\circ \cdots \circ \Dec_{(l-1)\ra (l-2)} \circ \Dec_{l\ra (l-1)}
\label{equ:EncDec2}
\end{equation}
form an interface for the $l$-th level of the concatenated $7$-qubit Steane code. It remains to analyze how this interface behaves under the i.i.d.~Pauli noise model as introduced after Definition \ref{defn:NoiseModel}. It will be helpful to extend the notion of well-behavedness from Definition \ref{defn:ExRecCorr} to interfaces: 

\begin{defn}[Well-behaved interfaces]
For each $i\in\lset 0,\ldots ,l-1\rset$ we denote by $\Enc_{i\ra (i+1)}$ and $\Dec_{(i+1)\ra i}$ the quantum circuits from \eqref{equ:EncDecPartial} and by $\EC_{i}$ the error correction of the $i$th level of the concatenated $7$-qubit Steane code, where we set $\EC_0=\ident_2$.  
\begin{enumerate}
\item We will call the quantum circuit $\Enc_{i\ra (i+1)} \circ\EC_{i}$ \emph{well-behaved} under the Pauli fault pattern $F$ if all $i$-extended rectangles in $\lbr\Enc_{i\ra (i+1)} \circ\EC_{i}\rbr_F$ are well-behaved.  
\item We will call the quantum circuit $\Dec_{(i+1)\ra i} \circ\EC_{i+1}$ \emph{well-behaved} under the Pauli fault pattern $F$ if all $i$-extended rectangles in $\lbr\Dec_{(i+1)\ra i} \circ\EC_{i+1}\rbr_F$ are well-behaved.
\end{enumerate}
Moreover, for $l\in\N$ we will call the quantum circuit $\Enc_l$ or $\Dec_l\circ \EC_l$
\emph{well-behaved} under the Pauli fault pattern $F$ if the partial circuits $\Enc_{i\ra (i+1)} \circ\EC_{i}$ or $\Dec_{(i+1)\ra i} \circ\EC_{i+1}$ respectively are well-behaved under the restrictions of $F$ for every $i\in\lset 0,\ldots ,l-1\rset$. 
\label{defn:GoodFaultPatt}
\end{defn}

The following lemma gives an upper bound on the probability that interfaces are well-behaved. 

\begin{lem}[Probability of an interface to be well-behaved]
For $l\in\N$ consider $\Gamma_l\in\lset\Enc_l,\Dec_l\circ \EC_l \rset$ from \eqref{equ:EncDec1} and \eqref{equ:EncDec2}. Then, we have 
\[
\text{P}\lb \Gamma_l\text{ not well-behaved under }F\rb\leq cp_0\sum^{l-1}_{i=0}\lb \frac{p}{p_0}\rb^{2^i}
\]
where $p_0\in \lb 0,1\rbr$ denotes the threshold probability (see Lemma \ref{lem:Strongthreshold}), and where 
\[
c = \max\lb |\Loc\lb \Enc_{0\ra 1}\rb|,|\Loc\lb \Dec_{1\ra 0}\circ \EC_1\rb|\rb,
\] 
and $\text{P}$ denotes the probability distribution over Pauli fault patterns induced by the fault model $\mathcal{F}_{\PIID}(p)$ and restricted to $\Gamma_l$.
\label{lem:Goodness}
\end{lem}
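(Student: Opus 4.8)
The plan is to prove the bound by two nested applications of the union bound --- the outer one over the concatenation levels $i\in\lset 0,\ldots,l-1\rset$ and the inner one over the extended rectangles inside each level --- while controlling each individual extended rectangle by the Threshold Lemma~\ref{lem:Strongthreshold}.

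First I would invoke Definition~\ref{defn:GoodFaultPatt}: by that definition a circuit $\Gamma_l\in\lset\Enc_l,\Dec_l\circ\EC_l\rset$ fails to be well-behaved under $F$ only if, for some $i\in\lset 0,\ldots,l-1\rset$, the ``level-$i$ stage'' --- which is $\Enc_{i\ra(i+1)}\circ\EC_i$ in the encoder case and $\Dec_{(i+1)\ra i}\circ\EC_{i+1}$ in the decoder case --- fails to be well-behaved under the restriction of $F$ to that stage. Since the restriction of $\mathcal{F}_{\PIID}(p)$ to any sub-circuit is again i.i.d.~Pauli noise with parameter $p$, the union bound gives
\[
\text{P}\lb\Gamma_l\text{ not well-behaved under }F\rb\leq\sum_{i=0}^{l-1}\text{P}\lb\text{level-}i\text{ stage of }\Gamma_l\text{ not well-behaved}\rb ,
\]
so it is enough to bound each summand by $cp_0\lb p/p_0\rb^{2^i}$. (Here we may assume $0\le p<p_0$, since otherwise Lemma~\ref{lem:Strongthreshold}, invoked below, does not apply; note that for Theorem~\ref{thm:NiceInterfaceConcatenated} only $p\le p_0/2$ is used.)

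Next I would pin down the level-$i$ extended rectangles of a stage. By~\eqref{equ:EncDecPartial} we have $\Enc_{i\ra(i+1)}=\lb\Enc_{0\ra 1}\rb_{\mathcal{C}_i}$ and $\Dec_{(i+1)\ra i}=\lb\Dec_{1\ra 0}\rb_{\mathcal{C}_i}$, and, using the recursive construction of the concatenated $7$-qubit code (see~\eqref{equ:EncRelation} in Appendix~\ref{sec:CodeCon}, which in particular identifies $\EC_{i+1}$ with the level-$i$ implementation of $\EC_1$), each stage is, up to leading error corrections on its input blocks, the level-$i$ implementation of one of the two \emph{fixed, finite} circuits $\Enc_{0\ra 1}$ and $\Dec_{1\ra 0}\circ\EC_1$. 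Since forming the level-$i$ implementation replaces every location of the underlying fixed circuit by a single level-$i$ rectangle, and a level-$i$ rectangle together with the level-$i$ error correction(s) immediately preceding it constitutes exactly one level-$i$ extended rectangle, the number of level-$i$ extended rectangles in a stage is at most the number of locations of the corresponding fixed circuit, hence at most $c=\max\lb|\Loc\lb\Enc_{0\ra 1}\rb|,|\Loc\lb\Dec_{1\ra 0}\circ\EC_1\rb|\rb$ --- a bound independent of $i$ and $l$. Combining the union bound over these at most $c$ extended rectangles with Lemma~\ref{lem:Strongthreshold}, which states that any single level-$i$ extended rectangle fails to be well-behaved with probability at most $p_0\lb p/p_0\rb^{2^i}$, we obtain
\[
\text{P}\lb\text{level-}i\text{ stage of }\Gamma_l\text{ not well-behaved}\rb\leq c\,p_0\lb\frac{p}{p_0}\rb^{2^i},
\]
and substituting this into the display of the previous paragraph finishes the estimate. (At $i=0$ a ``level-$0$ extended rectangle'' is simply an elementary location equipped with a trivial error correction, so ``well-behaved'' there means ``fault-free'' and the bound $p=p_0\lb p/p_0\rb^{2^0}$ still holds; equivalently one checks that Lemma~\ref{lem:Strongthreshold} stays valid at level $0$.)

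The step I expect to be the main obstacle is the combinatorial bookkeeping in the third paragraph: checking that the level-$i$ extended rectangles of $\Enc_{i\ra(i+1)}\circ\EC_i$ and of $\Dec_{(i+1)\ra i}\circ\EC_{i+1}$ are exactly those produced by implementing the fixed finite circuits $\Enc_{0\ra 1}$ and $\Dec_{1\ra 0}\circ\EC_1$ at level $i$ --- so that their number is uniformly bounded by $c$ --- and, in particular, correctly treating the ``input'' extended rectangles at the very start of the decoder stage, where no preceding error correction is present inside the circuit itself; these must be dealt with exactly as the input rectangles are in~\cite{aliferis2006quantum}. Once this identification is settled, the probability estimate is an immediate consequence of Lemma~\ref{lem:Strongthreshold} and the two union bounds.
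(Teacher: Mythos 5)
Your proposal is correct and follows essentially the same route as the paper's own proof: a union bound over the $l$ stages, the identification of each stage $\Enc_{i\ra(i+1)}\circ\EC_i$ (resp.\ $\Dec_{(i+1)\ra i}\circ\EC_{i+1}$) as the level-$i$ implementation of the fixed circuit $\Enc_{0\ra 1}$ (resp.\ $\Dec_{1\ra 0}\circ\EC_1$) so that the number of $i$-extended rectangles is uniformly bounded by $c$, and then the union bound combined with Lemma~\ref{lem:Strongthreshold} for each extended rectangle. The bookkeeping you flag as the main obstacle (including the $i=0$ convention) is handled in the paper in exactly the way you describe.
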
 

\begin{proof}
For $l\in\N$ we will state the proof in the case $\Gamma_l = \Enc_l$. The remaining case follows in the same way. Consider the fault model $\mathcal{F}_{\PIID}(p)$ restricted to $\Enc_l$. Let $N_i\in\N$ denote the number of $i$-extended rectangles in the quantum circuit $\Enc_{i\ra (i+1)} \circ\EC_{i}$ for $i\in\lset 0,\ldots, l-1\rset$, where $0$-extended rectangles are the elementary operations and $\EC_0=\ident_2$. By \eqref{equ:EncDecPartial} we have $N_i = N_0 = |\Loc\lb \Enc_{0\ra 1}\rb|$, and by the union bound and Lemma \ref{lem:Strongthreshold} we have 
\[
\text{P}\lb \text{All } i\text{-extended rectangles in } \lbr\Enc_{i\ra (i+1)} \circ\EC_{i}\rbr_F \text{ are well-behaved}\rb\leq p_0 N_0\lb \frac{p}{p_0}\rb^{2^i}
\]
for any $i\in\lset 0,\ldots, l-1\rset$. By comparing with Definition \ref{defn:GoodFaultPatt} we find that 
\[
\text{P}\lb \Enc_l~\text{not well-behaved}\rb\leq cp_0\sum^{l-1}_{i=0}\lb \frac{p}{p_0}\rb^{2^i},
\] 
where $c = |\Loc\lb \Enc_{0\ra 1}\rb|$.

\end{proof}

The following lemma analyzes how the successive interfaces defined above behave under the fault patterns introduced in Defintion \ref{defn:GoodFaultPatt}. 

\begin{lem}[Successive interfaces under noise]
For each $l\in\N$ let $\mathcal{C}_l\subset (\C^2)^{\otimes 7^l}$ denote the $l$th level of the concatenated $7$-qubit Steane code, and let $\Enc_{(l-1)\ra l}:\M^{\otimes 7^{l-1}}_2\ra \M^{\otimes 7^{l}}_2$ and $\Dec_{l\ra (l-1)}:\M^{\otimes 7^{l}}_2\ra \M^{\otimes 7^{l-1}}_2$ denote the quantum circuits from \eqref{equ:EncDecPartial}. Whenever the quantum circuit $\Enc_{(l-1)\ra l} \circ\EC_{l-1}$ or $\Dec_{l\ra (l-1)} \circ\EC_{l}$ is well-behaved under a Pauli fault pattern $F$ we have 
\[
\Dec^*_l\circ \lbr\Enc_{(l-1)\ra l} \circ\EC_{l-1}\rbr_F = (\ident_2\otimes S^{(l-1)\ra l}_F)\circ\Dec^*_{l-1}\circ\lbr\EC_{l-1}\rbr_F
\] 
or
\[
\Dec^*_{l-1}\circ \lbr\Dec_{l\ra (l-1)} \circ\EC_{l}\rbr_F = (\ident_2\otimes S^{l\ra (l-1)}_F)\circ\Dec^*_{l}\circ\lbr\EC_{l}\rbr_F,
\] 
respectively, for quantum channels $S^{(l-1)\ra l}_F:\M^{\otimes(K_{l-1}-1)}_2\ra\M^{\otimes (K_l-1)}_2$ and $S^{l\ra (l-1)}_F:\M^{\otimes(K_{l}-1)}_2\ra\M^{\otimes (K_{l-1}-1)}_2$ between the syndrome spaces of $\mathcal{C}_{l-1}$ and $\mathcal{C}_l$. 
 
\label{lem:InterfaceLemma}
\end{lem}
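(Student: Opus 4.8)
The plan is to prove Lemma \ref{lem:InterfaceLemma} by a direct application of the machinery developed in Lemma \ref{lem:CircuitTransformExRecCorr} to the specific circuits $\Enc_{(l-1)\ra l}\circ \EC_{l-1}$ and $\Dec_{l\ra(l-1)}\circ \EC_l$, which are themselves implementations (at level $l-1$) of the fixed level-one circuits $\Enc_{0\ra 1}$ and $\Dec_{1\ra 0}$ together with an initial error correction. The key observation is that by the definitions in \eqref{equ:EncDecPartial}, $\Enc_{(l-1)\ra l}\circ \EC_{l-1} = \lb \Enc_{0\ra 1}\rb_{\mathcal{C}_{l-1}}\circ \EC_{l-1}$, so up to renaming code blocks this is exactly a circuit of the form $\Gamma_{\mathcal{C}}\circ \EC^{\otimes n}$ with $\Gamma = \Enc_{0\ra 1}$, $n=1$, $K = 7^{l-1}$, and $\mathcal{C} = \mathcal{C}_{l-1}$. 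Being well-behaved under $F$ in the sense of Definition \ref{defn:GoodFaultPatt} is precisely the hypothesis of Lemma \ref{lem:CircuitTransformExRecCorr}.

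First I would apply Lemma \ref{lem:CircuitTransformExRecCorr} to get
\[
(\Dec^*_{l-1})^{\otimes 7}\circ \lbr \lb\Enc_{0\ra 1}\rb_{\mathcal{C}_{l-1}}\circ \EC^{\otimes 7}_{l-1}\rbr_F = \lb \Enc_{0\ra 1}\otimes S^F\rb\circ (\Dec^*_{l-1})^{\otimes 1}\circ \lbr \EC^{\otimes 1}_{l-1}\rbr_F
\]
where I am being careful with the bookkeeping of code blocks: $\Enc_{0\ra 1}$ maps $1$ qubit to $7$ qubits, so after implementation it maps one level-$(l-1)$ block to seven such blocks, hence the $(\Dec^*_{l-1})^{\otimes 7}$ on the left against a single $\Dec^*_{l-1}$ on the right, with $S^F$ a channel on the seven-fold minus one syndrome spaces of $\mathcal{C}_{l-1}$. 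Then I would compose on the left with $(\Dec^*_1)$ — the ideal level-one decoder acting on the block of $7$ level-$(l-1)$ blocks, which is exactly how $\Dec^*_l$ factors through $\Dec^*_{l-1}$ and $\Dec^*_1$ via \eqref{equ:EncRelation} from the appendix. Using Property 2 of the successive-interface construction (Lemma \ref{lem:SuccInterface}), namely $\Dec^*_l\circ \Enc_{(l-1)\ra l}\circ \Enc^*_{l-1}(\cdot\otimes\proj{0}{0}_{l-1}) = \ident_2\otimes\proj{0}{0}_l$, together with the fact that $\Dec^*_1\circ \Enc_{0\ra 1}$ in the appropriate sense produces the logical identity plus a fixed zero syndrome, the composed left-hand side collapses so that the data qubit passes through as $\ident_2$ and everything else is absorbed into a channel $S^{(l-1)\ra l}_F$ between the syndrome space of $\mathcal{C}_{l-1}$ and that of $\mathcal{C}_l$. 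The decoder case is entirely analogous, using Property 3 of Lemma \ref{lem:SuccInterface} and the transformation rule for a circuit ending in (effectively) a decoder, with the initial $\EC_l$ left in place on the right-hand side exactly as in the measurement/trace clause of Lemma \ref{lem:CircuitTransformExRecCorr}.

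The main obstacle I anticipate is not the logic of the transformation — that is essentially mechanical once Lemma \ref{lem:CircuitTransformExRecCorr} is in hand — but the careful tracking of \emph{which} ideal decoder acts on \emph{which} tensor factor, and the factorization $\Dec^*_l = (\text{level-one ideal decoder})\circ (\Dec^*_{l-1})^{\otimes 7}$ that makes the syndrome spaces line up. One must check that the output of Lemma \ref{lem:CircuitTransformExRecCorr}, which naturally lives on a tensor product of level-$(l-1)$ data-plus-syndrome spaces, can be reorganized via \eqref{equ:EncRelation} into a single level-$l$ data qubit plus a level-$l$ syndrome space, and that the residual channel $S^F$ together with $\lbr \EC^{\otimes 7}_{l-1}\rbr_F$ and the inner structure of $\EC_l$ all combine consistently into the claimed $S^{(l-1)\ra l}_F$ acting purely on syndromes. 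I would verify that the level-$l$ error correction $\EC_l$ appearing implicitly (inside the definition of well-behavedness at level $l$, since $\Enc_{(l-1)\ra l}$ ends in $\EC_l$ by Definition \ref{defn:Interface} Property 1) is consistent with this bookkeeping, so that no data leaks into the syndrome channel. Apart from this indexing care, the remaining steps — invoking Lemma \ref{lem:CircuitTransformExRecCorr} and Lemma \ref{lem:SuccInterface} — are routine.
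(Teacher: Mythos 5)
Your proposal takes essentially the same route as the paper's proof: treat $\Enc_{(l-1)\ra l}\circ\EC_{l-1}$ (resp.\ $\Dec_{l\ra(l-1)}$ preceded by the level-$(l-1)$ error corrections sitting at the end of $\EC_l$, via the decomposition $\EC_l=\EC_{l-1}^{\otimes 7}\circ\Gamma$) as a level-$(l-1)$ implementation, apply Lemma \ref{lem:CircuitTransformExRecCorr}, and collapse using the recursion $\Dec^*_l=\lb\Dec^*_1\otimes\ident\rb\circ\lb\Dec^*_{l-1}\rb^{\otimes 7}$ from \eqref{equ:defnDecl} together with the level-one interface identities $\Dec^*_1\circ\Enc_{0\ra 1}=\ident_2\otimes\proj{0}{0}$ and $\Dec_{1\ra 0}\circ\Enc^*_1\lb\cdot\otimes\proj{0}{0}\rb=\ident_2$ from Definition \ref{defn:Interface}. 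Apart from minor bookkeeping slips (the initial error correction in your first display should be a single $\EC_{l-1}$, the relevant recursion is \eqref{equ:defnDecl}/\eqref{equ:DecRelation} rather than \eqref{equ:EncRelation}, and the needed facts are the Definition \ref{defn:Interface} properties of the fixed level-one interface rather than the hypotheses of Lemma \ref{lem:SuccInterface}), this is the paper's argument.
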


\begin{proof}
Recall that by construction of the concatenated $7$-qubit Steane code (see \eqref{equ:defnDecl}) we have 
\begin{equation}
\Dec^{*}_{l} = \lb\Dec^{*}_{1} \otimes \ident^{\otimes 7(7^{l-1}-1)}_2\rb\circ \lb\Dec^{*}_{l-1}\rb^{\otimes 7}.
\label{equ:DecoderLevelDecomp}
\end{equation} 
Let $F$ denote a Pauli fault pattern under which every $(l-1)$-extended rectangle in $\lbr\Enc_{(l-1)\ra l} \circ\EC_{l-1}\rbr_F$ is well-behaved. Treating $\Enc_{(l-1)\ra l} \circ\EC_{l-1}$ as the implementation of a quantum circuit at level $(l-1)$ of the concatenated $7$ qubit Steane code outputting $7$ registers, we have 
\[
\lb\Dec^*_{l-1}\rb^{\otimes 7}\circ \lbr\Enc_{(l-1)\ra l} \circ\EC_{l-1}\rbr_F = \lb\Enc_{0\ra 1}\otimes S_F\rb\circ \Dec^*_{l-1}\circ \lbr\EC_{l-1}\rbr_F,
\]
by Lemma \ref{lem:CircuitTransformExRecCorr} and for some quantum channel $S_F:\M^{\otimes (7^{l-1}-1)}_2\ra \M^{\otimes 7(7^{l-1}-1)}_2$. Using \eqref{equ:DecoderLevelDecomp} and that $\Enc_{0\ra 1}$ is an interface for $\mathcal{C}_1$ (see Definition \ref{defn:Interface}) we find that 
\begin{align*}
\Dec^*_{l}\circ \lbr\Enc_{(l-1)\ra l} \circ\EC_{l-1}\rbr_F &= \lb \Dec^*_{1}\circ\Enc_{0\ra 1}\otimes S_F\rb\circ \Dec^*_{l-1}\circ \lbr\EC_{l-1}\rbr_F \\
&=\lb \ident_2\otimes \tilde{S}_F\rb\circ \Dec^*_{l-1}\circ \lbr\EC_{l-1}\rbr_F,
\end{align*}
where $\tilde{S}_F = S_F\otimes \proj{0}{0}$ for the pure state $\proj{0}{0}\in\M^{\otimes 6}_2$ corresponding to the $0$ syndrome of $\mathcal{C}_1$. This verifies the first statement of the lemma.

For the second statement, let $F$ denote a Pauli fault pattern under which every $(l-1)$-extended rectangle in $\lbr\Dec_{l\ra (l-1)} \circ\EC_{l}\rbr_F$ is well-behaved. Since $\EC_l = \lb\EC_1\rb_{\mathcal{C}_{l-1}}$ the quantum circuit $\EC_{l}$ has error corrections $\EC^{\otimes 7}_{l-1}$ in its final step, and we can decompose $\EC_{l} = \EC^{\otimes 7}_{l-1}\circ \Gamma$ for some quantum circuit $\Gamma$. Treating $\Dec_{l\ra (l-1)} \circ\EC^{\otimes 7}_{l-1}$ as the implementation of a quantum circuit at level $l-1$ with $7$ input registers, we find
\begin{align*}
\Dec^*_{l-1}\circ\lbr\Dec_{l\ra (l-1)} \circ\EC_l\rbr_F &= \Dec^*_{l-1}\circ\lbr\Dec_{l\ra (l-1)} \circ\EC^{\otimes 7}_{l-1}\circ \Gamma\rbr_F \\
& = \lb\Dec_{1\ra 0}\otimes S_F\rb \circ \lb\Dec^*_{l-1}\rb^{\otimes 7}\circ \lbr \EC_{l}\rbr_F,
\end{align*}   
for some quantum channel $S_F:\M^{\otimes 7(7^{l-1}-1)}_2\ra \M^{\otimes (7^{l-1}-1)}_2$. Finally, we can use \eqref{equ:DecoderLevelDecomp} and that $\Dec_{1\ra 0}$ is an interface for $\mathcal{C}_1$ (see Definition \ref{defn:Interface}) to prove
\begin{align*}
\Dec^*_{l-1}\circ\lbr\Dec_{l\ra (l-1)} \circ\EC_l\rbr_F &= \lb\lb\Dec_{1\ra 0}\circ \Enc^*_{1}\circ \Dec^*_{1}\rb\otimes S_F\rb \circ \lb\Dec^*_{l-1}\rb^{\otimes 7}\circ \lbr \EC_{l}\rbr_F \\
&= \lb\ident_2\otimes S_F\rb \circ \Dec^*_{l}\circ \lbr \EC_{l}\rbr_F.
\end{align*}

\end{proof}

Now, we are in the position to prove Theorem \ref{thm:NiceInterfaceConcatenated}. 

\begin{proof}[Proof of Theorem \ref{thm:NiceInterfaceConcatenated}]

For each $l\in \N$ consider the interfaces given by $\Enc_l$ and $\Dec_l$ from \eqref{equ:EncDec1} and \eqref{equ:EncDec2} defined via the quantum circuits $\Enc_{i\ra (i+1)}$ and $\Dec_{(i+1)\ra i}$ from \eqref{equ:EncDecPartial}. We will now show that $\Enc_l$ is correct as in Definition \ref{defn:CorrectInterface} under a fault pattern $F$ whenever it is well-behaved under $F$ as in Definition \ref{defn:GoodFaultPatt}. Our proof will use induction on the level $l\in\N$. Clearly, we have 
\begin{equation}
\Dec^*_1\circ\lbr\Enc_{1}\rbr_F=\Dec^*_1\circ\lbr\Enc_{0\ra 1}\rbr_F = \ident_2\otimes \proj{0}{0}_1, 
\label{equ:EncStep1}
\end{equation}
for any fault pattern $F$ under which $\Enc_{1}$ is well-behaved, since then every elementary gate in $\Enc_{0\ra 1}$ is ideal by Definition \ref{defn:GoodFaultPatt}. Next, consider a fault pattern $F$ under which $\Enc_{l+1}$ is well-behaved for some $l\in\N$. Note that by Definition \ref{defn:GoodFaultPatt} also $\Enc_{l}$ (arising as a part of $\Enc_{l+1}$) is well-behaved under $F$ when restricted to that quantum circuit. Therefore, we can first apply Lemma \ref{lem:InterfaceLemma} and then the induction hypothesis to compute
\begin{align}
\Dec^*_{l+1}\circ\lbr\Enc_{l+1}\rbr_F &= \Dec^*_{l+1}\circ\lbr\Enc_{l\ra (l+1)}\circ\EC_{l} \circ\tilde{\Enc}_{l} \rbr_F \\
&= (\ident_2\otimes S^{l\ra (l+1)}_F)\circ\Dec^*_{l}\circ\lbr\Enc_{l}\rbr_F \\
& = \ident_2\otimes \sigma_F.
\end{align}
Here, $\tilde{\Enc}_{l}$ denotes the quantum circuit obtained from $\Enc_{l}$ by removing the final error correction $\EC_{l}$, and 
\[
S^{l\ra (l+1)}_F:\M^{\otimes(K_{l}-1)}_2\ra\M^{\otimes (K_{l+1}-1)}_2
\]  
denotes a quantum channel between the syndrome spaces of $\mathcal{C}_{l}$ and $\mathcal{C}_{l+1}$, and 
\[
\sigma_F = \prod^{l-1}_{i=1}S_F^{i\ra (i+1)}(\proj{0}{0}_1)
\]
is the final syndrome state. This shows that $\lbr\Enc_l\rbr_F$ is correct as in Definition \ref{defn:CorrectInterface} whenever $\Enc_l$ is well-behaved under $F$ according to Definition \ref{defn:GoodFaultPatt}. By Lemma \ref{lem:Goodness} we find that 
\[
\text{P}\lb \lbr\Enc_l\rbr_F\text{ not correct}\rb\leq \text{P}\lb \Enc_l\text{ not well-behaved under } F\rb\leq cp_0\sum^{l-1}_{i=0}\lb \frac{p}{p_0}\rb^{2^i} ,
\]
where 
\[
c = \max\lb |\Loc\lb \Enc_{0\ra 1}\rb|,|\Loc\lb \Dec_{1\ra 0}\circ \EC_1\rb|\rb.
\]
In the case where $0\leq p\leq p_0/2$ we upper bound the previous sum using a geometric series and obtain
\[
\text{P}\lb \lbr\Enc_l\rbr_F\text{ not correct}\rb\leq 2cp. 
\]

To deal with $\Dec_l$ we will again employ induction on the level $l\in\N$ to show that the quantum circuit $\Dec_l\circ\EC_l$ is correct under a Pauli fault pattern $F$ as in Definition \ref{defn:CorrectInterface} whenever it is well-behaved under $F$ as in Definition \ref{defn:GoodFaultPatt}. Let $F$ denote a Pauli fault pattern under which $\Dec_1\circ\EC_1$ is well-behaved and note that by Definition \ref{defn:GoodFaultPatt} every elementary gate is then ideal. Using that an error correction without any faults coincides with the ideal error correction from \eqref{equ:IdealEC} as a quantum channel, we find  
\begin{align*}
\lbr\Dec_1\circ \EC_1\rbr_F &= \Dec_{1\ra 0}\circ \EC^*_1 \\
&= \Dec_{1\ra 0}\circ \EC^*_1\circ\EC^*_1  \\
&= \Dec_{1\ra 0}\circ \Enc_1^*\circ (\ident_2\otimes \proj{0}{0}_1\Tr)\circ \Dec_1^* \circ \EC^*_1 \\
& = (\ident_2\otimes \Tr)\circ \Dec_1^* \circ \lbr\EC_1\rbr_F,
\end{align*}
where we used that the ideal error correction $\EC^*_1$ is a projection (see \eqref{equ:IdealEC}) and that $\Dec_{1\ra 0}$ is an interface (cf.~Definition \ref{defn:Interface}). Now, consider a fault pattern $F$ under which $\lbr \Dec_{l+1}\circ \EC_{l+1}\rbr_F$ is well-behaved for some $l\in\N$. By Definition \ref{defn:GoodFaultPatt}, $\Dec_{l}\circ \EC_{l}$ (arising as a part of the quantum circuit $\Dec_{l+1}\circ \EC_{l+1}$) is well-behaved under the restriction of $F$. Applying Lemma \ref{lem:InterfaceLemma} and the induction hypothesis we compute 
\begin{align*}
\lbr \Dec_{l+1}\circ \EC_{l+1}\rbr_F &= \lbr \Dec_{l}\circ \EC_{l}\circ \tilde{\Dec}_{(l+1)\ra l}\circ \EC_{l+1}\rbr_F \\
&= (\ident_2\otimes \Tr_S)\circ \Dec^*_{l}\circ \lbr \Dec_{(l+1)\ra l}\circ \EC_{l+1}\rbr_F\\
& = (\ident_2\otimes \Tr_S)\circ(\ident_2\otimes S^{(l+1)\ra l}_F)\circ\Dec^*_{l+1}\circ\lbr\EC_{l+1}\rbr_F\\
& = (\ident_2\otimes \Tr_S)\circ\Dec^*_{l+1}\circ\lbr\EC_{l+1}\rbr_F,
\end{align*}
where $\tilde{\Dec}_{(l+1)\ra l}$ in the first line denotes the quantum circuit obtained from $\Dec_{(l+1)\ra l}$ by removing the final error correction $\EC_{l}$. By induction it follows that $\lbr \Dec_l\circ \EC_l\rbr_F$ is correct as in Definition \ref{defn:CorrectInterface} whenever $\Dec_l\circ \EC_l$ is well-behaved under $F$. Again, we can apply Lemma \ref{lem:Goodness} to conclude
\[
\text{P}\lb \lbr\Dec_l\rbr_F\text{ not correct}\rb\leq \text{P}\lb \Dec_l\circ \EC_l\text{ not well-behaved under }F\rb\leq cp_0\sum^{l-1}_{i=0}\lb \frac{p}{p_0}\rb^{2^i} 
\]
where 
\[
c = \max\lb |\Loc\lb \Enc_{0\ra 1}\rb|,|\Loc\lb \Dec_{1\ra 0}\circ \EC_1\rb|\rb.
\]
Finally, whenever $0\leq p\leq p_0/2$ we can upper bound the previous sum using a geometric series as before and obtain
\[
\text{P}\lb \lbr\Enc_l\rbr_F\text{ not correct}\rb\leq 2cp. 
\]

\end{proof}

\subsection{Encoding and decoding of concatenated codes and effective communication channels}
\label{sec:EncDecEffComm}

To study quantum capacities in a fault-tolerant setting we will need to consider quantum circuits where some of the lines are replaced by quantum channels describing stronger noise than the usual noise model. These lines might describe wires connecting physically separate locations where quantum computers are located. Naturally, the noise affecting a qubit during transmission through such a wire might be much larger than the noise affecting the gates locally in the quantum computer. Here we will develop a framework to deal with this situation. We will start with a lemma combining the quantum interfaces introduced in the previous section with general quantum circuits. 

\begin{lem}[Noisy interfaces and quantum circuits]
Let $\Gamma^1:\C^{2^N}\ra\M^{\otimes m}_2$ be a quantum circuit with classical input, and $\Gamma^2:\M^{\otimes m}_2\ra\C^{2^M}$ be a quantum circuit with classical output. For each $l\in\N$ let $\mathcal{C}_l\subset (\C^2)^{\otimes 7^l}$ denote the $l$th level of the concatenated $7$-qubit Steane code with threshold $p_0\in\lb 0,1\rbr$ (see Lemma \ref{lem:Strongthreshold}). Moreover, we denote by $\Enc_l:\M_2\ra\M^{\otimes 7^l}_2$ and $\Dec_l:\M^{\otimes 7^l}_2\ra \M_2$ the interface circuits from \eqref{equ:EncDec1} and \eqref{equ:EncDec2}, and by $c>0$ the constant from Theorem \ref{thm:NiceInterfaceConcatenated}. Then, the following two statements hold for any $0\leq p\leq p_0/2$:
\begin{enumerate}
\item For any $l\in\N$ there exists a quantum channel $N_l:\M_2\otimes \M^{\otimes (7^l-1)}_2\ra \M_2$ acting on a data qubit and the syndrome space and only depending on $l$ and the interface circuit $\Dec_l$, and a quantum state $\sigma_S$ on the syndrome space such that 
\begin{align*}
&\Big\|\lbr \Dec^{\otimes m}_l\circ \Gamma^1_{\mathcal{C}_l}\rbr_{\mathcal{F}_\pi(p)} - \lb N^{\text{dec},l}_{p}\rb^{\otimes m}\circ \lb\Gamma^{1}\otimes \sigma_S\rb\Big\|_{1\ra 1}\leq 2C_1\lb \frac{p}{p_0}\rb^{2^{l}}|\Loc(\Gamma^1)| + 2C_2\lb \frac{p}{p_0}\rb^{2^{l-1}}m ,
\end{align*}
where $N^{\text{dec},l}_{p}:\M_2\otimes \M^{\otimes (7^l-1)}_2\ra\M_2$ is the quantum channel given by
\[
N^{\text{dec},l}_{p} = (1-2cp)\ident_2\otimes \Tr_{S_l} + 2cp N_l
\]
acting on a data qubit and the syndrome space.
\item For any $l\in\N$ there exists a quantum channel $N'_l:\M_2\ra\M_2$ only dependent on $l$ and the interface circuit $\Enc_l$ such that  
\begin{align*}
&\Big\|\lbr \Gamma^2_{\mathcal{C}_l} \circ\Enc^{\otimes m}_l\rbr_{\mathcal{F}_\PIID(p)} - \Gamma^2 \circ\lb N^{\text{enc},l}_{p}\rb^{\otimes m}\Big\|_{1\ra 1}\leq 2C\lb \frac{p}{p_0}\rb^{2^{l}}|\Loc(\Gamma^2)|,
\end{align*}
where $N^{\text{enc},l}_{p}:\M_2\ra\M_2$ is the quantum channel given by
\[
N^{\text{enc},l}_{p} = (1-2cp)\ident_2 + 2cp N'_l .
\]
\end{enumerate}
In the above, $C_1,C_2,C>0$ denote constants not depending on $m,l$ or the quantum circuits involved.
\label{lem:Main}
\end{lem}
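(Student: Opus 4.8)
\emph{Overall strategy.} Both parts are proved by the same scheme. A random fault pattern $F$ of $\mathcal{F}_{\PIID}(p)$ decomposes into independent restrictions to the $m$ disjoint interface circuits and to the protected bulk $\Gamma^i_{\mathcal{C}_l}$. One conditions on the high-probability event --- controlled by Theorem \ref{thm:Threshold2} --- that every extended rectangle of the bulk is well-behaved, and applies Lemma \ref{lem:CircuitTransformExRecCorr}: read through the ideal decoders $\Dec^*_l$ at the code-block boundaries, the faulty bulk acts as the logical circuit $\Gamma^i$ tensored with a byproduct channel/state on the syndrome space. Since $\Enc^*_l\circ\Dec^*_l=\ident$, these ideal operations cancel when one splices in the interface correctness statements of Theorem \ref{thm:NiceInterfaceConcatenated} (which hold whenever the interface is well-behaved --- see its proof and Lemma \ref{lem:InterfaceLemma} --- an event of probability $\geq 1-2cp$ for $p\leq p_0/2$ by Lemma \ref{lem:Goodness} and a geometric series). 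What is left is $\Gamma^i$ acting on a tensor product over the $m$ channel uses of noisy single-block effective channels. Averaging the $j$-th factor over the independent fault pattern on the $j$-th interface yields a single channel, the same for all $j$; splitting it according to ``interface well-behaved'' (where the factor is the ideal operation $\ident_2$ resp.\ $\ident_2\otimes\Tr_{S_l}$) versus its $\leq 2cp$-probable complement (where it is some fixed channel $N'_l$ resp.\ $N_l$), and rescaling the weights to exactly $2cp$ by a convex combination, produces $N^{\text{enc},l}_p=(1-2cp)\ident_2+2cp\,N'_l$ and $N^{\text{dec},l}_p=(1-2cp)\ident_2\otimes\Tr_{S_l}+2cp\,N_l$; these depend only on $l$, the interface circuit, and $p$, not on $\Gamma^i$ or $m$.

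\emph{Statement (2).} Here the scheme applies with no boundary complication: $\Enc_l$ ends with its own error correction $\EC_l$, which supplies precisely the initial error corrections that $\Gamma^2_{\mathcal{C}_l}$ needs for its first extended rectangles, so writing $\Enc_l=\EC_l\circ\tilde\Enc_l$ one studies $(\Gamma^2_{\mathcal{C}_l}\circ\EC_l^{\otimes m})\circ\tilde\Enc_l^{\otimes m}$. Conditioning on all extended rectangles of $\Gamma^2_{\mathcal{C}_l}\circ\EC_l^{\otimes m}$ being well-behaved (failure probability $\leq C(p/p_0)^{2^l}|\Loc(\Gamma^2)|$ by Theorem \ref{thm:Threshold2}), Lemma \ref{lem:CircuitTransformExRecCorr} reduces the circuit to $\Gamma^2$ applied to $\bigotimes_j\big((\Tr_{S_j}\circ\Dec^*_l)\circ\lbr\Enc_l\rbr_{F_j}\big)$; since $\Gamma^2$ has classical output its syndrome registers are traced out, which removes the fault-dependence of that piece. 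Averaging over the $F_j$ gives a single channel $\cN_p:\M_2\ra\M_2$, and one takes $N^{\text{enc},l}_p:=\cN_p$, so that the average of the circuit conditioned on the good event is exactly $\Gamma^2\circ(N^{\text{enc},l}_p)^{\otimes m}$; the total error is then at most twice the probability of the bad event, i.e.\ the stated $2C(p/p_0)^{2^l}|\Loc(\Gamma^2)|$.

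\emph{Statement (1).} Now $\Dec_l$ does not begin with an error correction, so to use Definition \ref{defn:CorrectInterface} and Theorem \ref{thm:NiceInterfaceConcatenated} one must place an $\EC_l$ in front of each $\Dec_l$. The plan is to insert a \emph{fresh} $\EC_l^{\otimes m}$ between $\Gamma^1_{\mathcal{C}_l}$ and $\Dec_l^{\otimes m}$: composed with the error corrections already present at the end of $\Gamma^1_{\mathcal{C}_l}$ this extra layer is redundant when fault-free, and being a level-$(l-1)$ implementation it is not well-behaved with probability only $O((p/p_0)^{2^{l-1}})$ per block by Lemma \ref{lem:Strongthreshold}, so replacing $\Dec_l^{\otimes m}\circ\Gamma^1_{\mathcal{C}_l}$ by $(\Dec_l\circ\EC_l)^{\otimes m}\circ\Gamma^1_{\mathcal{C}_l}$ costs at most $2C_2(p/p_0)^{2^{l-1}}m$ in $\|\cdot\|_{1\ra 1}$ --- the origin of the second error term. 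Then the scheme runs: condition on every extended rectangle of $\Gamma^1_{\mathcal{C}_l}$ being well-behaved (failure $\leq C_1(p/p_0)^{2^l}|\Loc(\Gamma^1)|$; $\Gamma^1$ has classical input, so no prepended encodings are needed), write $\lbr\Gamma^1_{\mathcal{C}_l}\rbr_F=(\Enc^*_l)^{\otimes m}\circ(\Gamma^1\otimes\sigma^F_S)$ for a syndrome state $\sigma^F_S$, set $\sigma_S$ to its average over this event (the residual fault-dependence is absorbed by the $\ident_2\otimes\Tr_{S_l}$-dominated decoder channel), compose each block with $\Dec_l\circ\EC_l$, replace that by $(\ident_2\otimes\Tr_{S_l})\circ\Dec^*_l\circ\lbr\EC_l\rbr_F$ via correctness, cancel against $\Enc^*_l$, and average over the $m$ independent decoder fault patterns to arrive at $(N^{\text{dec},l}_p)^{\otimes m}\circ(\Gamma^1\otimes\sigma_S)$ up to the two stated error terms.

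\emph{Main obstacle.} I expect the real work to be the boundary bookkeeping: the error-correction layer between the protected bulk and the interfaces is shared between two extended-rectangle decompositions, so the well-behavedness events for the two pieces must be organised consistently while keeping the independence that powers the tensor-power structure confined to the $m$ interface fault patterns; and for Statement (1) one must verify that the fault-dependence of the syndrome byproduct is genuinely erased by the $\ident_2\otimes\Tr_{S_l}$ part of $N^{\text{dec},l}_p$, so that the per-block error really is the small $O((p/p_0)^{2^{l-1}})$ contributed by the inserted error correction rather than the cruder $O(p)$ one would get by naively demanding that all $m$ decoder interfaces be correct.
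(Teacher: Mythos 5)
Your Part (2) follows the paper's own route (split $\Enc_l=\EC_l\circ\tilde\Enc_l$, condition on the extended rectangles of $\Gamma^2_{\mathcal{C}_l}\circ\EC_l^{\otimes m}$, reduce via Lemma \ref{lem:CircuitTransformExRecCorr}, then use per-block correctness of $\Enc_l$); the only caveat is that you cannot literally "condition" on the good event and keep an exact product structure, since the well-behavedness event couples the shared $\EC$-faults with the bulk faults — the paper instead replaces the restricted sum by the full sum at cost $2\epsilon$ via the triangle inequality, which is the fix you need. Part (1), however, has a genuine gap. The paper never inserts a fresh error-correction layer: it only \emph{regroups} the error corrections already present at the end of $\Gamma^1_{\mathcal{C}_l}$ with the decoders, writing $\Dec_l^{\otimes m}\circ\Gamma^1_{\mathcal{C}_l}=(\Dec_l\circ\EC_l)^{\otimes m}\circ\tilde\Gamma^1_{\mathcal{C}_l}$, so the noisy channel is unchanged. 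Your step replaces the circuit by a physically different one (with an extra noisy $\EC_l^{\otimes m}$), and the claimed cost $2C_2(p/p_0)^{2^{l-1}}m$ is not justified by the well-behavedness probability of the inserted layer: well-behavedness of an error correction does not mean it acts as the identity on physical states. Even a fault-free inserted $\EC_l$ resets residual correctable errors left by the preceding (noisy) error correction, and on the events where the subsequent $\Dec_l$ is itself faulty — probability $\Theta(p)$ per block — the original and modified circuits feed that faulty decoder different physical states, with no cancellation available. An honest bound on the insertion cost along your lines is therefore $O(pm)$ (or, with an optimistic coupling, $O(p^2m)$), which does not shrink as $l\to\infty$ and already exceeds your claimed bound for $l\geq 3$; such a term would make the lemma useless for its application, where the error must be killable by increasing $l$ at fixed $p$.

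The missing idea is how the paper handles the shared boundary without touching the circuit: it introduces the per-block indicator $\beta(F^k_1\oplus F^k_2)$ for well-behavedness of $\Dec_l\circ\EC_l$ (which depends on the \emph{shared} error-correction fault $F^k_2$, and hence is correlated with the bulk event and with the syndrome state), rewrites each block as $I^{F^k_1}_{\beta}\circ\Dec^*_l\circ\lbr\EC_l\rbr_{F^k_2}$, and then replaces $\beta(F^k_1\oplus F^k_2)$ by the proxy $\overline{\beta}(F^k_1)$ that depends only on the decoder's own faults, so that the average factorizes over the $m$ blocks with $q_1=\mathrm{P}(\overline\beta=1)\leq 2cp$. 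The second error term $2C_2(p/p_0)^{2^{l-1}}m$ is exactly the probability that $\beta\neq\overline\beta$ for some block, and this mismatch can occur only when an $(l-1)$-extended rectangle inside $\Dec_{l\ra(l-1)}\circ\EC_l$ fails — that is where the exponent $2^{l-1}$ you correctly guessed actually comes from. Your "main obstacle" paragraph identifies the right difficulty, but the mechanism you propose for resolving it does not deliver the stated bound; replacing the insertion step by the $\beta$-versus-$\overline\beta$ decoupling (or supplying a genuinely new argument that the insertion changes the noisy channel by an amount vanishing in $l$) is required to close the proof.
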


The proof of Lemma \ref{lem:Main} will be based on Theorem \ref{thm:NiceInterfaceConcatenated} showing that the probability of an interface not being correct under the noise model $\mathcal{F}_\PIID(p)$ is upper bounded by an expression linear in $p$. However, a major difficulty arises from the fact, that the two notions ``well-behavedness'' (of the quantum circuits $\Gamma^1_{\mathcal{C}_l}$ and $\Gamma^2_{\mathcal{C}_l}$) and ``correctness'' (of the interfaces $\Enc_l$ and $\Dec_l$) refer to overlapping parts of the combined circuits $\Dec^{\otimes m}_l\circ \Gamma^1_{\mathcal{C}_l}$ and $\Gamma^2_{\mathcal{C}_l} \circ\Enc^{\otimes m}_l$. To be precise, the circuits in question overlap in joined error corrections. To obtain the i.i.d.~structure of Lemma \ref{lem:Main} we have to deal with this overlap, which will take the largest part of the following proof. It should be noted that doing so is slightly more difficult in the first part of the proof considering the interface $\Dec_l$.   
 
\begin{proof}\hfill 
\begin{enumerate}
\item The quantum circuit $\Gamma^1_{\mathcal{C}_l}$ ends in error corrections according to Definition \ref{defn:Impl}, and we can write
\[
\Dec^{\otimes m}_l\circ \Gamma^1_{\mathcal{C}_l} =\lb\Dec_l\circ \EC_l\rb^{\otimes m}\circ \tilde{\Gamma}^1_{\mathcal{C}_l},
\] 
for some quantum circuit $\tilde{\Gamma}^1_{\mathcal{C}_l}$. Every fault pattern $F$ affecting the circuit $\lb\Dec_l\circ \EC_l\rb^{\otimes m}\circ \tilde{\Gamma}^1_{\mathcal{C}_l}$ can be decomposed into $F = F_1 \oplus F_2 \oplus F_3 $ with fault pattern $F_1$ affecting $\Dec^{\otimes m}_l$, fault pattern $F_2$ affecting $\EC^{\otimes m}_l$ and $F_3$ affecting $\tilde{\Gamma}^1_{\mathcal{C}_l}$. Because $F_1$ and $F_2$ are local faults acting on tensor products of quantum circuits, we can decompose $F_1=F^1_1\oplus \cdots \oplus F^m_1$ and $F_2=F^1_2\oplus \cdots \oplus F^m_2$, where the fault pattern $F^k_i$ acts on the $k$th line of the $m$-fold tensor product $\lb\Dec_l\circ \EC_l\rb^{\otimes m}$. See Figure \ref{fig:faultPatternsDec} for a schematic of how the fault patterns are labeled. 

\begin{figure*}[hbt!]
        \center
        \includegraphics[scale=0.6]{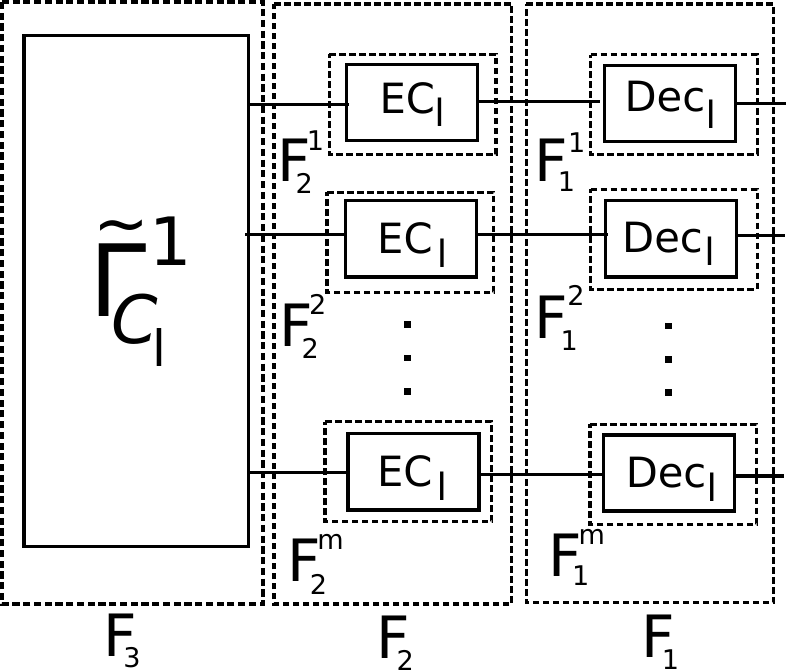}
        \caption{Partition of fault patterns in the first part of the Proof of Lemma \ref{lem:Main}.}
        \label{fig:faultPatternsDec}
\end{figure*}

Let $\mathcal{B}$ denote the set of all fault patterns $F_2\oplus F_3$ such that every $l$-extended rectangle in the quantum circuit
\[
\lbr\lb\EC_l\rb^{\otimes m}\circ \tilde{\Gamma}^1_{\mathcal{C}_l}\rbr_{F_2\oplus F_3}
\]
is well-behaved. By the threshold lemma (Lemma \ref{lem:Strongthreshold}) and the union bound we have 
\begin{equation}
\epsilon = P\lb F_2\oplus F_3\notin \mathcal{B}\rb \leq C_1\lb \frac{p}{p_0}\rb^{2^{l}}|\Loc(\Gamma^1)|,
\label{equ:ImpLem1}
\end{equation}
and by Lemma \ref{lem:CircuitTransformExRecCorr} (for $n=0$) we have 
\begin{equation}
\lb \Dec^*_l\rb^{\otimes m}\circ \lbr\EC^{\otimes m}_l\circ\tilde{\Gamma}^1_{\mathcal{C}_l}\rbr_{F_2\oplus F_3} = \Gamma^{1}\otimes \sigma_{F_2\oplus F_3}
\label{equ:ImpLem2}
\end{equation}
for any $F_2\oplus F_3\in \mathcal{B}$, and where $\sigma_{F_2\oplus F_3}$ denotes some quantum state on the syndrome space. For every $k\in\lset 1,\ldots ,m\rset$ we define the function
\[
\beta\lb F^k_1\oplus F^k_2\rb=\begin{cases} 0 \quad\text{ if }\Dec_l\circ \EC_l\text{ well-behaved under } F^k_1\oplus F^k_2 \\
1 \quad\text{ otherwise}\end{cases},
\]
where ``well-behaved'' is used as in Definition \ref{defn:GoodFaultPatt}. Since any well-behaved interface is correct as in Definition \ref{defn:CorrectInterface} (cf.~proof of Theorem \ref{thm:NiceInterfaceConcatenated}) we can transform the circuit $\lbr \Dec_l\circ \EC_l \rbr_{F^k_1\oplus F^k_2}$ as described in Definition \ref{defn:CorrectInterface} whenever $\beta\lb F^k_1\oplus F^k_2\rb=0$. If $\beta\lb F^k_1\oplus F^k_2\rb=1$, then we can insert an identity $\ident=\Enc^*_l\circ \Dec^*_l$ directly after the final error correction. This shows that  
\begin{equation}
\lbr\Dec_l\circ \EC_l \rbr_{F^k_1\oplus F^k_2} = I^{F^k_1}_{\beta\lb F^k_1\oplus F^k_2\rb} \circ \Dec_l^*\circ  \lbr\EC_l \rbr_{F^k_2},
\label{equ:ImpLem3}
\end{equation}
for quantum channels
\[
I^{F^k_1}_0 := \ident_2\otimes \Tr_S \quad\text{and}\quad I^{F^k_1}_1:=\lbr \Dec_l\rbr_{F^k_1}\circ \Enc^*_l.
\]
Using \eqref{equ:ImpLem3} we can rewrite 
\[
\lbr\Dec^{\otimes m}_l \rbr_{F_1}\circ\lbr\EC^{\otimes m}_l\rbr_{F_2} = \lb\bigotimes^m_{k=1}\lb I^{F^k_1}_{\beta(F^k_1\oplus F^k_2)}\circ \Dec^*_l\rb\rb\circ \lbr\EC^{\otimes m}_l\rbr_{F_2},
\]
where we decomposed $F_1=F^1_1\oplus \cdots \oplus F^m_1$ and $F_2=F^1_2\oplus \cdots \oplus F^m_2$ as explained above. Now, we compute 
\begin{align}
&\lbr \lb\Dec_l\circ \EC_l\rb^{\otimes m}\circ \tilde{\Gamma}^1_{\mathcal{C}_l}\rbr_{\mathcal{F}_\pi(p)}=\sum_{F_1\oplus F_2\oplus F_3} P(F_1\oplus F_2\oplus F_3)  \lbr\Dec^{\otimes m}_l \rbr_{F_1}\circ\lbr\EC^{\otimes m}_l\rbr_{F_2}\circ \lbr\tilde{\Gamma}^1_{\mathcal{C}_l}\rbr_{F_3} \nonumber\\
&=\sum_{F_1\oplus F_2\oplus F_3} P(F_1\oplus F_2\oplus F_3) \lb\bigotimes^m_{k=1}\lb I^{F^k_1}_{\beta(F^k_1\oplus F^k_2)}\circ \Dec^*_l\rb\rb\circ \lbr\EC^{\otimes m}_l\rbr_{F_2}\circ\lbr\tilde{\Gamma}^1_{\mathcal{C}_l}\rbr_{F_3}\nonumber \\
&=\sum_{\substack{F_1\oplus F_2\oplus F_3\\ F_2\oplus F_3\in\mathcal{B}}} P(F_1\oplus F_2\oplus F_3)  \lb\bigotimes^m_{k=1} \lb I^{F^k_1}_{\beta(F^k_1\oplus F^k_2)}\circ \Dec^*_l\rb\rb\circ \lbr\EC^{\otimes m}_l\rbr_{F_2}\circ\lbr\tilde{\Gamma}^1_{\mathcal{C}_l}\rbr_{F_3} + \epsilon E \nonumber\\
&=\sum_{\substack{F_1\oplus F_2\oplus F_3\\ F_2\oplus F_3\in \mathcal{B}}} P(F_1\oplus F_2\oplus F_3)  \lb\bigotimes^m_{k=1} I^{F^k_1}_{\beta(F^k_1\oplus F^k_2)}\rb\circ \lb\Gamma^{1}\otimes \sigma_{F_2\oplus F_3}\rb + \epsilon E, 
\label{equ:ImplLem4}
\end{align}
where we used \eqref{equ:ImpLem1} and \eqref{equ:ImpLem2} in the last two equalities. Here, $E$ denotes a quantum channel, and $P$ denotes the probability distribution on fault patterns according to the i.i.d.~Pauli fault model $\mathcal{F}_\PIID(p)$. 

Next, we introduce 
\[
\overline{\beta}(F^k_1) = \begin{cases} 0 \quad\text{ if there exists }F^k_2 \text{ such that }\Dec_l\circ \EC_l\text{ is well-behaved under }F^k_1\oplus F^k_2\\
1 \quad\text{ otherwise}\end{cases}.
\]
Note that $\overline{\beta}(F^k_1)=1$ implies $\beta(F^k_1\oplus F^k_2)=1$ for any fault pattern $F^k_2$. Therefore, the fault patterns $F^k_1\oplus F^k_2$ such that $\overline{\beta}(F^k_1)\neq \beta(F^k_1\oplus F^k_2)$ have to satisfy $\overline{\beta}(F^k_1)=0$ and $\beta(F^k_1\oplus F^k_2)=1$. This is only possible if the well-behavedness condition from Definition \ref{defn:GoodFaultPatt} fails only in the last step of the circuit containing the error correction affected by $F^k_2$, i.e., this part of the circuit has to contain an $(l-1)$-extended rectangle that is not well-behaved under the fault pattern $F^k_1\oplus F^k_2$. By this reasoning and the union bound we have 
\begin{align*}
\delta :=P&\lb \exists k\in \lset 1,\ldots ,m\rset~:~\overline{\beta}(F^k_1) \neq \beta(F^k_1\oplus F^k_2)\rb \\
&\leq \sum^{m}_{k=1} P\lb \overline{\beta}(F^k_1) \neq \beta(F^k_1\oplus F^k_2)\rb \\
& \leq \sum^{m}_{k=1} P\lb \text{An }(l-1)\text{-extended rectangle in} \lbr\Dec^k_{l\ra (l-1)} \circ\EC^k_l\rbr_{F^k_1\oplus F^k_2} \text{ is not well-behaved}\rb \\
& \leq m C_2\lb \frac{p}{p_0}\rb^{2^{l-1}} ,
\end{align*}
where we used the threshold lemma (Lemma \ref{lem:Strongthreshold}) in the two last inequalities and introduced the number $\delta$. Note that $C_2>0$ depends only on the number of $(l-1)$-extended rectangles in the quantum circuit $\Dec_{l\ra (l-1)} \circ\EC_l$, but not on $l$. 

Using \eqref{equ:ImplLem4} and twice the triangle inequality we have  
\begin{align*}
&\|\lbr \lb\Dec_l\circ \EC_l\rb^{\otimes m}\circ \tilde{\Gamma}^1_{\mathcal{C}_l}\rbr_{\mathcal{F}_\pi(p)} - \sum_{\substack{F_1\oplus F_2\oplus F_3\\ F_2\oplus F_3\in \mathcal{B}}} P(F_1\oplus F_2\oplus F_3) \bigotimes^m_{k=1} \lb I^{F^k_1}_{\overline{\beta}(F^k_1)}\rb\circ \lb\Gamma^{1}\otimes \sigma_{F_2\oplus F_3}\rb\|_{1\ra 1}\\
&\leq \| \sum_{\substack{F_1\oplus F_2\oplus F_3\\ F_2\oplus F_3\in \mathcal{B}}} P(F_1\oplus F_2\oplus F_3) \lbr \bigotimes^m_{k=1} \lb I^{F^k_1}_{\beta(F^k_1\oplus F^k_2)}\rb - \bigotimes^m_{k=1} \lb I^{F^k_1}_{\overline{\beta}(F^k_1)}\rb\rbr\circ \lb\Gamma^{1}\otimes \sigma_{F_2\oplus F_3}\rb \|_{1\ra 1} + \epsilon \\
&\leq \sum_{\substack{F_1\oplus F_2\oplus F_3\\ F_2\oplus F_3\in \mathcal{B}}} P(F_1\oplus F_2\oplus F_3) \Big{\|} \lbr \bigotimes^m_{k=1} \lb I^{F^k_1}_{\beta(F^k_1\oplus F^k_2)}\rb - \bigotimes^m_{k=1} \lb I^{F^k_1}_{\overline{\beta}(F^k_1)}\rb\rbr \Big{\|}_{1\ra 1} + \epsilon . 
\end{align*}
The $1\ra 1$-norm in the final expression of the previous computation is either $0$ when $\beta(F^k_1\oplus F^k_2)$ and $\overline{\beta}(F^k_1)$ coincide for every $k\in\lset 1,\ldots, m\rset$, or it can be upper bounded by $2$ since its argument is the difference of two quantum channels. Therefore, we find
\begin{align*}
&\|\lbr \lb\Dec_l\circ \EC_l\rb^{\otimes m}\circ \tilde{\Gamma}^1_{\mathcal{C}_l}\rbr_{\mathcal{F}_\pi(p)} - \sum_{\substack{F_1\oplus F_2\oplus F_3\\ F_2\oplus F_3\in \mathcal{B}}} P(F_1\oplus F_2\oplus F_3) \bigotimes^m_{k=1} \lb I^{F^k_1}_{\overline{\beta}(F^k_1)}\rb\circ \lb\Gamma^{1}\otimes \sigma_{F_2\oplus F_3}\rb\|_{1\ra 1}\\
& \leq 2P\lb \exists k\in \lset 1,\ldots ,m\rset~:~\overline{\beta}(F^k_1) \neq \beta(F^k_1\oplus F^k_2) \text{ and }F_2\oplus F_3\in \mathcal{B}\rb + \epsilon \\
&\leq 2P\lb \exists k\in \lset 1,\ldots ,m\rset~:~\overline{\beta}(F^k_1) \neq \beta(F^k_1\oplus F^k_2)\rb + \epsilon \\
&= 2\delta + \epsilon.
\end{align*}
where we used elementary probability theory and $\delta$ from above. Since faults $F_1,F_2$ and $F_3$ affecting the three parts of the circuit act independently, we can write 
\begin{align*}
\sum_{\substack{F_1\oplus F_2\oplus F_3\\ F_2\oplus F_3\in \mathcal{B}}}& P(F_1\oplus F_2\oplus F_3) \bigotimes^m_{k=1} \lb I^{F^k_1}_{\overline{\beta}(F^k_1)}\rb\circ \lb\Gamma^{1}\otimes \sigma_{F_2\oplus F_3}\rb \\
&=\sum_{\substack{F_1\oplus F_2\oplus F_3\\ F_2\oplus F_3\in \mathcal{B}}} P_1(F_1)P_2(F_2)P_3(F_3) \bigotimes^m_{k=1} \lb I^{F^k_1}_{\overline{\beta}(F^k_1)}\rb\circ \lb\Gamma^{1}\otimes \sigma_{F_2\oplus F_3}\rb \\
&= (1-\epsilon)\bigotimes^m_{k=1} \lb \sum_{F^k_1} P_1(F^k_1) I^{F^k_1}_{\overline{\beta}(F^k_1)}\rb\circ \lb\Gamma^{1}\otimes \sigma_S\rb \\
& =(1-\epsilon) \lb \tilde{N}^{\text{dec},l}_{q_1}\rb^{\otimes m}\circ \lb\Gamma^{1}\otimes \sigma_S\rb, 
\end{align*}
where $P_1,P_2$ and $P_3$ denote the probability distributions on fault patterns according to the i.i.d.~Pauli fault model $\mathcal{F}_\PIID(p)$ restricted to the three partial circuits. Furthermore, $\sigma_S$ denotes a quantum state on the syndrome space given by 
\[
\sigma_S = \frac{1}{1-\epsilon}\sum_{F_2\oplus F_3\in \mathcal{B}}P_2(F_2)P_3(F_3)\sigma_{F_2\oplus F_3},
\]
and we introduced a quantum channel $\tilde{N}^{\text{dec},l}_{q_1}:\M_2\otimes \M^{\otimes (7^l-1)}_2\ra \M_2$ as
\[
\tilde{N}^{\text{dec},l}_{q_1} = (1-q_1)\ident_2\otimes \Tr_S + q_1 \tilde{N}_l
\]
for 
\[
q_1:= P\lb \overline{\beta}(F^k_1) = 1\rb
\]
and the quantum channel $\tilde{N}_l:\M_2\otimes \M^{\otimes (7^l-1)}_2\ra \M_2$ defined as 
\[
\tilde{N}_l = \frac{1}{q_1}\sum_{F_1\text{ s.th. }\overline{\beta}(F_1)=1} P_1(F_1)\lbr \Dec_l\rbr_{F_1}\circ \Enc^*_l .
\]
Finally, by the aforementioned properties of $\beta$ and $\overline{\beta}$ and the same reasoning as in the proof of Theorem \ref{thm:NiceInterfaceConcatenated} (and with the same constant $c>0$) we have
\[
q_1=P\lb \overline{\beta}(F^k_1) = 1\rb \leq P\lb \beta(F^k_1\oplus F^k_2) = 1\rb \leq 2cp,
\] 
where $p$ is the local noise parameter of the i.i.d.~Pauli noise model $\mathcal{F}_\PIID(p)$. Finally, we have
\[
\tilde{N}^{\text{dec},l}_{q_1} = (1-2cp)\ident_2\otimes \Tr_S + 2cp N_l =: N^{\text{dec},l}_{p} 
\]
for some quantum channel $N_l:\M_2\otimes \M^{\otimes (7^l-1)}_2\ra \M_2$ finishing the proof.

\item The quantum circuit $\Enc_l$ from \eqref{equ:EncDec1} ends in an error correction $\EC_l$ of the code $\mathcal{C}_l$ (cf.~Definition \ref{defn:Interface}) and we can write 
\[
\Gamma^2_{\mathcal{C}_l}\circ \Enc^{\otimes m}_l =\Gamma^2_{\mathcal{C}_l}\circ \EC^{\otimes m}_l\circ \overline{\Enc}^{\otimes m}_l ,
\] 
where $\overline{\Enc}_l$ denotes the quantum circuit obtained from $\Enc_l$ by removing the final error correction. In the following, we will write $F=F_1\oplus F_2\oplus F_3$ to denote a fault pattern $F$ affecting the quantum circuit $\Gamma^2_{\mathcal{C}_l}\circ \Enc^{\otimes m}_l$ that is composed of fault patterns $F_1$, $F_2$ and $F_3$ affecting the quantum circuits $\Gamma^1_{\mathcal{C}_l}$, $\EC^{\otimes m}_l$ and $\overline{\Enc}^{\otimes m}_l$ respectively. Let $\mathcal{A}$ denote the set of fault patterns $F=F_1\oplus F_2$ such that every $l$-extended rectangle in the quantum circuit
\[
\lbr\Gamma^2_{\mathcal{C}_l}\circ \EC^{\otimes m}_l\rbr_{F_1\oplus F_2}
\]
is well-behaved. By Lemma \ref{lem:CircuitTransformExRecCorr} we have   
\[
\lbr\Gamma^2_{\mathcal{C}_l}\circ \EC^{\otimes m}_l\rbr_{F_1\oplus F_2} = \lb\Gamma^2\otimes \text{Tr}_S\rb\circ (\Dec^*_l)^{\otimes m}\circ \lbr\EC^{\otimes m}_l\rbr_{F_2} .
\]
for any $F_1\oplus F_2\in \mathcal{A}$, and the threshold lemma (Lemma \ref{lem:Strongthreshold}) shows that 
\[
\epsilon := P(F_1\oplus F_2\notin \mathcal{A}) \leq C\lb \frac{p}{p_0}\rb^{2^{l}}|\Loc(\Gamma^2)|,
\] 
where $P$ denotes the probability distribution of fault patterns according to the i.i.d.~Pauli fault model $\mathcal{F}_\PIID(p)$. Now, we can compute
\begin{align*}
&\lbr \Gamma^2_{\mathcal{C}_l}\circ \Enc^{\otimes m}_l\rbr_{\mathcal{F}_\PIID(p)} \\
&=\sum_{F\text{ fault pattern }}P(F)\lbr \Gamma^2_{\mathcal{C}_l}\circ \EC^{\otimes m}_l\circ \overline{\Enc}^{\otimes m}_l\rbr_{F} \\
&=\sum_{\substack{F=F_1\oplus F_2\oplus F_3\\ F_1\oplus F_2\in \mathcal{A}}}P_1(F_1)P_2(F_2)P_3(F_3)\lbr \Gamma^2_{\mathcal{C}_l}\circ \text{EC}^{\otimes m}_l\circ \overline{\Enc}^{\otimes m}_l\rbr_{F} + \epsilon E \\
&= \lb\Gamma^2\otimes \Tr_S\rb\circ (\Dec^*_l)^{\otimes m}\circ\lb\sum_{F_2\oplus F_3}P_2(F_2)P_3(F_3)\lb\sum_{\substack{F_1\\ F_1\oplus F_2\in\mathcal{A}}}P_1(F_1)\rb \lbr \Enc^{\otimes m}_l\rbr_{F_2\oplus F_3}\rb + \epsilon E,
\end{align*}
for some quantum channel $E$, and where $P_1,P_2$ and $P_3$ denote the probability distributions on fault patterns according to the i.i.d.~Pauli fault model $\mathcal{F}_\PIID(p)$ restricted to the three partial circuits. By normalization we have 
\[
\sum_{F_2\oplus F_3}P_2(F_2)P_3(F_3)\lb\sum_{\substack{F_1\\ F_1\oplus F_2\in\mathcal{A}}}P_1(F_1)\rb = 1-\epsilon.
\] 
By definition of the i.i.d.~Pauli fault model $\mathcal{F}_\PIID(p)$ we have
\[
\lbr \Enc^{\otimes m}_l\rbr_{\mathcal{F}_\PIID(p)} = \sum_{F_2\oplus F_3}P_2(F_2)P_3(F_3)\lbr \Enc^{\otimes m}_l\rbr_{F_2\oplus F_3}.
\]
Using the triangle inequality and that $\lb\Gamma^2\otimes \Tr_S\rb\circ (\Dec^*_l)^{\otimes m}$ is a quantum channel, we find that 
\begin{align*}
\|&\lbr \Gamma^2_{\mathcal{C}_l}\circ \Enc^{\otimes m}_l\rbr_{\mathcal{F}_\PIID(p)} - \lb\Gamma^2\otimes \Tr_S\rb\circ (\Dec^*_l)^{\otimes m}\circ \lbr \Enc^{\otimes m}_l\rbr_{\mathcal{F}_\PIID(p)}\|_{1\ra 1} \\
&\leq \epsilon + \sum_{F_2\oplus F_3}P_2(F_2)P_3(F_3)\lb 1-\sum_{\substack{F_1\\ F_1\oplus F_2\in\mathcal{A}}}P_1(F_1)\rb =2\epsilon. 
\end{align*}
Finally, we note that by Definition \ref{defn:CorrectInterface} we have 
\[
\sum_{F\text{ s.th. }\lbr \Enc_l\rbr_F\text{ correct}}P(F) \Dec^*_l\circ \lbr \Enc_l\rbr_{F} = (1-q_1)\ident_2\otimes \sigma,
\]
where 
\[
q_2 :=P(\lbr \Enc_l\rbr_F\text{ not correct})\leq 2cp,
\]
using Theorem \ref{thm:NiceInterfaceConcatenated} (and with the same constant $c>0$). Moreover, the quantum state $\sigma$ on the syndrome space is given by 
\[
\sigma = \frac{1}{1-q_2}\sum_{F\text{ s.th. }\lbr \Enc_l\rbr_F\text{ correct}}P(F) \sigma_F ,
\]
for quantum states $\sigma_F$ depending on the specific fault patterns. Dividing the fault patterns $F$ into two sets, one where $\lbr \Enc_l\rbr_F$ is correct and one where it is not, leads to
\[
\lb\ident_2\otimes \Tr_S\rb\circ\Dec^*_l\circ \lbr \Enc_l\rbr_{\mathcal{F}_\PIID(p)} = (1-q_2)\ident_2 + q_2 \tilde{N}'_l,
\]
where $\tilde{N}'_l:\M_2\ra\M_2$ is given by 
\[
\tilde{N}'_l= \frac{1}{q_2}\sum_{F\text{ s.th. }\lbr \Enc_l\rbr_F\text{ not correct}}P(F) \lb\ident_2\otimes \Tr_S\rb\circ \Dec^*_l\circ \lbr \Enc_l\rbr_{F}.
\]
Finally, we can rewrite
\[
(1-q_2)\ident_2 + q_2 \tilde{N}'_l = (1-2cp)\ident_2 + 2cp N'_l =: N^{\text{enc},l}_{p}  
\]
for some quantum channel $N'_l:\M_2\ra\M_2$ finishing the proof.
\end{enumerate}

\end{proof}

The previous lemma shows that using a quantum error correcting code to protect a general coding scheme for information transmission via a physical channel leads to a modified effective channel between the data subspaces. To make this precise, we state the following theorem which is a direct consequence of Lemma \ref{lem:Main}. 

\begin{thm}[Effective communication channel]
Let $T:\M^{\otimes j_1}_2\ra\M^{\otimes j_2}_2$ denote a quantum channel. Furthermore, let $\Gamma^1:\C^{2^N}\ra\M^{\otimes m}_2$ be a quantum circuit with classical input, and $\Gamma^2:\M^{\otimes m}_2 \ra\C^{2^M}$ a quantum circuit with classical output. For each $l\in\N$ let $\mathcal{C}_l\subset (\C^2)^{\otimes 7^l}$ denote the $l$th level of the concatenated $7$-qubit Steane code with threshold $p_0\in\lb 0,1\rbr$ (see Lemma \ref{lem:Strongthreshold}). Moreover, we denote by $\Enc_l:\M_2\ra\M^{\otimes 7^l}_2$ and $\Dec_l:\M^{\otimes 7^l}_2\ra \M_2$ the interface circuits from \eqref{equ:EncDec1} and \eqref{equ:EncDec2} and by $c>0$ the constant from Theorem \ref{thm:NiceInterfaceConcatenated}. For any $0\leq p\leq \min(p_0/2,(2(j_1+j_2)c)^{-1})$ and any $l\in\N$ we have
\begin{align*}
\Big\| &\lbr\Gamma^2_{\mathcal{C}_l}\circ \lb\Enc^{\otimes j_2}_l\circ T\circ\Dec^{\otimes j_1}_{l}\rb^{\otimes m}\circ \Gamma^1_{\mathcal{C}_l}\rbr_{\mathcal{F}_{\PIID(p)}} - \Gamma^2\circ T_{p,l}^{\otimes m}\circ (\Gamma^1\otimes \sigma_S)\Big\|_{1\ra 1}\\
&\quad\quad\leq C\lb \frac{p}{p_0}\rb^{2^l}\lb |\Loc(\Gamma^1)|+ |\Loc(\Gamma^2)| +j_1m\rb,
\end{align*}
where $\sigma_S$ denotes some quantum state on the syndrome space of the last $j_1m$ lines depending on $l\in\N$ the quantum circuit $\Gamma^2$ and the interface circuit $\Dec_l$. The effective quantum channel $T_{p,l}:\M^{\otimes j_1}_2\otimes \M^{\otimes j_1(7^l-1)}_2\ra \M^{\otimes j_2}_2$ is of the form 
\[
T_{p,l} = (1-2(j_1+j_2)cp)T\otimes \Tr_S + 2(j_1+j_2)cp N_l,
\]
where the partial trace is acting on the syndrome space, and the quantum channel $N_l:\M^{\otimes j_1}_2\otimes \M^{\otimes j_1(7^l-1)}_2\ra\M^{\otimes j_2}_2$ depends on $T$, the level $l$ and the interface circuits $\Enc_l$ and $\Dec_l$. In the above, $C>0$ denotes a constant not depending on $m,j_1,j_2,l$ or any of the occurring quantum circuits or quantum channels.
\label{thm:EffCommChan}
\end{thm}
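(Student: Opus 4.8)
The statement is meant to follow directly from Lemma \ref{lem:Main}, and the plan is to arrange the combined circuit so that the two parts of that lemma apply to its ``decoder side'' and its ``encoder side'' separately. Since tensor product distributes over composition, one first rewrites
\[
\lb\Enc^{\otimes j_2}_l\circ T\circ\Dec^{\otimes j_1}_{l}\rb^{\otimes m} = \Enc^{\otimes j_2 m}_l\circ T^{\otimes m}\circ\Dec^{\otimes j_1 m}_{l},
\]
so that the channel in question is $\lbr\Gamma^2_{\mathcal{C}_l}\circ \Enc^{\otimes j_2 m}_l\circ T^{\otimes m}\circ\Dec^{\otimes j_1 m}_{l}\circ \Gamma^1_{\mathcal{C}_l}\rbr_{\mathcal{F}_{\PIID}(p)}$. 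Now $T^{\otimes m}$ contains no locations and hence carries no faults, and the fault patterns of $\mathcal{F}_{\PIID}(p)$ on the circuit parts to its left and to its right are independent; averaging over them separately factorizes the noisy channel as
\[
\lbr\Gamma^2_{\mathcal{C}_l}\circ \Enc^{\otimes j_2 m}_l\rbr_{\mathcal{F}_{\PIID}(p)}\circ T^{\otimes m}\circ \lbr\Dec^{\otimes j_1 m}_{l}\circ \Gamma^1_{\mathcal{C}_l}\rbr_{\mathcal{F}_{\PIID}(p)}.
\]

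Next I would apply Lemma \ref{lem:Main}, part~1 to $\lbr\Dec^{\otimes j_1 m}_{l}\circ \Gamma^1_{\mathcal{C}_l}\rbr_{\mathcal{F}_{\PIID}(p)}$ (reading its $m$ as $j_1 m$) and part~2 to $\lbr\Gamma^2_{\mathcal{C}_l}\circ \Enc^{\otimes j_2 m}_l\rbr_{\mathcal{F}_{\PIID}(p)}$ (reading its $m$ as $j_2 m$). This produces the channels $N^{\text{dec},l}_p = (1-2cp)\,\ident_2\otimes \Tr_S + 2cp\,\widehat N_l$ and $N^{\text{enc},l}_p = (1-2cp)\,\ident_2 + 2cp\,\widehat N'_l$, a syndrome state $\sigma_S$ on the $j_1 m$ syndrome spaces, and the two $1\ra 1$-norm estimates. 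Comparing the factorized noisy channel to $\Gamma^2\circ\lb N^{\text{enc},l}_p\rb^{\otimes j_2 m}\circ T^{\otimes m}\circ\lb N^{\text{dec},l}_p\rb^{\otimes j_1 m}\circ(\Gamma^1\otimes\sigma_S)$ through a single intermediate channel and applying the triangle inequality — using that pre- or post-composing a difference of maps with a fixed quantum channel is non-expansive in $\|\cdot\|_{1\ra 1}$ (the relevant left and right factors, such as $\lbr\Gamma^2_{\mathcal{C}_l}\circ\Enc^{\otimes j_2 m}_l\rbr_{\mathcal{F}_{\PIID}(p)}\circ T^{\otimes m}$, are quantum channels) — the two error terms of Lemma \ref{lem:Main} add up to the bound stated in the theorem after collecting the constants into a single $C$.

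It then remains to recognize the approximant. By the interchange law once more, $\lb N^{\text{enc},l}_p\rb^{\otimes j_2 m}\circ T^{\otimes m}\circ\lb N^{\text{dec},l}_p\rb^{\otimes j_1 m} = T_{p,l}^{\otimes m}$ with $T_{p,l}:=\lb N^{\text{enc},l}_p\rb^{\otimes j_2}\circ T\circ\lb N^{\text{dec},l}_p\rb^{\otimes j_1}$, which has domain $\M^{\otimes j_1}_2\otimes \M^{\otimes j_1(7^l-1)}_2$ and codomain $\M^{\otimes j_2}_2$ as required, and $\sigma_S$ is the syndrome state delivered above. Expanding $T_{p,l}$ via the convex-combination forms of $N^{\text{dec},l}_p$ and $N^{\text{enc},l}_p$, the summand in which every tensor factor is the ``leading'' one equals $(1-2cp)^{j_1+j_2}\,(T\otimes\Tr_S)$, while the remaining completely positive terms carry total trace weight $1-(1-2cp)^{j_1+j_2}$. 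With $q:=2(j_1+j_2)cp$, the hypothesis $p\le(2(j_1+j_2)c)^{-1}$ gives $q\le 1$, and Bernoulli's inequality gives $(1-2cp)^{j_1+j_2}\ge 1-q$ and $1-(1-2cp)^{j_1+j_2}\le q$; hence
\[
N_l:=\frac1q\Big(\big((1-2cp)^{j_1+j_2}-(1-q)\big)(T\otimes\Tr_S)+\big(\text{remaining CP terms of }T_{p,l}\big)\Big)
\]
is completely positive, and trace preserving by a trace count, so that $T_{p,l} = (1-q)\,(T\otimes\Tr_S) + q\,N_l$ exactly as asserted. The only step that is not routine bookkeeping is the initial reduction: one must use that $T^{\otimes m}$ is fault-free and that the faults split into two independent families, so that the noisy combined channel genuinely factors through the two circuits handled by Lemma \ref{lem:Main}. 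Without this the notions of ``well-behavedness'' and ``correctness'' — which, as remarked before Lemma \ref{lem:Main}, already overlap in the joined error corrections — would couple the two halves and obstruct the clean i.i.d.\ conclusion; this overlap is precisely why the preceding lemma, rather than the present theorem, carries the real weight of the argument.
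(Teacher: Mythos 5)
Your proposal is correct and follows exactly the route the paper intends: Theorem \ref{thm:EffCommChan} is presented there as a direct consequence of Lemma \ref{lem:Main}, obtained precisely by splitting the noise average at the fault-free $T^{\otimes m}$ (the fault patterns on the two circuit halves are independent), applying part 1 of the lemma with $m$ replaced by $j_1 m$ and part 2 with $m$ replaced by $j_2 m$, combining via the triangle inequality and $1\ra 1$-contractivity, and rewriting the resulting product of convex combinations in the form $(1-2(j_1+j_2)cp)T\otimes\Tr_S + 2(j_1+j_2)cp\,N_l$ via Bernoulli's inequality, just as you do. The one blemish — asserting that the $\lb p/p_0\rb^{2^{l-1}}j_1m$ contribution from part 1 of Lemma \ref{lem:Main} can be absorbed into the stated $\lb p/p_0\rb^{2^{l}}$ bound by enlarging $C$ — is inherited from the paper's own statement of the theorem and is harmless in its later applications, where $l$ is simply chosen large enough.
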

It should be noted that the right hand side of the inequality in Theorem \ref{thm:EffCommChan} has to be small for the encoded quantum circuits $\Gamma^1_{\mathcal{C}_l}$ and $\Gamma^2_{\mathcal{C}_l}$ to be well-behaved with high probability. If these circuits implement a coding scheme for transmitting information (classical or quantum) over the quantum channel $T$, then to function correctly under noise, they also have to be a coding scheme for the effective quantum channel $T_{p,l}$ taking as input the data qubits and a syndrome state that might be entangled over multiple copies of the quantum channel. Note also that in general the number of locations $|\Loc(\Gamma^1)|$ and $|\Loc(\Gamma^2)|$ of the quantum circuits may scale exponentially in the number of channel uses $m$. Even in this case, the right hand side of the inequality in Theorem \ref{thm:EffCommChan} can be made arbitrarily small by choosing the level $l$ large enough. 

The entanglement in the syndrome state $\sigma_S$ in Theorem \ref{thm:EffCommChan} and how it might affect the effective quantum channel has to be studied more carefully in the future. It is certainly possible for $\sigma_S$ to be highly entangled between multiple communication lines and still correspond to a correctable syndrome. However, we have not actually shown this to be the case in practice. Difficulties arise from the fact that the structure of $\sigma_S$ depends on the quantum circuits in question, and that high levels of the concatenated $7$-qubit Steane code and the corresponding interface circuits are quite complicated. In the following, we have therefore adopted the approach of finding coding schemes for the worst-case scenario, where $\sigma_S$ is highly entangled. These coding schemes will likely be applicable for noise models beyond i.i.d.~Pauli noise $\mathcal{F}_\PIID(p)$ including correlations between multiple communication lines. 

Note that Theorem \ref{thm:EffCommChan} is formulated for quantum channels $T:\M^{\otimes j_1}_2\ra\M^{\otimes j_2}_2$ between quantum systems composed of qubits. This is only for notational convenience since we consider interfaces between logical qubits and physical qubits. A general quantum channel can always be embedded into a quantum channel between systems composed of qubits, and then Theorem \ref{thm:EffCommChan} applies. We will use this fact in the next section to obtain more general results.

\section{Capacities under arbitrarily varying perturbations}\label{sec:CapAVP}

Before studying fault-tolerant capacities we want to discuss a different kind of capacity which we refer to capacities under arbitrarily varying perturbations. This will turn out to capture the effective communication problems, i.e., for channel models of the same form as the effective communication channel in Theorem \ref{thm:EffCommChan}, which emerge from applying interface circuits in the scenarios of classical and quantum communication over quantum channels. 

\subsection{Arbitrarily varying perturbations}\label{sec:CapAVPFirst}

Given a quantum channel $T:\M_{d_A}\ra \M_{d_B}$, we define a quantum channel $T_{p,N}:\M_{d_A}\otimes \M_{d_E}\ra \M_{d_B}$ by 
\[
T_{p,N} = (1-p) T\otimes \text{Tr}_E + p N, 
\]
for any $d_E\in \N$ and any quantum channel $N:\M_{d_A}\otimes \M_{d_E}\ra \M_{d_B}$. We will now consider communication problems for the quantum channels
\begin{equation}\label{equ:AVPChannelModel}
\M^{\otimes m}_{d_A}\ni X\mapsto T^{\otimes m}_{p,N}\lb X\otimes \sigma_E\rb \in \M^{\otimes m}_{d_B}, 
\end{equation}
where $\sigma_E\in \D\lb(\C^{d_E})^{\otimes m}\rb$ is some environment state. We call this channel model \emph{channels under arbitrarily varying perturbation} (AVP) motivated by similar channel models~\cite{boche2018fully}. In the following, we will define capacities under AVP and we will start by defining the corresponding coding schemes. It should be emphasized that in our definitions, the environment states $\sigma_E$ will have unconstrained dimension (but they might belong to a certain subset of quantum states). Therefore, previous results such as~\cite{boche2018fully}, where this dimension is finite, are not directly applicable to these definitions.

\begin{defn}[Coding schemes for classical communication under AVP]\label{defn:CodingSchemesEffClassCap}
Let $T:\M_{d_A}\ra\M_{d_B}$ denote a quantum channel and fix $p\in\lbr 0,1\rbr$. For each $m,d_E\in\N$ consider a set $\mathcal{S}_m(d_E)\subseteq \D\lb(\C^{d_E})^{\otimes m}\rb$ of environment states. For $n,m\in\N$ and $\epsilon\in\R^+$, an $(n,m,\epsilon)$ coding scheme for classical communication over $T$ under AVP of strength $p$ consists of a classical-quantum channel $E:\C^{2^n}\ra\M^{\otimes m}_{d_A}$ and a quantum-classical channel $D:\M^{\otimes m}_{2}\ra \C^{2^n}$ such that 
\[
\sup_{d_E,\sigma_E, N_m}\epsilon_{cl}\lb D\circ T_{p,N_{m}}^{\otimes m}\circ\lb E \otimes \sigma_E\rb\rb \leq \epsilon ,
\]
where the supremum goes over the environment dimension $d_E\in\N$, states $\sigma_E\in \mathcal{S}_m(d_E)$ and quantum channels $N_m:\M_{d_A}\otimes \M_{d_{E}}\ra \M_{d_B}$. By $\epsilon_{cl}$ we denote the classical communication error from \eqref{equ:epsilonCL}.
\end{defn}

We have a similar definition in the case of quantum communication:

\begin{defn}[Coding scheme for quantum communication under AVP]\label{defn:CodingSchemesEffQuantCap}
Let $T:\M_{d_A}\ra\M_{d_B}$ denote a quantum channel and fix $p\in\lbr 0,1\rbr$. For each $m,d_E\in\N$ consider a set $\mathcal{S}_m(d_E)\subseteq \D\lb(\C^{d_E})^{\otimes m}\rb$ of quantum states. For $n,m\in\N$ and $\epsilon\in\R^+$, an $(n,m,\epsilon)$ coding scheme for quantum communication over $T$ under AVP of strength $p$ consists of a quantum channel $E:\C^{2^n}\ra\M^{\otimes m}_{d_A}$ and a quantum channel $D:\M^{\otimes m}_{2}\ra \C^{2^n}$ such that 
\[
\sup_{d_E, \sigma_E, N_m}\| \ident^{\otimes n}_2 - D\circ T_{p,N_{m}}^{\otimes m}\circ\lb E \otimes \sigma_E\rb\|_{\diamond}\leq \epsilon ,
\]
where the supremum goes over the dimension $d_E\in\N$, states $\sigma_E\in \mathcal{S}_m(d_E)$ and quantum channels $N_m:\M_{d_A}\otimes \M_{d_{E}}\ra \M_{d_B}$.

\end{defn}

Having defined coding schemes for quantum channels under AVP, we can define the corresponding capacities as usual:

\begin{defn}[Classical and quantum capacity under AVP]\label{defn:EffCCapAndQCap}
Let $T:\M_{d_A}\ra\M_{d_B}$ denote a quantum channel and fix $p\in\lbr 0,1\rbr$. For each $m,d_E\in\N$ consider a set $\mathcal{S}_m(d_E)\subseteq \D\lb(\C^{d_E})^{\otimes m}\rb$ of quantum states. We call $R\geq 0$ an achievable rate for classical (or quantum) communication under AVP of strength $p$ if for every $m\in\N$ there exists an $n_m\in\N$ and an $(n_m,m,\epsilon_m)$ coding scheme under AVP of strength $p$ for classical (or quantum) communication over $T$ with $\epsilon_m\ra 0$ as $m\ra\infty$ and 
\[
\liminf_{m\ra\infty} \frac{n_m}{m}\geq R.
\] 
The \emph{classical capacity under arbitrarily varying perturbations} of $T$ is given by
\[
C^{\text{AVP}}_{\mathcal{S}_m(d_E)}(p,T)=\sup\lset R\geq 0 \text{ achievable rate for classical communication under AVP of strength $p$}\rset,
\]
and the \emph{quantum capacity under arbitrarily varying perturbations} of $T$ is given by 
\[
Q^{\text{AVP}}_{\mathcal{S}_m(d_E)}(p,T)=\sup\lset R\geq 0 \text{ achievable rate for quantum communication under AVP of strength $p$}\rset.
\]
We will use a special notation for two special cases of these capacities:
\begin{itemize}
\item We write $C^{\text{AVP}}_{\text{SEP}}(p,T)$ and $Q^{\text{AVP}}_{\text{SEP}}(p,T)$, when the set $\mathcal{S}_m(d_E)$ of environment states consists of all fully separable states in $\D\lb(\C^{d_E})^{\otimes m}\rb$ for each $m\in\N$. 
\item We write $C^{\text{AVP}}_{\text{ALL}}(p,T)$ and $Q^{\text{AVP}}_{\text{ALL}}(p,T)$, when the set $\mathcal{S}_m(d_E)$ of environment states is the entire set $\D\lb(\C^{d_E})^{\otimes m}\rb$ for each $m\in\N$.
\end{itemize}
\end{defn}

\subsection{Separable environment states}\label{sec:CapAVPSep}

In the case of separable environment states, the capacities under AVP coincide with certain capacities for arbitrarily varying quantum channels (we refer to~\cite{ahlswede2013quantum} for the precise definition of this channel model). Specifically, we have the following lemma:

\begin{lem}\label{lem:AVPvsAVC}
For any quantum channel $T:\M_{d_A}\ra \M_{d_B}$ we have
\[
C^{\text{AVP}}_{\text{SEP}}(p,T) = C\lb\lset (1-p)T + pS\rset_{S}\rb \quad\text{ and }\quad Q^{\text{AVP}}_{\text{SEP}}(p,T) = Q\lb\lset (1-p)T + pS\rset_{S}\rb,
\]
where $\lset (1-p)T + pS\rset_{S}$ denotes the arbitrarily varying quantum channel consisting of quantum channels $(1-p)T + pS$ for arbitrary quantum channels $S:\M_{d_A}\ra \M_{d_B}$.
\end{lem}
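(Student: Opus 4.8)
The plan is to prove the lemma by showing that, for every $n,m\in\N$ and $\epsilon\in\R^+$, the $(n,m,\epsilon)$ coding schemes for classical (resp.\ quantum) communication over $T$ under AVP of strength $p$ with \emph{fully separable} environment states are \emph{precisely} the (deterministic) $(n,m,\epsilon)$ coding schemes for classical (resp.\ quantum) communication over the arbitrarily varying quantum channel $\lset (1-p)T+pS\rset_S$ of \cite{ahlswede2013quantum}. Once this identification is in place, both claimed identities of capacities follow at once, since each capacity is defined as the supremum of achievable rates and achievability depends only on the set of available $(n,m,\epsilon)$ coding schemes.

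The key computation is that on a \emph{product} environment state the AVP channel factorizes into a member of the arbitrarily varying channel. For $\sigma_E=\bigotimes_{k=1}^m\sigma_E^{(k)}$ with $\sigma_E^{(k)}\in\D(\C^{d_E})$, one reads off directly from $T_{p,N_m}=(1-p)T\otimes\Tr_E+pN_m$ that
\[
T_{p,N_m}^{\otimes m}\left(X\otimes\sigma_E\right)=\left(\bigotimes_{k=1}^m\left((1-p)T+p\,S_k\right)\right)(X),\qquad S_k(\cdot):=N_m(\cdot\otimes\sigma_E^{(k)}),
\]
and each $S_k:\M_{d_A}\ra\M_{d_B}$ is a quantum channel (it is the CP map $\rho\mapsto\rho\otimes\sigma_E^{(k)}$ followed by $N_m$, hence CP, and trace-preserving because $\sigma_E^{(k)}$ is normalized). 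Thus the right-hand side is exactly the channel that $\lset(1-p)T+pS\rset_S$ assigns to the state sequence $(S_1,\dots,S_m)$. A general fully separable $\sigma_E=\sum_j q_j\bigotimes_{k=1}^m\sigma_E^{(j,k)}$ then gives, by linearity of $\sigma_E\mapsto T_{p,N_m}^{\otimes m}(X\otimes\sigma_E)$, that $D\circ T_{p,N_m}^{\otimes m}\circ(E\otimes\sigma_E)$ is the convex combination $\sum_j q_j\,D\circ\big(\bigotimes_k((1-p)T+p\,S_k^{(j)})\big)\circ E$ of channels of the arbitrarily varying channel. Since $\epsilon_{cl}$ from \eqref{equ:epsilonCL} is affine in its argument and $\|\ident^{\otimes n}_2-\cdot\|_\diamond$ is convex, an $(n,m,\epsilon)$ code for $\lset(1-p)T+pS\rset_S$, which by definition has error at most $\epsilon$ on \emph{every} state sequence, automatically has error at most $\epsilon$ against every fully separable $\sigma_E$ and every $N_m$. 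This establishes ``$\geq$'' in both identities.

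For the reverse inequality I would realize an arbitrary state sequence of $\lset(1-p)T+pS\rset_S$ inside the AVP model with a product (hence separable) environment. Given channels $s_1,\dots,s_m:\M_{d_A}\ra\M_{d_B}$, take $d_E=m$, let $\lset\ket{k}\rset_{k=1}^m$ be the computational basis of $\C^m$, and define the ``programmable'' channel $N_m:\M_{d_A}\otimes\M_m\ra\M_{d_B}$ by
\[
N_m(\omega)=\sum_{k=1}^m s_k\!\left(\left(\id_{d_A}\otimes\bra{k}\right)\omega\left(\id_{d_A}\otimes\ket{k}\right)\right),
\]
which is completely positive and trace-preserving and satisfies $N_m(\rho\otimes\proj{k}{k})=s_k(\rho)$. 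Taking $\sigma_E=\bigotimes_{k=1}^m\proj{k}{k}$, the factorization above yields $T_{p,N_m}^{\otimes m}(X\otimes\sigma_E)=\big(\bigotimes_{k=1}^m((1-p)T+p\,s_k)\big)(X)$, so the supremum in Definition \ref{defn:CodingSchemesEffClassCap} (resp.\ Definition \ref{defn:CodingSchemesEffQuantCap}) bounds, in particular, the communication error of a given code on the sequence $(s_1,\dots,s_m)$. Hence every AVP--SEP code is an arbitrarily-varying-channel code with the same parameters, giving ``$\leq$''.

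I expect the substantive part of the write-up to be not an estimate but definitional bookkeeping: checking that the coding schemes of Definition \ref{defn:CodingSchemesEffClassCap}/Definition \ref{defn:CodingSchemesEffQuantCap} — where the environment dimension $d_E$ ranges without bound — match term by term the deterministic coding schemes for the arbitrarily varying quantum channel of \cite{ahlswede2013quantum}, and that the reduction from fully separable to product environment states is legitimate (this is precisely where affineness of $\epsilon_{cl}$ and the triangle inequality for $\|\cdot\|_\diamond$ are used). The unbounded $d_E$ is harmless here exactly because, by the factorization above, every separable environment collapses to a convex combination of AVC state sequences independently of $d_E$.
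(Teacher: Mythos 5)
Your proposal is correct and follows essentially the same route as the paper's proof: one direction decomposes a fully separable environment state into a convex combination of product states, turns each product term into an AVC state sequence via $S_k(\cdot)=N_m(\cdot\otimes\sigma_E^{(k)})$, and invokes convexity/affineness of the error measures; the other direction simulates an arbitrary AVC sequence with the same classically programmable channel $N_m$ and the product state $\bigotimes_k\proj{k}{k}$. The only (harmless) difference is that you phrase the conclusion as an identification of coding schemes rather than of achievable rates.
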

\begin{proof}
To see that the respective left-hand sides are larger than the right-hand sides in the identies of the lemma, consider a fixed dimension $d_E\in\N$ a quantum channel $N:\M_{d_A}\otimes \M_{d_E}\ra \M_{d_B}$ and the quantum channel 
\begin{equation}\label{equ:channelUnderPertInProof}
\M^{\otimes m}_{d_A}\ni X\mapsto T^{\otimes m}_{p,N}\lb X\otimes \sigma_E\rb \in \M^{\otimes m}_{d_B}, 
\end{equation}
for any fixed fully-separable quantum state $\sigma_E\in \D\lb(\C^{d_E})^{\otimes m}\rb$. By assumption, we have
\[
\sigma_E = \sum^k_{i=1} q_i \tau^{(1)}_i\otimes \cdots \otimes \tau^{(m)}_i ,
\] 
for a probability distribution $\lset q_i\rset^k_{i=1}$ and quantum states $\tau^{(1)}_i,\ldots ,\tau^{(m)}_i \in \D\lb \C^{d_E}\rb$. Inserting this state into \eqref{equ:channelUnderPertInProof} shows that
\[
T^{\otimes m}_{p,N}\lb X\otimes \sigma_S\rb = \sum^k_{i=1} q_i \bigotimes^m_{j=1} \lb (1-p)T + pN_{i,j}\rb,
\]
where $N_{i,j}:\M_{d_A}\ra \M_{d_B}$ is the quantum channel given by $N_{i,j}(X)=N\lb X\otimes \tau^{(j)}_i\rb$. By convexity of the error measures used in Defintion \ref{defn:CodingSchemesEffClassCap} and Definition \ref{defn:CodingSchemesEffQuantCap}, we conclude that any communication rate that is achievable for the arbitrarily varying channel $\lset (1-p)T + pS\rset_{S}$ parametrized by quantum channels $S:\M_{d_A}\ra \M_{d_B}$ is also achievable under AVP in the case of fully separable environment states.  

To see the other directions, consider any $m$-tuple of quantum channels $S_1,\ldots ,S_m:\M_{d_A}\ra \M_{d_B}$. It is easy to see that  
\[
\bigotimes^m_{j=1} \lb (1-p)T + pS_{j}\rb\lb X\rb = T^{\otimes m}_{p,N}\lb X\otimes \sigma_E\rb ,
\]
by choosing $N:\M_{d_A}\otimes \M_{d_E}\ra \M_{d_B}$ for $d_E=m$ as 
\[
N(Y) = \sum^m_{i=1} N_i\lb (\one_{d_A}\otimes \bra{i})Y(\one_{d_A}\otimes \ket{i})\rb,
\]
for any $Y\in \M_{d_A}\otimes \M_{d_E}$, and $\sigma_E = \bigotimes^m_{j=1} \proj{j}{j}$. This shows that the arbitrarily varying quantum channel $\lset (1-p)T + pS\rset_{S}$ can be simulated by quantum channels of the form \eqref{equ:channelUnderPertInProof}. We conclude that communication rates under AVP, in the case where the environment states are fully separable, are also achievable for the arbitrarily varying quantum channel $\lset (1-p)T + pS\rset_{S}$.
\end{proof}

Now, we have the following lower bound:

\begin{thm}[Lower bounds on capacities under AVP with separable environment states]\label{thm:EffCapSepEnvLowerB}
For any quantum channel $T:\M_{d_A}\ra \M_{d_B}$ with classical capacity $C(T)>0$ there exists a $p_0\in \lbr 0,1\rbr$ such that  
\[
Q^{\text{AVP}}_{\text{SEP}}(p,T) \geq Q(T) - 2p\log(d_B)-(1+p)h_2\lb \frac{p}{1+p}\rb, 
\]
for any $p\leq p_0$.
\end{thm}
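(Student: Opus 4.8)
The plan is to route everything through the identification with arbitrarily varying quantum channels and then to isolate the only delicate point, which is a continuity estimate that does \emph{not} degrade under tensor powers. First I would apply Lemma~\ref{lem:AVPvsAVC} to rewrite $Q^{\text{AVP}}_{\text{SEP}}(p,T)=Q(\mathfrak{I}_p)$, where $\mathfrak{I}_p:=\{(1-p)T+pS : S\text{ a quantum channel }\M_{d_A}\to\M_{d_B}\}$ is a convex, compact AVQC, and each element is written as $\mathcal{N}_S=(1-p)T+pS$. Next I would use the hypothesis $C(T)>0$ to control the ``degenerate'' behaviour of the AVQC: $C(T)>0$ forces the existence of states $\rho_0,\rho_1\in\D(\C^{d_A})$ with $\|T(\rho_0)-T(\rho_1)\|_1\geq\delta>0$, and therefore for $p\leq p_0$ with $p_0$ small enough that $(1-p_0)\delta>2p_0$ every channel $\mathcal{N}_S$ still satisfies $\|\mathcal{N}_S(\rho_0)-\mathcal{N}_S(\rho_1)\|_1\geq(1-p)\delta-2p>0$; restricting the inputs to $\{\rho_0,\rho_1\}$ gives a non-symmetrizable classical--quantum AVC, hence $\mathfrak{I}_p$ has positive classical deterministic capacity. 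By the elimination-of-randomness technique for AVQCs of~\cite{ahlswede2013quantum}, positive classical deterministic capacity lets one derandomise, so $Q(\mathfrak{I}_p)$ equals the random-code quantum capacity of $\mathfrak{I}_p$, which — since $\mathfrak{I}_p$ is already convex and closed — coincides with the compound-channel quantum capacity $Q_{\mathrm{comp}}(\mathfrak{I}_p)$.

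The heart of the argument is then a lower bound on $Q_{\mathrm{comp}}(\mathfrak{I}_p)$. Blocking $k$ physical uses into one super-use and applying the achievability part of the compound coding theorem gives, for every $k$,
\[
Q_{\mathrm{comp}}(\mathfrak{I}_p)\ \geq\ \frac{1}{k}\,\max_{\phi}\ \inf_{S}\ I_{\text{coh}}\big(\phi,\mathcal{N}_S^{\otimes k}\big),
\]
where $\phi$ ranges over purified inputs on $k$ copies and $I_{\text{coh}}(\phi,\Lambda)$ is the coherent information of $\Lambda$ at $\phi$. I would choose $\phi=\phi_k^*$ near-optimal for $T^{\otimes k}$ (Theorem~\ref{thm:LSD}), so that $\tfrac1k I_{\text{coh}}(\phi_k^*,T^{\otimes k})\to Q(T)$, and compare $I_{\text{coh}}(\phi_k^*,\mathcal{N}_S^{\otimes k})$ with $I_{\text{coh}}(\phi_k^*,T^{\otimes k})$ \emph{uniformly} in $S$. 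The naive route via $\|\mathcal{N}_S^{\otimes k}-T^{\otimes k}\|_\diamond\leq 2kp$ is useless, since the resulting bound scales with $k$. Instead I would telescope one tensor factor at a time: replacing the $j$-th factor $T$ by $\mathcal{N}_S$ alters the two output states only on the single register $B_j$, and because all channels involved are trace preserving the marginals on $A$ together with all $B$-registers other than $B_j$ are unchanged; hence the change in coherent information is exactly the difference of two conditional mutual informations $I(A\!:\!B_j\mid B_{-j})$ of states that are $2p$-close in trace norm. A sharp uniform continuity bound for the conditional mutual information (of Alicki--Fannes--Winter/Shirokov type), applied with conditioned system $B_j$ of dimension $d_B$ only, bounds each of the $k$ swaps by $2p\log d_B+(1+p)h_2\!\big(\tfrac{p}{1+p}\big)$; summing and dividing by $k$ yields $\tfrac1k I_{\text{coh}}(\phi_k^*,\mathcal{N}_S^{\otimes k})\geq \tfrac1k I_{\text{coh}}(\phi_k^*,T^{\otimes k})-2p\log d_B-(1+p)h_2\!\big(\tfrac{p}{1+p}\big)$ uniformly in $S$, and letting $k\to\infty$ gives the claimed inequality.

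I expect the main obstacle to be exactly this continuity step: obtaining a bound on $\inf_S I_{\text{coh}}(\phi_k^*,\mathcal{N}_S^{\otimes k})$ whose per-use cost stays bounded as $k\to\infty$. In particular one cannot shortcut by invoking channel-capacity continuity (e.g.\ Leung--Smith) for each $\mathcal{N}_S$ separately, since that only controls $\inf_S Q(\mathcal{N}_S)$, an \emph{upper} bound on the compound capacity, whereas the compound capacity needs one input performing well against \emph{all} $S$ at once; the localized telescoping, combined with the preservation of the relevant marginals under trace-preserving maps, is what makes the single-use dimension $d_B$ (rather than $d_B^k$) appear and thus produces the additive, $k$-independent correction in the statement. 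A secondary, routine point is checking that the derandomisation and the compound coding theorem apply to the continuum $\mathfrak{I}_p$ of channels, which follows from compactness of the set of channels $\M_{d_A}\to\M_{d_B}$ and a standard net argument.
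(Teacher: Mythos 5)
Your proposal is correct in substance and shares the paper's skeleton: reduce to the AVQC $\lbrace (1-p)T+pS\rbrace_S$ via Lemma~\ref{lem:AVPvsAVC}, invoke the Ahlswede dichotomy of~\cite{ahlswede2013quantum} to reach the regularized expression $\lim_{k}\frac{1}{k}\max_{\phi}\inf_{S} I_{\text{coh}}\lb\phi,\lb(1-p)T+pS\rb^{\otimes k}\rb$ once positive deterministic classical capacity is secured, and then apply a continuity estimate whose per-use cost does not grow with $k$. You deviate in two sub-steps, both in a reasonable way. First, the paper gets positivity of the classical capacity from $C(T)>0$ \emph{via} Theorem~\ref{thm:LowerBoundEffCap}, i.e.\ $C^{\text{AVP}}_{\text{SEP}}(p,T)\geq C^{\text{AVP}}_{\text{ALL}}(p,T)>0$ for small $p$, whereas you argue directly that two $T$-distinguishable inputs remain distinguishable under every $(1-p)T+pS$ once $(1-p)\delta>2p$, so the induced cq-AVC is non-symmetrizable; this is a cleaner, more self-contained route that avoids the postselection machinery, with the only caveat that the jammer set is a continuum, so you need the symmetrizability/positivity statements for arbitrary (not finite) state sets or a finite-net reduction, which you flag. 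Second, for the continuity step the paper simply cites the per-use bound of~\cite[Proposition 3A]{shirokov2017tight} (in the spirit of~\cite{leung2009continuity}), while you re-derive it by telescoping one tensor factor at a time and observing that trace preservation fixes the $AB_{-j}$ marginal, so only the term $I(A\!:\!B_j|B_{-j})$ changes; this localization is sound and is essentially the known proof of the cited result (note that $B_j$ is the \emph{unconditioned} system there, the conditioning system is $B_{-j}$; what matters is that the relevant dimension is $\min(d_A^k,d_B)=d_B$, which is exactly what makes the correction $k$-independent). One quantitative caveat: the generic tight continuity bound for the quantum conditional mutual information yields a per-swap term of the form $2p\log d_B + 2(1+p)h_2\lb\frac{p}{1+p}\rb$, i.e.\ a factor $2$ worse on the entropy term than the statement; to land exactly on $(1+p)h_2\lb\frac{p}{1+p}\rb$ you should either cite the channel-level per-use bound of~\cite{shirokov2017tight,leung2009continuity} directly, as the paper does, or carry out sharper bookkeeping in the telescoping — otherwise your argument proves the theorem only with a slightly weaker additive constant.
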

\begin{proof}
By Lemma \ref{lem:AVPvsAVC} and the quantum version of the Ahlswede dichotomy~\cite[Theorem 5]{ahlswede2013quantum} we have 
\[
Q^{\text{AVP}}_{\text{SEP}}(p,T) = Q\lb\lset (1-p)T + pS\rset_{S}\rb \geq \lim_{k\ra\infty} \frac{1}{k}\max_{\phi_{A'}} \inf_{S} I_{\text{coh}}\lb \phi_{A'},\lb(1-p)T+pS\rb^{\otimes k}\rb,
\]
with optimizations over quantum states $\phi_{A'}\in \mathcal{D}\lb (\C^{d_A})^{\otimes k}\rb$ and quantum channels $S:\M_{d_A}\ra\M_{d_B}$, whenever $C^{\text{AVP}}_{\text{SEP}}(p,T)>0$. Using Theorem \ref{thm:LowerBoundEffCap} and the fact that $C(T)>0$, there exists $p_0\in \lbr 0,1\rbr$ such that $C^{\text{AVP}}_{\text{SEP}}(p,T)\geq C^{\text{AVP}}_{\text{ALL}}(p,T)>0$. Finally, by the continuity bound~\cite[Proposition 3A]{shirokov2017tight} (see also~\cite{leung2009continuity}) we find that
\[
Q^{\text{AVP}}_{\text{SEP}}(p,T)\geq \lim_{k\ra\infty} \frac{1}{k}\max_{\phi_{A'}} \inf_{S} I_{\text{coh}}\lb \phi_{A'},\lb(1-p)T+pS\rb^{\otimes k}\rb \geq Q(T) - 2p\log(d_B)-(1+p)h_2\lb \frac{p}{1+p}\rb,
\]
for any $p\leq p_0$. This finishes the proof.
\end{proof}

We are certain that a similar bound as in the previous theorem should also hold for the classical capacity under AVP with separable environment states. However, the analogue of the Ahlswede dichotomy for the classical capacity does not seem to appear in the literature.

\subsection{Unrestricted environment states}\label{sec:CapAVPAll}

To deal with arbitrary environment states, we will use the following postselection technique:

\begin{lem}[Simple postselection technique]
Consider $m,d_E\in \N$ and a quantum state $\sigma_E\in \D\lb(\C^{d_{E}})^{\otimes m}\rb$. For $p\in\lbr 0,1\rbr$, and quantum channels $T:\M_{d_A}\ra \M_{d_B}$ and $N:\M_{d_A}\otimes \M_{d_E}\ra \M_{d_B}$ we define the quantum channel $T_{p,N}:\M_{d_A}\otimes \M_{d_E}\ra \M_{d_B}$ as
\[
T_{p,N} = (1-p)T\otimes \Tr_S + p N
\]
We denote by $T_{p,N}^{\otimes m}\lb \cdot\otimes \sigma_E\rb:\M^{\otimes m}_{d_A}\ra \M^{\otimes m}_{d_B}$ the quantum channel given by 
\[
\M^{\otimes m}_{d_A}\ni X\mapsto T_{p,N}^{\otimes m}\lb X\otimes \sigma_E\rb\in \M^{\otimes m}_{d_B}, 
\]
where each $T_{p,N}$ acts partially on the state $\sigma_E$. Then, for any $\delta>0$ we have
\[
T_{p,N}^{\otimes m}\lb \cdot\otimes \sigma_E\rb\leq (d_Ad_B)^{m(p+\delta)}\tilde{T}^{\otimes m}_p + \exp(-m\frac{\delta^2}{3p})S
\] 
for a quantum channel $S:\M^{\otimes m}_{d_A}\ra\M^{\otimes m}_{d_B}$ and where 
\[
\tilde{T}_p = (1-p)T + p \frac{\one_{d_B}}{d_B}\Tr.
\]
Here, we write $S_1\leq S_2$ for linear maps $S_1$ and $S_2$ when $S_2-S_1$ is completely positive.
\label{lem:PostSelect}
\end{lem}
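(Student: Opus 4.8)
The plan is to expand the tensor power $T_{p,N}^{\otimes m}$ over the choice of ``perturbed'' tensor factors and bound the contribution of each term separately, keeping track of how many factors carry the perturbation $N$. Write $T_{p,N} = (1-p)\,\Phi_0 + p\,\Phi_1$ where $\Phi_0 = T\otimes\Tr_S$ and $\Phi_1 = N$ are completely positive maps (not trace-preserving individually, but CP is all we need). Then
\[
T_{p,N}^{\otimes m}(\cdot\otimes\sigma_E) = \sum_{A\subseteq\{1,\dots,m\}} (1-p)^{m-|A|} p^{|A|}\;\Big(\bigotimes_{j\notin A}\Phi_0^{(j)}\otimes\bigotimes_{j\in A}\Phi_1^{(j)}\Big)(\cdot\otimes\sigma_E),
\]
where the superscript indicates on which tensor factor the map acts. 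The key pointwise observation is that for each fixed subset $A$ with $|A|=k$, the map acting on the $A$-factors, namely $\bigotimes_{j\in A}\Phi_1^{(j)}$ applied after tracing $\sigma_E$ on those registers appropriately, is a CP map whose output on any input is bounded above in the CP order by $d_B^{k}$ times the (normalized) replacement channel $\big(\tfrac{\one_{d_B}}{d_B}\Tr\big)^{\otimes k}$; this is because for any single CP trace-nonincreasing map $M:\M_{d_A}\to\M_{d_B}$ one has $M \leq d_B\,\tfrac{\one_{d_B}}{d_B}\Tr = \one_{d_B}\Tr$ in the CP order (the Choi operator of $M$ is a PSD operator of trace $\le d_B$, hence bounded by $\one$). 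Combining this with the matching $\Phi_0$ factors, every term with $|A|=k$ is dominated in the CP order by $d_B^{k}\,\binom{m}{k}^{-1}$-free terms summing to a piece of the channel $\tilde T_p^{\otimes m}$ restricted to having exactly $k$ replacement factors, times the appropriate binomial weight.

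Concretely, since $\tilde T_p^{\otimes m} = \sum_{A} (1-p)^{m-|A|}p^{|A|} \big(T^{\otimes(m-|A|)}\otimes(\tfrac{\one}{d_B}\Tr)^{\otimes|A|}\big)$ with the factors in the obvious positions, the domination above gives, termwise in the CP order,
\[
T_{p,N}^{\otimes m}(\cdot\otimes\sigma_E) \;\leq\; \sum_{k=0}^{m}\Big(\sum_{|A|=k}(1-p)^{m-k}p^{k}\,d_B^{k}\,(\text{$k$-replacement term of }\tilde T_p^{\otimes m})\Big)\cdot\frac{1}{\binom{m}{k}^{-1}\cdots},
\]
so that grouping by $k$ one gets $T_{p,N}^{\otimes m}(\cdot\otimes\sigma_E) \leq \sum_{k=0}^m d_B^{k}\, \tilde T_p^{(k)}$ where $\tilde T_p^{(k)}$ is the part of $\tilde T_p^{\otimes m}$ in which exactly $k$ of the factors are the replacement map (and $\sum_k \tilde T_p^{(k)} = \tilde T_p^{\otimes m}$). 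Now split the sum at the threshold $k_0 = \lceil m(p+\delta)\rceil$: for $k\le k_0$ bound $d_B^{k}\le d_B^{m(p+\delta)}$ and use $\sum_{k\le k_0}\tilde T_p^{(k)} \le \tilde T_p^{\otimes m}$ (CP order), giving the first term $d_B^{m(p+\delta)}\tilde T_p^{\otimes m}$. For $k>k_0$, bound the ``probability weight'' of those terms: the total trace-norm contribution is controlled by $\Pr[\text{Bin}(m,p) > m(p+\delta)]$, which by the Chernoff bound is at most $\exp(-m\delta^2/(3p))$ — this is exactly where the stated exponent comes from. Since each $\tilde T_p^{(k)}$ is itself a positive multiple of a CP map and the ``high-$k$'' part, after rescaling by $\exp(m\delta^2/(3p))$, is still completely positive and trace-nonincreasing (hence extendable to a channel $S$ by adding a suitable completion), one collects $\sum_{k>k_0} d_B^k \tilde T_p^{(k)} \le \exp(-m\delta^2/(3p))\,S$.

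The main obstacle I expect is the bookkeeping in the two CP-order dominations: first, justifying cleanly that a CP trace-nonincreasing map into $\M_{d_B}$ is $\le \one_{d_B}\Tr$ in the CP order and that this tensorizes correctly across the subset $A$ while the environment state $\sigma_E$ is fully absorbed (one must be careful that $N$ acts on $\M_{d_A}\otimes\M_{d_E}$ and the $\sigma_E$ registers get traced/contracted — but since $\sigma_E$ is a fixed state the composite $X_j \mapsto N(X_j\otimes(\text{reduced }\sigma_E))$ with the other registers handled is still CP trace-nonincreasing $\M_{d_A}\to\M_{d_B}$, entanglement in $\sigma_E$ notwithstanding); and second, the Chernoff estimate for the binomial tail with the explicit constant $3p$ in the denominator, which is the standard multiplicative Chernoff bound $\Pr[\text{Bin}(m,p)\ge (1+\delta/p)mp] \le \exp(-mp(\delta/p)^2/3) = \exp(-m\delta^2/(3p))$, valid for $\delta/p \le 1$ (and trivially giving a better bound otherwise). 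Everything else is routine; the factorization $T_{p,N}^{\otimes m} = \sum_A (\cdots)$ and the CP-order monotonicity of tensor products do all the real work.
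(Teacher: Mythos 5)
Your overall route is the paper's: expand $T_{p,N}^{\otimes m}(\cdot\otimes\sigma_E)$ according to which tensor factors carry the perturbation $N$, absorb $\sigma_E$, replace each perturbed factor by $d_B\cdot\frac{\one_{d_B}}{d_B}\Tr$, cut the expansion at weight $m(p+\delta)$, and control the remainder by the multiplicative Chernoff bound with exponent $\delta^2/(3p)$. However, the order in which you perform the last two steps creates a genuine gap. You first dominate every term, obtaining $T_{p,N}^{\otimes m}(\cdot\otimes\sigma_E)\le\sum_k d_B^k\,\tilde{T}_p^{(k)}$, and then claim $\sum_{k>k_0} d_B^k\,\tilde{T}_p^{(k)}\le e^{-m\delta^2/(3p)}S$ for a quantum channel $S$. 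That is false: the total weight of this tail is $\sum_{k>k_0}\binom{m}{k}(1-p)^{m-k}(p\,d_B)^k\;\ge\; d_B^{\,m(p+\delta)}\Pr\lbr \mathrm{Bin}(m,p)>m(p+\delta)\rbr$, which is not bounded by $e^{-m\delta^2/(3p)}$ — the factors $d_B^k$ with $k>m(p+\delta)$ undo the exponential smallness of the binomial tail (for $d_B=2$, $p=\delta=0.1$ this weight even grows exponentially in $m$). Since CP domination by $e^{-m\delta^2/(3p)}S$ with $S$ trace preserving forces the Choi trace of the dominated map to be at most $e^{-m\delta^2/(3p)}d_A^m$, no such $S$ exists. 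The paper splits the \emph{original} expansion first: the discarded high-weight part of $T_{p,N}^{\otimes m}(\cdot\otimes\sigma_E)$ has total weight exactly $\Pr\lbr \mathrm{Bin}(m,p)>m(p+\delta)\rbr\le e^{-m\delta^2/(3p)}$ and is collected, after normalization, into the channel $S$; the $d_B$ blow-up, now uniformly bounded by $d_B^{\lfloor m(p+\delta)\rfloor}$, is applied only to the retained low-weight terms, which are then completed to $d_B^{\lfloor m(p+\delta)\rfloor}\tilde{T}_p^{\otimes m}$ by adding further completely positive terms. Your argument becomes the paper's once these two steps are swapped.

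A second issue is your justification of the central domination. You assert that a CP trace-nonincreasing map $M:\M_{d_A}\ra\M_{d_B}$ satisfies $M\le \one_{d_B}\Tr$ in the CP order because its Choi operator has trace at most $d_B$ and is hence bounded by $\one$. Both halves are incorrect: for a trace-preserving map the Choi operator has trace $d_A$, and a trace bound does not bound the operator norm — for the identity channel the Choi operator has norm $d_A$, so $\one_{d_B}\Tr-\ident$ is not completely positive. What does hold is only that $\one_{d_B}\Tr-M$ is a \emph{positive} map, since $M(\rho)\le\Tr(\rho)\one_{d_B}$ for every state $\rho$. The paper asserts the corresponding inequality (for the single joint channel acting on the whole block of perturbed factors — note that entanglement in $\sigma_E$ prevents the per-factor factorization your parenthetical suggests, though only the block-level statement is needed) without proof, so this step is shared by both arguments; but your Choi-trace argument does not establish the CP-order inequality that the lemma, as formulated, requires, and whether the weaker positive-map form suffices is a separate matter from reproducing the stated proof.
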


\begin{proof}
Consider a quantum state $\sigma_E\in \D\lb(\C^{d_{E}})^{\otimes m}\rb$, $m\in\N$, and some $\delta>0$. Setting $S_0 = T\otimes \Tr_E$ and $S_1 = N$ we find
\begin{align*}
&T_{p,N}^{\otimes m}\lb \cdot \otimes \sigma_E\rb \\
&= \sum^m_{k=0} (1-p)^{m-k}p^k\sum_{i_1+\cdots +i_{m}=k}\bigotimes^m_{s=1} S_{i_s}\lb \cdot\otimes \sigma_E\rb\\
&=\sum^{\lfloor m(p+\delta)\rfloor}_{k=0} (1-p)^{m-k}p^k\sum_{i_1+i_2+\cdots i_{m}=k} \bigotimes^m_{s=1} S_{i_s}\lb \cdot\otimes \sigma_E\rb + \text{P}\lb \frac{1}{m}\sum^{m}_{i=1} X_i >p+\delta\rb S,
\end{align*} 
where $S:\M^{\otimes m}_{d_A}\ra \M^{\otimes m}_{d_B}$ denotes a quantum channel collecting the second part of the sum. Furthermore, we introduced independent and identically distributed $\lset 0,1\rset$-valued random variables $X_i$ with $P(X_1=1)=q$. Note that each of the product channels in the remaining sum can be upper bounded as
\[
\bigotimes^m_{s=1} S_{i_s}\lb \cdot\otimes \sigma_E\rb \leq (d_Ad_B)^{\lfloor m(p+\delta)\rfloor} \bigotimes^m_{s=1} \tilde{S}_{i_s}, 
\]
with $\tilde{S}_0 = T$ and $\tilde{S}_1 = \frac{\one_{d_B}}{d_B}\Tr$, where we used that $i_1+\cdots +i_m\leq \lfloor m(p+\delta)\rfloor$ and that every quantum channel $N':\M^{\otimes k}_{d_A}\ra\M^{\otimes k}_{d_B}$ satisfies
\[
N'\leq (d_Ad_B)^{k}\lb\frac{\one_{d_B}}{d_B}\Tr\rb^{\otimes k}.
\]
By the Chernoff bound we have 
\[
\text{P}\lb \frac{1}{m}\sum^{m}_{i=1} X_i >p+\delta\rb \leq \text{exp}\lb -m\frac{\delta^2}{3p}\rb,
\]
and combining the previous equations, we find that 
\begin{align*}
T_p^{\otimes m}\lb \cdot \otimes \sigma_E\rb &\leq (d_Ad_B)^{\lfloor m(p+\delta)\rfloor}\sum^{\lfloor m(p+\delta)\rfloor}_{k=0} (1-p)^{m-k}p^k\sum_{i_1+i_2+\cdots i_{m}=k} \bigotimes^m_{s=1} \tilde{S}_{i_s}+ \text{exp}\lb -m\frac{\delta^2}{3p}\rb S \\
&\leq (d_Ad_B)^{\lfloor m(p+\delta)\rfloor}\sum^{m}_{k=0} (1-p)^{m-k}p^k\sum_{i_1+i_2+\cdots i_{m}=k} \bigotimes^m_{s=1} \tilde{S}_{i_s}+ \text{exp}\lb -m\frac{\delta^2}{3p}\rb S \\
&= (d_Ad_B)^{\lfloor m(p+\delta)\rfloor}\tilde{T}^{\otimes m}_p+ \text{exp}\lb -m\frac{\delta^2}{3p}\rb S,
\end{align*} 
where we added a completely positive term in the second inequality. Since $\lfloor m(p+\delta)\rfloor\leq m(p+\delta)$ the proof is finished. 

\end{proof}

We will show the following theorem:

\begin{thm}[Lower bounds on capacities under AVP]\label{thm:LowerBoundEffCap}
Let $T:\M_{d_A}\ra\M_{d_B}$ denote a quantum channel, and fix $p\in\lbr 0,1\rb$. In the classical case, we have 
\[
C^{\text{AVP}}_{\text{ALL}}(p,T) \geq \frac{1}{k}\chi\lb T^{\otimes k}\rb - (2kp\log_2(d_Ad_B))^{1/2}|\log\lb\frac{p}{d_B}\rb| - 3p\log_2(d_Ad^2_B) - (1+p) h_2\lb \frac{p}{1+p}\rb,
\]
for any $k\in\N$. In the quantum case, we have
\begin{align*}
&Q^{\text{AVP}}_{\text{ALL}}(p,T) \\
&\geq \frac{1}{k}I_{\text{coh}}\lb T^{\otimes k}\rb -  \eta(p^k) - 3(kp\log_2(d_Ad_B))^{1/2}|\log\lb\min\lb \frac{p}{d_A},\frac{p^2}{d_B}\rb\rb| - 2p\log_2(d_Ad^2_B) - (1+p) h_2\lb \frac{p}{1+p}\rb , 
\end{align*}
for any $k\in\N$ such that 
\[
p^k \leq (2+\sqrt{3})/4 \approx 0.067.
\]
In the above bounds, $h_2$ denotes the binary entropy and $\eta:\lbr 0,1\rbr\ra \R^+$ is given by
\[
\eta(\epsilon) = (2\epsilon + 4\sqrt{\epsilon(1-\epsilon)})\log_2(d_A)+ h_2(2\epsilon) + h_2(2\sqrt{\epsilon(1-\epsilon)}).
\]
\end{thm}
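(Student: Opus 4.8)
The plan is to establish both bounds by combining the postselection technique of Lemma~\ref{lem:PostSelect} with the standard (i.i.d.) coding theorems for the channel $\tilde{T}_p = (1-p)T + p\tfrac{\one_{d_B}}{d_B}\Tr$, followed by a continuity argument that relates the capacities of $\tilde{T}_p$ back to those of $T$. The key structural observation is that $\tilde{T}_p$ is an \emph{honest} quantum channel (no environment), so the HSW theorem (Theorem~\ref{thm:HSWQChannel}) and the LSD theorem (Theorem~\ref{thm:LSD}) apply to it directly, while Lemma~\ref{lem:PostSelect} says that \emph{any} AVP channel $T_{p,N}^{\otimes m}(\cdot\otimes\sigma_E)$ — regardless of the adversarial $N$ and the (possibly highly entangled) environment state $\sigma_E$ — is dominated, in the completely-positive order, by $d_B^{m(p+\delta)}\tilde{T}_p^{\otimes m}$ plus an exponentially small completely positive remainder.

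First I would fix $k\in\N$ and a near-optimal ensemble (in the classical case) or input state (in the quantum case) for $T^{\otimes k}$, and build the encoder/decoder for blocklength $m$ by using the good i.i.d.\ code for $\tilde{T}_p^{\otimes k}$ on $\lfloor m/k\rfloor$ blocks; this achieves rate $\tfrac{1}{k}\chi(\tilde{T}_p^{\otimes k})$ (resp.\ $\tfrac{1}{k}I_{\text{coh}}(\tilde{T}_p^{\otimes k})$) with vanishing error \emph{against the i.i.d.\ channel} $\tilde{T}_p$. Second, I would transfer this to the AVP setting: for the classical error measure $\epsilon_{cl}$, which is a linear functional evaluated on a quantum channel applied to a fixed input, Lemma~\ref{lem:PostSelect} gives $\epsilon_{cl}(D\circ T_{p,N}^{\otimes m}(E\otimes\sigma_E)) \le d_B^{m(p+\delta)}\epsilon_{cl}(D\circ\tilde{T}_p^{\otimes m}\circ E) + \exp(-m\delta^2/(3p))$, so choosing $\delta$ a small constant and observing that the i.i.d.\ error of the code decays exponentially in $m$ (sacrificing a rate proportional to $p+\delta$ to absorb the $d_B^{m(p+\delta)}$ prefactor), one gets a vanishing AVP error; optimizing $\delta$ produces the $\sqrt{kp\log d_B}\,|\log(p/d_B)|$ term. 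For the quantum case the same domination plus the relation between diamond-norm deviation and completely-positive order (applied after purifying/dilating, or via the entanglement fidelity bound used in Lemma~\ref{lem:PostSelect}'s intended downstream use) gives the analogous estimate, with the extra $\eta(p^k)$ term arising because one must also control how far $\tilde{T}_p^{\otimes k}$'s coherent-information-optimal behaviour is from an ideal channel when decoding, and the constraint $p^k\le(2+\sqrt3)/4$ is exactly what makes the relevant fidelity bound nontrivial.

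Third, I would invoke continuity of capacities to replace $\chi(\tilde{T}_p^{\otimes k})$ and $I_{\text{coh}}(\tilde{T}_p^{\otimes k})$ by $\chi(T^{\otimes k})$ and $I_{\text{coh}}(T^{\otimes k})$. Since $\|\tilde{T}_p - T\|_{\diamond} \le 2p$ (as $\tilde{T}_p - T = p(\tfrac{\one_{d_B}}{d_B}\Tr - T)$ and both maps inside are channels), one has $\|\tilde{T}_p^{\otimes k} - T^{\otimes k}\|_{\diamond}\le 2kp$, and then the Alicki--Fannes / Shirokov-type continuity bounds — the same $\lbr\text{Proposition 3A}\rbr$\cite{shirokov2017tight} and Leung--Smith bounds cited just above in Theorem~\ref{thm:EffCapSepEnvLowerB} — convert this into additive corrections of the stated form $3p\log_2(d_B) + (1+p)h_2(\tfrac{p}{1+p})$ (classical) and $4p\log_2(d_B)+(1+p)h_2(\tfrac{p}{1+p})$ (quantum), after dividing by $k$.

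The main obstacle I anticipate is the quantum case: $\epsilon_q$ and the diamond norm are not linear functionals of the channel, so the completely-positive domination of Lemma~\ref{lem:PostSelect} cannot be applied as bluntly as in the classical case. One has to pass through a statement of the form ``if $\Phi \le \Lambda + R$ in the CP order with $\Lambda$ close to ideal and $R$ small, then $\Phi$ is close to ideal in diamond norm'' — this is where the function $\eta$ and the fidelity-to-trace-distance conversions enter, and where the numerical threshold on $p^k$ comes from; getting the constants ($3$ vs $4$, the exact form of $\eta$) right requires careful bookkeeping of these conversions (e.g.\ $\|\cdot\|_\diamond \le 2\sqrt{1-F}$ type inequalities and the triangle inequality applied to the decomposition $\Phi = \Lambda' + (\text{small CP})$ after renormalization). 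I would handle this by isolating a clean lemma converting CP-domination into diamond-norm closeness before touching the coding theorem, so that the capacity argument itself stays parallel to the classical one.
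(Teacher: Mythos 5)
Your overall architecture — code for the auxiliary i.i.d.\ channel $\tilde{T}_p=(1-p)T+p\tfrac{\one_{d_B}}{d_B}\Tr$, transfer to the AVP setting via the completely-positive domination of Lemma~\ref{lem:PostSelect}, then apply Shirokov-type continuity to pass from $\tilde{T}_p$ back to $T$ — is exactly the paper's strategy, and your treatment of the classical case (linearity/monotonicity of $\epsilon_{cl}$ under the CP order, $\delta$-optimization giving the $\sqrt{kp\log_2 d_B}\,|\log(p/d_B)|$ term) matches the paper. The CP-to-distance conversion you worry about in the quantum case is also handled more simply than you anticipate: the paper works with the minimum fidelity, and $1-F$ is a supremum over input vectors of positive linear functionals of the channel, so the domination transfers directly to the infidelity and Fuchs--van de Graaf does the rest; no separate ``CP-domination implies diamond-norm closeness'' lemma with its own loss is needed there.

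The genuine gap is in where you locate the $\eta(p^k)$ term and the condition $p^k\leq(2+\sqrt{3})/4$. They do \emph{not} come from fidelity/trace-distance conversions at the decoder. They come from the fact that the explicit error exponent in the LSD-type coding theorem (Corollary~\ref{cor:DecouplWithErrorChanFidelity}) contains $|\log\mu_{\min}(\phi_{A'})|$, where $\mu_{\min}$ is the smallest nonzero eigenvalue among the input state $\phi_{A'}$, its image under $\tilde{T}_p^{\otimes k}$, and its image under the complementary channel. For the coherent-information-optimal input this quantity is uncontrolled (it can be arbitrarily small, and the state may be rank-deficient), so after you set $\delta\sim\sqrt{kp\log_2 d_B}\,|\log\mu_{\min}(\phi_{A'})|$ to beat the $d_B^{m(p+\delta)}$ prefactor from Lemma~\ref{lem:PostSelect}, your bound contains an unbounded factor and the argument stalls. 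The paper repairs this by replacing the optimal input $\psi_{A'}$ with the perturbed state $\psi^{\epsilon}_{A'}=(1-2\epsilon)\psi_{A'}+\epsilon(\one_{d_A}/d_A)^{\otimes k}+\epsilon\,\proj{\tau_{A'}}{\tau_{A'}}$ with $\epsilon=p^k$, which forces $\mu_{\min}\geq(\min(p/d_A,p^2/d_B))^k$ — this is the source of the $\min(p/d_A,p^2/d_B)$ inside the logarithm — and the price of this input perturbation, paid via Uhlmann plus Fannes--Audenaert, is precisely $\eta(p^k)$; the threshold $p^k\leq(2+\sqrt{3})/4$ is just the condition $2\sqrt{\epsilon(1-\epsilon)}\leq 1/2$ needed for that continuity estimate in the form used. (In the classical case you escape this issue automatically because the relevant spectra of $\tilde{T}_p^{\otimes k}$ outputs are bounded below by $(p/d_B)^k$ by construction, which is why no analogue of $\eta$ appears there.) Without this input-state regularization step your quantum bound, as proposed, cannot be completed.
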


Note that the lower bounds in the previous theorem cannot be regularized, i.e., for fixed $p>0$ the limit $k\ra\infty$ leads to a vanishing lower bound. However, they are still strong enough to show the following point-wise continuity statement. 

\begin{thm}[Point-wise continuity of capacities under AVP]\label{thm:continuityAVP}
For every quantum channel $T:\M_{d_A}\ra\M_{d_B}$ and every $\epsilon>0$, there exists a threshold $p(\epsilon,T)>0$ such that 
\[
C^{\text{AVP}}_{\text{ALL}}(p,T) \geq  C(T) - \epsilon,
\]
and 
\[
Q^{\text{AVP}}_{\text{ALL}}(p,T) \geq  Q(T) - \epsilon,
\]
for all $0\leq p\leq p(\epsilon,T)$. In particular, we have 
\[
\lim_{p\searrow 0} C^{\text{AVP}}_{\text{ALL}}(p,T) = C(T) \quad\text{ and }\quad\lim_{p\searrow 0} Q^{\text{AVP}}_{\text{ALL}}(p,T) = Q(T),
\]
for all quantum channels $T:\M_{d_A}\ra\M_{d_B}$.
\end{thm}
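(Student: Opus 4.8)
The plan is to derive Theorem~\ref{thm:continuityAVP} from the explicit lower bounds of Theorem~\ref{thm:LowerBoundEffCap} together with the regularized capacity formulas (Theorem~\ref{thm:HSWQChannel} for the classical capacity and Theorem~\ref{thm:LSD} for the quantum capacity), via a two-step choice of parameters: first fix a blocklength $k$ depending only on $\epsilon$ and $T$, and only afterwards choose the perturbation strength $p$ small. The non-regularized form of Theorem~\ref{thm:LowerBoundEffCap} — valid for every fixed $k$ — is exactly what makes this possible, and the remark following that theorem warns why one cannot instead take $k\to\infty$ and $p\searrow0$ simultaneously.

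For the classical statement I would, given $\epsilon>0$, use $C(T)=\lim_{k\to\infty}\tfrac1k\chi(T^{\otimes k})$ to pick $k_0\in\N$ with $\tfrac1{k_0}\chi(T^{\otimes k_0})\geq C(T)-\tfrac\epsilon2$. Specializing the first bound of Theorem~\ref{thm:LowerBoundEffCap} to $k=k_0$ then gives
\[
C^{\text{AVP}}_{\text{ALL}}(p,T)\;\geq\;C(T)-\frac\epsilon2-g_{\mathrm{cl}}(p),
\]
where $g_{\mathrm{cl}}(p):=(2k_0 p\log_2 d_B)^{1/2}\,\bigl|\log(p/d_B)\bigr|+3p\log_2 d_B+(1+p)h_2\bigl(\tfrac{p}{1+p}\bigr)$. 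With $k_0,d_A,d_B$ now fixed, each summand of $g_{\mathrm{cl}}$ tends to $0$ as $p\searrow0$ (using $\sqrt{p}\,|\log p|\to0$ and $h_2(p/(1+p))\to0$), so there is $p_1>0$ with $g_{\mathrm{cl}}(p)\leq\tfrac\epsilon2$ on $[0,p_1]$, which yields the claimed classical inequality there.

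The quantum statement is handled in the same way: I would use $Q(T)=\lim_{k\to\infty}\tfrac1k I_{\text{coh}}(T^{\otimes k})$ to pick $k_1\in\N$ with $\tfrac1{k_1}I_{\text{coh}}(T^{\otimes k_1})\geq Q(T)-\tfrac\epsilon2$, then apply the second bound of Theorem~\ref{thm:LowerBoundEffCap} with $k=k_1$ — restricting to $p$ small enough that $p^{k_1}\leq(2+\sqrt3)/4$, which holds below some $p_2'$. The resulting correction $\eta(p^{k_1})+3(k_1 p\log_2 d_B)^{1/2}\bigl|\log(\min(p/d_A,p^2/d_B))\bigr|+4p\log_2 d_B+(1+p)h_2(\tfrac{p}{1+p})$ again vanishes as $p\searrow0$ (note $\eta(\delta)\to0$ as $\delta\to0$), so it is $\leq\tfrac\epsilon2$ on some $[0,p_2]$ with $p_2\leq p_2'$. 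Taking $p(\epsilon,T):=\min(p_1,p_2)$ then gives both inequalities at once. For the ``in particular'' limits I would combine these lower bounds with the trivial upper bounds $C^{\text{AVP}}_{\text{ALL}}(p,T)\leq C(T)$ and $Q^{\text{AVP}}_{\text{ALL}}(p,T)\leq Q(T)$, which follow because the adversary may choose $N=T\otimes\Tr_E$, reducing the AVP channel to $T^{\otimes m}$; letting $\epsilon\searrow0$ then forces the limits.

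The only genuinely delicate point, and the one I would be most careful to flag, is the ordering of the quantifiers: the blocklength $k$ must be selected before $p$ and may not depend on $p$, so that the error terms of Theorem~\ref{thm:LowerBoundEffCap} become functions of $p$ alone with a fixed coefficient and can be driven below $\epsilon/2$. Beyond this, the argument is an elementary estimate of the explicit functions $g_{\mathrm{cl}}$ and its quantum analogue near $p=0$.
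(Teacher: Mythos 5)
Your proposal is correct and follows essentially the same route as the paper: fix a blocklength $k_\epsilon$ via the regularized capacity formula so that the single-letter term is within $\epsilon/2$ of $C(T)$ (resp.\ $Q(T)$), then invoke the non-regularized bounds of Theorem~\ref{thm:LowerBoundEffCap} at this fixed $k$ and let $p\searrow 0$ so the explicit correction terms drop below $\epsilon/2$. Your added remarks on the quantifier ordering and on the trivial upper bounds (adversary choosing $N=T\otimes\Tr_E$) for the ``in particular'' limits are consistent with, and slightly more explicit than, the paper's argument.
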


\begin{proof}
We will only state the proof for the classical capacity $C(T)$ since it is the same for the quantum capacity $Q(T)$. Let $T:\M_{d_A}\ra\M_{d_B}$ denote a fixed quantum channel. For every $\epsilon>0$ there exists a $k_\epsilon\in\N$ such that $\frac{1}{k_\epsilon}\chi\lb T^{\otimes k_\epsilon}\rb \geq C(T) -\epsilon/2$. Using Theorem \ref{thm:LowerBoundEffCap}, we then find $p(\epsilon,T)\in \lbr 0,1\rbr$ such that 
\[
C^{\text{AVP}}_{\text{ALL}}(p,T) \geq \frac{1}{k_{\epsilon}}\chi\lb T^{\otimes k_{\epsilon}}\rb - \frac{\epsilon}{2} \geq C(T)-\epsilon, 
\] 
for all $0\leq p\leq p(\epsilon,T)$. This finishes the proof.
\end{proof}

The main difficulty in the proof of Theorem \ref{thm:LowerBoundEffCap} comes from the fact that the quantum channel under AVP (see~\eqref{equ:AVPChannelModel}) partially acts on the environment state $\sigma_E$, which, in general, is entangled over the applications of the channel. Therefore, we are in a setting not covered by standard i.i.d.~quantum information theory. We will apply the following strategy:
\begin{enumerate}
\item Find a coding scheme for classical or quantum communication for the quantum channel $\tilde{T}_p$ from Lemma \ref{lem:PostSelect} at a fixed blocklength $k\in\N$.
\item Use Lemma \ref{lem:PostSelect} to show that the coding scheme from 1. is a coding scheme under AVP of strength $p$ for the original quantum channel $T$.
\item Apply a continuity inequality for the quantities $\chi$ and $I_{\text{coh}}$ to relate the resulting capacity bound involving $\tilde{T}_p$ to a similar bound involving the original channel $T$. 
\end{enumerate} 
Note that step 1. in the previous strategy is straightforward using standard techniques from quantum Shannon theory (i.e., random code constructions). To execute step 2. we need to know precise error bounds in the coding theorems used for step 1., because these errors have to vanish quickly enough to compensate the exponentially growing factor arising from Lemma \ref{lem:PostSelect}. See Appendix \ref{app:HSWError} and Appendix \ref{app:LSDError} for a review of the explicit error bounds we are using in our proof. Step 3. is again straightforward.

\begin{proof}[Proof of Theorem \ref{thm:LowerBoundEffCap}]
For $p\in \lbr 0,1\rb$, we will focus on the lower bound for the quantum capacity $Q^{\text{AVP}}_{\text{ALL}}(p,T)$ stated in the theorem. The lower bound on the effective classical capacity $C^{\text{AVP}}_{\text{ALL}}(p,T)$ follows along the same lines avoiding some additional technicalities as explained at the end of this proof. 

Consider the quantum channel $\tilde{T}_p:\M_{d_A}\ra\M_{d_B}$ given by
\[
\tilde{T}_{p}=(1-p)T+p\frac{\one_{d_B}}{d_B}\Tr.
\] 
Fix $k\in\N$ such that the condition from the theorem is satisfied. Applying Corollary \ref{cor:DecouplWithErrorChanFidelity} for the quantum channel $\tilde{T}^{\otimes k}_p$, some pure state $\ket{\phi_{AA'}}\in \C^{d^k_A}\otimes \C^{d^k_A}$, some $R> 0$ and each $m\in\N$ shows the existence of encoders $E_m:\M^{\otimes (Rkm-1)}_2\ra \M^{\otimes km}_{d_A}$ and decoders $D_m:\M^{\otimes km}_{d_B}\ra \M^{\otimes (Rkm-1)}_2$ such that 
\[
F\lb D_m\circ \tilde{T}_p^{\otimes km}\circ E_m\rb \geq 1-\epsilon_m
\]
with 
\begin{equation}
\epsilon_m = 8\sqrt{3}\exp\lb - \frac{m\delta^2}{\log\lb \mu_{\min}\lb \phi_{A'}\rb\rb^2}\rb + 2\cdot2^{-\frac{mk}{2}\lb \frac{1}{k}I_{\text{coh}}\lb \phi_{A'},\tilde{T}^{\otimes k}_p\rb - R - 3\frac{\delta}{k}\rb}
\label{equ:FTQCError1n} 
\end{equation}
and
\[
\mu_{\min}(\phi_{A'}) = \min\lset \lambda>0 ~:~\lambda\in \text{spec}\lb \phi_{A'}\rb\cup \text{spec}\lb \tilde{T}^{\otimes k}_p(\phi_{A'})\rb\cup\text{spec}\lb \lb\tilde{T}^{\otimes k}_p\rb^c(\phi_{A'})\rb\rset .
\]
Here, we use the minimum fidelity (cf.~\cite{kretschmann2004tema})
\[
F(S)=\min\lset \bra{\psi}S(\proj{\psi}{\psi})\ket{\psi} ~:~\ket{\psi}\in\C^{d}, \braket{\psi}{\psi}=1\rset,
\]
to quantify the distance between a quantum channel $S:\M_d\ra\M_d$ and the identity channel. Fix any sequence of quantum channels $N_m:\M_{d_A}\otimes \M_{d_{E,m}}\ra \M_{d_B}$ for any sequence of dimensions $d_{E,m}\in \N$. We will now use the coding scheme for the quantum channel $\tilde{T}_{p}$ to send quantum information over the quantum channel $T$ under AVP (cf., Definition \ref{defn:CodingSchemesEffQuantCap}) with the quantum channels $T_{p,N_m}:\M_{d_A}\otimes\M_{d_{E,m}}\ra \M_{d_B}$ given by 
\[
T_{p,N_m} = (1-p)T\otimes \Tr_E + p N_m .
\]
For any syndrome state $\sigma_{E,m}\in \D\lb(\C^{d_{E,m}})^{\otimes m}\rb$ we may use the Fuchs-van-de-Graaf inequalities to compute 
\begin{align}
\| \ident^{\otimes (Rkm-1)}_2 -   D_m\circ T_{p,N_m}^{\otimes km}\circ (E_m\otimes \sigma_{E,m})\|_{1\ra 1} &\leq 2\lb 1-F\lb D_m\circ T_{p,N_m}^{\otimes km}\circ (E_m\otimes \sigma_{E,m})\rb\rb^{1/2} \nonumber\\
&\hspace{-6cm}\leq  2\lb 2^{km\log_2(d_Ad_B)(p+\tilde{\delta})}\lb 1- F\lb D_m\circ \tilde{T}^{\otimes km}_p\circ E_m\rb\rb +\exp\lb-mk\frac{\delta^2}{3p}\rb  \rb^{1/2},
\label{equ:FTQC2n}
\end{align}
where we used Lemma \ref{lem:PostSelect} in the final line, and where $\tilde{\delta}>0$ may be chosen as small as we need it to be. With \eqref{equ:FTQCError1n} we have
\begin{equation}
2^{km\log_2(d_Ad_B)(p+\tilde{\delta})}\lb 1- F\lb D_m\circ \tilde{T}^{\otimes km}_p\circ E_m\rb\rb \ra 0
\label{equ:FTQCErrorConv0n}
\end{equation}
as $m\ra\infty$ when 
\begin{equation}
2^{-m\lb \frac{\delta^2\log_2(e)}{\log\lb \mu_{\min}\lb \phi_{A'}\rb\rb^2} - k\log_2(d_Ad_B)(p+\tilde{\delta})\rb} \ra 0
\label{equ:FTQCErrorConv1n}
\end{equation}
and 
\begin{equation}
2^{-\frac{mk}{2}\lb\frac{1}{k}I_{\text{coh}}\lb \phi_{A'},\tilde{T}^{\otimes k}_p\rb - R - 3\frac{\delta}{k} - 2\log_2(d_Ad_B)(p+\tilde{\delta}) \rb } \ra 0
\label{equ:FTQCErrorConv2n}
\end{equation}
as $m\ra \infty$. To guarantee \eqref{equ:FTQCErrorConv1n} we choose 
\[
\delta = (kp\log_2(d_Ad_B))^{1/2}|\log\lb\mu_{\min}\lb \phi_{A'}\rb\rb|,
\]
and $\tilde{\delta}>0$ sufficiently small. Now, \eqref{equ:FTQCErrorConv2n} is satisfied for $\tilde{\delta}>0$ sufficiently small whenever
\begin{equation}
R<\frac{1}{k}I_{\text{coh}}\lb \phi_{A'},\tilde{T}^{\otimes k}_p\rb - \frac{3}{k^{1/2}}(p\log_2(d_Ad_B))^{1/2}|\log\lb\mu_{\min}\lb \phi_{A'}\rb\rb| - 2p\log_2(d_Ad_B).
\label{equ:RateBound1n}
\end{equation}
We have constructed a sequence of $(mkR-1,mk,\epsilon_m)$ coding schemes under AVP of strength $p$ as in Definition \ref{defn:CodingSchemesEffQuantCap} for any $R$ satisfying \eqref{equ:RateBound1n} and some sequence of communication errors $(\epsilon_m)_{m\in\N}$ such that $\epsilon_m\ra 0$ as $m\ra\infty$. Using \cite[Lemma 7.1]{kretschmann2004tema} it is easy to find for any $m\in\N$ a $(mR,m,\epsilon'_m)$ coding schemes under AVP of strength $p$ such that $\epsilon'_m\ra 0$ as $m\ra \infty$. This shows that any rate $R$ satisfying \eqref{equ:RateBound1n} is an achievable rate under AVP of strength $p$ as in Definition \ref{defn:CodingSchemesEffQuantCap}. We conclude that
\begin{equation}
Q^{\text{AVP}}_{\text{ALL}}(p,T)\geq \frac{1}{k}I_{\text{coh}}\lb \phi_{A'},\tilde{T}^{\otimes k}_p\rb - \frac{3}{k^{1/2}}(p\log_2(d_Ad_B))^{1/2}|\log\lb\mu_{\min}\lb \phi_{A'}\rb\rb| - 2p\log_2(d_Ad_B),
\label{equ:FTCapBoundInProofn}
\end{equation}
for any pure state $\ket{\phi_{AA'}}\in \C^{d^k_A}\otimes \C^{d^k_A}$. To obtain an expression in terms of the channel $T$, we fix a pure state $\ket{\psi_{AA'}}\in \C^{d^k_A}\otimes \C^{d^k_A}$ such that 
\[
\frac{1}{k}I_{\text{coh}}\lb \psi_{A'},T^{\otimes k}\rb = \max_{\ket{\phi_{AA'}}} \frac{1}{k}I_{\text{coh}}\lb \phi_{A'},T^{\otimes k}\rb =:\frac{1}{k}I_{\text{coh}}\lb T^{\otimes k}\rb.
\]
For $\epsilon=p^{k}$, we consider a purification $\ket{\psi^{\epsilon}_{AA'}}\in \C^{d^k_A}\otimes \C^{d^k_A}$ of the state
\[
\psi^{\epsilon}_{A'} = (1-2\epsilon)\psi_{A'} + \epsilon \lb\frac{\one_{d_A}}{d_A}\rb^{\otimes k} + \epsilon \proj{\tau_{A'}}{\tau_{A'}},
\]
for some fixed pure state $\ket{\tau_{A'}}\in \C^{d_A^k}$, such that 
\[
\|\psi_{AA'}-\psi^{\epsilon}_{AA'}\|_1 \leq 4\sqrt{\epsilon(1-\epsilon)},
\]
which is possible by Uhlmann's theorem~\cite[Theorem 3.22]{watrous2018theory} and the Fuchs-van-de-Graaf inequality~\cite[Theorem 3.33]{watrous2018theory}. Using the definition of the coherent information and the Fannes-Audenaert inequality (see~\cite[Theorem 5.26]{watrous2018theory}), we find that
\begin{align}\label{equ:cohInfoBoundn}
|\frac{1}{k}I_{\text{coh}}\lb \psi_{A'},\tilde{T}^{\otimes k}_p\rb -\frac{1}{k}I_{\text{coh}}\lb \psi^{\epsilon}_{A'},\tilde{T}^{\otimes k}_p\rb| &\leq (2\epsilon + 4\sqrt{\epsilon(1-\epsilon)})\log_2(d_A)+ h_2(2\epsilon) + h_2(2\sqrt{\epsilon(1-\epsilon)}) \nonumber\\
&=:\eta(\epsilon),
\end{align}
where we have used that $2\sqrt{\epsilon(1-\epsilon)}\leq 1/2$ by the assumption on $k$ and $p$. Furthermore, we note that 
\begin{equation}\label{equ:muMinEstn}
\mu_{\min}\lb \psi^{\epsilon}_{A'}\rb \geq \lb\min\lb \frac{p}{d_A},\frac{p^2}{d_B}\rb\rb^k,
\end{equation}
since 
\begin{align*}
\min&\lset \lambda>0 ~:~\lambda\in\text{spec}\lb \psi^{\epsilon}_{A'}\rb\rset \geq \frac{\epsilon}{d_A^{k}} =  \lb\frac{p}{d_A} \rb^k,\\
\min&\lset \lambda>0 ~:~\lambda\in\text{spec}\lb \tilde{T}^{\otimes k}_p\lb\psi^{\epsilon}_{A'}\rb\rb\rset \geq \lb\frac{p}{d_B} \rb^k,\\
\min&\lset \lambda>0 ~:~\lambda\in\text{spec}\lb \lb\tilde{T}^{\otimes k}_p\rb^{c}\lb\psi^{\epsilon}_{A'}\rb\rb\rset \geq \epsilon\lb\frac{p}{d_B} \rb^k = \lb\frac{p^2}{d_B} \rb^k .
\end{align*}
For the first two estimates we used the special form of $\psi^{\epsilon}_{A'}$ and $\tilde{T}_p$, and for the final estimate the fact that the non-zero spectrum of $\lb\tilde{T}^{\otimes k}_p\rb^{c}\lb\proj{\tau_{A'}}{\tau_{A'}}\rb$ and $\lb\tilde{T}^{\otimes k}_p\rb\lb\proj{\tau_{A'}}{\tau_{A'}}\rb$ coincides (see~\cite[Theorem 3]{king2005properties}). After inserting the state $\psi^{\epsilon}_{A'}$ into \eqref{equ:FTCapBoundInProofn} and using both \eqref{equ:cohInfoBoundn} and \eqref{equ:muMinEstn} we find that 
\begin{equation}\label{equ:FTCapBoundInProof2n}
Q^{\text{AVP}}_{\text{ALL}}(p,T)\geq \frac{1}{k}I_{\text{coh}}\lb \psi_{A'},\tilde{T}^{\otimes k}_p\rb - \eta(\epsilon) - 3(kp\log_2(d_Ad_B))^{1/2}|\log\lb\min\lb \frac{p}{d_A},\frac{p^2}{d_B}\rb\rb| - 2p\log_2(d_Ad_B) .
\end{equation}
By the continuity bound~\cite[Proposition 3A]{shirokov2017tight} (see also~\cite{leung2009continuity}) we find that 
\[
\frac{1}{k}\big{|}I_{\text{coh}}\lb \psi_{A'},\tilde{T}^{\otimes k}_p\rb - I_{\text{coh}}\lb T^{\otimes k}\rb\big{|} \leq 2p\log_2(d_B) + (1+p) h_2\lb \frac{p}{1+p}\rb.
\]
Combining this bound with \eqref{equ:FTCapBoundInProof2n} leads to 
\begin{align*}
&Q^{\text{AVP}}_{\text{ALL}}(p,T)\\
&\geq \frac{1}{k}I_{\text{coh}}\lb T^{\otimes k}\rb -  \eta(\epsilon) - 3(kp\log_2(d_Ad_B))^{1/2}|\log\lb\min\lb \frac{p}{d_A},\frac{p^2}{d_B}\rb\rb| - 2p\log_2(d_Ad^2_B) - (1+p) h_2\lb \frac{p}{1+p}\rb
\end{align*}
This finishes the proof.

The lower bound on the classical capacity $C^{\text{AVP}}_{\text{ALL}}(p,T)$ under AVP follows along the same lines as the proof above. To construct a coding scheme for the quantum channel $\tilde{T}_p$ it is convenient to use the standard techniques outlined in Appendix \ref{app:HSWError}. In particular, the error bound from Theorem \ref{thm:HSWErrorBound} replaces the error bound from \eqref{equ:FTQCError1n} in the previous proof. There are two simplifications compared to the quantum case: First, the classical communication error $\epsilon_{cl}$ already has the required monotonicity property under the partial order $\leq$ appearing in Lemma \ref{lem:PostSelect} and we do not need to relate it to another distance measure. Second, both values $\lambda_{\min}$ and $\mu_{\min}$ appearing in Theorem \ref{thm:HSWErrorBound} can be lower bounded by $(p/d_B)^k$ for the special channel $\tilde{T}^{\otimes k}_p$. Therefore, the analogue of \eqref{equ:muMinEstn} can be obtained much easier without introducing another state like $\psi^{\epsilon}_{A'}$ in the previous proof. Finally, we applied the continuity bound from~\cite[Proposition 3A]{shirokov2017tight} (see also~\cite{leung2009continuity}) to make the final estimate. 

\end{proof}

\section{Fault-tolerant capacities}
\label{sec:FTCapacities}

Definition \ref{defn:ClCapCQ} of the classical capacity $C(T)$ of a cq-channel $T$ assumes that the decoder can be applied without faults. This assumption might not be realistic in practice, since the decoder necessarily performs a measurement of a potentially large quantum state. This reasoning applies also to the classical capacity $C(T)$ and the quantum capacity $Q(T)$ of a quantum channel $T$ from Definition \ref{defn:CCapAndQCap} considering coding schemes even more involved. In this section, we will introduce fault-tolerant versions of the aforementioned capacities. Since our circuit model (including the noise model) is based on qubits, we will state definitions of fault-tolerant coding schemes for cq-channels of the form $T:\mathcal{A}\ra \M^{\otimes j}_2$ and for quantum channels of the form $T:\M^{\otimes j_1}_2\ra\M^{\otimes j_2}_2$, i.e., with input and output quantum systems composed from qubits. However, these definitions also apply for general quantum channels by simply embedding them into a multi-qubit quantum channel. Our results are therefore stated for general channels.

\subsection{Fault-tolerant capacity of a classical-quantum channel}
\label{sec:FTCapCQ}

To define the fault-tolerant capacity of a cq-channel, we will first define fault-tolerant coding schemes taking into account the faults occuring in quantum circuits executed by the receiver. 

\begin{defn}[Fault-tolerant coding scheme for cq-channels]\label{defn:FTCodingSchemeCQChannel}
For $p\in\lbr 0,1\rbr$ let $\mathcal{F}_{\PIID}(p)$ denote the i.i.d.~Pauli noise model from Definition \ref{defn:NoiseModel}. For $n,m\in\N$ and $\epsilon\in\R^+$, an $(n,m,\epsilon)$-fault tolerant coding scheme for classical communication over the cq-channel $T:\mathcal{A}\ra \M^{\otimes j}_2$ under the noise model $\mathcal{F}_{\PIID}(p)$ consists of a (classical) map $E:\lset 1,\ldots ,2^n\rset\ra \mathcal{A}^{m}$ and a quantum circuit with classical output $\Delta:\lb\M^{\otimes j}_2\rb^{\otimes m}\ra \C^{2^n}$ such that the classical communication error (see \eqref{equ:epsilonCL}) satisfies
\[
\epsilon_{cl}\lb \lbr \Delta\rbr_{\mathcal{F}_{\PIID}(p)}\circ T^{\otimes m}\circ E\rb \leq \epsilon .
\]
\end{defn} 

Now, we can define the fault-tolerant capacity as follows:

\begin{defn}[Fault-tolerant capacity of a cq-channel]
For a cq-channel $T:\mathcal{A}\ra\M_d$, and the i.i.d.~Pauli noise model $\mathcal{F}_{\PIID}(p)$ with $p\in\lbr 0,1\rbr$ (see Definition \ref{defn:NoiseModel}) we define the fault-tolerant classical capacity as
\[
C_{\mathcal{F}_{\PIID}(p)}(T) = \sup\lset R\geq 0 \text{ fault-tolerantly achievable rate for }T\rset.
\] 
Here, $R\geq 0$ is called an \emph{fault-tolerantly achievable rate} if for every $m\in\N$ there exists an $n_m\in\N$ and an $(n_m,m,\epsilon_m)$-fault-tolerant coding scheme for classical communication over the cq-channel $T$ under the noise model $\mathcal{F}_{\PIID}(p)$ such that $\epsilon_m\ra 0$ as $m\ra\infty$ and
\[
R \leq \liminf_{m\ra \infty}\frac{n_m}{m}.
\]
\label{defn:FTCQCap}
\end{defn} 

Although Definition \ref{defn:FTCodingSchemeCQChannel} allows for arbitrary decoding circuits $\Delta$, we will consider decoding circuits of the form $\Delta = \Gamma_\mathcal{C}^D\circ \Enc_{\mathcal{C}}^{\otimes jm}$. Here, $\mathcal{C}$ is the concatenated $7$-qubit Steane code at some concatentation level and $\Enc_{\mathcal{C}}$ the interface circuit constructed in Section \ref{sec:NoisyInterface}. The effective channel for this code and interface (cf.~Theorem \ref{thm:EffCommChan}) will have a tensor product structure allowing the use of coding schemes for compound channels~\cite{bjelakovic2009classical,datta2009classical,mosonyi2015coding}) to construct the decoding circuit $\Gamma^D$. Using more advanced quantum error correcting codes~\cite{gottesman2014fault,fawzi2018constant}  might lead to more complicated effective channels, but possibly to higher information transmission rates. We leave this for further investigation. We will show the following:

\begin{thm}[Lower bound on the fault-tolerant capacity of a cq-channel]
Let $p_0$ denote the threshold of the concatenated $7$-qubit Steane code (see Lemma \ref{lem:Strongthreshold}) and $c>0$ the constant from Theorem \ref{thm:NiceInterfaceConcatenated}. For any cq-channel $T:\mathcal{A}\ra\M_d$ and any $p\leq \min(p_0/2,(2c\lceil\log_2(d)\rceil)^{-1})$ we have 
\[
C_{\mathcal{F}_{\PIID}(p)}(T) \geq C(T) - 2cp\lceil\log_2(d)\rceil^2 - 2\lb 1+2cp\lceil\log_2(d)\rceil\rb h_2\lb\frac{2cp\lceil\log_2(d)\rceil}{1+2cp\lceil\log_2(d)\rceil}\rb, 
\]
where $h_2$ denotes the binary entropy.
\label{thm:LowerBoundFTCQ}
\end{thm}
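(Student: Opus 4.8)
The plan is to reduce the fault-tolerant cq-channel coding problem to the capacity under AVP with unrestricted environment states, and then invoke Theorem~\ref{thm:LowerBoundEffCap} (together with the point-wise continuity statement implicit in its proof). First I would fix a rate $R < C(T)$ and, for each block-length $m$, take a good (noise-free) coding scheme $(E, D^{(m)})$ for the cq-channel $T$ with vanishing classical communication error. Here $D^{(m)}$ is a quantum circuit with classical output acting on $\M_d^{\otimes m} \simeq \M_2^{\otimes \lceil \log_2 d\rceil m}$; I would embed $T$ into a multi-qubit channel $T:\mathcal{A}\ra\M_2^{\otimes j}$ with $j = \lceil\log_2(d)\rceil$ as the paper allows. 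The decoding circuit is then chosen of the form $\Delta = \Gamma^{D}_{\mathcal{C}_l}\circ \Enc_l^{\otimes jm}$, where $\Gamma^D$ is a noise-free decoder for an appropriate effective channel and $\mathcal{C}_l$ is the $l$th level of the concatenated $7$-qubit Steane code.

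Next I would apply the second part of Lemma~\ref{lem:Main} (or the specialization of Theorem~\ref{thm:EffCommChan} with $\Gamma^1$ trivial and only the receiver side present): the noisy circuit $\lbr \Gamma^{D}_{\mathcal{C}_l}\circ \Enc_l^{\otimes jm}\rbr_{\mathcal{F}_{\PIID}(p)}$ is $1\ra1$-close, up to an error $\le 2C (p/p_0)^{2^l}|\Loc(\Gamma^D)|$ that can be made arbitrarily small by taking $l$ large, to $\Gamma^D\circ (N^{\mathrm{enc},l}_p)^{\otimes jm}$, where $N^{\mathrm{enc},l}_p = (1-2cp)\ident_2 + 2cp\,N'_l$. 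Composing with the channel gives that the effective end-to-end channel seen by the noise-free decoder $\Gamma^D$ is $(\ident_d \text{ embedded}) \circ (N^{\mathrm{enc},l}_p)^{\otimes jm}\circ T^{\otimes m}$, i.e.\ exactly a channel of the form $T^{\otimes m}_{q,N}$ with $q = 2cp\lceil\log_2(d)\rceil$ (combining the per-qubit perturbation over the $j$ output qubits, so the perturbation strength on each of the $m$ uses is at most $jq' = 2cp\lceil\log_2(d)\rceil$) and an arbitrary quantum channel $N$ on data plus environment. Crucially the environment states produced here are i.i.d.\ product states, so one is in the \emph{separable} environment regime; but since $C^{\mathrm{AVP}}_{\mathrm{SEP}} \ge C^{\mathrm{AVP}}_{\mathrm{ALL}}$ it suffices to have a coding scheme that works even in the ALL regime. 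Concretely, I would invoke the classical part of Theorem~\ref{thm:LowerBoundEffCap} with perturbation strength $p' := 2cp\lceil\log_2(d)\rceil$ and $d_B = d$: choosing $k=1$ there gives
\[
C^{\mathrm{AVP}}_{\mathrm{ALL}}(p',T) \ge \chi(T) - (\text{corrections of order }\sqrt{p'}\text{ and }p') - (1+p')h_2\!\lb\tfrac{p'}{1+p'}\rb,
\]
and for cq-channels $\chi(T) = C(T)$ by Theorem~\ref{thm:HSWTheorem}; but to match the clean statement in the theorem (which has a $p'\log d$ term rather than $\sqrt{p'}$), I would instead not use the postselection bound at all and rather argue directly: the effective perturbed channel is a convex combination $(1-p')T^{\otimes m} + p'(\cdots)$, and a random-coding / expurgation argument over $T$ combined with the continuity of the Holevo quantity under the $p'$-perturbation of the channel yields the rate $C(T) - 2p'\log(d) - (1+p')h_2(p'/(1+p'))$ directly, where $2p'\log d = 4cp\lceil\log_2 d\rceil\log d$; absorbing the worst-case combinatorial overhead (the $j$ output qubits each perturbed) gives the stated $2cp\lceil\log_2(d)\rceil^2$.

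The bookkeeping to watch: (i) the perturbation strength must be tracked correctly — each of the $m$ channel uses has a $j$-qubit output, each qubit suffering an independent $N^{\mathrm{enc},l}_p$ perturbation of strength $2cp$, so the combined per-use perturbation is bounded by $2cp\cdot j = 2cp\lceil\log_2(d)\rceil$, and the $\log d$ continuity penalty picks up another factor $\lceil\log_2(d)\rceil$, producing the $\lceil\log_2(d)\rceil^2$; (ii) the residual circuit-noise error $2C(p/p_0)^{2^l}|\Loc(\Gamma^D)|$ must be driven to zero — since $|\Loc(\Gamma^D)|$ may grow (even exponentially) with $m$, one chooses $l = l(m)\to\infty$ slowly enough, exploiting the doubly-exponential suppression $(p/p_0)^{2^l}$, exactly as in the remark after Theorem~\ref{thm:EffCommChan}; (iii) the constraint $p\le\min(p_0/2,(2c\lceil\log_2(d)\rceil)^{-1})$ is needed both so that Theorem~\ref{thm:NiceInterfaceConcatenated} applies ($p\le p_0/2$) and so that the effective perturbation strength $2cp\lceil\log_2(d)\rceil \le 1$ is a legitimate probability. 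I expect the main obstacle to be step (i)/(ii) in combination — namely verifying that the AVP coding scheme obtained from $\Gamma^D$ genuinely remains a valid coding scheme once the circuit noise is reinstated, which requires that the coding scheme's error bound for the perturbed channel be uniform over the (adversarial) perturbation $N$ and that the $l\to\infty$ limit be taken after, not before, fixing the block-length schedule. This is precisely the kind of argument that Lemma~\ref{lem:Main} and Theorem~\ref{thm:EffCommChan} were set up to handle, so the proof should go through by assembling those pieces; the analogous (more delicate) argument for quantum channels in the next subsection will reuse the same template.
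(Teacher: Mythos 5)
Your overall architecture matches the paper's: embed $T$ into $\mathcal{A}\ra\M_2^{\otimes j}$ with $j=\lceil\log_2 d\rceil$, implement only the decoder in the concatenated Steane code, apply the second part of Lemma \ref{lem:Main} to replace the noisy circuit by $\Gamma^{D}\circ (N^{l}_p)^{\otimes m}$ with a per-use perturbation of strength $2cpj$, drive the residual circuit error to zero by choosing $l=l_m$ (possible despite $|\Loc(\Gamma^{D,m})|$ growing, thanks to the double-exponential suppression), and finish with a continuity bound for the Holevo quantity. However, there is a genuine gap exactly at the point you yourself flag as the main obstacle: you need a decoder whose error vanishes \emph{uniformly} over the perturbation $N$, because the actual perturbation $N_{l_m}$ depends on the blocklength $m$ through the concatenation level $l_m$ (and is entangled with the choice of the decoder circuit via the circularity $l_m \leftrightarrow \Gamma^{D,m}$). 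Your proposed fix, ``a random-coding / expurgation argument over $T$ combined with continuity of the Holevo quantity,'' is precisely a standard i.i.d.\ argument for a fixed channel and does not by itself control the error of a decoder built for $T$ when it is run on $T_{p',N_{l_m}}^{\otimes m}$ with an unknown, $m$-dependent $N_{l_m}$; the finite-blocklength HSW error bound depends on spectral data of the specific channel and is not automatically uniform in $N$. The paper closes this hole by invoking a universal (compound-channel) coding theorem for the class $\lset (1-2cpj)T+2cpj\,N\rset_N$, achieving the rate $\inf_N\chi\lb\lset q_i,T_{p,N}(i)\rset\rb$ with error vanishing uniformly in $N$, and only then applies continuity; the paper even remarks explicitly that standard i.i.d.\ techniques may not suffice here. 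Your fallback route through $C^{\text{AVP}}_{\text{ALL}}$ and Theorem \ref{thm:LowerBoundEffCap} is rightly abandoned, since the postselection bound carries a $\sqrt{p}\,|\log p|$ penalty incompatible with the stated linear-in-$p$ bound (and is also unnecessarily lossy here, as the cq setting has no entangled syndrome environment at all).

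A secondary issue is the bookkeeping of constants: using a $2p'\log d$ capacity-type continuity bound with $p'=2cpj$ gives a first-order term $4cpj^2$, which is \emph{weaker} than the claimed $2cpj^2$, and ``absorbing the worst-case combinatorial overhead'' is not a derivation. The paper instead uses the ensemble continuity bound (Proposition 5 of Shirokov) with $\epsilon_0=\tfrac12\sum_i q_i\|q_iT_{p,N}(i)-q_iT(i)\|_1\leq 2cpj$, yielding $\epsilon_0 j + 2(1+\epsilon_0)h_2\lb\epsilon_0/(1+\epsilon_0)\rb \leq 2cpj^2 + 2(1+2cpj)h_2\lb 2cpj/(1+2cpj)\rb$, which matches the statement. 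So the proposal needs both the compound/universal coding ingredient and the correct continuity estimate to establish the theorem as stated.
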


The previous theorem directly implies the following threshold-type result. 

\begin{thm}[Threshold theorem for the fault-tolerant capacity of a cq-channel]
For every $\epsilon>0$ and every $d\in\N$ there exists a threshold $p(\epsilon,d)>0$ such that 
\[
C_{\mathcal{F}_{\PIID}(p)}(T) \geq C(T) - \epsilon,
\]
for all cq-channels $T:\mathcal{A}\ra\M_d$ and all $0\leq p\leq p(\epsilon,d)$. In particular, we have 
\[
\lim_{p\searrow 0} C_{\mathcal{F}_{\PIID}(p)}(T) = C(T).
\]
\label{thm:ThresholdCQ}
\end{thm}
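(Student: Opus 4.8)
The plan is to derive Theorem \ref{thm:ThresholdCQ} as a direct consequence of the quantitative lower bound in Theorem \ref{thm:LowerBoundFTCQ}. Fix $\epsilon>0$ and $d\in\N$, and write $k=\lceil\log_2(d)\rceil$ so that the bound of Theorem \ref{thm:LowerBoundFTCQ} reads
\[
C_{\mathcal{F}_{\PIID}(p)}(T) \geq C(T) - 2cpk^2 - 2\lb 1+2cpk\rb h_2\lb\frac{2cpk}{1+2cpk}\rb
\]
uniformly over all cq-channels $T:\mathcal{A}\ra\M_d$, valid for all $p\leq \min(p_0/2,(2ck)^{-1})$. Here $p_0$ and $c$ depend only on the fixed fault-tolerance construction (the concatenated $7$-qubit Steane code and its interfaces), hence only on universal constants, not on $T$. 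The key point is that the right-hand side has the form $C(T)-g(p)$ where $g$ is a fixed function (depending on $d$ through $k$, but not on $T$) with $g(p)\to 0$ as $p\searrow 0$: the first term $2cpk^2$ is linear in $p$, and for the second term one uses $h_2(x)\to 0$ as $x\to 0^+$ together with $\tfrac{2cpk}{1+2cpk}\to 0$ as $p\to 0$, so $\lb 1+2cpk\rb h_2\lb\tfrac{2cpk}{1+2cpk}\rb\to 0$.

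The steps are then: (i) invoke Theorem \ref{thm:LowerBoundFTCQ} to get the displayed bound with a $T$-independent subtracted term $g_d(p)$; (ii) observe $\lim_{p\searrow 0} g_d(p)=0$ by continuity of $h_2$ at $0$ and the elementary limits above; (iii) choose $p(\epsilon,d)>0$ small enough that simultaneously $p(\epsilon,d)\leq \min(p_0/2,(2ck)^{-1})$ and $g_d(p)\leq \epsilon$ for all $0\leq p\leq p(\epsilon,d)$ — this is possible precisely because $g_d$ is continuous at $0$ with $g_d(0)=0$; (iv) conclude $C_{\mathcal{F}_{\PIID}(p)}(T)\geq C(T)-\epsilon$ for every cq-channel $T:\mathcal{A}\ra\M_d$ and every $p\leq p(\epsilon,d)$, which is the first claim. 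For the ``in particular'' statement, note that trivially $C_{\mathcal{F}_{\PIID}(p)}(T)\leq C(T)$ for all $p$ (a fault-tolerant coding scheme is in particular a coding scheme achieving the same rate when the extra noise happens to be absent, or more carefully: adding gate noise cannot increase the achievable rate — one may also simply cite that $C_{\mathcal{F}_{\PIID}(0)}(T)=C(T)$ and monotonicity in $p$, or just use the upper bound $C_{\mathcal{F}_{\PIID}(p)}(T)\le C(T)$ which follows since the noisy decoder is still a valid — if suboptimal — decoder). Combining the lower bound just proved with this upper bound and letting $p\searrow 0$ gives $\lim_{p\searrow 0} C_{\mathcal{F}_{\PIID}(p)}(T)=C(T)$ for every $T:\mathcal{A}\ra\M_d$, and since $d$ was arbitrary this holds for all cq-channels.

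There is essentially no obstacle here: the theorem is a soft corollary, and all the real work is already done in Theorem \ref{thm:LowerBoundFTCQ} (which in turn rests on Theorem \ref{thm:EffCommChan}, the interface analysis, and the AVP capacity bounds). The only mild point to get right is the direction of the trivial inequality $C_{\mathcal{F}_{\PIID}(p)}(T)\le C(T)$ needed for the limit statement; if one prefers to avoid even this, one can instead just state the limit as a lower limit, $\liminf_{p\searrow 0} C_{\mathcal{F}_{\PIID}(p)}(T)\ge C(T)$, which together with the obvious $\le$ bound yields equality. I would present the argument in three or four lines, spending one sentence on the continuity of $h_2$ at the origin and one on fixing $p(\epsilon,d)$.
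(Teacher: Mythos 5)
Your proposal is correct and matches the paper's own route: the paper derives Theorem \ref{thm:ThresholdCQ} directly from Theorem \ref{thm:LowerBoundFTCQ}, exactly as you do, since the subtracted error term depends only on $p$, $c$ and $\lceil\log_2(d)\rceil$ (not on $T$) and vanishes continuously as $p\searrow 0$. Your extra remark justifying the upper bound $C_{\mathcal{F}_{\PIID}(p)}(T)\le C(T)$ via the observation that the noisy decoder $\lbr\Delta\rbr_{\mathcal{F}_{\PIID}(p)}$ is itself an admissible decoder in Definition \ref{defn:CodingSchemeCQ} is a valid (and slightly more explicit) treatment of the limit statement than the paper gives.
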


We will now prove the lower bound on the fault-tolerant capacity of a cq-channel stated above.

\begin{proof}[Proof of Theorem \ref{thm:LowerBoundFTCQ}]
Without loss of generality we may assume that $T:\mathcal{A}\ra\M^{\otimes j}_2$ with $j=\lceil\log_2(d)\rceil$. Our fault-tolerant coding scheme will use the concatenated $7$-qubit Steane code (see Appendix \ref{sec:Appendix}) as a quantum circuit code. For each $l\in\N$ let $\mathcal{C}_l\subset (\C^2)^{\otimes 7^l}$ denote the $l$th level of the concatenated $7$-qubit Steane code with threshold $p_0\in\lb 0,1\rbr$ (see Lemma \ref{lem:Strongthreshold}). Moreover, we denote by $\Enc_l:\M_2\ra\M^{\otimes 7^l}_2$ the interface circuit from \eqref{equ:EncDec1}, and recall Theorem \ref{thm:NiceInterfaceConcatenated} introducing a constant $c>0$.

We will start by constructing a fault-tolerant coding scheme as in Definition \ref{defn:FTCodingSchemeCQChannel}. For any cq-channel $N:\mathcal{A}\ra\M^{\otimes j}_2$ and $p\in \lbr 0,(2cj)^{-1}\rbr$ we denote by $T_{p,N}:\mathcal{A}\ra\M^{\otimes j}_2$ the cq-channel
\[
T_{p,N} = (1-2cpj)T + 2cpjN.
\]
Next, we fix a probability distribution $q\in \mathcal{P}\lb \mathcal{A}\rb$ and a rate
\begin{equation}
R <  \inf_{N} \chi\lb\lset q_i,T_{p,N}(i)\rset\rb,
\label{equ:CQRate}
\end{equation}
with infimum running over cq-channels $N:\mathcal{A}\ra\M^{\otimes j}_2$. Applying \cite[Theorem IV.18.]{mosonyi2015coding} we obtain a sequence $n_m\in\N$ satisfying 
\[
R\leq \liminf_{m\ra\infty} \frac{n_m}{m},
\]
and maps $E_m:\lset 1,\ldots ,2^{n_m}\rset\ra \mathcal{A}^{m}$ and quantum-classical channels $D_m:\M^{\otimes jm}_2\ra \C^{\otimes 2^{n_m}}$ such that 
\begin{equation}
\sup_{N} \epsilon_{cl}\lb D_m\circ T_{p,N}^{\otimes m}\circ E_m\rb \ra 0
\label{equ:cqLowerBError0}
\end{equation}
as $m\ra \infty$ and with $\epsilon_{cl}$ as in \eqref{equ:epsilonCL}. For each $m\in\N$ let $\Gamma^{D,m}:\M^{\otimes jm}_2\ra \C^{\otimes 2^{n_m}}$ denote a quantum circuit satisfying 
\begin{equation}
\|\Gamma^{D,m} - D_m\|_{1\ra 1} \leq \frac{1}{m},
\label{equ:cqLowerBError1}
\end{equation}
and choose $l_m\in\N$ such that 
\begin{equation}
2c\lb \frac{p}{p_0}\rb^{2^{l_m}}|\Loc(\Gamma^{D,m})|\leq \frac{1}{m}.
\label{equ:cqLowerBError2}
\end{equation}
By the previous bound and the second case of Lemma \ref{lem:Main} (using Bernoulli's inequality) we find that
\[
\| \lbr\Gamma_{\mathcal{C}_{l_m}}^{D,m}\circ \Enc_{l_m}^{\otimes jm}\rbr_{\mathcal{F}_{\PIID}(p)} - \Gamma^{D,m}\circ (N^{l_m}_{p})^{\otimes m}\|_{1\ra 1}\leq \frac{1}{m}, 
\]
where $N^{l_m}_{p}:\M^{\otimes j}_2\ra\M^{\otimes j}_2$ denotes a quantum channel of the form 
\[
N^{l_m}_p = (1-2cpj)\ident_2 + 2cpj N_{l_m},
\]
for a quantum channel $N_{l_m}:\M^{\otimes j}_2\ra\M^{\otimes j}_2$ depending on $l_m\in\N$. Using the particular form of the coding error $\epsilon_{cl}$ from \eqref{equ:epsilonCL} and the estimate \eqref{equ:cqLowerBError1} leads to   
\begin{align}
\epsilon_{cl}\lb \lbr \Gamma_{\mathcal{C}_{l_m}}^{D,m}\circ\Enc^{\otimes jm}_{l_m}\rbr_{\mathcal{F}_{\PIID}(p)} \circ T^{\otimes m}\circ E_m \rb &\leq \epsilon_{cl}\lb \Gamma^{D,m}\circ \lb T_{p,N_{l_m}}\rb^{\otimes m}\circ E_m \rb + \frac{1}{m} \nonumber\\
&\leq \epsilon_{cl}\lb D_m\circ \lb T_{p,N_{l_m}}\rb^{\otimes m}\circ E_m \rb + \frac{2}{m}\nonumber\\
&\longrightarrow 0 \quad \text{ as }\quad m\ra \infty,
\label{equ:WithEffectiveChannel}
\end{align}
where we used \eqref{equ:cqLowerBError0} in the final line. We have shown that any rate $R$ chosen as in \eqref{equ:CQRate} is fault-tolerantly achievable in the sense of Definition \ref{defn:FTCQCap}. We conclude that 
\begin{equation}
C_{\mathcal{F}_{\PIID}(p)}(T) \geq \inf_{N} \chi\lb\lset q_i,T_{p,N}(i)\rset\rb,
\label{equ:CQCapBoundCompound}
\end{equation}
for any probability distribution $q\in \mathcal{P}\lb \mathcal{A}\rb$. Finally, we use the continuity bound from \cite[Proposition 5]{shirokov2017tight} to estimate
\begin{align*}
|\chi\lb\lset q_i,T_{p,N}(i)\rset\rb - \chi\lb\lset q_i,T(i)\rset\rb| &\leq \epsilon_0j + 2(1+\epsilon_0)h_2\lb\frac{\epsilon_0}{1+\epsilon_0}\rb \\
&\leq 2cpj^2 + 2(1+2cpj)h_2\lb\frac{2cpj}{1+2cpj}\rb,
\end{align*}
where we used that the function $x\mapsto (1+x)h_2\lb \frac{x}{1+x}\rb$ is monotonically increasing and that
\[
\epsilon_0 := \frac{1}{2}\sum_i \| q_iT_{p,N}(i) - q_iT(i)\|_1 = cpj\sum_i q_i\| T(i)-N(i)\|_1\leq 2cpj.
\]
Combining this estimate with \eqref{equ:CQCapBoundCompound} we have
\[
C_{\mathcal{F}_{\PIID}(p)}(T) \geq \chi\lb\lset q_i,T(i)\rset\rb - 2cpj^2 - 2(1+2cpj)h_2\lb\frac{2cpj}{1+2cpj}\rb
\]
for any probability distribution $q\in \mathcal{P}\lb \mathcal{A}\rb$. Maximizing over $q$ and using the Holevo-Schumacher-Westmoreland theorem (see Theorem \ref{thm:HSWTheorem}) finishes the proof. 
\end{proof}

The previous proof used a coding scheme for a so-called compound channel~\cite{bjelakovic2009classical,datta2009classical,mosonyi2015coding} in~\eqref{equ:cqLowerBError0}. The same proof would also work for a coding scheme that is constructed for the specific sequence of tensor powers $\lb T_{p,N_{l_m}}\rb^{\otimes m}$ appearing in the first and second line of \eqref{equ:WithEffectiveChannel}. However, due to the dependence of the local channel on $m$ through the concatenation level $l_m$ of the concatenated code, this is not the tensor power of a fixed qubit channel. Standard techniques from i.i.d.~quantum information theory might therefore not apply in this setting, and similar constructions as for compound channels might be needed here in general.

\subsection{Fault-tolerant capacities of quantum channels}
\label{sec:FTCapQC}

Next, we consider fault-tolerant capacities of quantum channels. We will focus on the classical capacity and the quantum capacity introduced in Section \ref{sec:CapQuChannel}. We will begin by introducing fault-tolerant coding schemes for transmitting classical information over quantum channels.  

\begin{defn}[Fault-tolerant coding schemes for classical communication]\label{defn:FTcodingSchemesClassCap}
For $p\in\lbr 0,1\rbr$ let $\mathcal{F}_{\PIID}(p)$ denote the i.i.d~Pauli noise model from Definition \ref{defn:NoiseModel} and consider a quantum channel $T:\M^{\otimes j_1}_2\ra\M^{\otimes j_2}_2$. For $n,m\in\N$ and $\epsilon\in\R^+$, an $(n,m,\epsilon)$ fault-tolerant coding scheme for classical communication over $T$ under the noise model $\mathcal{F}_{\PIID}(p)$ consists of a quantum circuit $E:\C^{2^n}\ra\M^{\otimes j_1m}_{2}$ with classical input and a quantum circuit $\Delta:\M^{\otimes j_2m}_{2}\ra \C^{2^n}$ with classical output such that 
\[
\epsilon_{cl}\lb \lbr \Delta\circ T^{\otimes m}\circ E\rbr_{\mathcal{F}_{\PIID}(p)}\rb \leq \epsilon ,
\]
where $\epsilon_{cl}$ denotes the classical communication error from \eqref{equ:epsilonCL}.
\end{defn}

To define fault-tolerant coding schemes for quantum communication, it will be important to choose a suitable way to measure the communication error. Note that the usual equivalences between different error measures commonly used to define the quantum capacity (see for instance the discussion in~\cite{kretschmann2004tema}) might not hold in a setting where noise is affecting the coding operations. Our definition is motivated by possible applications of fault-tolerant capacities in quantum computing, where a quantum computation is performed on two separate quantum computers connected by a communication line. Intuitively, the fault-tolerant quantum capacity should measure the optimal rates at which quantum information can be send within any quantum computation such that the overall quantum computation, i.e., the combination of the two parts of the executed circuit and the communication scheme, can be implemented fault-tolerantly. For technical reasons, which will become clear in the discussion after the definition, we will need the following consequence of the Solovay-Kitaev theorem and the approximation of multi-qubit unitary operations by universal gate sets (see the discussion in~\cite{Nielsen2007}): For each $m\in\N$ and $\delta>0$, we denote by $N(m,\delta)\in \N$ the smallest number such that for any classical-quantum channel $R:\C^{2^k}\ra \M^{\otimes m}_2$ there exists a quantum circuit $\Gamma_R$ with $|\text{Loc}\lb \Gamma_R\rb|\leq N(m,\delta)$ and $\|\Gamma_R - R\|_{1\ra 1} \leq \delta$, and such that the analogous approximation holds for any quantum-classical channel $R':\M^{\otimes m}_2\ra \C^{2^k}$ as well. Note that $N(m,\delta)$ is monotonically decreasing in $\delta$. Now, we can state our definition:

\begin{defn}[Fault-tolerant coding scheme for quantum communication]\label{defn:FTcodingSchemesQuantCap}
For $p\in\lbr 0,1\rbr$ let $\mathcal{F}_{\PIID}(p)$ denote the i.i.d.~Pauli noise model from Definition \ref{defn:NoiseModel} and consider a quantum channel $T:\M^{\otimes j_1}_2\ra\M^{\otimes j_2}_2$. For $n,m\in\N$ and $\epsilon >0$ an $(n,m,\epsilon)$-fault-tolerant coding scheme for quantum communication over $T$ under the noise model $\mathcal{F}_{\PIID}(p)$ consists of a quantum error correcting code $\mathcal{C}$ (depending on $n$ and $m$) of dimension $\text{dim}\lb \mathcal{C}\rb$ on $K_{\mathcal{C}}$ qubits and quantum circuits
\[
\Gamma^E(\mathcal{C}):\M^{\otimes nK_{\mathcal{C}}}_2\ra \M^{\otimes mj_1}_2 \text{ and } \Gamma^D(\mathcal{C}):\M^{\otimes mj_2}_2\ra \M^{\otimes nK_{\mathcal{C}}}_2,
\]
such that 
\[
\| \Gamma^2\circ \Gamma^1 - \lbr \Gamma_{\mathcal{C}}^2\circ \Gamma^D(\mathcal{C})\circ T^{\otimes m}\circ \Gamma^E(\mathcal{C})\circ \Gamma_{\mathcal{C}}^1\rbr_{\mathcal{F}_{\PIID}(p)}\|_{1\ra 1}\leq \epsilon .
\]
for any $k_1,k_2\in\N$ and all quantum circuits $\Gamma^1:\C^{2^{k_1}}\ra \M^{\otimes n}_2$ and $\Gamma^2:\M^{\otimes n}_2\ra \C^{2^{k_2}}$ satisfying
\[
\max\lb|\text{Loc}\lb \Gamma^1\rb|,|\text{Loc}\lb \Gamma^2\rb|\rb\leq N(n,\delta),
\]
for some $\delta\leq \epsilon$.
\end{defn}

We should emphasize that the previous definition can be stated in the same way with quantum error correcting codes $\mathcal{C}$ of dimensions $\text{dim}\lb\mathcal{C}\rb>2$ (including the case where multiple qubits are encoded in the same code). We have chosen to restrict to dimension $\text{dim}\lb\mathcal{C}\rb=2$ because we have only introduced implementations of circuits in this case (see Definition \ref{defn:Impl}). While a more general definition is possible it would be more involved and beyond the scope of this article. Note, however, that Definition \ref{defn:FTcodingSchemesClassCap} of fault-tolerant coding schemes for classical information is as general as possible. 

Having defined fault-tolerant coding schemes, we can define fault-tolerant capacities as usual:

\begin{defn}[Fault-tolerant classical and quantum capacity]\label{defn:FTCCapAndQCap}
For $p\in\lbr 0,1\rbr$ let $\mathcal{F}_{\PIID}(p)$ denote the i.i.d.~Pauli noise model from Definition \ref{defn:NoiseModel} and consider a quantum channel $T:\M^{\otimes j_1}_2\ra\M^{\otimes j_2}_2$. We call $R\geq 0$ an \emph{fault-tolerantly achievable rate} for classical (or quantum) communication if for every $m\in\N$ there exists an $n_m\in\N$ and an $(n_m,m,\epsilon_m)$ fault-tolerant coding scheme for classical (or quantum) communication over $T$ under the noise model $\mathcal{F}_{\PIID}(p)$ with $\epsilon_m\ra 0$ as $m\ra\infty$ and 
\[
\liminf_{m\ra\infty} \frac{n_m}{m}\geq R.
\] 
The \emph{fault-tolerant classical capacity} of $T$ is given by
\[
C_{\mathcal{F}_{\PIID}(p)}(T)=\sup\lset R\geq 0 \text{ fault-tolerantly achievable rate for classical communication}\rset,
\]
and the \emph{fault-tolerant quantum capacity} of $T$ is given by 
\[
Q_{\mathcal{F}_{\PIID}(p)}(T)=\sup\lset R\geq 0 \text{ fault-tolerantly achievable rate for quantum communication}\rset.
\]
\end{defn}

It should also be emphasized that the noiseless capacity $Q(T)$ is recovered from the previous definition when $p=0$. Indeed, it is easy to see how fault-tolerant coding schemes under the noise model $\mathcal{F}_{\PIID}(0)$ can be constructed from coding scheme for quantum communication over the channel $T$ (see Definition \ref{defn:codingSchemesQChannel}) by using a trivial quantum error correcting code and quantum circuits approximating the coding operations. This shows that $Q(T)\leq Q_{\mathcal{F}_{\PIID}(0)}(T)$. To see the remaining inequality, we choose for every pure state $\ket{\psi}\in(\C^2)^{\otimes n}$ a quantum circuit $\Gamma^{1,\psi}:\C\ra \M^{\otimes n}_2$ preparing $\proj{\psi}{\psi}$ up to an error $\epsilon$ in $1\ra 1$-norm and a quantum circuit $\Gamma^{2,\psi}:\M^{\otimes n}_2\ra \C^{2}$ measuring the POVM $\lset\proj{\psi}{\psi},\one^{\otimes n}_{2} - \proj{\psi}{\psi}\rset$ up to an error $\epsilon$ in $1\ra 1$-norm, and such that each of these circuits has less than $N(n,\epsilon)$ many locations (which is possible by the Solovay-Kitaev theorem). Assume that there exists a $(n,m,\epsilon)$-coding scheme consisting of an error correcting code $\mathcal{C}$ and quantum circuits $\Gamma^E(\mathcal{C})$ and $\Gamma^D(\mathcal{C})$ as in Definition \ref{defn:FTcodingSchemesQuantCap} such that
\[
\| \Gamma^{2,\psi}\circ \Gamma^{1,\psi} - \Gamma_{\mathcal{C}}^{2,\psi}\circ \Gamma^D(\mathcal{C})\circ T^{\otimes m}\circ \Gamma^E(\mathcal{C})\circ \Gamma_{\mathcal{C}}^{1,\psi}\|_{1\ra 1}\leq \epsilon . 
\]
Since there are no faults affecting the circuits, we may apply the transformation rules in Definition \ref{defn:Impl} to conclude that 
\[
\|\Gamma^{2,\psi}\circ \Gamma^{1,\psi} - \Gamma^{2,\psi}\circ \lb \ident^{\otimes n}_2\otimes \text{Tr}_S\rb\circ \text{Dec}^{*} \circ \Gamma^D(\mathcal{C})\circ T^{\otimes m}\circ \Gamma^E(\mathcal{C})\circ \text{Enc}^{*}\circ \lb \Gamma^{1,\psi}\otimes \proj{0}{0}\rb\|_{1\ra 1}\leq \epsilon ,
\]
for any pure state $\ket{\psi}\in(\C^2)^{\otimes n}$, and where $\proj{0}{0}$ denotes the syndrome state corresponding to the zero syndrome. Using the minimum fidelity 
\[
F(S)=\min\lset \bra{\psi}S(\proj{\psi}{\psi})\ket{\psi} ~:~\ket{\psi}\in\C^{d_1}, \braket{\psi}{\psi}=1\rset,
\]
of a quantum channel $S:\M_{d_1}\ra \M_{d_2}$, we conclude that 
\[
F\lb \lb\ident^{\otimes n}_2\otimes \text{Tr}_S\rb\circ \text{Dec}^{*} \circ \Gamma^D(\mathcal{C})\circ T^{\otimes m}\circ \Gamma^E(\mathcal{C})\circ \text{Enc}^{*}\lb \cdot \otimes \proj{0}{0}\rb\rb \geq 1-5\epsilon.
\]
By the equivalences of error criteria from~\cite{kretschmann2004tema}, we conclude that the pair of quantum channels 
\[
D = \lb\ident^{\otimes n}_2\otimes \text{Tr}_S\rb\circ \text{Dec}^{*} \circ \Gamma^D(\mathcal{C}),
\]
and 
\[
E= \Gamma^E(\mathcal{C})\circ \text{Enc}^{*}\lb \cdot \otimes \proj{0}{0}\rb,
\]
defines an $(n,m,5\epsilon)$-coding scheme. After going through the definitions of the capacities, we find that $Q(T)\geq Q_{\mathcal{F}_{\PIID}(0)}(T)$. 

The previous definitions of fault-tolerant capacities and their coding schemes allow for arbitrary quantum error correcting codes to be used by the sender and the receiver. As in Section \ref{sec:FTCapCQ}, we will restrict to concatenated codes with the same concatenation level and the interface circuits constructed in Section \ref{sec:NoisyInterface}. As before, we should emphasize that more advanced quantum error correcting codes~\cite{gottesman2014fault,fawzi2018constant}  might lead to more complicated effective channels, but possibly to higher information transmission rates. We leave this for further investigation. By using concatenated codes, we can use the effective channel from Theorem \ref{thm:EffCommChan} to show that any coding scheme for communication over $T$ under AVP (see Definition \ref{defn:CodingSchemesEffClassCap}) can be used to construct a fault-tolerant coding scheme. This leads to the following lower bounds in the same spirit as Theorem \ref{thm:LowerBoundFTCQ}.

\begin{thm}[Lower bounds on fault-tolerant capacities]\label{thm:FTCCQCLowerBound}
Let $p_0$ denote the threshold of the concatenated $7$-qubit Steane code (see Lemma \ref{lem:Strongthreshold}) and $c>0$ the constant from Theorem \ref{thm:NiceInterfaceConcatenated}. For any quantum channel $T:\M_{d_1}\ra\M_{d_2}$ and any 
\[
p\leq \min(p_0/2,(2(j_1+j_2)c)^{-1}),
\]
with $j_1=\lceil\log_2(d_1)\rceil$ and $j_2=\lceil\log_2(d_2)\rceil$, we have 
\[
C_{\mathcal{F}_{\PIID}(p)}(T) \geq C^{\text{AVP}}_{\text{ALL}}(2(j_1+j_2)cp,T) 
\]
and
\[
Q_{\mathcal{F}_{\PIID}(p)}(T) \geq Q^{\text{AVP}}_{\text{ALL}}(2(j_1+j_2)cp,T) . 
\]
An explicit lower bound can be obtained from Theorem \ref{thm:LowerBoundEffCap}.
\end{thm}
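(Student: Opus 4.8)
The plan is to imitate the proof of Theorem~\ref{thm:LowerBoundFTCQ}, replacing the compound-channel coding schemes used there by coding schemes for $T$ under AVP. First I would reduce to the case $T:\M^{\otimes j_1}_2\ra\M^{\otimes j_2}_2$ with $j_1=\lceil\log_2(d_1)\rceil$, $j_2=\lceil\log_2(d_2)\rceil$ by embedding $T$ isometrically into a multi-qubit channel (an isometric embedding preserves all capacities occurring in the statement). The key input is Theorem~\ref{thm:EffCommChan}: when coding circuits implemented in the concatenated $7$-qubit Steane code $\mathcal{C}_l$ are attached to the physical channel $T$ through the interfaces $\Enc_l,\Dec_l$ of Theorem~\ref{thm:NiceInterfaceConcatenated}, the effective channel between the data qubits has exactly the AVP form $(1-2(j_1+j_2)cp)\,T\otimes\Tr_S+2(j_1+j_2)cp\,N_l$ together with a syndrome environment state $\sigma_S$. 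Given this, the proof is essentially error bookkeeping, and the final sentence of the theorem is then immediate from Theorem~\ref{thm:LowerBoundEffCap}. Note that $2(j_1+j_2)cp\le 1$ under the hypothesis on $p$, so this AVP strength is admissible.

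\emph{Classical case.} Fix any $R<C^{\text{AVP}}_{\text{ALL}}(2(j_1+j_2)cp,T)$ and take, by Definition~\ref{defn:EffCCapAndQCap}, $(n_m,m,\epsilon_m)$ coding schemes $(E_m,D_m)$ for classical communication over $T$ under AVP of strength $2(j_1+j_2)cp$ with $n_m/m\to R$ and $\epsilon_m\to 0$, whose classical error is controlled \emph{uniformly} over all environment dimensions, all environment states and all perturbation channels. Approximate $E_m$ and $D_m$ by quantum circuits $\Gamma^{E,m},\Gamma^{D,m}$ with $\|\Gamma^{E,m}-E_m\|_{1\ra 1},\|\Gamma^{D,m}-D_m\|_{1\ra 1}\le 1/m$ (Solovay--Kitaev), and choose the concatenation level $l_m$ so large that the right-hand side of Theorem~\ref{thm:EffCommChan}, applied with $\Gamma^1=\Gamma^{E,m}$ and $\Gamma^2=\Gamma^{D,m}$, is at most $1/m$; this is possible despite the (possibly exponential) growth of $|\Loc(\Gamma^{E,m})|,|\Loc(\Gamma^{D,m})|$ in $m$, thanks to the doubly exponential factor $(p/p_0)^{2^{l_m}}$. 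The fault-tolerant coding scheme of Definition~\ref{defn:FTcodingSchemesClassCap} is then $E:=\Dec_{l_m}^{\otimes j_1 m}\circ\Gamma^{E,m}_{\mathcal{C}_{l_m}}$ and $\Delta:=\Gamma^{D,m}_{\mathcal{C}_{l_m}}\circ\Enc_{l_m}^{\otimes j_2 m}$; by construction $\Delta\circ T^{\otimes m}\circ E=\Gamma^{D,m}_{\mathcal{C}_{l_m}}\circ(\Enc_{l_m}^{\otimes j_2}\circ T\circ\Dec_{l_m}^{\otimes j_1})^{\otimes m}\circ\Gamma^{E,m}_{\mathcal{C}_{l_m}}$, so Theorem~\ref{thm:EffCommChan} gives
\[
\big\|\lbr\Delta\circ T^{\otimes m}\circ E\rbr_{\mathcal{F}_{\PIID}(p)}-\Gamma^{D,m}\circ T_{p,l_m}^{\otimes m}\circ(\Gamma^{E,m}\otimes\sigma_S)\big\|_{1\ra 1}\le 1/m .
\]
The right-hand operator differs from $D_m\circ T_{p,N_{l_m}}^{\otimes m}\circ(E_m\otimes\sigma_S)$ only through the $1/m$-circuit approximations, and it is precisely an AVP channel of strength $2(j_1+j_2)cp$ with the (uncontrolled, $m$-dependent, possibly highly entangled) environment state $\sigma_S$ and perturbation $N_{l_m}$; since $(E_m,D_m)$ works against all of these, $\epsilon_{cl}(D_m\circ T_{p,N_{l_m}}^{\otimes m}\circ(E_m\otimes\sigma_S))\le\epsilon_m$. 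Collecting the $O(1/m)$ and $\epsilon_m$ terms shows the fault-tolerant classical communication error tends to $0$, so $R$ is fault-tolerantly achievable; letting $R\uparrow C^{\text{AVP}}_{\text{ALL}}(2(j_1+j_2)cp,T)$ proves the first inequality.

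\emph{Quantum case.} The construction is the same, starting from $(n_m,m,\epsilon_m)$ coding schemes $(E_m,D_m)$ for quantum communication under AVP of strength $2(j_1+j_2)cp$ (Definition~\ref{defn:CodingSchemesEffQuantCap}, error in $\|\cdot\|_{\diamond}$) and setting $\Gamma^E(\mathcal{C}_{l_m}):=\Dec_{l_m}^{\otimes j_1 m}\circ\Gamma^{E,m}_{\mathcal{C}_{l_m}}$ and $\Gamma^D(\mathcal{C}_{l_m}):=\Gamma^{D,m}_{\mathcal{C}_{l_m}}\circ\Enc_{l_m}^{\otimes j_2 m}$ with $\Gamma^{E,m},\Gamma^{D,m}$ circuit approximations of $E_m,D_m$. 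The additional point is the test circuits $\Gamma^1:\C^{2^{k_1}}\ra\M^{\otimes n_m}_2$, $\Gamma^2:\M^{\otimes n_m}_2\ra\C^{2^{k_2}}$ of Definition~\ref{defn:FTcodingSchemesQuantCap}: for any such pair with at most $N(n_m,\delta)$ locations (some $\delta\le\epsilon_m$) I would apply Theorem~\ref{thm:EffCommChan} with its ``$\Gamma^1$'' taken to be $\Gamma^{E,m}\circ\Gamma^1$ and its ``$\Gamma^2$'' taken to be $\Gamma^2\circ\Gamma^{D,m}$; since implementations compose, the encoded composite factors exactly as $\Gamma^2_{\mathcal{C}}\circ\Gamma^D(\mathcal{C})\circ T^{\otimes m}\circ\Gamma^E(\mathcal{C})\circ\Gamma^1_{\mathcal{C}}$, and because $|\Loc(\Gamma^1)|,|\Loc(\Gamma^2)|\le N(n_m,\delta)$ while the remaining location counts are fixed by $m$, $l_m$ can again be chosen to make the fault-tolerance error $\le 1/m$ uniformly over all admissible $\Gamma^1,\Gamma^2$. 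The effective channel is once more an AVP channel of strength $2(j_1+j_2)cp$, so $\|\ident^{\otimes n_m}_2-D_m\circ T_{p,N_{l_m}}^{\otimes m}\circ(E_m\otimes\sigma_S)\|_{\diamond}\le\epsilon_m$; composing with the channels $\Gamma^1,\Gamma^2$ and using that $\|\cdot\|_{\diamond}$ dominates the relevant $1\ra 1$-distance, together with the Solovay--Kitaev and Theorem~\ref{thm:EffCommChan} errors, yields $\|\Gamma^2\circ\Gamma^1-\lbr\Gamma^2_{\mathcal{C}}\circ\Gamma^D(\mathcal{C})\circ T^{\otimes m}\circ\Gamma^E(\mathcal{C})\circ\Gamma^1_{\mathcal{C}}\rbr_{\mathcal{F}_{\PIID}(p)}\|_{1\ra 1}\to 0$, so $R$ is a fault-tolerantly achievable rate for quantum communication, and $R\uparrow Q^{\text{AVP}}_{\text{ALL}}(2(j_1+j_2)cp,T)$ finishes the proof.

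The step I expect to be the main obstacle is not a single estimate but the simultaneous control of three unrelated error sources --- the circuit approximation of $E_m,D_m$, the fault-tolerance error of Theorem~\ref{thm:EffCommChan} (tuned through $l_m$), and the AVP coding error $\epsilon_m$ --- combined with the fact that the syndrome state $\sigma_S$ and perturbation $N_{l_m}$ produced by the effective channel are uncontrolled, depend on $m$ through $l_m$, and may be entangled across the $m$ channel uses. This last feature is precisely why the lower bounds must be phrased in terms of $C^{\text{AVP}}_{\text{ALL}}$ and $Q^{\text{AVP}}_{\text{ALL}}$, whose coding schemes are robust against arbitrary environment states, and in the quantum case it is compounded by the need to thread the a priori arbitrary (and possibly large) test circuits $\Gamma^1,\Gamma^2$ through Theorem~\ref{thm:EffCommChan} while keeping their size controlled via $N(n,\delta)$.
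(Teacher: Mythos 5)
Your proposal follows essentially the same route as the paper's proof: reduce to a multi-qubit channel, take AVP coding schemes of strength $2(j_1+j_2)cp$, approximate the coding maps by circuits, implement them in the concatenated $7$-qubit Steane code at a level $l_m$ chosen (via the doubly exponential factor in Theorem \ref{thm:EffCommChan}) to absorb the circuit sizes plus the $N(n_m,\epsilon_m)$ test-circuit budget, and then recognize the resulting effective channel with its (possibly entangled) syndrome state as an instance of the AVP model so that the AVP error bound applies; the error bookkeeping and the handling of the test circuits $\Gamma^1,\Gamma^2$ in the quantum case match the paper's argument. No gaps beyond what the paper itself leaves implicit.
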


Together with Theorem \ref{thm:continuityAVP}, the previous theorem implies the following threshold-type result.

\begin{thm}[Threshold theorem for fault-tolerant capacities]\label{thm:ThresholdQChan}
For every quantum channel $T:\M_{d_1}\ra\M_{d_2}$ and every $\epsilon>0$, there exists a threshold $p(\epsilon,T)>0$ such that 
\[
C_{\mathcal{F}_{\PIID}(p)}(T) \geq  C(T) - \epsilon,
\]
and 
\[
Q_{\mathcal{F}_{\PIID}(p)}(T) \geq  Q(T) - \epsilon,
\]
for all $0\leq p\leq p(\epsilon,T)$. In particular, we have 
\[
\lim_{p\searrow 0} C_{\mathcal{F}_{\PIID}(p)}(T) = C(T) \quad\text{ and }\quad\lim_{p\searrow 0} Q_{\mathcal{F}_{\PIID}(p)}(T) = Q(T),
\]
for all quantum channels $T:\M_{d_1}\ra\M_{d_2}$.
\end{thm}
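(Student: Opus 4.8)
The proof of Theorem~\ref{thm:ThresholdQChan} is a direct combination of Theorem~\ref{thm:FTCCQCLowerBound} with the point-wise continuity statement of Theorem~\ref{thm:continuityAVP}. First I would fix a quantum channel $T:\M_{d_1}\ra\M_{d_2}$ and $\epsilon>0$, and set $j_1=\lceil\log_2(d_1)\rceil$, $j_2=\lceil\log_2(d_2)\rceil$. Applying Theorem~\ref{thm:continuityAVP} to the channel $T$ with the chosen $\epsilon$ yields a threshold $p'(\epsilon,T)>0$ such that
\[
C^{\text{AVP}}_{\text{ALL}}(p',T)\geq C(T)-\epsilon \quad\text{ and }\quad Q^{\text{AVP}}_{\text{ALL}}(p',T)\geq Q(T)-\epsilon
\]
for all $0\leq p'\leq p'(\epsilon,T)$.

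Next I would translate this into a statement about the gate-error probability $p$. By Theorem~\ref{thm:FTCCQCLowerBound}, for every $p\leq \min(p_0/2,(2(j_1+j_2)c)^{-1})$ we have
\[
C_{\mathcal{F}_{\PIID}(p)}(T)\geq C^{\text{AVP}}_{\text{ALL}}(2(j_1+j_2)cp,T),\qquad Q_{\mathcal{F}_{\PIID}(p)}(T)\geq Q^{\text{AVP}}_{\text{ALL}}(2(j_1+j_2)cp,T),
\]
where $p_0$ is the threshold of Lemma~\ref{lem:Strongthreshold} and $c>0$ the constant of Theorem~\ref{thm:NiceInterfaceConcatenated}. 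The idea is simply to choose $p$ small enough that the AVP strength $2(j_1+j_2)cp$ falls below the continuity threshold $p'(\epsilon,T)$. Concretely, I would set
\[
p(\epsilon,T):=\min\!\left(\frac{p_0}{2},\,\frac{1}{2(j_1+j_2)c},\,\frac{p'(\epsilon,T)}{2(j_1+j_2)c}\right)>0,
\]
which is strictly positive since each of the three terms is. For any $0\leq p\leq p(\epsilon,T)$ the first two terms guarantee that Theorem~\ref{thm:FTCCQCLowerBound} applies, and the third guarantees $2(j_1+j_2)cp\leq p'(\epsilon,T)$, so that the monotone continuity bound of Theorem~\ref{thm:continuityAVP} kicks in. Chaining the two inequalities then gives $C_{\mathcal{F}_{\PIID}(p)}(T)\geq C(T)-\epsilon$ and $Q_{\mathcal{F}_{\PIID}(p)}(T)\geq Q(T)-\epsilon$, which is the first part of the theorem.

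For the limit statements, note that the upper bounds $C_{\mathcal{F}_{\PIID}(p)}(T)\leq C(T)$ and $Q_{\mathcal{F}_{\PIID}(p)}(T)\leq Q(T)$ hold trivially, since a fault-tolerant coding scheme under $\mathcal{F}_{\PIID}(p)$ is in particular a valid coding scheme for $T$ in the sense of Definition~\ref{defn:CCapAndQCap} once one composes it with the ideal interfaces (more carefully: any achievable rate with gate noise is achievable without it, monotonicity of the capacities in the noise strength). Combined with the lower bounds just established — which say that for every $\epsilon>0$ all sufficiently small $p$ satisfy $C_{\mathcal{F}_{\PIID}(p)}(T)\geq C(T)-\epsilon$, and similarly for $Q$ — a standard $\epsilon$--$\delta$ argument yields $\lim_{p\searrow 0}C_{\mathcal{F}_{\PIID}(p)}(T)=C(T)$ and $\lim_{p\searrow 0}Q_{\mathcal{F}_{\PIID}(p)}(T)=Q(T)$.

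Honestly, there is no serious obstacle in this proof: all the substantive work — the interface analysis (Theorem~\ref{thm:NiceInterfaceConcatenated}), the effective channel identification (Theorem~\ref{thm:EffCommChan}), the postselection-based capacity bounds under AVP (Theorem~\ref{thm:LowerBoundEffCap}, Theorem~\ref{thm:continuityAVP}), and the reduction of fault-tolerant capacities to AVP capacities (Theorem~\ref{thm:FTCCQCLowerBound}) — has already been done. The only point requiring a little care is the bookkeeping of the linear rescaling of the perturbation strength by the factor $2(j_1+j_2)c$: one must make sure the threshold $p(\epsilon,T)$ is defined so that \emph{both} the hypotheses of Theorem~\ref{thm:FTCCQCLowerBound} and the hypothesis $2(j_1+j_2)cp\leq p'(\epsilon,T)$ of Theorem~\ref{thm:continuityAVP} are simultaneously met, which is exactly what the three-way minimum above accomplishes.
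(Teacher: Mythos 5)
Your proposal is correct and follows essentially the same route as the paper, which presents this theorem as a direct corollary of Theorem~\ref{thm:FTCCQCLowerBound} combined with the point-wise continuity of the AVP capacities in Theorem~\ref{thm:continuityAVP}; your explicit three-way minimum for $p(\epsilon,T)$ is exactly the bookkeeping the paper leaves implicit. The only mildly glossed point (in both your write-up and the paper) is the upper bound $Q_{\mathcal{F}_{\PIID}(p)}(T)\leq Q(T)$ used for the limit statement, which for the quantum case requires an argument along the lines of the paper's $p=0$ discussion rather than being entirely immediate.
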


It should be emphasized that the threshold in Theorem \ref{thm:ThresholdQChan} does depend on the quantum channel $T$ and it is not uniform as in the case of cq-channels (cf.~Theorem~\ref{thm:ThresholdCQ}). This issue is closely connected to the fact that our bounds in Theorem \ref{thm:LowerBoundEffCap} do not regularize (i.e., the limit $k\ra\infty$ leads to a vanishing lower bound for any $p>0$). This issue might be resolved by either using a different fault-tolerant coding scheme, or by showing that the syndrome states $\sigma_S$ occuring in Theorem \ref{thm:EffCommChan} are fully separable. In the latter case, we could lower bound the fault-tolerant capacities by the capacities under AVP with separable environment states $C^{\text{SEP}}_{\text{ALL}}(2(j_1+j_2)cp,T)$ and $C^{\text{SEP}}_{\text{ALL}}(2(j_1+j_2)cp,T)$ (see Definition \ref{defn:EffCCapAndQCap}). Using Theorem \ref{thm:EffCapSepEnvLowerB} (and a potential analogue of this theorem for the classical capacity) the lower bounds on the fault-tolerant capacities could then be improved and a threshold not depending on the particular channel (as in Theorem~\ref{thm:ThresholdCQ}) could be obtained. 

By restricting our statements to classes of quantum channels where the regularization in the capacity formulas is not required, it is also possible to obtain uniform thresholds using our techniques. For example, there is a threshold $p(\epsilon,d_A,d_B)$ only depending on $\epsilon$ and the dimensions $d_A,d_B$ such that $Q_{\mathcal{F}_{\PIID}(p)}(T) \geq  Q(T) - \epsilon$ holds for every degrabable quantum channel $T:\M_{d_A}\ra\M_{d_B}$ (see~\cite{cubitt2008structure} for the definition and properties of this class of quantum channels) and for every $0\leq p\leq p(\epsilon,d_A,d_B)$.

\begin{proof}[Proof of Theorem \ref{thm:FTCCQCLowerBound}]
We will focus on the lower bound for the fault-tolerant quantum capacity $Q_{\mathcal{F}_{\PIID}(p)}(T)$ stated in the theorem. The lower bound on the fault-tolerant classical capacity $C_{\mathcal{F}_{\PIID}(p)}(T)$ follows along the same lines. 

Without loss of generality we may assume that $T:\M^{\otimes j_1}_{2}\ra\M^{\otimes j_2}_2$ with 
\[
j_1=\lceil\log_2(d_1)\rceil \quad\text{ and }\quad j_2=\lceil\log_2(d_1)\rceil,
\]
and we fix any rate $R< Q^{\text{AVP}}_{\text{ALL}}(2(j_1+j_2)cp,T)$. Our fault-tolerant coding scheme (as in Definition \ref{defn:FTcodingSchemesQuantCap}) achieving this rate will use the concatenated $7$-qubit Steane code (see Appendix \ref{sec:Appendix}) as a quantum circuit code. For each $l\in\N$ let $\mathcal{C}_l\subset (\C^2)^{\otimes 7^l}$ denote the $l$th level of the concatenated $7$-qubit Steane code with threshold $p_0\in\lb 0,1\rbr$ (see Lemma \ref{lem:Strongthreshold}). Moreover, we denote by $\Enc_l:\M_2\ra\M^{\otimes 7^l}_2$ and $\Dec_l:\M^{\otimes 7^l}_2\ra \M_2$ the interface circuits from \eqref{equ:EncDec1} and \eqref{equ:EncDec2}, and recall Theorem \ref{thm:NiceInterfaceConcatenated} introducing the constant $c>0$. Consider a sequence of $(n_m,m,\epsilon_m)$-coding schemes for quantum communication over $T$ under AVP of strength $2(j_1+j_2)cp$ given by quantum channels $E_m:\M^{\otimes n_m}_2\ra \M^{\otimes j_1m}_2$ and $D_m:\M^{\otimes j_2m}_2\ra \M^{\otimes n_m}_2$ for each $m\in \N$ (as in Definition \ref{defn:CodingSchemesEffQuantCap}) such that $R\leq \liminf_{m\ra \infty} n_m/m$ and $\lim_{m\ra\infty}\epsilon_m = 0$. 

To construct a fault-tolerant coding scheme, we consider quantum circuits $\Gamma^{E,m}:\M^{\otimes n_m}_2\ra \M^{\otimes j_1m}_2$ and $\Gamma^{D,m}:\M^{\otimes j_2m}_2\ra \M^{\otimes n_m}_2$ such that 
\begin{equation}
\|\Gamma^{E,m}-E_m\|_\diamond \leq \frac{1}{m},
\label{eq:EapproxProof}
\end{equation}
and 
\begin{equation}
\|\Gamma^{D,m}-D_m\|_\diamond \leq \frac{1}{m},
\label{eq:DapproxProof}
\end{equation}
for every $m\in\N$. Recall the numbers $N(k,\delta)\in\N$ defined in the text before Definition \ref{defn:FTcodingSchemesQuantCap} specifying the number of locations needed in quantum circuits to approximate any classical-quantum and quantum-classical channels with output and input, respectively, consisting of $k$ qubits up to an error $\delta>0$ in $1\ra 1$-norm. After choosing $l_m\in\N$ such that
\begin{equation}
\lb \frac{p}{p_0}\rb^{2^{l_m}}\lb |\text{Loc}\lb \Gamma^{E,m}\rb| + |\text{Loc}\lb \Gamma^{D,m}\rb| + 2N(n_m,\epsilon_m) +j_1m\rb \leq \frac{1}{m},
\label{equ:ErrBoundFTCapThm}
\end{equation}
for every $m\in\N$, we define quantum circuits
\[
\Gamma^{E,m}\lb \mathcal{C}_{l_m}\rb = \lb\Dec^{\otimes j_1}_{l_m}\rb^{\otimes m}\circ \Gamma^{E,m}_{\mathcal{C}_{l_m}},
\]
and 
\[
\Gamma^{D,m}\lb \mathcal{C}_{l_m}\rb = \Gamma_{\mathcal{C}_{l_m}}^{D,m}\circ \lb\Enc^{\otimes j_2}_{l_m}\rb^{\otimes m}.
\]
For any pair of quantum circuits $\Gamma^{1,m}:\C^{2^{k_1}}\ra \M^{n_m}_2$ and $\Gamma^{2,m}:\M^{n_m}_2\ra \C^{2^{k_2}}$ satisfying 
\[
\max\lb |\text{Loc}\lb \Gamma^{1,m}\rb|,|\text{Loc}\lb \Gamma^{2,m}\rb| \rb\leq N(n_m,\epsilon_m),
\]
for any $m\in\N$, we can compute
\begin{align}
&\| \Gamma^{2,m}\circ \Gamma^{1,m} - \lbr \Gamma^{2,m}_{\mathcal{C}_{l_m}}\circ\Gamma^{D,m}\lb \mathcal{C}_{l_m}\rb\circ  T^{\otimes m}\circ\Gamma^{E,m}\lb \mathcal{C}_{l_m}\rb\circ \Gamma^{1,m}_{\mathcal{C}_{l_m}}\rbr_{\mathcal{F}_\PIID(p)}\|_{1\ra 1}\nonumber \\
&\leq \frac{C}{m} + \| \Gamma^{2,m}\circ \Gamma^{1,m} -  \Gamma^{2,m}\circ \Gamma^{D,m}\circ T_{p,l_m}^{\otimes m}\circ\Gamma^{E,m} \circ (\Gamma^{1,m}\otimes \sigma_{S,m})\|_{1\ra 1}\nonumber \\
&\leq \frac{C+2}{m} + \| \Gamma^{2,m}\circ \Gamma^{1,m} -  \Gamma^{2,m}\circ D_m\circ T_{p,l_m}^{\otimes m}\circ E_m\circ (\Gamma^{1,m}\otimes \sigma_{S,m})\|_{1\ra 1}\nonumber\\
&\leq \frac{C+2}{m} + \| \ident^{\otimes Rkm}_2 -   D_m\circ T_{p,l_m}^{\otimes m}\circ (E_m\otimes \sigma_{S,m})\|_{\diamond} \leq \frac{C+2}{m} + \epsilon_m .
\label{equ:FTQC1}
\end{align}
Here, we used Theorem \ref{thm:EffCommChan} together with \eqref{equ:ErrBoundFTCapThm} in the first inequality, \eqref{eq:EapproxProof} and \eqref{eq:DapproxProof} together with the triangle inequality in the second inequality, and finally monotonicity of the $1\ra 1$-norm under quantum channels in the third inequality. Here, $\sigma_{S,m}$ is some quantum state on the syndrome space depending on $\Gamma^{1,m}$ and $\Gamma^{E,m}$, and $T_{p,l_m}\lb\cdot\otimes \sigma_{S,m}\rb$ is the effective quantum channel introduced in Theorem \ref{thm:EffCommChan}, which is a special case of the channel $T$ under AVP (see~\eqref{equ:AVPChannelModel}) and it is therefore corrected up to an error $\epsilon_m$ by the coding scheme defined by $E_m$ and $D_m$. This implies the final inequality. We have shown that each $R< Q^{\text{AVP}}_{\text{ALL}}(2(j_1+j_2)cp,T)$ is a fault-tolerantly achievable rate for quantum communication over the quantum channel $T$ under the noise model $\mathcal{F}_{\PIID}(p)$ and we conclude that $Q_{\mathcal{F}_{\PIID}(p)}(T) \geq Q^{\text{AVP}}_{\text{ALL}}(2(j_1+j_2)cp,T)$.

The lower bound on the fault-tolerant classical capacity $C_{\mathcal{F}_{\PIID}(p)}(T)$ follows along the same lines as the previous proof by using a coding scheme for classical communication over $T$ under AVP of strength $2(j_1+j_2)cp$. Again, we can approximate the coding maps $E_m$ and $D_m$ by quantum circuits and use the concatenated $7$-qubit Steane code at a sufficiently high level to protect these circuits from noise. 
\end{proof}

\subsection{Specific coding schemes from asymptotically good codes}
\label{sec:GoodCodes}

In this section, we will show how to construct fault-tolerant coding schemes for certain quantum channels from asymptotically good codes. For our purposes it will be sufficient to consider such codes between systems where the dimension is a power of two. 

\begin{defn}[Asymptotically good codes]
Let $d=2^j$ be a power of two. An asymptotically good code of rate $R>0$ and goodness $\alpha\in\lb 0,1\rb$ is given by a sequence $\lb (E_m,D_m)\rb_{m\in\N}$ of encoding operations $E_{m}:\M^{\otimes n_m}_2\ra\M^{\otimes m}_d$ and decoding operations $D_m:\M^{\otimes m}_d\ra \M^{\otimes n_m}_2$ such that there is a sequence $(t_m)_{m\in\N}\in\N^{\N}$ satisfying the following:
\begin{enumerate}
\item $\liminf_{m\ra\infty} \frac{n_m}{m}>R$.
\item $\liminf_{m\ra\infty} \frac{t_m}{m}>\alpha$. 
\item For any quantum channel $N:\M^{\otimes m}_d\ra\M^{\otimes m}_d$ acting non-trivially on only $t_m$ qudits, we have 
\[
D_{m}\circ N\circ E_m = \ident^{\otimes n_m}_2 .
\]
\end{enumerate} 
\label{defn:AsymptGoodCode}
\end{defn}

Asymptotically good codes were first constructed for $d=2$ by Calderbank and Shor in~\cite{calderbank1996good}. For 
\begin{equation}
\alpha_0 := \min\lset \alpha\in\lbr 0,1\rbr ~:~h_2(2\alpha)=\frac{1}{2}\rset 
\label{equ:alpha0}
\end{equation}
and any goodness $\alpha\in \lb 0,\alpha_0\rb$ these codes achieve a rate
\[
R(\alpha) := 1-2h_2(2\alpha),
\] 
where $h_2$ denotes the binary entropy $h_2(p)=-(1-p)\log_2(1-p) - p\log_2(p)$. By now, many families of asymptotically good codes are known~\cite{ashikhmin2001asymptotically,chen2001asymptotically,matsumoto2002improvement,li2009family}. We will show the following theorem:

\begin{thm}[Fault-tolerant coding schemes from asymptotically good codes]
Let $d=2^j$ be a power of $2$ and let $p_0$ denote the threshold of the concatenated $7$-qubit Steane code (see Lemma \ref{lem:Strongthreshold}). For a sequence $(n_m)_{m\in\N}$ consider quantum operations $E_{m}:\M^{\otimes n_m}_2\ra\M^{\otimes m}_d$ and $D_{m}:\M^{\otimes m}_d\ra \M^{\otimes n_m}_2$ defining an asymptotically good code of rate $R>0$ and goodness $\alpha\in\lb 0,1\rb$. Moreover, consider $p,q\in\lbr 0,1\rbr$ such that
\begin{equation}
p<\min(p_0/2,c^{-1})\quad \text{ and }\quad x:=4jcp+q<\alpha,
\label{equ:alphaCond}
\end{equation}
where $c>0$ is the constant from Theorem \ref{thm:NiceInterfaceConcatenated}. For any $\delta>0$ with $x+\delta<\alpha$ there is an $m_0\in\N$ such that for any $m\geq m_0$ there exists an $(\epsilon'_m,n_m,m)$ fault-tolerant coding scheme for quantum communication under the noise model $\mathcal{F}_{\PIID(p)}$ via the quantum channel
\[
I_q = (1-q)\ident_d + q T,
\]
with any fixed quantum channel $T:\M_{d}\ra\M_{d}$. Here, we have 
\[
\epsilon'_m = \epsilon_m + 3\exp\lb -m\frac{\delta^2}{3x}\rb, 
\] 
for any sequence $\epsilon_m>0$ with $\lim_{m\ra \infty}\epsilon_m = 0$. In particular, the rate $R$ is fault-tolerantly achievable and we have
\[
Q_{\mathcal{F}_{\PIID(p)}}(I_q)\geq R .
\]
\label{thm:FTCDepol}
\end{thm}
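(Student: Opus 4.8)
The plan is to combine the asymptotically good code (as an outer code protecting against a few bad channel uses) with the fault-tolerant interface machinery from Theorem~\ref{thm:EffCommChan}, and to control the bad uses via the postselection/Chernoff bound of Lemma~\ref{lem:PostSelect}. First I would reduce to the problem of building a fault-tolerant coding scheme (in the sense of Definition~\ref{defn:FTcodingSchemesQuantCap}) from the code $(E_m,D_m)$. Fix $m$, approximate $E_m$ and $D_m$ by quantum circuits $\Gamma^{E,m},\Gamma^{D,m}$ up to diamond-norm error $1/m$, and implement them together with the interfaces $\Enc_{l_m},\Dec_{l_m}$ in a sufficiently high level $l_m$ of the concatenated $7$-qubit Steane code, exactly as in the proof of Theorem~\ref{thm:FTCCQCLowerBound}. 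Choosing $l_m$ so that $(p/p_0)^{2^{l_m}}(|\Loc(\Gamma^{E,m})|+|\Loc(\Gamma^{D,m})|+2N(n_m,\epsilon_m)+jm)\le 1/m$, Theorem~\ref{thm:EffCommChan} tells us that the encoded circuit behaves, up to error $O(1/m)$, like $D_m\circ (I_q)_{p,l_m}^{\otimes m}\circ(E_m\otimes\sigma_{S,m})$, where the effective channel has the form
\[
(I_q)_{p,l_m} = (1-2jcp\cdot 2)\, I_q\otimes\Tr_S + 2\cdot 2jcp\, N_{l_m}
\]
(here $j_1=j_2=j$, so $2(j_1+j_2)cp=4jcp$). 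Since $I_q=(1-q)\ident_d+qT$, expanding gives $(I_q)_{p,l_m}=(1-x')\,\ident_d\otimes\Tr_S + x'\,M_{l_m}$ for some quantum channel $M_{l_m}$ and some $x'\le 4jcp+q=x<\alpha$.

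Next I would bound how often the ``perturbation'' term $M_{l_m}$ acts. Taking the $m$-fold tensor power and feeding in the (possibly entangled) syndrome state $\sigma_{S,m}$, Lemma~\ref{lem:PostSelect} applies with the roles $T\mapsto \ident_d$, $N\mapsto M_{l_m}$, $p\mapsto x$: for any $\delta>0$,
\[
(I_q)_{p,l_m}^{\otimes m}(\cdot\otimes\sigma_{S,m}) \le d^{m(x+\delta)}\,\tilde I^{\otimes m} + \exp\!\big(-m\tfrac{\delta^2}{3x}\big)\,S,
\]
where $\tilde I=(1-x)\ident_d + x\,\frac{\one_d}{d}\Tr$. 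Actually, for the good-code argument I want the sharper statement implicit in the proof of Lemma~\ref{lem:PostSelect}: with probability at least $1-3\exp(-m\delta^2/(3x))$ the number of tensor factors on which the non-identity branch is selected is at most $\lceil m(x+\delta)\rceil$. On the complementary high-probability event, the overall channel from the outer-code input to its output is $D_m\circ N'\circ E_m$ for some quantum channel $N'$ acting nontrivially on at most $\lceil m(x+\delta)\rceil$ of the $m$ qudits. Choosing $\delta$ with $x+\delta<\alpha$ and $m\ge m_0$ large enough that $\lceil m(x+\delta)\rceil\le t_m$ (possible since $\liminf t_m/m>\alpha>x+\delta$), property~3 of Definition~\ref{defn:AsymptGoodCode} forces $D_m\circ N'\circ E_m=\ident_2^{\otimes n_m}$ exactly. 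Collecting the error terms $O(1/m)$ from the circuit approximations and from Theorem~\ref{thm:EffCommChan}, together with $\epsilon_m$ (which I absorb as the allowance for the outer code / continuity slack, consistent with the statement's $\epsilon'_m=\epsilon_m+3\exp(-m\delta^2/(3x))$) and the failure probability $3\exp(-m\delta^2/(3x))$, yields an $(n_m,m,\epsilon'_m)$ fault-tolerant coding scheme. Since $\epsilon'_m\to 0$ and $\liminf n_m/m>R$, the rate $R$ is fault-tolerantly achievable, so $Q_{\mathcal{F}_{\PIID}(p)}(I_q)\ge R$.

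The main obstacle I anticipate is the interaction between the entangled syndrome state $\sigma_{S,m}$ and the ``few bad positions'' argument: Lemma~\ref{lem:PostSelect} as stated only gives a bound in the completely-positive order, $T_{p,N}^{\otimes m}(\cdot\otimes\sigma_E)\le d^{m(p+\delta)}\tilde T^{\otimes m}+\cdots$, not a genuine probabilistic decomposition into ``at most $k$ bad slots'' — yet the good-code property~3 is a statement about channels supported on few tensor factors, which is exactly such a combinatorial decomposition. I would resolve this by going back into the proof of Lemma~\ref{lem:PostSelect}: the expansion $T_{p,N}^{\otimes m}(\cdot\otimes\sigma_E)=\sum_k(1-p)^{m-k}p^k\sum_{i_1+\cdots+i_m=k}\bigotimes_s S_{i_s}(\cdot\otimes\sigma_E)$ truncated at $k\le\lceil m(p+\delta)\rceil$ is already, term by term, a channel acting nontrivially on at most $\lceil m(p+\delta)\rceil$ of the $m$ systems (the $\sigma_E$ is a fixed resource state, not a transmitted system, so the branch $S_0=\ident_d\otimes\Tr_E$ genuinely leaves its slot untouched), and the tail has trace-norm weight $\le\exp(-m\delta^2/(3x))$ by Chernoff. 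Composing with the ideal outer encoder/decoder and using property~3 on each surviving term, then bounding the tail contribution to $\epsilon_{q}$ (via Fuchs--van-de-Graaf to pass between fidelity and diamond/trace distance, as done elsewhere in the paper), gives the claimed $\epsilon'_m$. The only other point requiring care is bookkeeping the constant $4jc$ versus $2(j_1+j_2)c$ and verifying that the condition $p<\min(p_0/2,c^{-1})$ suffices for Theorem~\ref{thm:EffCommChan} to apply (i.e.\ $p\le\min(p_0/2,(2(j_1+j_2)c)^{-1})$ after the qubit embedding), which is routine.
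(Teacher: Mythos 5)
Your proposal is correct and follows essentially the same route as the paper: approximate $E_m,D_m$ by circuits, protect them with the concatenated Steane code and interfaces, invoke Theorem~\ref{thm:EffCommChan} to obtain the effective channel with perturbation strength $x=4jcp+q$, and then expand its $m$-fold application term by term so that, up to a Chernoff tail $\exp(-m\delta^2/(3x))$, every surviving term acts nontrivially on at most $\lfloor m(x+\delta)\rfloor$ qudits and is corrected exactly by property~3 of the good code. Your anticipated obstacle and its resolution are exactly what the paper does: it does not use Lemma~\ref{lem:PostSelect} as a black box but redoes the binomial expansion inline (keeping the combinatorial ``few bad slots'' structure, with $\sigma_S$ as a fixed resource state), precisely as you propose.
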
 

\begin{proof}
For each $l\in\N$ let $\mathcal{C}_l\subset (\C^2)^{\otimes 7^l}$ denote the $l$th level of the concatenated $7$-qubit Steane code with threshold $p_0\in\lb 0,1\rbr$ (see Lemma \ref{lem:Strongthreshold}). Moreover, we denote by $\Enc_l:\M_2\ra\M^{\otimes 7^l}_2$ and $\Dec_l:\M^{\otimes 7^l}_2\ra \M_2$ the interface circuits from \eqref{equ:EncDec1} and \eqref{equ:EncDec2}, and recall Theorem \ref{thm:NiceInterfaceConcatenated} introducing a constant $c>0$. For $0\leq p<\min(p_0/2,c^{-1})$ and $q\in\lbr 0,1\rbr$ we choose some $\delta>0$ such that $x+\delta <\alpha$, where $x<\alpha$ was defined in \eqref{equ:alphaCond}. Finally, note again that $d=2^j$ throughout the proof.

Consider any sequence $(\epsilon_m)_{m\in\N}$ such that $\epsilon_m>0$ for all $m\in\N$ and $\lim_{m\ra\infty}\epsilon_m=0$. To construct a fault-tolerant coding scheme, we consider quantum circuits $\Gamma^{E_m}:\M^{\otimes n_m}_2\ra\M^{\otimes m}_d$ and $\Gamma^{D_m}:\M^{\otimes m}_d\ra \M^{\otimes n_m}_2$ such that 
\begin{equation}\label{equ:ApproxCodes}
\max\lb \|\Gamma^{E_m} - E_m\|_{\diamond}, \|\Gamma^{D_m} - D_m\|_{\diamond}\rb\leq \epsilon_m .
\end{equation}
For each $m\in\N$ we choose the code $\mathcal{C}_{l_m}$ with $l_m\in\N$ large enough such that 
\begin{equation}\label{equ:LevelBound}
C\lb \frac{p}{p_0}\rb^{2^{l_m}}\lb |\Loc(\Gamma^{D_m})|+ |\Loc(\Gamma^{E_m})| + 2N(n_m,\epsilon_m) + m\rb \leq \exp\lb -m\frac{\delta^2}{3x}\rb,
\end{equation}
where $C>0$ is the constant from Theorem \ref{thm:EffCommChan}. Next, we define quantum circuits
\[
\Gamma^{E,m}\lb \mathcal{C}_{l_m}\rb = \lb\Dec^{\otimes j_1}_{l_m}\rb^{\otimes m}\circ \Gamma^{E_m}_{\mathcal{C}_{l_m}},
\]
and 
\[
\Gamma^{D,m}\lb \mathcal{C}_{l_m}\rb = \Gamma_{\mathcal{C}_{l_m}}^{D_m}\circ \lb\Enc^{\otimes j_2}_{l_m}\rb^{\otimes m},
\] 
for each $m\in\N$. Now, consider a pair of quantum circuits $\Gamma^1:\C^{2^{k_1}}\ra \M^{\otimes n_m}_2$ and $\Gamma^2:\M^{\otimes n_m}_2\ra \C^{2^{k_2}}$ for some $k_1,k_2\in\N$ such that 
\[
\max\lb |\Loc(\Gamma^1)|,|\Loc(\Gamma^2)|\rb\leq N(n_m,\epsilon_m).
\]
By Theorem \ref{thm:EffCommChan} and \eqref{equ:LevelBound} we have     
\begin{align}
\Big{\|} &\lbr\Gamma^2_{\mathcal{C}_{l_m}}\circ \Gamma^{D,m}\lb{\mathcal{C}_{l_m}}\rb\circ  I^{\otimes m}_q\circ \Gamma^{E,m}\lb{\mathcal{C}_{l_m}}\rb\circ \Gamma^1_{\mathcal{C}_{l_m}}\rbr_{\mathcal{F}_{\PIID(p)}}-\Gamma^2\circ \Gamma^{D_m}\circ I_{x,l_m}^{\otimes m}\circ (\Gamma^{E_m}\circ \Gamma^1\otimes \sigma_S)\Big{\|}_{1\ra 1} \nonumber\\
&\quad\quad\quad\quad\quad\quad\leq \exp\lb -m\frac{\delta^2}{3x}\rb,
\label{equ:DepolCapPrEqu1}
\end{align}
for every $m\in\N$, where the quantum channel $I_{x,l_m}:\M_d\otimes \M^{\otimes (7^{l_m}-1)}_2\ra\M_d$ is of the form 
\[
I_{x,l_m} = (1-x)\ident_d\otimes \Tr_S + x N_{l_m}
\]
for some quantum channel $N_{l_m}:\M_2\otimes \M^{\otimes (7^{l_m}-1)}_2\ra \M_2$ acting on a data qubit and the syndrome space, and $x<\alpha$ as in \eqref{equ:alphaCond}. For any $m\in\N$ and any syndrome state $\sigma_S\in \lb\M^{\otimes (7^{l_m}-1)}_2\rb^{\otimes m}$ we denote by $I_{x,l_m}^{\otimes m}\lb \cdot \otimes \sigma_S\rb:\M_d\ra \M_d$ the quantum channel given by 
\[
\M_d\ni X\mapsto I_{x,l_m}^{\otimes m}\lb X \otimes \sigma_S\rb\in \M_d, 
\]
where each $I_{x,l_m}$ acts partially on some part of the state $\sigma_S$. Now, we compute 
\begin{align}
I_{x,l_m}^{\otimes m}\lb \cdot \otimes \sigma_S\rb \nonumber &=\sum_{i_1,\ldots ,i_{m}\in\lset 0,1\rset} (1-x)^{m-\sum^{m}_{j=1}i_j}x^{\sum^{m}_{j=1}i_j}\lb\bigotimes_{j|i_j=0} \ident_2\rb\otimes \tilde{N}_{l_m}^{(i_1\ldots ,i_{m})} \nonumber\\
&=\sum^{\lfloor(x+\delta) m\rfloor}_{s=0} (1-x)^{m-s}x^s\sum_{i_1+i_2+\cdots i_{m}=s} \lb\bigotimes_{j|i_j=0} \ident_2\rb\otimes\tilde{N}_{l_m}^{(i_1\ldots ,i_{m})} \nonumber\\
&\quad\quad\quad\quad\quad\quad+ \text{P}\lb \frac{1}{m}\sum^{m}_{i=1} X_i >x+\delta\rb N,
\label{equ:DepolCapPrEqu2}
\end{align} 
where $\tilde{N}_{l_m}^{(i_1\ldots ,i_{m})}:\M^{\otimes \sum_{j}i_j}_2\ra \M^{\otimes \sum_{j}i_j}_2$ denotes a quantum channel acting on the tensor factors corresponding to the $j$ for which $i_j=1$ (constructed from tensor powers of $N_{l_m}$ acting partially on the state $\sigma_S$), and $N:\M^{\otimes m}_2\ra \M^{\otimes m}_2$ denotes a quantum channel collecting the second part of the sum. Furthermore, we introduced independent and identically distributed $\lset 0,1\rset$-valued random variables $X_i$ with $P(X_1=1)=x$. Finally, we can use that $\Gamma^{E_m}:\M^{\otimes n_m}_2\ra\M^{\otimes m}_2$ and $\Gamma^{D_m}:\M^{m}_2\ra\M^{\otimes n_m}_2$ approximate the coding operations of an asymptotically good code with goodness $\alpha$. Let $m_0\in\N$ be large enough such that $\lfloor(x+\delta) m\rfloor\leq \alpha m\leq t_m$ for any $m\geq m_0$, where $(t_m)_{m\in\N}$ denotes the sequence as in Definition \ref{defn:AsymptGoodCode} for the asymptotically good code given by $(E_m,D_m)$. Using first the triangle inequality with \eqref{equ:ApproxCodes} and then property 3.~from Definition \ref{defn:AsymptGoodCode} together with \eqref{equ:DepolCapPrEqu2} we find
\begin{align}
\|\ident^{m}_2 - \Gamma^{D_m}\circ I_{x,l_m}^{\otimes m}\circ (\Gamma^{E_m}\otimes \sigma_S)\|_\diamond &\leq \|\ident^{m}_2 - D_m\circ I_{x,l_m}^{\otimes m}\circ (E_m\otimes \sigma_S)\|_\diamond +2\epsilon_m\nonumber\\
&\leq 2\text{P}\lb \frac{1}{m}\sum^{m}_{i=1} X_i >x+\delta\rb + 2\epsilon_m \nonumber\\
&\leq 2\text{exp}\lb -m\frac{\delta^2}{3x}\rb + 2\epsilon_m,
\label{equ:DepolCapPrEqu3} 
\end{align}
where the final estimate is the Chernoff bound. Combining \eqref{equ:DepolCapPrEqu3} with \eqref{equ:DepolCapPrEqu1} using the triangle inequality, we find 
\begin{align*}
\| \Gamma^2\circ \Gamma^1 - \lbr\Gamma^2_{\mathcal{C}_{l_m}}\circ \Gamma^{D,m}\lb{\mathcal{C}_{l_m}}\rb\circ  I^{\otimes m}_q\circ \Gamma^{E,m}\lb{\mathcal{C}_{l_m}}\rb\circ \Gamma^1_{\mathcal{C}_{l_m}}\rbr_{\mathcal{F}_{\PIID(p)}}\|_{1\ra 1} \\
\leq 3\text{exp}\lb -m\frac{\delta^2}{3x}\rb + 2\epsilon_m .
\end{align*}
Therefore, we find that the codes $\mathcal{C}_{l_m}$ and the pairs $(\Gamma^{E,m}\lb\mathcal{C}_{l_m}\rb,\Gamma^{D,m}\lb\mathcal{C}_{l_m}\rb)$ define a sequence of $(n_m,m,\epsilon'_m)$ fault-tolerant coding schemes as in Definition \ref{defn:FTcodingSchemesQuantCap}. 
\end{proof}

Using the good codes constructed by Calderbank and Shor in~\cite{calderbank1996good} we obtain the following corollary:

\begin{cor}[Lower bound from good codes]
Let $p_0$ denote the threshold of the concatenated $7$-qubit Steane code (see Lemma \ref{lem:Strongthreshold}), $c>0$ the constant from Theorem \ref{thm:NiceInterfaceConcatenated}, and $\alpha_0$ the constant from \eqref{equ:alpha0}. For $p,q\in\lbr 0,1\rbr$ such that $p\leq \min(p_0/2,c^{-1})$ and $4cp+q\leq \alpha_0$ we have
\[
Q_{\mathcal{F}_{\PIID(p)}}\lb (1-q)\ident_2 + q T\rb\geq 1-2h_2\lb 8cp+2q\rb ,
\]
for any quantum channel $T:\M_2\ra\M_2$.

\end{cor}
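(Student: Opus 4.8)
The plan is to obtain the corollary as a direct application of Theorem~\ref{thm:FTCDepol}, specialized to $d=2$ (so $j=1$) and fed with the asymptotically good Calderbank--Shor codes~\cite{calderbank1996good}, followed by an optimization over the goodness parameter. Throughout write $x:=4cp+q$, so that the target inequality reads $Q_{\mathcal{F}_{\PIID(p)}}(I_q)\geq 1-2h_2(2x)$, where $I_q=(1-q)\ident_2+qT$ for arbitrary $T:\M_2\ra\M_2$ (note $8cp+2q=2x$).

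First I would dispose of the degenerate endpoint. If $x=\alpha_0$, then by the definition~\eqref{equ:alpha0} of $\alpha_0$ we have $h_2(2x)=h_2(2\alpha_0)=\frac{1}{2}$, hence $1-2h_2(2x)=0$, and the asserted inequality $Q_{\mathcal{F}_{\PIID(p)}}(I_q)\geq 0$ holds trivially. So from now on assume $x<\alpha_0$. Then, for each $\alpha$ in the open interval $(x,\alpha_0)$, the Calderbank--Shor construction supplies, for every rate $R<R(\alpha)=1-2h_2(2\alpha)$, an asymptotically good code of rate $R$ and goodness $\alpha$ in the sense of Definition~\ref{defn:AsymptGoodCode} with $d=2$. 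I would then verify the hypotheses~\eqref{equ:alphaCond} of Theorem~\ref{thm:FTCDepol}: with $j=1$ we have $4jcp+q=x<\alpha$, while the condition on $p$ follows from $p\leq\min(p_0/2,c^{-1})$ (noting that $x\leq\alpha_0<1$ already forces $p<c^{-1}$). Choosing $\delta:=(\alpha-x)/2>0$, so that $x+\delta<\alpha$, Theorem~\ref{thm:FTCDepol} provides $(n_m,m,\epsilon'_m)$-fault-tolerant coding schemes for $I_q$ under $\mathcal{F}_{\PIID(p)}$ of rate $R$ with $\epsilon'_m\ra 0$, hence $Q_{\mathcal{F}_{\PIID(p)}}(I_q)\geq R$ for every $R<R(\alpha)$, and therefore $Q_{\mathcal{F}_{\PIID(p)}}(I_q)\geq 1-2h_2(2\alpha)$.

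Finally I would let $\alpha\searrow x$ and use continuity of the binary entropy to conclude
\[
Q_{\mathcal{F}_{\PIID(p)}}(I_q)\geq 1-2h_2(2x)=1-2h_2(8cp+2q),
\]
which is the claim. There is no substantial obstacle here, since all the real work already sits inside Theorem~\ref{thm:FTCDepol}; the points that need attention are purely bookkeeping: staying strictly inside the window $(x,\alpha_0)$ so that the Calderbank--Shor rate formula $R(\alpha)=1-2h_2(2\alpha)$ and the strict requirement $x<\alpha$ of Theorem~\ref{thm:FTCDepol} both remain valid (the limit $\alpha\searrow x$ then closes the gap by continuity of $h_2$), the vacuous boundary case $x=\alpha_0$ dealt with above, and checking that the proof of Theorem~\ref{thm:FTCDepol} tolerates the endpoint value $p=p_0/2$ (as do Theorems~\ref{thm:NiceInterfaceConcatenated} and~\ref{thm:EffCommChan}, whose hypotheses on $p$ are already stated with a non-strict inequality).
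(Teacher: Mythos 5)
Your proposal is correct and follows exactly the route the paper intends: the corollary is a direct specialization of Theorem~\ref{thm:FTCDepol} to $d=2$ with the Calderbank--Shor asymptotically good codes of rate $R(\alpha)=1-2h_2(2\alpha)$, taking $\alpha\searrow 4cp+q$ and using continuity of $h_2$. Your handling of the trivial endpoint $4cp+q=\alpha_0$ and your remark on the non-strict bound $p\leq p_0/2$ are sensible bookkeeping points but do not change the argument.
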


\section{Conclusion and open problems}
\label{sec:Conclusion}

By combining techniques from fault-tolerant quantum computation and quantum Shannon theory, we have initiated the study of fault-tolerant quantum Shannon theory. We introduced fault-tolerant capacities for classical and quantum communication via quantum channels, and for classical communication via classical-quantum channels. These capacities take into account that the encoding and decoding operations in the usual definitions of capacities are inherently affected by noise. We proved threshold-type theorems for the fault-tolerant capacities showing that rates $\epsilon$-close to the usual capacities can be obtain for non-vanishing gate error probabilities below some threshold value depending on $\epsilon$ and the communication channel. In the case of classical-quantum channels $T:\mathcal{A}\ra\M_d$ the threshold only depends on $\epsilon$ and the output dimension $d$. We leave open the question whether such ``uniform'' threshold theorems also hold for the classical and quantum capacity of a quantum channel.

Although we have focused on capacities and optimal achievable communication rates our results also apply for specific codes. As an example we considered fault-tolerant quantum communication schemes based on asymptotically good codes and via quantum channels of a specific form. Similar to the threshold theorem from~\cite{aliferis2006quantum}, protecting a specific coding scheme against Pauli i.i.d.~noise (of strength below the threshold) requires only a polylogarithmic overhead in the size of the quantum circuit implementing it. It will then yield a fault-tolerant coding scheme if the ideal coding scheme corrects the errors introduced by the effective quantum channel induced by the interfaces (cf.~Theorem \ref{thm:EffCommChan}).    

In future research it would be interesting to extend our results to other communication scenarios, such as private communication, quantum communication assisted by classical communication, or communication scenarios between multiple parties. Finally, it would also be interesting to study the effects of different circuit noise models on the corresponding fault-tolerant capacities.

\section{Acknowledgements}

We thank Gorjan Alagic and Hector Bombin for their contributions during early stages of this project. Moreover, we thank Paula Belzig, Omar Fawzi and Milan Mosonyi for helpful comments on an earlier version of this article. MC acknowledges financial support from the European Research Council (ERC Grant Agreement No. 81876), VILLUM FONDEN via the QMATH Centre of Excellence (Grant No.10059) and the  QuantERA ERA-NET Cofund in Quantum Technologies implemented within the European Union's Horizon 2020 Programme (QuantAlgo project) via the Innovation Fund Denmark. AMH acknowledges funding from the European Union's Horizon 2020 research and innovation programme under the Marie Sk\l odowska-Curie Action TIPTOP (grant no. 843414).

\appendix

\section{Concatenated quantum error correcting codes}
\label{sec:Appendix}

In this appendix we review basic facts about concatenated quantum codes~\cite{knill1996concatenated}.

\subsection{7-qubit Steane code}
\label{sec:7qubitSteane}

Let $V:\C^2\ra(\C^2)^{\otimes 7}$ denote the encoding isometry for the $7$-qubit Steane code~\cite{steane1996multiple} given by 
\begin{align*}
\ket{\overline{0}}=V\ket{0} = \frac{1}{8}( \ket{0000000}&+\ket{0001111}+ \ket{0110011}+ \ket{1010101}\\
&+ \ket{0111100}+ \ket{1011010}+ \ket{1100110}+ \ket{1101001}) 
\end{align*}
and 
\begin{align*}
\ket{\overline{1}}=V\ket{1} = \frac{1}{8}( \ket{1111111}&+\ket{1110000}+ \ket{1001100}+ \ket{0101010}\\
&+ \ket{1000011}+ \ket{0100101}+ \ket{0011001}+ \ket{0010110}).
\end{align*}
This code is a stabilizer code with the following stabilizer generators:
\begin{align*}
g_1 &= \id\otimes \id\otimes \id\otimes X\otimes X\otimes X\otimes X \\
g_2 &= \id\otimes X\otimes X\otimes \id\otimes \id\otimes X\otimes X \\
g_3 &= X\otimes \id\otimes X\otimes \id\otimes X\otimes \id\otimes X \\
g_4 &= \id\otimes \id\otimes \id\otimes Z\otimes Z\otimes Z\otimes Z \\
g_5 &= \id\otimes Z\otimes Z\otimes \id\otimes \id\otimes Z\otimes Z \\
g_6 &= Z\otimes \id\otimes Z\otimes \id\otimes Z\otimes \id\otimes Z .
\end{align*}
As a stabilizer code, the codewords $\ket{\overline{0}}$ and $\ket{\overline{1}}$ arise as the common eigenvectors for the eigenvalue $+1$ of the commuting set of Hermitian involutions $g_1,g_2,\ldots ,g_6$. As explained in Section \ref{sec:AnalyzingNoisyQuantumCircuits}, we can define a subspace $W_s\subset (\C^{2})^{\otimes 7}$ for each syndrome $s\in\F^6_2$ as the space of common eigenvectors for the eigenvalues $(-1)^{s_i}$ with respect to each $g_i$. Then, we find that 
\begin{equation}
\lb\C^2\rb^{\otimes 7} = \bigoplus_{s\in\F^6_2} W_s ,
\label{equ:compl7Qubit}
\end{equation}
and we can introduce the error basis
\[
\bigcup_{s\in\F^6_2} \lset E_s\ket{\overline{0}}, E_s\ket{\overline{1}}\rset
\]
of $\lb\C^2\rb^{\otimes 7}$ with Pauli operators $E_s$ associated to the syndrome $s\in\F^6_2$ and such that
\begin{equation}
W_s = \text{span}\lset E_s\ket{\overline{0}},E_s\ket{\overline{1}}\rset. 
\label{equ:basisWs}
\end{equation}
Finally, we define the unitary map $D:(\C^2)^{\otimes 7}\ra \C^2\otimes (\C^2)^{\otimes 6}$ by
\[
D\lb E_s\ket{\overline{i}}\rb = \ket{i}\otimes \ket{s}, 
\]
extended linearily giving rise to the ideal decoder $\Dec^*:\M^{\otimes 7}_{2}\ra\M_2\otimes \M^{\otimes 6}_2$ via 
\[
\Dec^* = \text{Ad}_{D},
\]  
and its inverse, the ideal encoder, $\Enc^*:\M_{2}\otimes \M^{\otimes 6}_2\ra\M^{\otimes 7}_2$ via
\[
\Enc^* = \text{Ad}_{D^\dagger}.
\]
For more details on quantum error correcting codes and the stabilizer formalism see~\cite{gottesman1997stabilizer}.

\subsection{Code concatenation}
\label{sec:CodeCon}

The $7$-qubit Steane code can be used to construct a concatenated code~\cite{knill1996concatenated} achieving higher protection against noise. To define this concatenated code we recursively define the encoding isometry $V^{(l)}:\C^2\ra\C^{7^l}$ encoding a single (logical) qubit into $7^l$ (physical) qubits via
\[
V^{(1)} = V \quad\text{ and }\quad V^{(l)} = V^{\otimes 7^{l-1}}\circ V^{(l-1)} \text{ for any }l\geq 2.
\] 
Note that in this way we have 
\[
V^{(l)} = V^{\otimes 7^{l-1}}\circ V^{\otimes 7^{l-2}}\circ \cdots\circ V^{\otimes 7}\circ V,
\]
and regrouping of this equation leads to the identity
\begin{equation}
V^{\otimes 7^{l-1}}\circ V^{(l-1)} = (V^{(l-1)})^{\otimes 7}\circ V.
\label{equ:Videntity}
\end{equation}
Recall that the syndrome space of the $7$-qubit Steane code consist of $6$ qubits. By the previous discussion, we see that the syndrome space of the $k$th level of the concatenated $7$-qubit Steane code consists of 
\[
6\sum^{l-1}_{i=0} 7^i = 7^{l}-1
\]
qubits. This is not surprising since we encode a single qubit into $7^l$ qubits using a code satisfying \eqref{equ:compl7Qubit} and \eqref{equ:basisWs}. Next, we recursively define the ideal operations 
\[
\Enc^*_{l}:\M_2\otimes \M^{\otimes (7^{l}-1)}_2\ra \M^{\otimes 7^l}_2 \quad\text{ and }\quad \Dec^{*}_{l}:\M^{\otimes 7^l}_2\ra\M_2\otimes \M^{\otimes (7^{l}-1)}_2,
\]
for every level $l\in\N$ by 
\begin{equation}
\Enc^{*}_{1}=\Enc^*, \quad\text{ and }\quad \Enc^{*}_{l} = \lbr\Enc^*_{1}\rbr^{\otimes 7^{l-1}}\circ \lbr \Enc^*_{l-1}\otimes \ident^{\otimes (7^{l-1}6)}_2\rbr \text{ for any }l\geq 2,
\label{equ:defnEncl}
\end{equation}
and 
\begin{equation}
\Dec^{*}_{1}=\Dec^*, \quad\text{ and }\quad \Dec^{*}_{l} = \lbr\Dec^{*}_{1} \otimes \ident^{\otimes 7(7^{l-1}-1)}_2\rbr\circ \lbr\Dec^{*}_{l-1}\rbr^{\otimes 7} \text{ for any }l\geq 2,
\label{equ:defnDecl}
\end{equation}
where we reordered the tensor factors such that ideal operations executed later in the circuit do not act on the syndrome output of operations executed earlier in the circuit. Again we can expand the previous recursions and verify by regrouping that 
\begin{equation}
\lbr\Enc^*_{1}\rbr^{\otimes 7^{l-1}}\circ \lbr \Enc^*_{l-1}\otimes \ident^{\otimes 7^{l-1}6}_2\rbr = \lbr\Enc^{*}_{l-1}\rbr^{\otimes 7}\circ \lbr\Enc^{*}_{1} \otimes \ident^{\otimes 7(7^{l-1}-1)}_2\rbr ,
\label{equ:EncRelation}
\end{equation}
and
\begin{equation}
\lbr\Dec^{*}_{1} \otimes \ident^{\otimes 7(7^{l-1}-1)}_2\rbr\circ \lbr\Dec^{*}_{l-1}\rbr^{\otimes 7} = \lbr \Dec^*_{l-1}\otimes \ident^{\otimes 7^{l-1}6}_2\rbr \circ \lbr\Dec^*_{1}\rbr^{\otimes 7^{l-1}},
\label{equ:DecRelation}
\end{equation}
where we again ordered the tensor factors appropriately. 

\subsection{Explicit interface from level $0$ to level $1$}
\label{Sec:InterfaceExpl}

In this section, we will explicitly construct an interface for the first level of the $7$-qubit Steane code using a simple teleportation circuit. We should emphasize that this construction is certainly well-known and it works also for more general quantum error correcting codes. We state it here for convenience, so that the constructions in Section \ref{sec:NoisyInterface} can be made explicit. 

Let $\omega_{(0,1)}\in\lb\M_2\otimes \M^{\otimes 7}_2\rb^+$ denote a maximally entangled state between a physical qubit and the $7$-qubit Steane code (i.e., the first level of the concatenated code). Specifically, we define $\omega_{(0,1)} := \proj{\Omega_{(0,1)}}{\Omega_{(0,1)}}$ for
\begin{equation}
\ket{\Omega_{(0,1)}} = \frac{1}{\sqrt{2}}\lb\ket{0}\otimes \ket{\overline{0}_1} + \ket{1}\otimes \ket{\overline{1}_1} \rb,
\label{equ:Omega}
\end{equation}
where $\lset\ket{\overline{i}_1}\rset^1_{i=0}$ denotes the computational basis in the $7$-qubit Steane code. Note that $\omega_{(0,1)}$ can be prepared using elementary gates (Hadamard and CNOT gates) only, and we will denote this preparation circuit by $\text{Prep}\lb\omega_{(0,1)}\rb$. 
\begin{figure*}[t!]
        \center
        \includegraphics[scale=0.7]{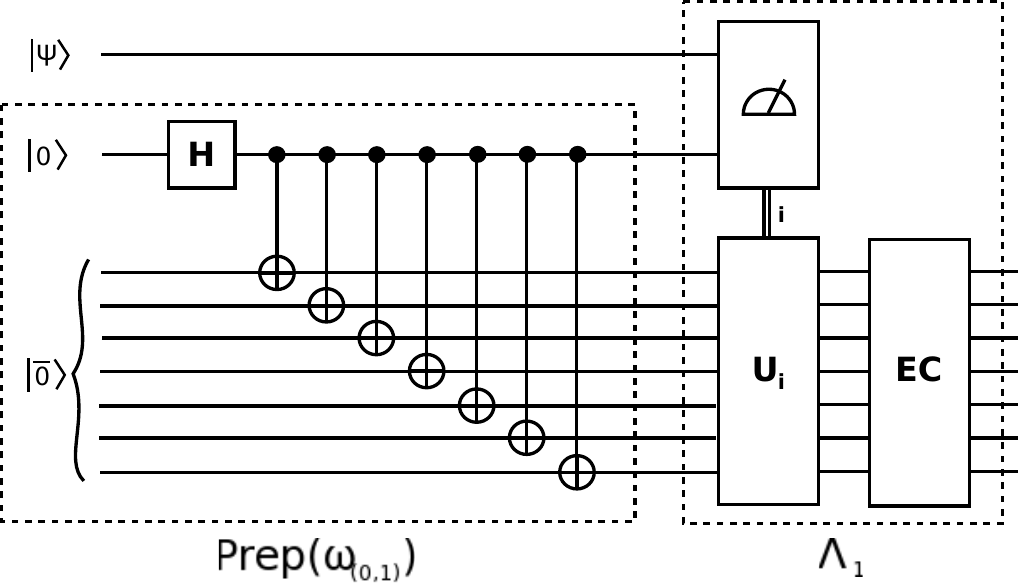}
        \caption{The circuit $\Enc_{0\ra 1}$ encoding the unknown state $\ket{\psi}$ into the first level of the $7$-qubit Steane code. Note that the logical CNOT gate is implemented in this code by applying elementary CNOT gates to each physical qubit. Here, $U_i$ denotes a certain Pauli gate (implemented in the $7$-qubit Steane code) depending on the outcome $i\in\lset 0,1,2,3\rset$ of the Bell measurement, and $\EC$ denotes the error correction of the code.}
        \label{fig:circuit1}
\end{figure*}
Now, we define
\begin{equation}
\Enc_{0\ra 1}\lb \cdot\rb := \Lambda_1\lb \cdot\otimes \text{Prep}\lb\omega_{(0,1)}\rb \rb.
\label{equ:Enc01}
\end{equation}
Here, $\Lambda_1:\M_2 \otimes \M_2 \otimes \M^{\otimes 7}_2\ra \M^{\otimes 7}_2$ denotes the teleportation protocol, i.e., measuring the two first registers in the Bell basis
\begin{align*}
\ket{\phi_1} &= \ket{\Omega_2}, \\
\ket{\phi_2} &= (\one_2\otimes \sigma_z)\ket{\Omega_2},\\
\ket{\phi_3} &= (\one_2\otimes \sigma_x)\ket{\Omega_2}, \\
\ket{\phi_4} &= (\one_2\otimes \sigma_x\sigma_z)\ket{\Omega_2},
\end{align*} 
and then depending on the measurement outcome $i\in\lset 1,2,3,4\rset$ performing the $1$-rectangle corresponding to the unitary gate $U_i\in\mathcal{U}_2$ on the $7$-qubit Steane code space, where 
\begin{align*}
U_1 = \one_2, \quad U_2 = \sigma_z, \quad U_3 = \sigma_x, \quad U_4 = \sigma_x\sigma_z .
\end{align*}
Again, we note that $\Lambda_1$ is a quantum circuit ending in an error correction. See Figure \ref{fig:circuit1} for a circuit diagram of the quantum circuit $\Enc_{0\ra 1}$.

Similarily, we denote by $\Lambda_2:\M_2 \otimes \M^{\otimes 7}_2 \otimes \M^{\otimes 7}_2\ra \M_2$ the teleportation protocol measuring the two final registers in the Bell basis (implemented in the $7$-qubit Steane code) and performing the unitary gate stated above depending on the measurement outcome on the remaining qubit system. Then, we define
\begin{equation}
\Dec_{1\ra 0}\lb \cdot\rb := \Lambda_2\lb \text{Prep}\lb\omega_{(0,1)}\rb \otimes \cdot\rb.
\label{equ:Dec10}
\end{equation}
Again, this is a quantum circuit, and the circuit diagram can be seen in Figure \ref{fig:circuit2}.

\begin{figure*}[t!]
        \center
        \includegraphics[scale=0.7]{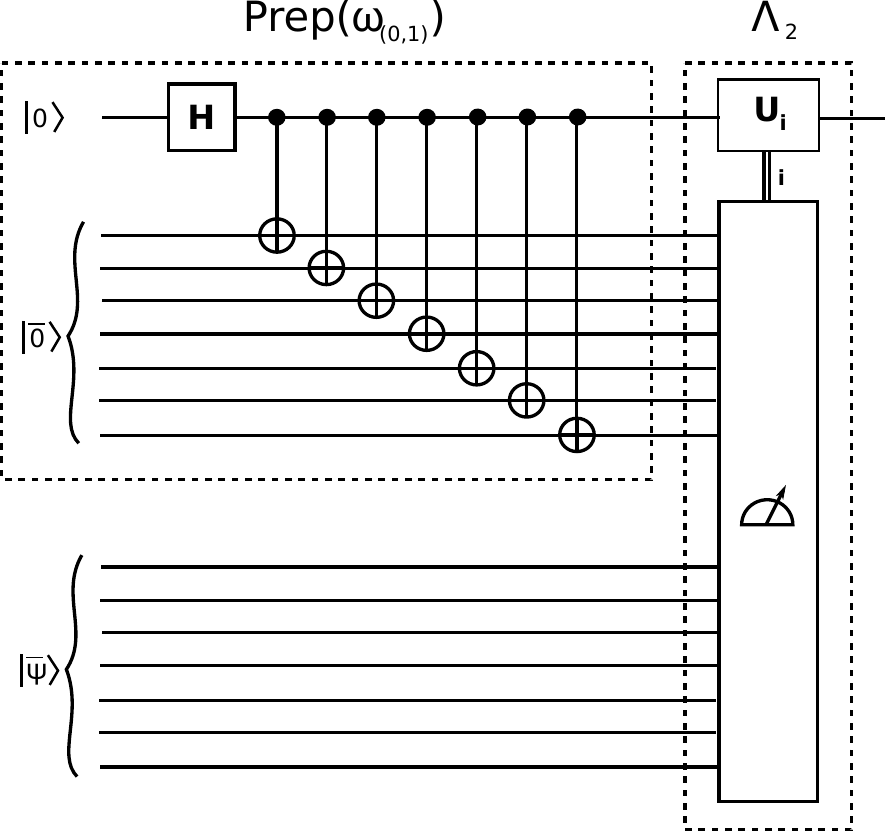}
        \caption{The circuit $\Dec_{0\ra 1}$ decoding the unknown state $\ket{\overline{\psi}}$ from the first level of the $7$-qubit Steane code to a physical qubit. Here, $U_i$ denotes a certain Pauli gate (applied to the physical qubit) depending on the outcome $i\in\lset 0,1,2,3\rset$ of the Bell measurement.}
        \label{fig:circuit2}
\end{figure*}

\section{Chasing constants in weak typicality}

To prove our main results, we need precise error bounds in the direct parts of the classical and quantum capacity theorems (cf.~Theorem \ref{thm:HSWQChannel} and Theorem \ref{thm:LSD}). Such bounds can be obtained by chasing the constants appearing in the proofs of these results, and by identifying the precise dependence of the final error on the number of channel uses. Most proofs of Theorem \ref{thm:HSWQChannel} and Theorem \ref{thm:LSD} in the literature (see e.g., \cite{devetak2005private,hayden2008decoupling,wilde2017quantum,watrous2018theory}) are based on different notions of typicality. In this appendix we summarize the neccessary results and obtain explicit error bounds needed to chase the constants in the capacity theorems. We should emphasize that the following results are well-known, and we merely extracted them from the literature (in particular from~\cite{wilde2017quantum}) making some bounds explicit.  

For each $n\in\N$ let $X^n=(X_1,\ldots ,X_n)$ be an $n$-tuple of i.i.d.~random variables $X_i\sim X$ with values in a finite set $\mathcal{A}$. We denote by $\text{supp}(X)=\lset x\in\mathcal{A}~:~p_X(x)\neq 0\rset$ and the Shannon entropy by
\[
H(X) = -\sum_{x\in \mathcal{A}} p_X(x)\log(p_X(x)).
\] 
The set of $\delta$-typical sequences of length $n\in\N$ is given by
\[
T^{X^n}_\delta := \lset x^n\in\mathcal{A}^n ~|~ \Big{|}-\frac{1}{n}\log\lb p_{X^n}(x^n)\rb - H(X)\Big{|}\leq \delta \rset,
\]
where $p_{X^n}(x^n)=\Pi^n_{i=1}p_X(x_i)$. We first note the following simple property that follows immediately from the definition:

\begin{thm}
For any $\delta>0$ we have
\[
p_{X^n}(x^n)\leq 2^{-n(H(X)-\delta)},
\]
for any $x^n\in T^{X^n}_\delta$.
\label{thm:equipartition}
\end{thm}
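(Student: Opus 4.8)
The statement to prove is Theorem~\ref{thm:equipartition}: for any $\delta>0$ and any $x^n\in T^{X^n}_\delta$, we have $p_{X^n}(x^n)\leq 2^{-n(H(X)-\delta)}$.

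The plan is to unpack the definition of the typical set directly. By definition, $x^n\in T^{X^n}_\delta$ means that
\[
\Big| -\tfrac{1}{n}\log\lb p_{X^n}(x^n)\rb - H(X)\Big| \leq \delta,
\]
which in particular gives the one-sided bound $-\tfrac{1}{n}\log\lb p_{X^n}(x^n)\rb \geq H(X) - \delta$. Multiplying by $-n$ (and reversing the inequality) yields $\log\lb p_{X^n}(x^n)\rb \leq -n\lb H(X)-\delta\rb$, and exponentiating base $2$ gives the claim. The only thing to be careful about is the edge case $p_{X^n}(x^n)=0$, where $\log$ is $-\infty$; but such a sequence is never in $T^{X^n}_\delta$ (the defining quantity is not well-defined, or one interprets the constraint as excluding it), so the bound holds vacuously there. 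Since this is a one-line consequence of the definition, there is no real obstacle — the ``hard part'' is merely bookkeeping with the sign when dividing by $-n$.

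\begin{proof}
Let $\delta>0$ and fix $x^n\in T^{X^n}_\delta$. In particular $p_{X^n}(x^n)>0$, so $\log\lb p_{X^n}(x^n)\rb$ is finite. By definition of the typical set,
\[
\Big|-\frac{1}{n}\log\lb p_{X^n}(x^n)\rb - H(X)\Big|\leq \delta,
\]
and hence
\[
-\frac{1}{n}\log\lb p_{X^n}(x^n)\rb \geq H(X)-\delta.
\]
Multiplying both sides by $-n<0$ reverses the inequality and gives
\[
\log\lb p_{X^n}(x^n)\rb \leq -n\lb H(X)-\delta\rb.
\]
Exponentiating with base $2$ yields $p_{X^n}(x^n)\leq 2^{-n(H(X)-\delta)}$, as claimed.
\end{proof}
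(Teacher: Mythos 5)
Your proof is correct and matches the paper's intent exactly: the paper states this result as following immediately from the definition of $T^{X^n}_\delta$, and your argument is precisely that one-sided unpacking of the defining inequality. Nothing further is needed.
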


The following theorem shows that with high probability the random variable $X^n$ takes values in the typical set $T^{X^n}_\delta$.

\begin{thm}
For $p_{\min} = \min_{x\in\text{supp}\lb X\rb}p_X\lb x\rb$ and any $\delta >0$ we have 
\[
P\lb X^n\in T^{X^n}_{\delta}\rb \geq 1 - \exp\lb \frac{-2n\delta^2}{\log(p_{\min})^2}\rb.
\]
\label{thm:weaktypConc}
\end{thm}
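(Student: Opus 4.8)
The plan is to rewrite the event $\lset X^n\in T^{X^n}_\delta\rset$ as a concentration statement for a sum of independent, identically distributed, bounded random variables, and then to invoke a Hoeffding-type tail bound. This is the standard route to the quantitative asymptotic equipartition property, and nothing beyond elementary probability is needed.

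First I would set $Z_i := -\log\lb p_X(X_i)\rb$ for $i=1,\ldots ,n$. Since $X_1,\ldots ,X_n$ are i.i.d.\ copies of $X$, the $Z_i$ are i.i.d., and by definition of the Shannon entropy $\mathbb{E}\lbr Z_i\rbr = -\sum_{x\in\mathcal{A}}p_X(x)\log\lb p_X(x)\rb = H(X)$. Moreover $X_i$ takes values in $\text{supp}(X)$ almost surely, and $p_X(x)\in(0,1]$ for $x\in\text{supp}(X)$, so that $0\leq Z_i\leq -\log\lb p_{\min}\rb$ almost surely; in particular each $Z_i$ is supported in an interval of length $|\log\lb p_{\min}\rb|$. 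Using $p_{X^n}(x^n)=\prod^n_{i=1}p_X(x_i)$ one has the identity
\[
-\frac{1}{n}\log\lb p_{X^n}(X^n)\rb = \frac{1}{n}\sum^n_{i=1} Z_i ,
\]
so that, directly from the definition of the typical set,
\[
\lset X^n\notin T^{X^n}_\delta\rset = \lset \Big| \frac{1}{n}\sum^n_{i=1} Z_i - H(X)\Big| > \delta\rset .
\]

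Next I would apply Hoeffding's inequality to the bounded i.i.d.\ sum $\sum^n_{i=1}Z_i$. Here $\sum^n_{i=1}\lb b_i-a_i\rb^2 = n\log\lb p_{\min}\rb^2$, so the deviation event above has probability at most of order $\exp\lb -2n\delta^2/\log\lb p_{\min}\rb^2\rb$, whence
\[
\text{P}\lb X^n\in T^{X^n}_\delta\rb = 1 - \text{P}\lb X^n\notin T^{X^n}_\delta\rb \geq 1 - \exp\lb \frac{-2n\delta^2}{\log\lb p_{\min}\rb^2}\rb ,
\]
which is the assertion. (The textbook two-sided Hoeffding bound nominally carries a harmless factor $2$; it can be dropped by noting that for $\delta\geq H(X)$ only the upper tail is possible, and in any case the constant in front is irrelevant in every subsequent application of this theorem.)

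I do not expect a genuine obstacle here. The only two points requiring a moment's care are: verifying that $Z_i$ is almost surely bounded — which is precisely where the restriction to $\text{supp}(X)$ in the definition of $p_{\min}$ is used — and matching the quantity $\log\lb p_{\min}\rb^2$ in the statement with $\lb b-a\rb^2$ in Hoeffding's inequality. If one prefers not to quote Hoeffding, the same estimate can be obtained by a direct Cram\'er--Chernoff computation on $\sum_i Z_i$, at the cost of a slightly less transparent constant.
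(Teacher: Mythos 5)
Your proposal is correct and matches the paper's own proof essentially verbatim: the paper also sets $Z_i = -\log\lb p_X(X_i)\rb$ (written there via indicator variables $I_x(X_i)$), notes $Z_i\in\lbr 0,-\log\lb p_{\min}\rb\rbr$ with mean $H(X)$, and applies Hoeffding's inequality to the deviation event defining $T^{X^n}_\delta$. Your side remark about the factor $2$ in the two-sided Hoeffding bound applies equally to the paper's statement, and as you say it is immaterial for every later use of the theorem.
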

\begin{proof}
For each $x\in\mathcal{A}$ we define an i.i.d.~sequence of indicator random variables as 
\[
I_x(X_i) = \begin{cases} 1 , \text{ if }X_i=x \\
0, \text{ else. }\end{cases} 
\] 
Now consider the $n$-tuples of i.i.d.~random variables $Z^n=(Z_1,\ldots ,Z_n)$ defined as
\[
Z_i = -\sum_{x\in\text{supp}\lb X\rb} I_x\lb X_i\rb\log(p_X(x)).
\]
Note that $\mathbbm{E}(Z_i) = H(X)$ and $Z_i\in\lbr 0, -\log\lb p_{\min}\rb\rbr$ almost surely. By Hoeffding's inequality we have 
\[
P\lb \left|\frac{1}{n}\sum^n_{i=1} Z_i - H(X)\right| > \delta\rb \leq \exp\lb\frac{-2n\delta^2}{\log(p_{\min})^2}\rb.
\]
Now note that $x^n\in \mathcal{A}^n$ with $p_{X^n}(x^n)>0$ satisfies $x^n\in T^{X^n}_\delta$ if and only if
\begin{align*}
-\frac{1}{n}\log\lb p_{X^n}\lb x^n\rb\rb &= -\frac{1}{n}\sum^n_{i=1}\log\lb p_X(x_i)\rb \\
&= -\frac{1}{n}\sum^n_{i=1}\sum_{x\in\text{supp}(X)} I_x(x_i)\log\lb p_X(x)\rb\in \lbr H(X)-\delta , H(X)+\delta\rbr .
\end{align*}
Therefore, we have
\begin{align*}
P\lb X^n\in T^{X^n}_{\delta}\rb &= 1-P\lb X^n\notin T^{X^n}_{\delta}\rb \\ 
&= 1-P\lb -\frac{1}{n}\sum^n_{i=1}\sum_{x\in\text{supp}(X)} I_x(X_i)\log\lb p_X(x)\rb\notin \lbr H(X)-\delta , H(X)+\delta\rbr\rb \\
&=1-P\lb \left|\frac{1}{n}\sum^n_{i=1} Z_i - H(X)\right|>\delta\rb \\
& \geq  1-\exp\lb\frac{-2n\delta^2}{\log(p_{\min})^2}\rb.
\end{align*}
\end{proof}
For each $n\in\N$ let $(X,Y)^n=\lb (X_1,Y_1),\ldots ,(X_n,Y_n)\rb$ be an $n$-tuple of i.i.d.~pairs of random variables $(X_i,Y_i)\sim (X,Y)$ with values in the finite product set $\mathcal{A}\times \mathcal{B}$. We define $\text{supp}\lb X,Y\rb=\lset (x,y)\in\mathcal{A}\times \mathcal{B}~:~p_{X,Y}(x,y)\neq 0\rset$ and the joint Shannon entropy by
\[
H(X,Y) = -\sum_{(x,y)\in \mathcal{A}\times \mathcal{B}} p_{X,Y}(x,y)\log(p_{(X,Y)}(x,y)).
\] 
The set of conditional $\delta$-typical sequences of length $n\in\N$ conditioned onto a sequence $x^n\in\mathcal{A}^n$ satisfying $p_{X^n}\lb x^n\rb>0$ is given by
\[
T^{Y^n|x^n}_\delta := \lset y^n\in\mathcal{B}^n ~|~ \Big{|}-\frac{1}{n}\log\lb\frac{ p_{X^n,Y^n}(x^n,y^n)}{p_{X^n}\lb x^n\rb}\rb - H(X,Y) + H(X)\Big{|}\leq \delta \rset.
\]
Here, $p_X$ denotes the marginal probability distribution $p_X(x) = \sum_{y\in \mathcal{B}}p_{X,Y}\lb x,y\rb$.

\begin{thm}
For $r_{\min} = \min_{(x,y)\in\text{supp}\lb X,Y\rb}\lb \frac{p_{X,Y}\lb x,y\rb}{p_X(x)}\rb$ and any $\delta >0$ we have
\[
\mathbbm{E}_{X^n} P_{Y^n|X^n}\lb Y^n\in T^{Y^n|X^n}_{\delta} \rb \geq 1 - \exp\lb\frac{-2n\delta^2}{\log(r_{\min})^2}\rb.
\]
\label{thm:weakcondTypConc}
\end{thm}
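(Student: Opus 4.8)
The plan is to follow the structure of the proof of Theorem~\ref{thm:weaktypConc}, replacing the single-variable indicators by indicators on the product alphabet and then recognizing the averaged conditional probability as an ordinary joint probability over the i.i.d.\ pairs. First, fix $\delta>0$ and, for each $(x,y)\in\text{supp}(X,Y)$, introduce the i.i.d.\ indicator random variables $I_{x,y}(X_i,Y_i)$ equal to $1$ if $(X_i,Y_i)=(x,y)$ and $0$ otherwise. Define the i.i.d.\ real random variables
\[
Z_i = -\sum_{(x,y)\in\text{supp}(X,Y)} I_{x,y}(X_i,Y_i)\log\lb\frac{p_{X,Y}(x,y)}{p_X(x)}\rb,
\]
which depend only on the pair $(X_i,Y_i)$. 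A direct computation gives $\mathbbm{E}(Z_i)=H(X,Y)-H(X)$, and since for each outcome exactly one summand is nonzero while every ratio $p_{X,Y}(x,y)/p_X(x)=p_{Y|X}(y|x)$ lies in $(0,1]$, we have $Z_i\in\lbr 0,-\log(r_{\min})\rbr$ almost surely.

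Next, identify the event in the statement with a deviation event for the sample mean of the $Z_i$. For any $x^n$ with $p_{X^n}(x^n)>0$ and any $y^n$ with $p_{X^n,Y^n}(x^n,y^n)>0$,
\[
-\frac1n\log\lb\frac{p_{X^n,Y^n}(x^n,y^n)}{p_{X^n}(x^n)}\rb = -\frac1n\sum_{i=1}^n\log\lb\frac{p_{X,Y}(x_i,y_i)}{p_X(x_i)}\rb,
\]
so $y^n\in T^{Y^n|x^n}_\delta$ exactly when the sample mean of the corresponding values of the $Z_i$ is within $\delta$ of $\mathbbm{E}(Z_1)$ (sequences with $p_{X^n,Y^n}(x^n,y^n)=0$ lie outside the typical set and carry zero conditional probability, so may be ignored). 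Since $P_{Y^n|X^n}(y^n\mid x^n)=\prod_{i=1}^n p_{Y|X}(y_i\mid x_i)$, the tower property collapses the average into a joint probability over the i.i.d.\ pairs:
\[
\mathbbm{E}_{X^n}\,P_{Y^n|X^n}\lb Y^n\in T^{Y^n|X^n}_\delta\rb = P\lb\Abs{\frac1n\sum_{i=1}^n Z_i - \mathbbm{E}(Z_1)}\leq\delta\rb.
\]

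Finally, apply Hoeffding's inequality to the bounded i.i.d.\ variables $Z_i\in\lbr 0,-\log(r_{\min})\rbr$, exactly as in the proof of Theorem~\ref{thm:weaktypConc}, to obtain
\[
P\lb\Abs{\frac1n\sum_{i=1}^n Z_i - \mathbbm{E}(Z_1)}>\delta\rb\leq \exp\lb\frac{-2n\delta^2}{\log(r_{\min})^2}\rb,
\]
and combine this with the preceding identity. The only points needing care are the bookkeeping that turns $\mathbbm{E}_{X^n}P_{Y^n|X^n}$ into a joint probability and the degenerate case $r_{\min}=1$ (in which $H(Y|X)=0$, the conditional typical set is a singleton of conditional probability one, and the inequality holds trivially with the right-hand side interpreted as $0$); neither constitutes a genuine obstacle, so I expect the argument to be routine once Theorem~\ref{thm:weaktypConc} is in hand.
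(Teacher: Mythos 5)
Your proposal is correct and follows essentially the same route as the paper's proof: the same indicator construction, the same bounded i.i.d.\ variables $Z_i$ with mean $H(X,Y)-H(X)$ and range $\lbr 0,-\log(r_{\min})\rbr$, Hoeffding's inequality, and the law of total probability to turn $\mathbbm{E}_{X^n}P_{Y^n|X^n}$ into a joint probability over the i.i.d.\ pairs. Your extra remark on the degenerate case $r_{\min}=1$ is fine but not needed for the argument.
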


\begin{proof}
For each pair $(x,y)\in\mathcal{A}\times \mathcal{B}$ define the sequence of indicator random variables
\[
I_{x,y}\lb X_i,Y_i\rb = \begin{cases} 1, \text{ if }(X_i,Y_i) = (x,y) \\ 0, \text{ else.}\end{cases}
\]
Now define $n$-tuples of i.i.d.~random variables $Z^n=(Z_1,\ldots ,Z_n)$ given by 
\[
Z_i = -\sum_{(x,y)\in\text{supp}\lb X,Y\rb} I_{(x,y)}\lb X_i, Y_i\rb \lbr\log\lb p_{X,Y}\lb x,y\rb\rb - \log(p_X\lb x\rb)\rbr.
\]
Note that $\mathbb{E}\lb Z_i\rb = H(X,Y) - H(X)$ and $Z_i\in \lbr 0, -\log(r_{\min})\rbr$ almost surely. By Hoeffding's inequality we have
\[
P\lb \Big{|}\frac{1}{n}\sum^n_{i=1} Z_i - H(X,Y) + H(X)\Big{|} >\delta\rb\leq \exp\lb\frac{-2n\delta^2}{\log(r_{\min})^2}\rb.
\]
Note that for a given $n$-tuple $x^n\in\mathcal{A}^n$ satisfying $p_{X^n}(x^n)>0$ and $y^n\in\mathcal{B}^n$ satisfying $p_{X^n,Y^n}(x^n,y^n)>0$ we have $y^n\in T^{Y^n|x^n}_{\delta}$ if and only if
\begin{align*}
-\frac{1}{n}\log\lb \frac{p_{X^n,Y^n}\lb x^n,y^n\rb}{p_{X^n}\lb x^n\rb}\rb &= -\frac{1}{n}\sum^n_{i=1}\lbr \log(p_{X,Y}(x_i,y_i)) - \log(p_X(x_i))\rbr \\
&= -\frac{1}{n}\sum^n_{i=1}\sum_{(x,y)\in\text{supp}\lb X,Y\rb}I_{(x,y)}(x_i,y_i)\lbr\log(p_{X,Y}(x,y)) - \log(p_X(x))\rbr\\
&\in \lbr H(X,Y) -H(X) -\delta, H(X,Y) - H(X) +\delta\rbr =: I_\delta.
\end{align*}
Using the law of total probability, we obtain
\begin{align*}
&\mathbbm{E}_{X^n} P_{Y^n|X^n}\lb Y^n\in T^{Y^n|X^n}_{\delta}\rb = \sum_{x^n\in\text{supp}(X^n)} P_{X^n}(x^n)P_{Y^n|X^n}\lb Y^n\in T^{Y^n|X^n}_{\delta}\rb \\
&=P\lb Y^n\in T^{Y^n|X^n}_{\delta}\rb \\
&=1- P\lb Y^n\notin T^{Y^n|X^n}_{\delta}\rb \\
& = 1 - P\lb -\frac{1}{n}\sum^n_{i=1}\sum_{(x,y)\in\text{supp}\lb X,Y\rb}I_{(x,y)}(X_i,Y_i)\lbr\log(p_{X,Y}(x,y)) - \log(p_X(x))\rbr\notin I_\delta\rb\\
& = 1 - P\lb \Big{|}\frac{1}{n}\sum^n_{i=1} Z_i - H(X,Y) + H(X)\Big{|} >\delta\rb \geq 1-\exp\lb\frac{-2n\delta^2}{\log(r_{\min})^2}\rb. 
\end{align*}
\end{proof}

\begin{thm}[Size bound for conditional typical subset]\hfill\\
For every $n$-tuple $x^n\in\mathcal{A}^n$ and any $\delta >0$ we have
\[
\Big{|}T^{Y|x^n}_\delta\Big{|}\leq 2^{n\lb H(X,Y) - H(X) + \delta\rb}
\]
\label{thm:SizeCondTyp}
\end{thm}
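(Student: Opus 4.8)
The plan is to mimic the standard size bound for unconditional typical sets, but carried out with the conditional distribution $p_{Y^n|X^n}(\cdot\,|x^n)$ in place of a product distribution. First I would fix an $n$-tuple $x^n\in\mathcal{A}^n$; if $p_{X^n}(x^n)=0$ the set $T^{Y^n|x^n}_\delta$ is empty (the conditional probability is not even defined), so the claimed inequality holds trivially and we may assume $p_{X^n}(x^n)>0$. Under this assumption I would record that $p_{Y^n|X^n}(y^n|x^n):=p_{X^n,Y^n}(x^n,y^n)/p_{X^n}(x^n)$ is a genuine probability distribution on $\mathcal{B}^n$, so in particular $\sum_{y^n\in\mathcal{B}^n}p_{Y^n|X^n}(y^n|x^n)=1$.

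Next I would extract a pointwise lower bound from the defining condition of the conditional typical set. For $y^n\in T^{Y^n|x^n}_\delta$ the inequality $\bigl|-\tfrac1n\log p_{Y^n|X^n}(y^n|x^n)-(H(X,Y)-H(X))\bigr|\le\delta$ implies in particular $-\tfrac1n\log p_{Y^n|X^n}(y^n|x^n)\le H(X,Y)-H(X)+\delta$, which rearranges to
\[
p_{Y^n|X^n}(y^n|x^n)\ \ge\ 2^{-n\left(H(X,Y)-H(X)+\delta\right)}.
\]
Only the lower ("$\le\delta$") half of the two-sided typicality condition is needed here.

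Finally I would sum this bound over $y^n\in T^{Y^n|x^n}_\delta$ and combine it with the normalization $\sum_{y^n\in\mathcal{B}^n}p_{Y^n|X^n}(y^n|x^n)=1$, giving
\[
1\ \ge\ \sum_{y^n\in T^{Y^n|x^n}_\delta}p_{Y^n|X^n}(y^n|x^n)\ \ge\ \bigl|T^{Y^n|x^n}_\delta\bigr|\cdot 2^{-n\left(H(X,Y)-H(X)+\delta\right)},
\]
and rearranging yields $\bigl|T^{Y^n|x^n}_\delta\bigr|\le 2^{n\left(H(X,Y)-H(X)+\delta\right)}$, as claimed. I do not anticipate any real obstacle in this argument; the only points deserving explicit mention are the degenerate case $p_{X^n}(x^n)=0$ and the fact that the conditional distribution $p_{Y^n|X^n}(\cdot\,|x^n)$ sums to one (so that the sum over a subset of $\mathcal{B}^n$ is at most $1$), which is what makes the counting bound go through.
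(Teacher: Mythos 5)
Your proposal is correct and follows essentially the same route as the paper: the paper likewise extracts the pointwise lower bound $p_{Y^n|x^n}(y^n)\geq 2^{-n(H(X,Y)-H(X)+\delta)}$ from the typicality condition and sums it over the conditional typical set against the normalization of the conditional distribution. The only cosmetic difference is in handling the degenerate case (the paper assumes $x_i\in\mathrm{supp}(X)$, noting the set is otherwise empty), which matches your treatment of $p_{X^n}(x^n)=0$.
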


\begin{proof}
Without loss of generality we can assume that $x_i\in\text{supp}(X)$ since otherwise $T^{Y|x^n}_\delta=\emptyset$. We know that for each $y^n\in T^{Y|x^n}_\delta$ we have 
\[
p_{Y^n|x^n}\lb y^n\rb = \prod^n_{i=1} p_{Y|x_i}\lb y_i\rb = \prod^n_{i=1} \frac{p_{X,Y}\lb x_i,y_i\rb}{p_X(x_i)} \geq 2^{-n\lb H(X,Y) - H(X) + \delta\rb}.
\]
Therefore, we have
\[
1\geq \sum_{y_n\in T^{Y|x^n}_\delta} p_{Y^n|x^n}\lb y^n\rb \geq \Big{|}T^{Y|x^n}_\delta\Big{|}2^{-n\lb H(X,Y) - H(X) + \delta\rb}.
\]
This finishes the proof.
\end{proof}

Now, we will use the previous theorems from this section to introduce quantum typicality. Let $\rho=\sum_{x\in\mathcal{A}}p_X(x)\proj{x}{x}$ denote a quantum state, where we introduced a random variable $X$ with values in $\mathcal{A}$ distributed according to the spectrum of $\rho$. Note that $H(X)=S(\rho)$, i.e., the von-Neumann entropy of the quantum state $\rho$. For $\delta>0$ we define the $\delta$-typical projector with respect to $\rho^{\otimes n}$ as
\[
\Pi^n_\delta = \sum_{x^n\in T^{X^n}_\delta} \proj{x^n}{x^n}.
\]
The following theorem follows easily from Theorem \ref{thm:equipartition}
\begin{thm}
For any $\delta>0$ and $n\in\N$ we have 
\[
\Pi^n_\delta\rho^{\otimes n}\Pi^n_\delta\leq 2^{-n(S(\rho)-\delta)}\Pi^n_\delta.
\]
\label{thm:QTypical1}
\end{thm}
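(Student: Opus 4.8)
The plan is to exploit that everything in sight is simultaneously diagonal in the product eigenbasis $\lset\ket{x^n}\rset_{x^n\in\mathcal{A}^n}$ of $\rho^{\otimes n}$, so that the claimed operator inequality reduces to a scalar inequality on the diagonal entries, which is precisely Theorem \ref{thm:equipartition}. First I would write $\rho^{\otimes n} = \sum_{x^n\in\mathcal{A}^n} p_{X^n}(x^n)\proj{x^n}{x^n}$ using the notation $p_{X^n}(x^n)=\prod^n_{i=1}p_X(x_i)$ introduced above, and recall that by definition $\Pi^n_\delta = \sum_{x^n\in T^{X^n}_\delta}\proj{x^n}{x^n}$ is the orthogonal projector onto the span of the typical basis vectors. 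Since the $\proj{x^n}{x^n}$ are mutually orthogonal rank-one projectors, multiplying on both sides immediately gives
\[
\Pi^n_\delta\,\rho^{\otimes n}\,\Pi^n_\delta = \sum_{x^n\in T^{X^n}_\delta} p_{X^n}(x^n)\proj{x^n}{x^n}.
\]

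Next I would invoke Theorem \ref{thm:equipartition}, which says that $p_{X^n}(x^n)\leq 2^{-n(H(X)-\delta)}$ for every $x^n\in T^{X^n}_\delta$, together with the identification $H(X)=S(\rho)$ coming from the fact that $X$ is distributed according to the spectrum of $\rho$. Substituting this bound term by term into the sum, and using that each $\proj{x^n}{x^n}$ is positive semidefinite so the inequality is preserved under the sum, yields
\[
\Pi^n_\delta\,\rho^{\otimes n}\,\Pi^n_\delta \leq 2^{-n(S(\rho)-\delta)}\sum_{x^n\in T^{X^n}_\delta}\proj{x^n}{x^n} = 2^{-n(S(\rho)-\delta)}\,\Pi^n_\delta,
\]
which is the assertion.

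There is essentially no obstacle here: the only point requiring a word of care is the passage from the scalarwise bound $p_{X^n}(x^n)\leq 2^{-n(S(\rho)-\delta)}$ to the operator inequality, which is justified because both sides are diagonal in the same basis and $A\leq B$ for diagonal operators is equivalent to the entrywise inequality; alternatively one notes $2^{-n(S(\rho)-\delta)}\Pi^n_\delta - \Pi^n_\delta\rho^{\otimes n}\Pi^n_\delta = \sum_{x^n\in T^{X^n}_\delta}\big(2^{-n(S(\rho)-\delta)}-p_{X^n}(x^n)\big)\proj{x^n}{x^n}$ is a sum of nonnegative multiples of projectors and hence positive semidefinite. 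I would present the argument in this last form to make the completely-positive/positive-semidefinite ordering explicit and keep the proof self-contained.
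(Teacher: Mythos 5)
Your proof is correct and is exactly the argument the paper intends: the paper states that Theorem \ref{thm:QTypical1} ``follows easily from Theorem \ref{thm:equipartition}'', and your diagonal-basis computation $\Pi^n_\delta\rho^{\otimes n}\Pi^n_\delta=\sum_{x^n\in T^{X^n}_\delta}p_{X^n}(x^n)\proj{x^n}{x^n}$ followed by the termwise bound $p_{X^n}(x^n)\leq 2^{-n(H(X)-\delta)}$ with $H(X)=S(\rho)$ is precisely that easy derivation. The closing remark making the positive-semidefiniteness of the difference explicit is a fine way to justify the operator ordering.
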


From Theorem \ref{thm:weaktypConc} we easily get the following theorem:

\begin{thm}[Quantum typicality]
For $\lambda^*_{\min}(\rho):=\min\lset \lambda\in\text{spec}(\rho)\setminus\lset 0\rset\rset$ and any $\delta>0$ we have 
\[
\Tr\lb \Pi^n_\delta\rho^{\otimes n}\rb \geq 1-\exp\lb \frac{-2n\delta^2}{\log(\lambda^*_{\min}(\rho))^2}\rb .
\]
\label{thm:QTypical2}
\end{thm}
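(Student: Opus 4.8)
The plan is to reduce Theorem \ref{thm:QTypical2} to the classical concentration result Theorem \ref{thm:weaktypConc} applied to the random variable associated with the spectrum of $\rho$. First I would write $\rho=\sum_{x\in\mathcal{A}}p_X(x)\proj{x}{x}$ in its eigenbasis, with $X$ the random variable taking value $x$ with probability $p_X(x)$, so that $H(X)=S(\rho)$ and, crucially, $p_{\min}=\min_{x\in\text{supp}(X)}p_X(x)=\lambda^*_{\min}(\rho)$, the smallest nonzero eigenvalue of $\rho$. Then $\rho^{\otimes n}=\sum_{x^n\in\mathcal{A}^n}p_{X^n}(x^n)\proj{x^n}{x^n}$ in the product eigenbasis, where $\proj{x^n}{x^n}=\proj{x_1}{x_1}\otimes\cdots\otimes\proj{x_n}{x_n}$, and by definition $\Pi^n_\delta=\sum_{x^n\in T^{X^n}_\delta}\proj{x^n}{x^n}$ is diagonal in the same basis.

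The key computation is then immediate: since $\Pi^n_\delta$ is a sum of a subset of the spectral projectors of $\rho^{\otimes n}$, we get
\[
\Tr\lb \Pi^n_\delta\rho^{\otimes n}\rb = \sum_{x^n\in T^{X^n}_\delta} p_{X^n}(x^n) = P\lb X^n\in T^{X^n}_\delta\rb,
\]
where $X^n=(X_1,\ldots,X_n)$ is the $n$-tuple of i.i.d.\ copies of $X$. Now I would invoke Theorem \ref{thm:weaktypConc} directly with $p_{\min}=\lambda^*_{\min}(\rho)$, which yields
\[
P\lb X^n\in T^{X^n}_\delta\rb \geq 1-\exp\lb \frac{-2n\delta^2}{\log(p_{\min})^2}\rb = 1-\exp\lb \frac{-2n\delta^2}{\log(\lambda^*_{\min}(\rho))^2}\rb,
\]
which is exactly the claimed bound.

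There is no real obstacle here --- the statement is a routine translation of the classical weak law of large numbers (as quantified via Hoeffding's inequality in Theorem \ref{thm:weaktypConc}) into the language of quantum typical projectors, using only that $\rho^{\otimes n}$ and $\Pi^n_\delta$ are simultaneously diagonal. The only point requiring a word of care is the identification $p_{\min}=\lambda^*_{\min}(\rho)$: one restricts to $\text{supp}(X)$, i.e.\ to the nonzero eigenvalues, exactly as in the definition of $\lambda^*_{\min}(\rho)$, and the support of $\rho^{\otimes n}$ is spanned by the $\proj{x^n}{x^n}$ with all $x_i\in\text{supp}(X)$, so no zero-eigenvalue terms contribute to either side. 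I would state the proof in two or three lines along exactly these lines.
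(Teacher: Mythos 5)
Your proposal is correct and matches the paper's (implicit) argument exactly: the paper derives Theorem \ref{thm:QTypical2} from Theorem \ref{thm:weaktypConc} by diagonalizing $\rho$, identifying $\Tr\lb \Pi^n_\delta\rho^{\otimes n}\rb$ with $P\lb X^n\in T^{X^n}_\delta\rb$, and taking $p_{\min}=\lambda^*_{\min}(\rho)$. Nothing further is needed.
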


Now let $\lset p_X(x),\rho_x\rset_{x\in\mathcal{A}}$ denote an ensemble of quantum states, and for each $x\in\mathcal{A}$ we have the eigendecomposition $\rho_x=\sum_{y\in\mathcal{B}}p_{Y|X}(y|x)\proj{y_x}{y_x}$ defining the random variable $Y$. For $\delta>0$ and an $n$-tuple $x^n\in\mathcal{A}^n$ we define the conditional $\delta$-typical projector of the ensemble $\lset p_X(x),\rho_x\rset_{x\in\mathcal{A}}$ conditioned on $x^n$ by 
\[
\Pi^{B^n|x^n}_\delta= \sum_{y^n_{x^n}\in T^{Y^n|x^n}_\delta} \proj{y^n_{x^n}}{y^n_{x^n}}.
\]
Note that 
\[
\lbr\Pi^{B^n|x^n}_\delta,\rho_{x^n}\rbr = 0,
\]
and by Theorem \ref{thm:weakcondTypConc} we have the following result:

\begin{thm}[Conditional quantum typicality]
For $\mu^*_{\min} = \min_{x\in \text{supp}\lb X\rb}\min\lset \lambda\in\text{spec}\lb \rho_x\rb\setminus\lset 0\rset\rset$ and any $\delta >0$ we have
\[
\sum_{x^n\in \mathcal{A}^n} p_{X^n}(x^n)\Tr\lb \Pi^{B^n|x^n}_\delta \rho_{x^n}\rb \geq 1 - \exp\lb\frac{-2n\delta^2}{\log(\mu^*_{\min})^2}\rb.
\]
\label{thm:QTypical3}
\end{thm}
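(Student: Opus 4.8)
The plan is to reduce the statement to Theorem \ref{thm:weakcondTypConc} by interpreting the left-hand side as an expected conditional probability for a suitable pair of classical random variables. First I would introduce the joint distribution $p_{X,Y}(x,y) := p_X(x)\,p_{Y|X}(y|x)$ on $\mathcal{A}\times\mathcal{B}$, where $p_{Y|X}(\cdot|x)$ is the spectrum of $\rho_x$ fixed in the definition of the conditional typical projector. For $x\in\text{supp}(X)$ the nonzero values among $p_{Y|X}(y|x)$ are exactly the nonzero eigenvalues of $\rho_x$, so $\text{supp}(X,Y)=\lset (x,y)\,:\,p_X(x)\neq 0,\ p_{Y|X}(y|x)\neq 0\rset$ and
\[
r_{\min}=\min_{(x,y)\in\text{supp}(X,Y)}\frac{p_{X,Y}(x,y)}{p_X(x)}=\min_{x\in\text{supp}(X)}\min\lb\text{spec}(\rho_x)\setminus\lset 0\rset\rb=\mu^*_{\min}.
\]

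Next I would carry out the trace computation. Since $\rho_{x^n}=\rho_{x_1}\otimes\cdots\otimes\rho_{x_n}$ and the vectors $\ket{y^n_{x^n}}=\ket{y_{1,x_1}}\otimes\cdots\otimes\ket{y_{n,x_n}}$ form an orthonormal eigenbasis of $\rho_{x^n}$ with eigenvalues $p_{Y^n|X^n}(y^n|x^n)=\prod_{i}p_{Y|X}(y_i|x_i)$, the projector $\Pi^{B^n|x^n}_\delta$ commutes with $\rho_{x^n}$ and
\[
\Tr\lb\Pi^{B^n|x^n}_\delta\rho_{x^n}\rb=\sum_{y^n\in T^{Y^n|x^n}_\delta}p_{Y^n|X^n}(y^n|x^n)=P_{Y^n|X^n}\lb Y^n\in T^{Y^n|x^n}_\delta\mid X^n=x^n\rb.
\]
Averaging over $x^n$ with weights $p_{X^n}(x^n)$ yields exactly $\mathbbm{E}_{X^n}P_{Y^n|X^n}\lb Y^n\in T^{Y^n|X^n}_\delta\rb$, which is the quantity bounded in Theorem \ref{thm:weakcondTypConc}.

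Finally I would invoke Theorem \ref{thm:weakcondTypConc} with the pair $(X,Y)$ constructed above, obtaining
\[
\sum_{x^n\in\mathcal{A}^n}p_{X^n}(x^n)\Tr\lb\Pi^{B^n|x^n}_\delta\rho_{x^n}\rb=\mathbbm{E}_{X^n}P_{Y^n|X^n}\lb Y^n\in T^{Y^n|X^n}_\delta\rb\geq 1-\exp\lb\frac{-2n\delta^2}{\log(r_{\min})^2}\rb,
\]
and substituting $r_{\min}=\mu^*_{\min}$ finishes the proof. This is essentially a bookkeeping argument, so I do not anticipate a genuine obstacle; the only point requiring a little care is verifying that the orthonormal basis used to define $\Pi^{B^n|x^n}_\delta$ is precisely the product eigenbasis $\lset\ket{y^n_{x^n}}\rset$, so that $\Pi^{B^n|x^n}_\delta$ and $\rho_{x^n}$ are simultaneously diagonal and the trace collapses to the classical sum over $T^{Y^n|x^n}_\delta$.
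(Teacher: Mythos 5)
Your proposal is correct and follows exactly the route the paper takes: the paper states this theorem as an immediate consequence of Theorem \ref{thm:weakcondTypConc}, using the same identification of the spectra of the $\rho_x$ with the conditional distribution $p_{Y|X}$ and the commutation $\lbr \Pi^{B^n|x^n}_\delta,\rho_{x^n}\rbr=0$ to turn the trace into the classical conditional probability. You merely spell out the bookkeeping (the identity $r_{\min}=\mu^*_{\min}$ and the collapse of the trace to a sum over $T^{Y^n|x^n}_\delta$) that the paper leaves implicit.
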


Finally, by Theorem \ref{thm:SizeCondTyp} we have the following theorem:

\begin{thm}
For any $\delta>0$ we have 
\[
\Tr\lb \Pi^{B^n|x^n}_\delta\rb\leq 2^{n\lb H(X,Y) - H(X) + \delta\rb}.
\]
\label{thm:SizeQTyp}
\end{thm}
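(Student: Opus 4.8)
The plan is to reduce the statement directly to Theorem \ref{thm:SizeCondTyp}, the combinatorial size bound for conditional typical sets. The key observation is that the conditional $\delta$-typical projector $\Pi^{B^n|x^n}_\delta$ is, by construction, an orthogonal projector whose rank equals the cardinality of the conditional typical set $T^{Y^n|x^n}_\delta$, so its trace literally counts the number of conditionally typical sequences.

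First I would record that for each $x\in\mathcal{A}$ the eigenvectors $\lset\ket{y_x}\rset_{y\in\mathcal{B}}$ appearing in the eigendecomposition $\rho_x=\sum_{y}p_{Y|X}(y|x)\proj{y_x}{y_x}$ form an orthonormal basis of the output space. Consequently the product vectors $\ket{y^n_{x^n}}=\ket{y_{1,x_1}}\otimes\cdots\otimes\ket{y_{n,x_n}}$, as $y^n$ ranges over $\mathcal{B}^n$, form an orthonormal basis of the $n$-fold tensor product, and in particular those indexed by $y^n_{x^n}\in T^{Y^n|x^n}_\delta$ form an orthonormal family. Hence each summand $\proj{y^n_{x^n}}{y^n_{x^n}}$ in the definition of $\Pi^{B^n|x^n}_\delta$ is a rank-one projector and these are mutually orthogonal, so that $\Pi^{B^n|x^n}_\delta$ is an honest orthogonal projector.

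Second, I would compute the trace termwise and then invoke Theorem \ref{thm:SizeCondTyp}:
\[
\Tr\lb \Pi^{B^n|x^n}_\delta\rb = \sum_{y^n_{x^n}\in T^{Y^n|x^n}_\delta}\Tr\lb \proj{y^n_{x^n}}{y^n_{x^n}}\rb = \Big|T^{Y^n|x^n}_\delta\Big| \leq 2^{n\lb H(X,Y) - H(X) + \delta\rb},
\]
which is exactly the claimed bound.

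There is no real obstacle here: the only point that requires any attention is verifying that the vectors indexing the projector form an orthonormal family, so that the trace equals (rather than merely bounds) the cardinality of $T^{Y^n|x^n}_\delta$; everything else is a direct application of the already-established combinatorial estimate in Theorem \ref{thm:SizeCondTyp}.
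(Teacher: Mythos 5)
Your proposal is correct and follows exactly the route the paper takes: the paper derives this statement directly from Theorem \ref{thm:SizeCondTyp}, since the trace of $\Pi^{B^n|x^n}_\delta$ counts the orthonormal eigenvectors indexed by the conditional typical set, so it is bounded by $\big|T^{Y^n|x^n}_\delta\big|$. Your extra care in checking that the vectors $\ket{y^n_{x^n}}$ form an orthonormal family (so the trace equals the cardinality) is exactly the implicit step the paper relies on.
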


\section{Explicit error bound in the HSW-theorem}
\label{app:HSWError}

For the proof of Theorem~\ref{thm:FTCCQCLowerBound} we need explicit error bounds in the direct part of the proof of the HSW-theorem (cf.~Theorem~\ref{thm:HSWQChannel}). To make our article selfcontained we derive these bounds in this appendix. We will start with the following lemma, which is a version of the well-known packing lemma (see for instance~\cite[Lemma 16.3.1]{wilde2017quantum}). Its proof combines the general strategy used in~\cite[Lemma 16.3.1]{wilde2017quantum} with some insights from \cite[Section 8.1.2]{watrous2018theory} leading to a slightly better error estimate.

\begin{lem}[Packing lemma]
Let $\lset p_i,\sigma_i\rset^L_{i=1}$ be an ensemble of quantum states $\sigma_i\in \M^+_d$ with ensemble average $\sigma=\sum^L_{i=1}p_i\sigma_i$. Let $\Pi:\C^d\ra \C^d$ and $\Pi_i :\C^d\ra\C^d$ for each $i\in\lset 1,\ldots ,L\rset$ denote projectors such that the following conditions hold for $\epsilon_1,\epsilon_2>0$ and $A,B\in\N$:
\begin{enumerate}
\item $\Tr\lbr\Pi \sigma \rbr \geq 1-\epsilon_1$. 
\item $\sum^L_{i=1} p_i\Tr\lbr\Pi_i \sigma_i \rbr \geq 1-\epsilon_2$.
\item $\lbr \Pi_i,\sigma_i \rbr = 0$ for any $i\in\lset 1,\ldots, L\rset$. 
\item $\Tr\lbr\Pi_i\rbr\leq A$ for any $i\in\lset 1,\ldots, L\rset$.
\item $\Pi \sigma\Pi \leq \frac{1}{B}\Pi$.
\end{enumerate}   
Then, for any $M\leq L$ there exists an $M$-tuple $I=\lb i_1 , \ldots ,i_M\rb\in \lset 1,\ldots ,L\rset^M$, and a POVM $\lb\Lambda_s\rb^M_{s=1}\in (\M^+_d)^M$ such that 
\[
\frac{1}{M}\sum^M_{s=1}\Tr\lb \Lambda_s \sigma_{i_s}\rb\geq 1 - 4\epsilon_1 - 2\epsilon_2 - 4 M\frac{A}{B}.
\] 
\end{lem}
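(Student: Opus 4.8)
The plan is to prove this by the standard random coding argument for the packing lemma, following the strategy of \cite[Lemma 16.3.1]{wilde2017quantum} but carefully tracking constants via the operator-inequality manipulations as in \cite[Section 8.1.2]{watrous2018theory}. First I would define the "conditionally typical-projected" operators $\hat{\sigma}_i := \Pi\Pi_i\sigma_i\Pi_i\Pi$ for each $i$; using conditions 1, 2, 3 together with the Gentle Measurement Lemma (or rather its operator version) one shows that $\Tr[\hat\sigma_i]$ is close to $\Tr[\sigma_i]=1$ on average, namely $\sum_i p_i \Tr[\hat\sigma_i] \geq 1 - \epsilon_1 - \epsilon_2$ up to the appropriate factors (this is where the $\sqrt{\cdot}$ or the linearized gentle-measurement estimates enter, and where some of the constants $4\epsilon_1, 2\epsilon_2$ get their final numerical form). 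The key operator bound is that $\hat\sigma_i \leq \Pi_i\Pi\sigma\Pi\Pi_i \cdot (\text{something})$ — more precisely one uses condition 5 to get $\Pi\sigma\Pi \leq B^{-1}\Pi$ and condition 4 to get $\Tr[\Pi\Pi_i\Pi]\leq A$, so that $\Tr[\Pi\hat\sigma_i'\Pi]$ type quantities are bounded by $A/B$ where $\hat\sigma_i'$ refers to the "wrong index" contributions.

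The heart of the argument is the random selection: draw $I = (i_1,\dots,i_M)$ i.i.d.\ from the distribution $\{p_i\}$, and build the "square-root" (pretty-good) measurement from the operators $\hat\sigma_{i_s}$, i.e.\ $\Lambda_s := \big(\sum_{t=1}^M \hat\sigma_{i_t}\big)^{-1/2}\hat\sigma_{i_s}\big(\sum_{t=1}^M \hat\sigma_{i_t}\big)^{-1/2}$, with the usual completion on the kernel. Then I would invoke the Hayashi–Nagaoka operator inequality, which gives $\id - \Lambda_s \leq 2(\id - \hat\sigma_{i_s}) + 4\sum_{t\neq s}\hat\sigma_{i_t}$, and hence
\[
\Tr[\Lambda_s\sigma_{i_s}] \geq 1 - 2\Tr[(\id-\hat\sigma_{i_s})\sigma_{i_s}] - 4\sum_{t\neq s}\Tr[\hat\sigma_{i_t}\sigma_{i_s}].
\]
Averaging over $s$ and taking the expectation over the random choice of $I$: the first error term is controlled by the gentle-measurement estimate from the previous paragraph (giving the $4\epsilon_1 + 2\epsilon_2$ contribution after accounting for the factor $2$ and the square-root losses), while for the cross terms one uses independence to replace $\mathbb{E}[\hat\sigma_{i_t}]$ by $\sum_i p_i\hat\sigma_i \leq \Pi\sigma\Pi \leq B^{-1}\Pi$, so that $\mathbb{E}[\Tr[\hat\sigma_{i_t}\sigma_{i_s}]] \leq B^{-1}\Tr[\Pi\sigma_{i_s}\Pi]\Tr[\Pi_{i_s}]^{?}$ — more carefully, $\Tr[\hat\sigma_{i_t}\sigma_{i_s}]$ involves $\Pi_{i_s}$ whose trace is at most $A$, yielding $\mathbb{E}[\text{cross term}] \leq A/B$, and there are $M-1 < M$ such terms, giving the $4M\,A/B$ term. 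Since the averaged quantity exceeds the claimed bound in expectation, a valid deterministic choice of $I$ and the corresponding POVM exists.

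The main obstacle I expect is getting the constants to land exactly at $4\epsilon_1 + 2\epsilon_2 + 4M A/B$ rather than something slightly worse. The naive route through the Gentle Measurement Lemma introduces square roots ($\|\sigma - \sqrt{\Pi}\sigma\sqrt\Pi\|_1 \leq 2\sqrt{\epsilon}$), which would give $\sqrt{\epsilon_1}$-type terms; the improvement to linear dependence $4\epsilon_1$ requires instead working directly with the operator inequalities — e.g.\ using $\Tr[(\id - \Pi)\sigma_i]$ bounds combined with the trace-preservation of the double-sided projection applied to $\sigma_i$, and the fact that $\Pi_i$ commutes with $\sigma_i$ (condition 3), so that $\Pi\Pi_i\sigma_i\Pi_i\Pi$ can be analyzed by peeling off one projector at a time with only first-order losses as in \cite[Section 8.1.2]{watrous2018theory}. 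Balancing these first-order estimates correctly across the two projectors $\Pi$ and $\Pi_i$ and through the Hayashi–Nagaoka factor of $2$ is the delicate bookkeeping step; everything else is routine.
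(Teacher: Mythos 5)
There is a genuine gap in your proposal, and it lies in the choice of operators from which you build the pretty-good measurement. You take $\hat\sigma_i=\Pi\Pi_i\sigma_i\Pi_i\Pi$ (sandwiched \emph{states}) and apply the Hayashi--Nagaoka inequality with $P=\hat\sigma_{i_s}$, so your direct-detection error term is $2\Tr[(\id-\hat\sigma_{i_s})\sigma_{i_s}]=2\lb 1-\Tr[\Pi\Pi_{i_s}\sigma_{i_s}\Pi_{i_s}\Pi\,\sigma_{i_s}]\rb$. The hypotheses 1--3 control only \emph{first-order} overlaps such as $\Tr[\Pi\sigma_i]$, $\Tr[\Pi_i\sigma_i]$ and hence $\Tr[\Pi\Pi_i\Pi\,\sigma_i]$; they say nothing about the \emph{second-order}, purity-like quantity $\Tr[\hat\sigma_i\sigma_i]$, which is generically far from $1$. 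Concretely, take $\sigma_i=\Pi_i/A$ with $\Pi_i\leq\Pi$ (so conditions 1--4 hold with $\epsilon_1=\epsilon_2=0$, and condition 5 can be arranged with $B$ large by spreading the subspaces): then $\hat\sigma_i=\Pi_i/A$ and $\Tr[\hat\sigma_i\sigma_i]=1/A$, so your direct term is $2(1-1/A)\approx 2$ and the chain of inequalities is vacuous, even though the lemma's conclusion is true and nontrivial in this situation. Your gentle-measurement estimate bounds $\Tr[\hat\sigma_i]$, not $\Tr[\hat\sigma_i\sigma_i]$, so it cannot rescue this step; the Holevo/Schumacher--Westmoreland analyses that do use the sandwiched-state PGM additionally exploit an equipartition-type \emph{lower} bound $\Pi_i\sigma_i\Pi_i\gtrsim A^{-1}\Pi_i$, which is not among the hypotheses of this abstract packing lemma.

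The repair is exactly what the paper does: seed the square-root measurement with the sandwiched \emph{projectors} $Y_i=\Pi\Pi_i\Pi$ (which satisfy $0\leq Y_i\leq\id$), apply Hayashi--Nagaoka with $P=Y_{i_s}$, and control the direct term $1-\Tr[Y_{i_s}\sigma_{i_s}]$ via the operator identity $ABA=AB+BA-B+(\id-A)B(\id-A)$ together with $[\Pi_i,\sigma_i]=0$ and $2\Pi-\id\leq\id$; this yields the linear losses $4\epsilon_1+2\epsilon_2$ without any square-root degradation. The cross terms then work as you envisioned: by independence, $\mathbb{E}_{Z_s,Z_{s'}}\Tr[Y_{Z_{s'}}\sigma_{Z_s}]=\mathbb{E}_{Z_{s'}}\Tr[\Pi_{Z_{s'}}\Pi\sigma\Pi]\leq B^{-1}\mathbb{E}\Tr[\Pi_{Z_{s'}}]\leq A/B$, using conditions 5 and 4 in that order (note it is $\Pi_{i_t}$, the projector of the \emph{wrong} index, whose trace bound enters, not $\Pi_{i_s}$ as in your sketch), giving $4MA/B$ and, after derandomization, the stated bound.
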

\begin{proof}
Let $M\leq L$ be fixed in the following. For each $i\in\lset 1,\ldots ,L\rset$ define the operators
\[
Y_{i} = \Pi\Pi_i\Pi .
\]
For any $M$-tuple $I = \lb i_1,\ldots ,i_M\rb\in \lset 1,\ldots ,L\rset^M$ we define a POVM $\lb \Lambda_s\rb^M_{s=1}$ by 
\[
\Lambda_s = \lb\sum^M_{s'=1} Y_{i_{s'}}\rb^{-\frac{1}{2}} Y_{i_s}\lb\sum^M_{s'=1} Y_{i_{s'}}\rb^{-\frac{1}{2}}, \text{ for }s\in\lset 1,\ldots ,M\rset, 
\]
and the decoding error of symbol $s\in\lset 1,\ldots ,M\rset$ by
\[
p_e(s,I) = \Tr\lb\lb\one_d - \Lambda_s\rb \sigma_{i_s}\rb.
\]
Recall the Hayashi-Nagaoka inequality (see~\cite[Lemma 8.28]{watrous2018theory})
\[
\one_d - \lb P+Q\rb^{-\frac{1}{2}}P\lb P+Q\rb^{-\frac{1}{2}}\leq 2\lb \one_d - P\rb+4Q,
\]
for any pair of positive matrices $P,Q\in\M^+_d$ satisfying $P\leq \one_d$. Applying this inequality for $P=Y_{i_s}$ and $Q=\sum_{s'\neq s}Y_{i_s}$ gives the estimate
\[
p_e(s,I) \leq 2\lbr 1- \Tr\lb Y_{i_s} \sigma_{i_s}\rb\rbr + 4\sum_{s'\neq s}\Tr\lb Y_{i_{s'}} \sigma_{i_s} \rb.
\]
Following~\cite{watrous2018theory} we apply the operator equality 
\[
ABA = AB + BA - B +\lb \one_d -A\rb B \lb\one_d - A\rb,
\]
and using assumption 3.~from above we obtain
\begin{align}
p_e&(s,I) \nonumber\\
&= 2\lbr 1- 2\Tr\lb \Pi\Pi_{i_s} \sigma_{i_s}\rb + \Tr\lb \Pi_{i_s} \sigma_{i_s}\rb - \Tr\lb (\one_d-\Pi)\Pi_{i_s}(\one_d-\Pi)\sigma_{i_s}\rb \rbr + 4\sum_{s'\neq s}\Tr\lb Y_{i_{s'}}\sigma_{i_s}\rb\nonumber \\
&\leq 2\lbr 1- \Tr\lb (2\Pi - \one_d)\Pi_{i_s} \sigma_{i_s}\rb \rbr + 4\sum_{s'\neq s}\Tr\lb Y_{i_{s'}}\sigma_{i_s}\rb.
\label{equ:PackingLemErr1}
\end{align}
By an elementary computation and using that $2\Pi-\one_d\leq \one_d$ we find that 
\begin{align*}
1- \Tr\lb (2\Pi - \one_d)\Pi_{i_s} \sigma_{i_s}\rb & = 1 -\Tr\lb (2\Pi-\one_d)\sigma_{i_s}\rb + \Tr\lb (2\Pi-\one_d)(\one_d-\Pi_{i_s})\sigma_{i_s}\rb \\
&\leq 1 -\Tr\lb (2\Pi-\one_d)\sigma_{i_s}\rb + \Tr\lb (\one_d-\Pi_{i_s})\sigma_{i_s}\rb \\
& = 3-2\Tr\lb \Pi\sigma_{i_s}\rb - \Tr\lb \Pi_{i_s}\sigma_{i_s}\rb .
\end{align*}
Combining this with \eqref{equ:PackingLemErr1} leads to 
\[
p_e(s,I) \leq 6 - 4\Tr\lb \Pi\sigma_{i_s}\rb - 2\Tr\lb \Pi_{i_s}\sigma_{i_s}\rb + 4\sum_{s'\neq s}\Tr\lb Y_{i_{s'}} \sigma_{i_s} \rb.
\]
For fixed $I = \lset i_1,\ldots ,i_M\rset$ we define the average decoding error by 
\[
\overline{p}_e\lb I\rb = \frac{1}{M}\sum^M_{s=1}p_e(s,I).
\]
Now define a $\lset 1,\ldots ,L\rset$-valued random variable $Z$ distributed according to the probability distribution $\lset p_i\rset^L_{i=1}$. Choosing the index set $I$ at random according to $M$ i.i.d.~copies of $Z$ leads to the following upper bound on the expected value of $\overline{p}_e$:
\begin{align*}
&\mathbb{E}_{Z_1,\ldots ,Z_M}\left[\overline{p}_e\lb \lset Z_1,\ldots ,Z_M\rset\rb\right] \\
&\leq \frac{1}{M}\sum^M_{s=1}\mathbb{E}_{Z_1,\ldots ,Z_M}\lbr 6 - 4\Tr\lb \Pi\sigma_{Z_s}\rb - 2\Tr\lb \Pi_{Z_s}\sigma_{Z_s}\rb + 4\sum_{s'\neq s}\Tr\lb Y_{Z_{s'}} \sigma_{Z_s} \rb \rbr \\
&= 6 + \frac{1}{M}\sum^M_{s=1} \left[ -4\mathbb{E}_{Z_s}\Tr\lb \Pi \sigma_{Z_s}\rb   -2\mathbb{E}_{Z_s}\Tr\lb \Pi_{Z_s} \sigma_{Z_s}\rb + 4\sum_{s'\neq s}\mathbb{E}_{Z_s,Z_{s'}}\Tr\lb Y_{Z_{s'}}\sigma_{Z_s}\rb\right]
\end{align*}
Using the assumptions 1.~,2.~,4.~, and 5.~from above, and that $\mathbb{E}_{Z}\sigma_Z = \sigma$ we find 
\begin{align*}
&\mathbb{E}_{Z_1,\ldots ,Z_M}\left[\overline{p}_e\lb \lset Z_1,\ldots ,Z_M\rset\rb\right] \\
&\leq 6 + \frac{1}{M}\sum^M_{s=1} \left[ -4\Tr\lb \Pi \sigma\rb -2\mathbb{E}_{Z_s}\Tr\lb \Pi_{Z_s} \sigma_{Z_s}\rb + 4\sum_{s'\neq s}\mathbb{E}_{Z_{s'}}\Tr\lb \Pi_{Z_{s'}}\Pi \sigma\Pi\rb\right] \\
&\leq 6 + \frac{1}{M}\sum^M_{s=1} \left[ -4(1-\epsilon_1) -2(1-\epsilon_2)+ 4\sum_{s'\neq s}\mathbb{E}_{Z_{s'}}\Tr\lb \Pi_{Z_{s'}}\frac{1}{B}\Pi\rb\right] \\
&\leq 4\epsilon_1 + 2\epsilon_2 + 4M\frac{A}{B}
\end{align*}
The previous estimate shows the existence of an $M$-tuple $I=\lb i_1,\ldots ,i_M\rb$ such that 
\[
\overline{p}_e\lb I\rb \leq 4\epsilon_1 + 2\epsilon_2 +  4M\frac{A}{B} .
\]
Choosing this index set for the coding scheme gives the result of the lemma.
\end{proof}

To prove the direct part of the Holevo-Schumacher-Westmoreland theorem (following the strategy presented in~\cite{wilde2017quantum}) we can use typical projectors in the packing lemma. Specifically, let $\lset p_X(x),\sigma_x\rset^L_{x=1}$ be an ensemble of quantum states with average state $\sigma=\sum^L_{x=1} p_X(x)\sigma_x$ and such that each $\sigma_x$ has eigenvalues $p_{Y|X}(y|x)$ for $y\in\lset 1,\ldots ,L\rset$. For any $\delta>0$ let $\Pi^n_\delta$ be the $\delta$-typical projector with respect to $\sigma^{\otimes n}$ and for any $x^n\in \lset 1,\ldots ,L\rset^n$ satisfying $p_{X^n}(x^n)>0$ let $\Pi^{B^n|x^n}_\delta$ denote the conditional typical projector with respect to the ensemble $\lset p_{X^n}(x^n),\rho_{x^n}\rset$ conditioned onto $x^n$. Applying Theorem \ref{thm:QTypical1}, Theorem \ref{thm:QTypical2}, Theorem \ref{thm:QTypical3} and Theorem \ref{thm:SizeQTyp} shows that 
\begin{enumerate}
\item $\Tr\lbr\Pi^n_\delta \sigma^{\otimes n} \rbr \geq 1-\exp\lb \frac{-2n\delta^2}{\log(\lambda_{\min})^2}\rb$. 
\item $\sum_{x^n} p_{X^n}(x^n)\Tr\lbr\Pi^{B^n|x^n}_\delta \sigma_{x^n} \rbr \geq 1-\exp\lb\frac{-2n\delta^2}{\log(\mu_{\min})^2}\rb$.
\item $\lbr \Pi^{B^n|x^n}_\delta,\sigma_{x^n} \rbr = 0$ for any $x^n\in\lset 1,\ldots, L\rset^n$. 
\item $\Tr\lbr\Pi^{B^n|x^n}_\delta\rbr\leq 2^{n\lb H(X,Y) - H(X) + \delta\rb}$ for any $x^n\in\lset 1,\ldots, L\rset^n$.
\item $\Pi^n_\delta \sigma^{\otimes n}\Pi^n_\delta \leq 2^{-n(S(\sigma)-\delta)}\Pi^n_\delta$.
\end{enumerate} 
Here, we used 
\[
\lambda_{\min}:=\min\lset \lambda\in\text{spec}(\sigma)\setminus\lset 0\rset\rset,
\]
and
\[
\mu_{\min} = \min_{x\in \text{supp}\lb X\rb}\min\lset \lambda\in\text{spec}\lb \sigma_x\rb\setminus\lset 0\rset\rset
\]
Applying Lemma \ref{equ:PackingLemErr1} shows that for every $M\leq L^n$ there exist $I=\lb x(1)^n , \ldots ,x(M)^n\rb\in \lb\lset 1,\ldots ,L\rset^n\rb^M$, and a POVM $\lb\Lambda_m\rb^M_{s=1}\in ((\M^{\otimes n}_L)^+)^M$ such that 
\[
\frac{1}{M}\sum^M_{s=1}\Tr\lb \Lambda_s \sigma_{x(s)^n}\rb\geq 1 - 4\exp\lb \frac{-2n\delta^2}{\log(\lambda_{\min})^2}\rb - 2\exp\lb\frac{-2n\delta^2}{\log(\mu_{\min})^2}\rb - 4 M2^{-n\lb \chi\lb\lset p_X(x),\sigma_x\rset^L_{x=1}\rb - 2\delta\rb},
\] 
with the Holevo quantity of the ensemble $\lset p_X(x),\sigma_x\rset^L_{x=1}$ given by 
\[
\chi\lb\lset p_X(x),\sigma_x\rset^L_{x=1}\rb = S(\sigma)+H(X)-H(X,Y).
\]
Given a quantum channel $T:\M_{d_1}\ra \M_{d_2}$ and an ensemble $\lset p_{X}(x),\rho_x\rset^L_{i=1}$ we can apply the above reasoning to the ensemble $\lset p_{X}(x),T(\rho_x)\rset^L_{i=1}$ which leads to a coding scheme for the cq-channel $x\mapsto T(\rho_x)$ and the following theorem:

\begin{thm}[Error bound classical capacity]
Let $T:\M_{d_1}\ra \M_{d_2}$ be a quantum channel, $\lset p_{X}(x),\rho_x\rset^L_{i=1}$ an ensemble of quantum states on $\C^{d_1}$, and $\sigma=\sum_{x} p_{X}(x)T(\rho_x)$ the average state at the channel output. For any $M=2^{nR}$ with $R\leq \log(L)$ and any $\delta>0$, there exists $I=\lb x(1)^n , \ldots ,x(M)^n\rb\in \lb\lset 1,\ldots ,L\rset^n\rb^M$, and a POVM $\lb\Lambda_s\rb^M_{s=1}\in ((\M^{\otimes n}_L)^+)^M$ such that 
\begin{align*}
\frac{1}{M}\sum^M_{s=1}&\Tr\lb \Lambda_s T^{\otimes n}\lb\rho_{x(s)^n}\rb\rb\\
&\geq 1 - 4\exp\lb \frac{-2n\delta^2}{\log(\lambda_{\min})^2}\rb - 2\exp\lb\frac{-2n\delta^2}{\log(\mu_{\min})^2}\rb - 4\cdot 2^{n\lb R-\chi\lb\lset p_X(x),T(\rho_x)\rset\rb + 2\delta\rb} 
\end{align*}
with
\[
\lambda_{\min}=\min\lset \lambda\in\text{spec}(\sigma)\setminus\lset 0\rset\rset,
\]
and
\[
\mu_{\min} = \min_{x\in \text{supp}\lb X\rb}\min\lset \lambda\in\text{spec}\lb T(\rho_x)\rb\setminus\lset 0\rset\rset.
\]
\label{thm:HSWErrorBound}
\end{thm}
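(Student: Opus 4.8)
The plan is to prove Theorem~\ref{thm:HSWErrorBound} by applying the Packing Lemma to the $n$-fold tensor power of the channel-output ensemble, using the quantum typical projectors from the previous section as the projectors required in its hypotheses. Concretely, I would set $\sigma_x:=T(\rho_x)$ with average state $\sigma=\sum_x p_X(x)\sigma_x$, fix for each $x$ an eigendecomposition $\sigma_x=\sum_y p_{Y|X}(y|x)\proj{y_x}{y_x}$ defining an auxiliary random variable $Y$, and record that $H(X,Y)-H(X)=\sum_x p_X(x)S(\sigma_x)$, so that $\chi\lb\lset p_X(x),\sigma_x\rset\rb=S(\sigma)+H(X)-H(X,Y)$. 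This identity is what will let me rewrite the final exponent in terms of $\chi$.

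Next I would instantiate the five hypotheses of the Packing Lemma for the product ensemble $\lset p_{X^n}(x^n),\sigma_{x^n}\rset$, taking $\Pi=\Pi^n_\delta$ to be the $\delta$-typical projector of $\sigma^{\otimes n}$ and $\Pi_{x^n}=\Pi^{B^n|x^n}_\delta$ the conditional $\delta$-typical projectors. Theorem~\ref{thm:QTypical2} gives hypothesis~1 with $\epsilon_1=\exp\lb -2n\delta^2/\log(\lambda_{\min})^2\rb$; Theorem~\ref{thm:QTypical3} gives hypothesis~2 with $\epsilon_2=\exp\lb -2n\delta^2/\log(\mu_{\min})^2\rb$; the conditional typical projectors commute with the corresponding states by construction, giving hypothesis~3; Theorem~\ref{thm:SizeQTyp} gives hypothesis~4 with $A=2^{n\lb H(X,Y)-H(X)+\delta\rb}$; and Theorem~\ref{thm:QTypical1} gives hypothesis~5 with $B=2^{n\lb S(\sigma)-\delta\rb}$.

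Then I would set $M=2^{nR}$; since $R\leq\log(L)$ this satisfies $M\leq L^n$, so the Packing Lemma produces an index tuple $I=\lb x(1)^n,\ldots,x(M)^n\rb$ and a POVM $\lb\Lambda_s\rb^M_{s=1}$ on the $n$-fold output space with $\frac{1}{M}\sum^M_{s=1}\Tr\lb\Lambda_s\,\sigma_{x(s)^n}\rb\geq 1-4\epsilon_1-2\epsilon_2-4M\frac{A}{B}$. Substituting the values of $A$, $B$, and $M$, the last term becomes $4\cdot 2^{n\lb R-\chi\lb\lset p_X(x),\sigma_x\rset\rb+2\delta\rb}$ by the $\chi$-identity above, and recalling that $\sigma_{x(s)^n}=T^{\otimes n}\lb\rho_{x(s)^n}\rb$ then yields exactly the claimed bound.

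The argument is essentially bookkeeping, so I do not expect a serious obstacle; the one place to be careful is the exponent accounting, namely matching $S(\sigma)+H(X)-H(X,Y)$ with $\chi\lb\lset p_X(x),T(\rho_x)\rset\rb$ through the variable $Y$ that records the spectra of the output states, together with checking that the hypothesis $R\leq\log(L)$ is precisely what makes $M=2^{nR}\leq L^n$ hold so that the index set of the Packing Lemma is well defined.
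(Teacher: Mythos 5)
Your proposal follows exactly the paper's own route: instantiate the Packing Lemma for the product ensemble $\lset p_{X^n}(x^n),T^{\otimes n}(\rho_{x^n})\rset$ with $\Pi=\Pi^n_\delta$ and $\Pi_{x^n}=\Pi^{B^n|x^n}_\delta$, read off $\epsilon_1,\epsilon_2,A,B$ from Theorems \ref{thm:QTypical1}--\ref{thm:SizeQTyp}, and convert the exponent via $\chi=S(\sigma)+H(X)-H(X,Y)$ with $M=2^{nR}\leq L^n$. The bookkeeping is correct and matches the paper's appendix argument, so there is nothing to add.
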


\section{Explicit error bound in the LSD-theorem}
\label{app:LSDError}

The following theorem follows from analyzing the coding scheme from~\cite{hayden2008decoupling}.

\begin{thm}[Decoupling with error bound]
For any quantum channel $T:\M_{d_1}\ra\M_{d_2}$, any pure state $\ket{\phi_{AA'}}\in \C^{d_1}\otimes \C^{d_1}$, any $m\in\N$, any $\delta>0$, and any rate $R>0$ there exists an encoder $E_m:\M^{\otimes Rm}_2\ra\M^{\otimes m}_{d_1}$ and a decoder $D_m:\M^{\otimes m}_{d_2}\ra \M^{\otimes Rm}_2$ such that 
\[
F\lb \ket{\Omega^{\otimes Rm}_2},\lb\ident^{\otimes Rm}_2\otimes D_m\circ T^{\otimes m}\circ E_m\rb\lb \omega^{\otimes Rm}_2\rb\rb \geq 1-\epsilon_m ,
\]
where $\omega_2=\proj{\Omega_2}{\Omega_2}$ denotes the maximally entangled qubit state with $\ket{\Omega_2}=(\ket{00}+\ket{11})/\sqrt{2}$, and
\[
\epsilon_m = 4\sqrt{3}\exp\lb - \frac{m\delta^2}{\log\lb \mu_{\min}\lb \phi_{A'}\rb\rb^2}\rb + 2^{-\frac{m}{2}\lb I_{\text{coh}}\lb \phi_{A'},T\rb - R - 3\delta\rb} 
\]
for
\[
\mu_{\min}(\phi_{A'}) = \min\lset \lambda>0 ~:~\lambda\in \text{spec}\lb \phi_{A'}\rb\cup \text{spec}\lb T(\phi_{A'})\rb\cup\text{spec}\lb T^c(\phi_{A'})\rb\rset,
\]
where $T^c$ denotes the complementary channel of $T$.
\label{thm:DecouplWithError}
\end{thm}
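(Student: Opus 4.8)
The plan is to follow the decoupling approach of~\cite{hayden2008decoupling}, keeping careful track of every constant and every exponent rather than merely arguing that the error vanishes. First I would fix the pure state $\ket{\phi_{AA'}}$ and a Stinespring dilation $U_T:\C^{d_1}\to\C^{d_2}\otimes\C^{d_E}$ of $T$, so that $(\ident_A\otimes U_T)\ket{\phi_{AA'}}$ is a joint purification of the output state $T(\phi_{A'})$ on $B$ and the environment state $T^c(\phi_{A'})$ on $E$. Passing to the $m$-fold tensor power, I would apply the weak-typicality estimates established earlier (Theorems~\ref{thm:QTypical1}, \ref{thm:QTypical2}, \ref{thm:QTypical3} and~\ref{thm:SizeQTyp}) to the three states $\phi_{A'}$, $T(\phi_{A'})$ and $T^c(\phi_{A'})$: the corresponding $\delta$-typical projectors commute with their states, have rank at most $2^{m(S(\phi_{A'})+\delta)}$, $2^{m(S(T(\phi_{A'}))+\delta)}$ and $2^{m(S(T^c(\phi_{A'}))+\delta)}$ respectively, and each captures all but a weight $\exp(-2m\delta^2/\log(\mu_{\min})^2)$, where $\mu_{\min}=\mu_{\min}(\phi_{A'})$ is exactly the minimum over the three spectra appearing in the theorem. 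A union bound over the three atypical events produces a single error of size at most $3\exp(-2m\delta^2/\log(\mu_{\min}(\phi_{A'}))^2)$, which, after the distance-measure conversions below, becomes the $4\sqrt{3}\exp(-m\delta^2/\log(\mu_{\min}(\phi_{A'}))^2)$ contribution to $\epsilon_m$.

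Next I would introduce the random code: for a rate $R>0$ take the code space to be a Haar-random $2^{Rm}$-dimensional subspace of the typical subspace of $A'^m$ (equivalently, apply a Haar-random unitary to $A'^m$ before the channel and then restrict to a fixed block). The heart of the argument is the one-shot decoupling inequality from~\cite{hayden2008decoupling}: averaging over the random unitary, the expected trace-norm distance between the joint environment--reference state $\rho_{E^m\hat A}$ (where $\hat A$ purifies the maximally entangled input $\omega_2^{\otimes Rm}$) and the product $\rho_{E^m}\otimes\omega_{\hat A}$ is bounded above by a quantity of the form $2^{-\frac{m}{2}\left(S(T(\phi_{A'}))-S(T^c(\phi_{A'}))-R-3\delta\right)}=2^{-\frac{m}{2}\left(I_{\text{coh}}(\phi_{A'},T)-R-3\delta\right)}$; here the $3\delta$ absorbs the typical-subspace dimension corrections and the $\tfrac12$ in the exponent is the usual square root coming from Jensen's inequality applied to the Hilbert--Schmidt bound in the decoupling lemma. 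Since we only need to achieve the average, Markov's inequality (or plain derandomization) selects a specific unitary, hence a specific encoder $E_m:\M^{\otimes Rm}_2\to\M^{\otimes m}_{d_1}$, meeting this bound.

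Finally I would invoke the standard consequence of decoupling: once the environment is $\varepsilon$-decoupled (in trace norm) from the reference, Uhlmann's theorem yields a decoding channel $D_m:\M^{\otimes m}_{d_2}\to\M^{\otimes Rm}_2$ such that $\bigl(\ident^{\otimes Rm}_2\otimes D_m\circ T^{\otimes m}\circ E_m\bigr)\bigl(\omega_2^{\otimes Rm}\bigr)$ has fidelity with $\ket{\Omega_2^{\otimes Rm}}$ at least $1-f(\varepsilon)$ for an explicit $f$, and then track the Fuchs--van de Graaf inequalities relating trace distance, fidelity and purified distance; this is precisely what converts the two contributions assembled above into the stated $\epsilon_m=4\sqrt{3}\exp\!\bigl(-m\delta^2/\log(\mu_{\min}(\phi_{A'}))^2\bigr)+2^{-\frac{m}{2}(I_{\text{coh}}(\phi_{A'},T)-R-3\delta)}$. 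I expect the main obstacle to be entirely bookkeeping: ensuring that the three typicality errors, the typical-subspace dimension slacks, and the various square-root losses in passing between distance measures combine to reproduce exactly the constants $4\sqrt{3}$, the $3\delta$ in the rate, and the exponents $m\delta^2/\log(\mu_{\min})^2$ and $m/2$, rather than only a qualitative ``error vanishes'' statement — the conceptual content lives in~\cite{hayden2008decoupling} and in the typicality appendix, and the work here is to chase the constants through it cleanly.
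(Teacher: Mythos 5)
Your proposal follows exactly the route the paper itself takes: the paper offers no written-out proof of Theorem~\ref{thm:DecouplWithError} beyond the remark that it ``follows from analyzing the coding scheme from~\cite{hayden2008decoupling}'', with the explicit typicality bounds of the appendix (Theorems~\ref{thm:QTypical1}--\ref{thm:SizeQTyp}) supplying the $\exp(-2m\delta^2/\log(\mu_{\min})^2)$-type errors, and your plan---typical projectors for $\phi_{A'}$, $T(\phi_{A'})$, $T^c(\phi_{A'})$, a Haar-random encoder, the one-shot decoupling bound giving the $2^{-\frac{m}{2}(I_{\text{coh}}(\phi_{A'},T)-R-3\delta)}$ term, derandomization, Uhlmann's theorem for the decoder, and Fuchs--van de Graaf bookkeeping---is precisely that analysis. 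The constant-chasing you describe (union bound over three atypical events, square-root losses producing $4\sqrt{3}$ and halving the exponent) is consistent with the stated $\epsilon_m$, so the proposal is correct and essentially identical in approach to the paper's.
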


The error measure in Theorem \ref{thm:DecouplWithError} is called the entanglement generation fidelity or channel fidelity (cf.~\cite{kretschmann2004tema}). This quantity can be related to the minimum fidelity
\[
F(T)=\min\lset \bra{\psi}T(\proj{\psi}{\psi})\ket{\psi} ~:~\ket{\psi}\in\C^{d_1}, \braket{\psi}{\psi}=1\rset.
\]
Specifically, we can use~\cite[Proposition 4.5.]{kretschmann2004tema} modifying the coding scheme slightly to find the following corollary.

\begin{cor}
For any quantum channel $T:\M_{d_1}\ra\M_{d_2}$, any pure state $\ket{\phi_{AA'}}\in \C^{d_1}\otimes \C^{d_1}$, any rate $R>0$ and any $m\in\N$ such that $Rm>1$, there exists an encoder $E_m:\M^{\otimes (Rm-1)}_2\ra\M^{\otimes m}_{d_1}$ and a decoder $D_m:\M^{\otimes m}_{d_2}\ra \M^{\otimes (Rm-1)}_2$ such that 
\[
F\lb D_m\circ T^{\otimes m}\circ E_m\rb \geq 1-\tilde{\epsilon}_m
\]
with 
\[
\tilde{\epsilon}_m = 1-2\epsilon_m, 
\]
and $\epsilon_m$ as in Theorem \ref{thm:DecouplWithError}. Note the coding scheme given by $E_m$ and $D_m$ still achieves the rate $R$.
\label{cor:DecouplWithErrorChanFidelity}
\end{cor}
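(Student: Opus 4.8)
The plan is to deduce Corollary~\ref{cor:DecouplWithErrorChanFidelity} from Theorem~\ref{thm:DecouplWithError} by converting the bound there on the entanglement-generation (channel) fidelity into a bound on the worst-case minimum fidelity $F(\cdot)$, at the cost of one logical qubit and a small constant in the error. The input, supplied by Theorem~\ref{thm:DecouplWithError}, is for each $m$ an encoder $\widetilde{E}_m:\M_2^{\otimes Rm}\to\M_{d_1}^{\otimes m}$ and decoder $\widetilde{D}_m:\M_{d_2}^{\otimes m}\to\M_2^{\otimes Rm}$ such that the channel $\Lambda_m:=\widetilde{D}_m\circ T^{\otimes m}\circ\widetilde{E}_m$ on the $Rm$-qubit logical system satisfies $F\big(\ket{\Omega_2^{\otimes Rm}},(\ident_2^{\otimes Rm}\otimes\Lambda_m)(\omega_2^{\otimes Rm})\big)\geq 1-\epsilon_m$; equivalently, $\Lambda_m$ has entanglement fidelity at least $1-\epsilon_m$ with respect to the maximally mixed state of that system.

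The core step is to invoke \cite[Proposition~4.5]{kretschmann2004tema}, which relates the entanglement-generation fidelity of a channel to the minimum fidelity of a slightly modified channel (the underlying mechanism being a Markov-type argument: on average over a random orthonormal basis the input fidelity equals the entanglement fidelity, so restricting to a subspace carrying at least half the mass yields a uniform bound). Applied to $\Lambda_m$ this gives a code on a subspace of half the dimension --- hence on $Rm-1$ logical qubits --- together with a unitary symmetrization; folding both into new maps produces $E_m:\M_2^{\otimes(Rm-1)}\to\M_{d_1}^{\otimes m}$ and $D_m:\M_{d_2}^{\otimes m}\to\M_2^{\otimes(Rm-1)}$ with $F\big(D_m\circ T^{\otimes m}\circ E_m\big)\geq 1-2\epsilon_m$. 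One should check that the modification acts only on the logical encoding and decoding and not on the $m$ uses of $T$, so that $(E_m,D_m)$ is still a coding scheme for $T^{\otimes m}$, and that $\liminf_m(Rm-1)/m=R$, so the rate is preserved.

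I expect the only genuine work beyond citing the proposition to be bookkeeping: pinning down which fidelity convention and reference state \cite{kretschmann2004tema} uses, verifying that the modification it requires is indeed just a subspace restriction plus a twirl (leaving the blocklength $m$ and the rate intact), and confirming that the error degrades by exactly the advertised constant factor and not, for instance, by a square root. No idea beyond Theorem~\ref{thm:DecouplWithError} and the cited result should be needed.
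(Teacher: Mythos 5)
Your proposal matches the paper's own argument: the paper likewise obtains the corollary by applying \cite[Proposition~4.5]{kretschmann2004tema} to the channel from Theorem~\ref{thm:DecouplWithError}, restricting to a half-dimensional subspace (hence $Rm-1$ logical qubits) so that the entanglement-generation fidelity bound becomes a minimum-fidelity bound with the error doubled, while the rate is unchanged since $(Rm-1)/m\to R$. Note that your stated conclusion $F\geq 1-2\epsilon_m$ is the intended reading; the paper's formula $\tilde{\epsilon}_m = 1-2\epsilon_m$ is evidently a typo for $\tilde{\epsilon}_m = 2\epsilon_m$, as confirmed by the way the corollary is invoked in the proof of Theorem~\ref{thm:LowerBoundEffCap}.
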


\bibliographystyle{IEEEtranSN}
\bibliography{Biblio.bib}

\end{document}